\documentclass{statsoc}

\usepackage[a4paper]{geometry}
\usepackage{float}
\usepackage{amsmath}
\usepackage{amsfonts}
\usepackage{amssymb}
\usepackage{bbm}
\usepackage{mathtools}
\usepackage[labelfont=bf,font={sf}]{caption}
\DeclareCaptionLabelSeparator{colon}{.  }
\usepackage{subcaption}
\usepackage{graphicx}
\usepackage{pgfplots}
\usepackage[all]{nowidow}
\usepackage[utf8]{inputenc}
\usepackage{tikz}
\usepackage{multicol}
\usepackage{algpseudocode,algorithm,algorithmicx}
\usepackage{scalerel}
\usepackage{natbib}

\usepackage[normalem]{ulem}

\usepackage[english]{babel}
\usepackage{float}
\floatstyle{plaintop}
\restylefloat{table}
\usepackage{mathdots}
\usepackage{yhmath}
\usepackage{cancel}
\usepackage{color}
\usepackage{siunitx}
\usepackage{array}
\usepackage{multirow}
\usepackage{gensymb}
\usepackage{tabularx}
\usepackage{booktabs}
\usetikzlibrary{fadings}
\usetikzlibrary{patterns}
\usetikzlibrary{shadows.blur}

\usepackage{enumerate}

\newtheorem{theorem}{Theorem}
\newtheorem{lemma}{Lemma}

\newtheorem{assumption}{Assumption}
\newtheorem{proposition}{Proposition}
\newtheorem{remark}{Remark}

\newcommand{\Z}{\mathbb{Z}}

\newcommand{\R}{\mathbb{R}}

\newcommand{\cN}{\mathcal{N}}

\newcommand{\D}{\mathcal{D}}

\newcommand{\fr}{{\lfloor nr\rfloor}}
\newcommand{\frr}{{\lfloor nr_0\rfloor}}
\newcommand{\fa}{{\lfloor n\alpha \rfloor}}
\newcommand{\fb}{{\lfloor nb \rfloor}}

\DeclareMathOperator*{\argmax}{arg\,max}

\DeclareMathOperator*{\var}{Var}
\DeclareMathOperator*{\cov}{Cov}

\title[Another Look at Bandwidth-free Inference]{Another Look at Bandwidth-free Inference: a Sample Splitting Approach}
\author{Yi Zhang}
\address{Department of Statistics, University of Illinois at Urbana-Champaign,
	IL 61820,
	USA.}
\email{yiz19@illinois.edu}
\author[Yi Zhang \& Xiaofeng Shao]{Xiaofeng Shao}
\address{Department of Statistics, University of Illinois at Urbana-Champaign,
	IL 61820,
	USA.}
\email{xshao@illinois.edu}

\begin{document}
	\begin{abstract}
		
		The bandwidth-free tests/inferences for a multi-dimensional parameter have attracted considerable attention in econometrics and statistics literature. These tests can be conveniently implemented due to their  tuning-parameter free nature and possess more accurate size as compared to the traditional HAC-based approaches, where consistent long run variance estimation was involved. However, when sample size is small/medium, these bandwidth-free tests exhibit large size distortion when both the dimension of the parameter and the magnitude of temporal dependence are moderate, making them unreliable to use in practice. In this paper, we propose a sample splitting based approach to reduce the dimension of the parameter to one for the subsequent bandwidth-free inference. 
		Our SS-SN (sample splitting plus self-normalization) idea is broadly applicable to many testing problems for time series, including  mean testing, testing for zero autocorrelation, linear hypotheses testing in a time series regression model and testing for a change point in multivariate mean.
		Specifically, we propose $L_{\infty}$-type and $L_2$-type SS-SN test statistics and derive their limiting distributions under both the null and alternatives and show their effectiveness in alleviating size distortion via simulations. As an important theoretical contribution, we obtain the limiting distributions for both SS-SN test statistics in the multivariate mean testing problem when the dimension is allowed to diverge as sample size grows to infinity. In addition we show the asymptotic independence of $L_{\infty}$-type and $L_2$-type SS-SN test statistics under the null 
		in the growing dimensional setting.
		
	\end{abstract}
	\keywords{Fixed-$b$ Asymptotics; Hypothesis Testing; Long Run Variance; Self-normalization; Time Series}

	\section{Introduction}\label{ch1}

	Hypothesis testing for a multi-dimensional parameter is often encountered in the analysis of economic time series. Classical approaches involve conducting consistent estimation of the variance-covariance matrix of the parameter estimate nonparametrically using spectral methods (e.g., heteroskedasticity and autocorrelation consistent (HAC) estimators) and constructing standard tests based on the asymptotic normality of the parameter estimate and consistency of HAC estimator. The use of HAC estimator has been extensively analyzed in econometrics literature; see \cite{andrews1991}, \cite{andrews1992}, \cite{gallant2009}, \cite{hansen1992}, \cite{newey1987},
	\cite{robinson1991, robinson1998} for important contributions. It has become a long tradition in time series analysis and econometrics to use HAC estimator, and it has been implemented in many statistical and econometrics softwares.
	
	Since the pioneering work of \cite{kiefer2000} (KVB thereafter), bandwidth-free inference has become an important alternative, due to the difficulty of choosing the optimal bandwidth in the use of HAC estimator, and 
	good statistical property of the KVB test. Specifically, the KVB test was developed for linear regression  model with dynamic regressors and heteroscedastic and serially correlated errors. Their test statistics have nonstandard asymptotic distributions that only depend on the number of restrictions being tested, and critical values are easy to simulate using standard techniques. The main advantage of the KVB 
	approach compared to standard HAC-based counterpart is that estimates of the variance-covariance matrix are not explicitly required so the sensitivity of HAC estimator with respect to the choice of bandwidth (or truncation lag) is avoided, as no bandwidth is involved in the KVB test.

	In statistics literature, \cite{lobato2001} proposed a bandwidth-free test for the uncorrelation at top $K$ lags of a time series and described the principle of bandwidth-free inference using the simple mean testing example, which can be viewed as a special case 
	of KVB test. Inspired by \cite{lobato2001} and \cite{kiefer2000}, \cite{shao2010} proposed the self-normalization (SN, hereafter) technique for the inference (testing and confidence region construction) of a general parameter, including marginal means,  quantiles, and autocorrelation at specific lags, in the setting of stationary time series. \cite{shaozhang2010} further extended self-normalization to testing for a change-point in a parameter associated with a weakly dependent time series and modified the self-normalizer to adapt to the change-point testing problem.  For many follow-up work on self-normalization for time series, we refer the reader to the review by \cite{shao2015}. A major message from this line of literature is that no tuning parameter (i.e., bandwidth or truncation lag) is needed in conducting hypothesis testing or confidence interval construction as we can use an inconsistent estimator of asymptotic variance-covariance matrix (or long run covariance matrix) and the resulting studentized statistic  is asymptotically pivotal. Both theoretical and empirical research suggest that the size associated with bandwidth-free test is typically more accurate as compared to the classical HAC-based method with some degree of power loss [\cite{jansson2004}, \cite{kiefer2005}, \cite{sun2008},  \cite{zhangshao2013}].

	Despite the implementational convenience and size accuracy of bandwidth-free tests, it has been empirically observed that the size can still be quite distorted when the dimension of the parameter is moderate and the temporal dependence is moderate/strong; see Fig.~\ref{fig_lobato_curve} for an illustration in the mean testing context. This phenomenon is not superising in view of the theoretical work by \cite{sun2014c}, where the impact of dimensionality and serial dependence on the size distortion was carefully investigated via edgeworth expansion for a class of $F$-test statistics under both small-$b$ and fixed-$b$ asymptotics [\cite{kiefer2005}]. Note that in the mean testing problem, the SN test statistic corresponds to fixed-$b$ asymptotics with $b=1$ and the use of Bartlett kernel [\cite{kiefer2002}].

	Moderate dimensional time series with moderate/strong temporal dependence are prevalent in practice. Therefore there is a strong need to develop new testing methods  that can control the size when the dimension of the parameter is moderate and the temporal dependence is moderate/strong. When the time series is very strongly auto-correlated, \cite{muller2014} and \cite{sun2014a}  proposed methods to control the size in a  near unit root model and focused on the univariate setting. By contrast, the temporal dependence in our framework is relatively weak compared to those examined in their work as the focus is more on reducing the size distortion due to the moderate dimensionality. 
	
	In this article, we develop a sample splitting based approach (called SS-SN) to 
	reduce the size distortion associated with bandwidth-free inference. The basic idea is to split the full sample into two parts, with one part used to reduce the dimension of the parameter to one, and the other part used to perform bandwidth-free testing for the dimension-reduced (i.e., one-dimensional) time series.  We show that this SS-SN approach is generally applicable to testing for the multivariate mean, uncorrelation at a finite number of lags, regression coefficients in linear regression models with time series regressor/error and a change point in multivariate mean, among others. Using the orthogonal increment property of Brownian motion and a novel conditioning argument, we obtain the limiting null distributions of our $L_{\infty}$-type and $L_2$-type SS-SN test statistics when the dimension is fixed, which is pivotal and is independent of the sample splitting proportion. We also derive the asymptotic power function for our proposed test and compare to its bandwidth-free counterpart. By using a recent result on  sequential Gaussian approximation for time series in a growing-dimensional environment (\cite{mies2022}), we show the asymptotic validity of our SS-SN test statistics in a multivariate mean testing problem when the dimension diverges as sample size grows to infinity. Under the same setting, we further obtain the asymptotic independence of $L_{\infty}$-type and $L_2$-type SS-SN test statistics under the null, which justifies the Bonferroni test that combines the $L_{\infty}$-type and $L_2$-type tests in achieving all-round power against dense and sparse alternatives. The theoretical tools we develop for the growing-dimensional setting are of independent interest.

	The idea of sample splitting based inference is not new, and there is a large literature  in statistics and machine learning;  see  \cite{shafer2008tutorial}, \cite{wasserman2009high}, \cite{rinaldo2019bootstrapping}, \cite{wasserman2020universal}, \cite{du2023} among others. However, it seems that  sample splitting is mostly used for the inference of  independent data.  In the context of time series, 
	sample splitting was used for the post-selection inference in \cite{lunde2019sample}, for the identification testing for structural VAR models in \cite{mac2022} and for unit root testing in \cite{chang2022testing}. These are the only references we are aware of. 
	The scope and property of our proposed SS-SN inference are substantially different from these papers and have no overlap with the existing literature.

	The rest of this paper is organized as follows. Section \ref{sec_mean} describes the SN method in a multi-dimensional mean testing problem and illustrates its large size distortion due to moderate dimension and temporal dependence. Then we propose our SS-SN test statistics and investigate their asymptotic properties under the null and local alternatives in Sections~\ref{sec_mean} and \ref{sec_power_e}. In Sections \ref{sec_diverge},  \ref{sec_diverge2}, and \ref{sec:asymind}, we present the asymptotic theories for the two SS-SN test statistics when the dimension is allowed to diverge. In Section \ref{ch4}, we present several extensions, including testing for zero autocorrelation in a time series, linear hypothesis testing in a regression model and testing for a change point in multivariate mean. Simulation results are provided in Section \ref{ch3} and Section \ref{ch6} concludes. Proofs for main results and auxiliary lemmas are gathered in the supplemental material, which also contains some variants of SS-SN test statistics based on different rescaling methods, corresponding simulation results and a real data illustration.

	\section{Methodology and Theory} \label{ch2}
	In this section, we  introduce our $L_\infty$-type SS-SN test statistic in the case of testing the mean of a multivariate stationary time series in Section \ref{sec_mean}, and we develop an $L_2$-type SS-SN statistic which targets the dense alternative in Section \ref{sec_power_e}. We present the asymptotic theories for the two SS-SN test statistics in the growing dimensional setting in Sections \ref{sec_diverge}-\ref{sec:asymind}, respectively.

	\subsection{Hypothesis Testing on Multi-dimensional Mean}\label{sec_mean}
	
	Let $\mathbf{X}_t=(X^1_{t},X^2_{t},\dots,X^p_{t})^\top$ be a $p$-dimensional stationary time series with mean $E(\mathbf{X_t}) = \boldsymbol \mu = (\mu^1,\mu^2,\dots,\mu^p)^\top \in \R^p$. We want to test the null hypothesis $H_0: \boldsymbol \mu=\boldsymbol \mu_0=(\mu^1_{0},\mu^2_{0},\dots,\mu^p_{0})^\top$ against $H_A: \boldsymbol \mu \neq \boldsymbol{\mu}_0$. Denote $\boldsymbol{S}_{a,b} = \sum_{t=a}^b\mathbf{X}_t$, $S_{a,b}^{j} =\sum_{t=a}^b{X}^j_{t}$, the autocovariance matrix $\boldsymbol \Gamma (k) = E[(\mathbf{X}_t-\boldsymbol{\mu})(\mathbf{X}_{t+k}-\boldsymbol{\mu})^\top]$ and let $\boldsymbol{\Gamma} = \sum_{k = -\infty}^{\infty}\boldsymbol \Gamma (k)$ be the long run covariance matrix with the $(i,j)$ element being $\Gamma_{ij}$. Also denote the $i$th row of $\boldsymbol{\Gamma}^{1/2}$ as $\boldsymbol{\Gamma}_i^\top$, so we have $\Gamma_{ij}=\boldsymbol{\Gamma}_i^\top\boldsymbol{\Gamma}_j$. The following functional central limit theorem (FCLT) is needed in deriving the asymptotic properties. Here we let $D^d[0,1]$ (when $d=1$, we omit the superscript and just use $D[0,1]$) denote the space of $\R^d$ valued functions on $[0,1]$ which are right continuous and have left limit, endowed with the topology induced by the multidimensional Skorokhod metric (\cite{bill2013}).
	\begin{assumption} [FCLT]\label{assump_fclt_mean}
		We assume that
		\begin{eqnarray}
			n^{-1/2}(\boldsymbol{S}_{1,\fr} - \fr\boldsymbol{\mu})\Rightarrow \boldsymbol{\Gamma}^{1/2}\mathbf{B}_p(r) \mbox{ on } D^p[0,1],
		\end{eqnarray}
		where $\mathbf{B}_p(r): [0,1]\to \R^p$ is a $p$-dimensional vector of independent Brownian motions ( we omit the subscript and use $B(r)$ when $p=1$), and "$\Rightarrow$" signifies weak convergence in $D^p[0,1]$ (\cite{bill2013}) .
	\end{assumption}	
	
	Throughout, $p$ is held fixed except for Sections \ref{sec_diverge}, \ref{sec_diverge2} and \ref{sec:asymind}, where $p=p_n$ is diverging as $n\rightarrow\infty$. When $p$ is fixed, the above FCLT holds under mild moment and weak dependence assumptions; see \cite{lobato2001} for discussion on the primitive assumptions for FCLT. The SN (self-normalized) test statistic is  
	$$T_n^p = n^{-1}(\boldsymbol{S}_{1,n}-n\boldsymbol{\mu}_0)^\top(\boldsymbol{V}_n^p)^{-1}(\boldsymbol{S}_{1,n}-n\boldsymbol{\mu}_0),$$
	where $\boldsymbol{V}_n^p=n^{-2}\sum_{t=1}^{n}\{\boldsymbol{S}_{1,t}-(t/n)\boldsymbol{S}_{1,n}\}\{\boldsymbol{S}_{1,t}-(t/n)\boldsymbol{S}_{1,n}\}^\top$. Under the null,  $T_n^p$ converges in distribution to $U_p=\boldsymbol{B}_p(1)^\top\boldsymbol{V}_p^{-1}\boldsymbol{B}_p(1),$
	where $\boldsymbol{V}_p=\int_{0}^{1} [\boldsymbol{B}_p(r)-r\boldsymbol{B}_p(1)][\boldsymbol{B}_p(r)-r\boldsymbol{B}_p(1)]^\top dr.$
	Since the distribution of $U_p$ is pivotal and its upper critical values have been tabulated in \cite{lobato2001}, we reject the null hypothesis at level $\zeta$ if $T_n^p$ is larger than the $100(1-\zeta)\%$ upper critical value of $U_p$, denoted as $U_{p,\zeta}$.
	
	One major drawback of this SN testing procedure is that there is large size distortion under the null when $n$ is small/moderate, and when $p$ is moderately large or the autocorrelation is moderate/strong. To show numerically how large the size distortion is, we test the null hypothesis $\boldsymbol \mu=\mathbf{0}$, where $\boldsymbol{0}$ is a vector in $\R^{p}$ with all elements being $0$, and simulate the data from the VAR(1) process $\mathbf{X}_t = \rho \mathbf{I}_p\mathbf{X}_{t-1}+\boldsymbol{\epsilon}_t,$ where $\mathbf{I}_p$ is the $p$-dimensional identity matrix and $\boldsymbol{\epsilon}_t \stackrel{iid}{\sim} N(\boldsymbol{0},\mathbf{I}_p)$. We set the nominal level at $5\%$ and repeat the mean test 5000 times with the length of time series $n= 100$ and $p\in\{5,10\}$. As shown in Fig.~\ref{fig_lobato_curve}, the size distortion for the above SN test when $p=10$ is much larger than when $p=5$ and the test is severely oversized when $\rho$ is close to $1$ and severely undersized when $\rho$ is close to $-1$.
	\begin{figure}[H]
		\centering
		\includegraphics[width=.6\textwidth]{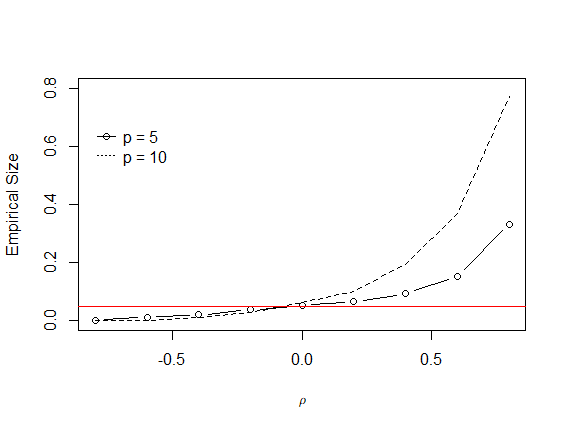}
		\caption{Empirical size for traditional SN test on multivariate mean}
		\label{fig_lobato_curve}
	\end{figure}

	Next, we introduce a SS-SN test statistic to reduce the size distortion. The SS-SN procedure consists of two steps: (a) we split the sample into two parts: ${\cal P}_1:=\{\mathbf{X}_1,\dots,\mathbf{X}_\fa\}$ and ${\cal P}_2:=\{\mathbf{X}_{\fa+1},\dots,\mathbf{X}_n\}$, where $\alpha\in (0,1)$ is the splitting ratio. For $i=1,2,\dots,p$, denote $\sigma_{i}^2=\var(X_1^i)$, $\hat\sigma_{i}^2 = \frac{1}{\fa}\sum_{t=1}^{\fa}(X_t^i-S_{1,\fa}^i/\fa)^2$ and based on the first part ${\cal P}_1$, define 
	\begin{equation}\label{eq_mean_j}
		\hat{j}=\argmax_{j = 1,2,\dots,p}  \frac{n^{-1} ({S}^j_{1,\fa}-\fa {\mu}^j_0)^2}{\hat\sigma_{j}^2},
	\end{equation}
	which represents the coordinate that corresponds to the largest deviation from the null. Note that $\hat j$ bears the signal and is solely determined by the difference between sample mean and true mean, rescaled by the sample variance of each component time series in ${\cal P}_1$. Under the alternative, $\hat j$ estimates the coordinate with the strongest deviation from the null as scaled by its corresponding marginal variance, see Theorem \ref{th_mean} below. Note that there are other sensible ways of rescaling in determining $\hat{j}$ in Equation (\ref{eq_mean_j}). We refer the reader to Remark \ref{rmk_3} in Section \ref{sec_power_e} and Appendix \ref{app_A}. (b) Then we construct a SN test statistic based on the $\hat{j}$-th dimension/component of the second part ${\cal P}_2$, or the projected sample $\{\mathbf{e}_{\hat j}^\top\mathbf{X}_{\fa+1},\dots,\mathbf{e}_{\hat j}^\top\mathbf{X}_n\}$, where $\mathbf{e}_j$ is a vector in $\R^p$ with $j$th element being 1 and all other elements being 0. So the $SS\text{-}SN_1$ statistic is defined as:
	\begin{equation}\label{eq_stat_mean}
		T^{(M)}_{n}(\alpha,\hat j) =\frac {(n-\fa)^{-1} ({S}^{\hat j}_{\fa+1,n}-(n-\fa)\mu_{0}^{\hat j})^2} {V^{(M)}_n(\hat j)},
	\end{equation}
	where $V^{(M)}_n(j)=(n{-}\fa)^{-2} \sum_{k=\fa+1}^{n}({S}^{ j}_{\fa+1,k}{-}\frac{k-\fa}{n-\fa}{S}^{ j}_{\fa+1,n})^2$. To derive the limiting distributions of $T^{(M)}_{n}(\alpha,\hat j)$ under both null and alternative, we introduce another assumption on $\sigma_i^2$ and $\hat\sigma_{i}^2$.
	\begin{assumption} \label{assump_sigma_mean}
		$\sigma_i^2>0$ and $\hat\sigma_{i}^2\stackrel{p}{\to}\sigma_i^2$ for $i=1,2,\dots,p$.
	\end{assumption}
	
	Assumption \ref{assump_sigma_mean} is mild and can be verified by imposing suitable moment and weak dependence assumptions on $\{\mathbf{X}_t\}_{t\in\Z}$. As shown below, the limiting null distribution of $T^{(M)}_{n}(\alpha,\hat j)$ is $U_1$, so the level $\zeta$ test is $\boldsymbol{1}(T^{(M)}_{n}(\alpha,\hat j)>U_{1,\zeta})$. The following theorem shows the asymptotic properties of $T^{(M)}_{n}(\alpha,\hat j)$ under the null and alternatives.
	
	\begin{theorem}\label{th_mean}
		Suppose Assumptions \ref{assump_fclt_mean} and \ref{assump_sigma_mean} hold. Then (i) under $H_0$, we have 
		\begin{equation}
			T^{(M)}_{n}(\alpha,\hat j) \stackrel{\D}{\to} U_1,
		\end{equation}
		where ``$\stackrel{\D}{\to}$" signifies convergence in distribution. 
		(ii) Under $H_A$, let the true mean be $\boldsymbol{\mu} = \boldsymbol{\mu}_n$ and denote $||\boldsymbol{\mu}_n-\boldsymbol{\mu}_0||_\infty=\max_{j=1,2,\dots,p}|\mu_n^j-{\mu}^j_0|$.  
		\begin{enumerate}[1.]
			\item If $\sqrt{n}||\boldsymbol{\mu}_n-\boldsymbol{\mu}_0||_\infty\to\infty$, then $T^{(M)}_{n}(\alpha,\hat j) \stackrel{p}{\to} \infty$, thus the limiting power is 1.
			\item If $\sqrt{n}(\boldsymbol{\mu}_n-\boldsymbol{\mu}_0)\to \mathbf{c}:=(c^1,c^2,\dots,c^p)$ and $||\mathbf{c}||_\infty \neq {0}$, then we have $$\hat{j} \stackrel{\D}{\to}   \argmax_{j = 1,2,\dots,p} \frac{\big\{B^{(j)}(\alpha)+\alpha c^j \big\}^2}{\sigma_j^2}\stackrel{d}{=}  j^\ast, $$ $$T^{(M)}_{n}(\alpha,\hat j)  \stackrel{\D}{\to}  U^\ast,$$ where $B^{(j)}(r)=\boldsymbol{\Gamma}_j^\top\mathbf{B}_p(r)$ is mean zero Brownian motion with covariance $\cov(B^{(i)}(u),B^{(j)}(v))=\min \{u,v\}\Gamma_{ij}$ and the conditional distribution of $U^\ast$ given $j^\ast=j$ is $$U^\ast\big|_{j^\ast=j} \stackrel{d}{=}  \frac{\big\{B(1)+ \sqrt{\frac{1-\alpha}{\Gamma_{ j  j}}}c^j\big\}^2}{\int_0^{1}\Big\{B(r)-rB(1)\Big\}^2 dr}.$$ Since the noncentral chi-square distribution is statistically larger than chi-square distribution and $\{B(r)-rB(1)\}_{r\in [0,1]}$ is independent of $B(1)$, our test has nontrivial power asymptotically.
			\item If $\sqrt{n}||\boldsymbol{\mu}_n-\boldsymbol{\mu}_0||_\infty\to 0$, then $T^{(M)}_{n}(\alpha,\hat j) \stackrel{\D}{\to} U_1,$ so our test has trivial power asymptotically.
		\end{enumerate}
	\end{theorem}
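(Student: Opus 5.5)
The plan is to reduce everything to a joint functional limit via Assumption \ref{assump_fclt_mean} and the continuous mapping theorem, and then exploit the independence of Brownian increments on $[0,\alpha]$ and $[\alpha,1]$. Write $\mathbf{W}_n(r)=n^{-1/2}(\boldsymbol{S}_{1,\fr}-\fr\boldsymbol{\mu}_n)$, so that $\mathbf{W}_n\Rightarrow \boldsymbol{\Gamma}^{1/2}\mathbf{B}_p$; under local alternatives we simply add the drift $n^{-1/2}\fr(\boldsymbol{\mu}_n-\boldsymbol{\mu}_0)\to r\mathbf{c}$. For each fixed $j$, define the first-part score $D_n^j=n^{-1}(S^j_{1,\fa}-\fa\mu^j_0)^2/\hat\sigma_j^2$ and the second-part numerator and denominator of $T^{(M)}_n(\alpha,j)$. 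All of these are continuous functionals of the process $\mathbf{W}_n$ restricted to $[0,\alpha]$ and $[\alpha,1]$, respectively; Assumption \ref{assump_sigma_mean} together with Slutsky handles $\hat\sigma_j^2$. Substituting $r=\alpha+(1-\alpha)s$ in the second part gives
\[
(n-\fa)^{-1/2}\big(S^j_{\fa+1,\fa+\lfloor (n-\fa)s\rfloor}-\lfloor (n-\fa)s\rfloor\mu^j_0\big)\Rightarrow \sqrt{\Gamma_{jj}}\,\tilde B^{(j)}(s)+s\sqrt{1-\alpha}\,c^j ,
\]
where $\tilde B^{(j)}(s)=\{B^{(j)}(\alpha+(1-\alpha)s)-B^{(j)}(\alpha)\}/\sqrt{(1-\alpha)\Gamma_{jj}}$ is a standard Brownian motion. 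Hence, jointly over $j=1,\dots,p$, $(D_n^j, T^{(M)}_n(\alpha,j))_j$ converges to $\big(\{B^{(j)}(\alpha)+\alpha c^j\}^2/\sigma_j^2,\ G_j\big)_j$, where
\[
G_j=\frac{\{\tilde B^{(j)}(1)+\sqrt{(1-\alpha)/\Gamma_{jj}}\,c^j\}^2}{\int_0^1\{\tilde B^{(j)}(s)-s\tilde B^{(j)}(1)\}^2ds}.
\]
The drift cancels in the bridge in the denominator. The vector of first coordinates is a functional of $\mathbf{B}_p$ on $[0,\alpha]$, the vector of $G_j$'s is a functional of its increments after $\alpha$, and so the two vectors are independent.

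Next I pass to $\hat j$ and to $T^{(M)}_n(\alpha,\hat j)=\sum_j T^{(M)}_n(\alpha,j)\mathbf 1(\hat j=j)$. The argmax map is continuous at every point with a unique maximizer. The limiting scores have a.s. unique maximizer provided the Gaussian vector $(B^{(j)}(\alpha))_j$ has no ties a.s., which requires nondegenerate marginals $\Gamma_{jj}>0$ and no two coordinates with $\Gamma_{ii}\sigma_j^2=\Gamma_{jj}\sigma_i^2$ and perfect correlation. This is the main technical obstacle. My first approach is to assume it, or, if it fails, to break ties by a fixed rule and note that tied coordinates then share the same increment process up to sign and scale, so $G_j$ is the same on the tie set and the argument is unaffected. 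The continuous mapping theorem then gives $(\hat j, T^{(M)}_n(\alpha,\hat j))\stackrel{\D}{\to}(j^*, G_{j^*})$. The key step is a conditioning argument: because $j^*$ is independent of $(G_j)_j$, we have $P(G_{j^*}\le x\mid j^*=j)=P(G_j\le x)$. This yields exactly the stated conditional law of $U^*$, which proves (ii).2.

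For (i), set $\mathbf c=\mathbf 0$. Every $G_j$ then has law $U_1$, so the mixture over $j^*$ is $U_1$ regardless of the law of $j^*$. For (ii).3 the drift vanishes and the same argument applies. Here the limit of $\hat j$ may be an arbitrary random index, but the independence and the identical $U_1$ law of each $G_j$ still give $U_1$. If convergence of $\hat j$ is not needed, it suffices to bound $P(T\le x)=\sum_j P(T^{(M)}_n(\alpha,j)\le x,\hat j=j)$ and use the asymptotic independence along subsequences.

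For (ii).1, pick $j_n$ attaining $\|\boldsymbol\mu_n-\boldsymbol\mu_0\|_\infty$, and pass to subsequences so that $j_n\equiv j_0$ is constant, since $p$ is fixed. Then $D_n^{j_0}\to\infty$ in probability. Any coordinate $k$ with $\sqrt n|\mu^k_n-\mu^k_0|$ bounded has $D^k_n=O_p(1)$, so with probability tending to one $\hat j$ lies in the set $J$ of coordinates whose drift diverges. For each $j\in J$, the numerator of $T^{(M)}_n(\alpha,j)$ is $\{O_p(1)+\sqrt{n-\fa}(\mu^j_n-\mu^j_0)\}^2\to\infty$, while the denominator is drift-free and converges to a strictly positive random variable. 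Therefore $\min_{j\in J}T^{(M)}_n(\alpha,j)\to\infty$ in probability, and the power tends to one.
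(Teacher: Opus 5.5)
Your proposal is correct and follows essentially the paper's route. It rests on the same ingredients: a joint FCLT/continuous-mapping limit for the first-part scores and the second-part SN statistics, independence of $\mathbf{B}_p$ on $[0,\alpha]$ from its increments after $\alpha$, and conditioning on the selected index. Two substitutions differ only in bookkeeping. Your Slutsky treatment of $\hat\sigma_j^2$ replaces the paper's $\hat j$-versus-$\hat k$ comparison. Your subsequence argument for (ii).1 replaces the paper's sandwich bound showing $\sqrt{n}|\mu_n^{\hat j}-\mu_0^{\hat j}|\stackrel{p}{\to}\infty$.

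One detail needs tightening in (ii).1. Extract the subsequence so that every coordinate's drift $\sqrt{n}|\mu_n^k-\mu_0^k|$ converges in $[0,\infty]$, not just so that $j_n$ is constant. Otherwise the dichotomy ``bounded, or in $J$'' need not hold.
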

	
	The limiting null distribution $U_1$ is the same as the one in \cite{lobato2001} when $p=1$ and the critical values are already tabulated there. Also it is interesting to note that the limiting null distribution does not depend on the sample splitting proportion $\alpha\in (0,1)$. We shall study the impact of $\alpha$ on size accuracy and power later.

	\subsection{$L_2$-type SS-SN Statistic}\label{sec_power_e}
	
	The $SS\text{-}SN_1$ test statistic is expected to have good power when the alternative is sparse and strong, as only the $\hat j$-th component time series is used in the testing after dimension reduction. As will be shown in Fig.~\ref{fig_true_dense_curve} later, $SS\text{-}SN_1$ test has more power loss under the dense alternative (i.e., a substantial portion of coordinates of  $\boldsymbol{\mu}-\boldsymbol{\mu}_0$ is nonzero) than under the sparse alternative (i.e., a small portion of coordinates of $\boldsymbol{\mu}-\boldsymbol{\mu}_0$ is nonzero), as compared to the traditional SN test. This motivates us to propose another SS-SN statistic which can preserve the power under the dense alternative. To be specific, on ${\cal P}_1$, define 
	\begin{equation} \label{eq_nopt}
		\boldsymbol{\hat{P}}= diag\big\{\frac{1}{\sqrt{\hat\sigma_{1}^2}},\dots,\frac{1}{\sqrt{\hat\sigma_{p}^2}}\big\} \frac{1}{\sqrt{n}}(\boldsymbol{S}_{1,\fa}-\fa\boldsymbol{\mu}_0).
	\end{equation}
	Note that under the alternative, $\boldsymbol{\hat{P}}$ estimates the direction with the strongest deviation from the null as measured by the squared $L_2$ norm of the rescaled signal $(\boldsymbol{\mu}-\boldsymbol{\mu}_0)^\top diag\big\{\frac{1}{{\sigma_{1}^2}},\dots,\frac{1}{{\sigma_{p}^2}}\big\}(\boldsymbol{\mu}-\boldsymbol{\mu}_0)$. Then the $L_2$-type SS-SN statistic (i.e., $SS\text{-}SN_P$) is defined as:
	\begin{equation}
		Q^{(M)}_{n}(\alpha) =\frac {(n-\fa)^{-1} \Big\{\boldsymbol{\hat{P}}^\top\big[\mathbf{S}_{\fa+1,n}{-}(n{-}\fa)\boldsymbol{\mu}_0\big] \Big\}^2} {V_n(\alpha)},
	\end{equation}
	where $V_n(\alpha)=(n-\fa)^{-2} \sum_{k=\fa+1}^{n}\Big\{ \boldsymbol{\hat{P}}^\top \big[\mathbf{S}_{\fa+1,k}-\frac{k-\fa}{n-\fa}\mathbf{S}_{\fa+1,n}\big]\Big\}^2$. Instead of constructing the test statistic using the $\hat j$-th coordinate of the second part ${\cal P}_2$ as done for $SS\text{-}SN_1$, we construct the SN statistic based on the projected sample $\{\boldsymbol{\hat{P}}^\top\mathbf{X}_{\fa+1},\dots,\boldsymbol{\hat{P}}^\top\mathbf{X}_n\}$. The following theorem shows the asymptotic properties of $Q^{(M)}_{n}(\alpha)$ under the null and alternatives.
	
	\begin{theorem}\label{th_enhance}
		Suppose Assumption \ref{assump_fclt_mean} and \ref{assump_sigma_mean} hold. Then (i) under $H_0$, we have 
		\begin{equation}
			Q^{(M)}_{n}(\alpha) \stackrel{\D}{\to} U_1.
		\end{equation}
		(ii) Under $H_A$, let the true mean be $\boldsymbol{\mu} = \boldsymbol{\mu}_n$ and denote $||\boldsymbol{\mu}_n-\boldsymbol{\mu}_0||=\sqrt{\sum_{i=1}^{p}(\mu_n^i-\mu_0^i)^2}$.
		\begin{enumerate}[1.]
			\item If $\sqrt{n}||\boldsymbol{\mu}_n-\boldsymbol{\mu}_0||\to\infty$, then $Q^{(M)}_{n}(\alpha) \stackrel{p}{\to} \infty$, thus the limiting power is 1.
			\item If $\sqrt{n}(\boldsymbol{\mu}_n-\boldsymbol{\mu}_0)\to \mathbf{c}:=(c^1,c^2,\dots,c^p)$ and $||\mathbf{c}|| \neq {0}$, then we have $$\boldsymbol{\hat{P}} \stackrel{\D}{\to}  diag\big\{\frac{1}{\sqrt{\sigma_{1}^2}},\dots,\frac{1}{\sqrt{\sigma_{p}^2}}\big\}  \big[\boldsymbol{\Gamma}^{1/2}\mathbf{B}_p(\alpha) +\alpha\mathbf{c}\big] \stackrel{d}{=}  \boldsymbol{P}^\ast, $$ $$Q^{(M)}_{n}(\alpha)  \stackrel{\D}{\to}  U^{\ast\ast},$$and the conditional distribution of $U^{\ast\ast}$ given $\boldsymbol{P}^\ast=\boldsymbol{P}$ is $$U^{\ast\ast}\big|_{\boldsymbol{P}^\ast=\boldsymbol{P}} \stackrel{d}{=}   \frac{\big\{B(1)+ \sqrt{\frac{1-\alpha}{\boldsymbol{P}^\top\boldsymbol{\Gamma}\boldsymbol{P}}}\boldsymbol{P}^\top\mathbf{c}\big\}^2}{\int_0^{1}\Big\{B(r)-rB(1)\Big\}^2 dr}.$$ In this case, our test has nontrivial power asymptotically.
			\item If $\sqrt{n}||\boldsymbol{\mu}_n-\boldsymbol{\mu}_0||\to 0$, then $Q^{(M)}_{n}(\alpha) \stackrel{\D}{\to} U_1,$ so our test has trivial power asymptotically.
		\end{enumerate}
	\end{theorem}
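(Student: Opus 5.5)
The plan is to work on the joint limit of the two sample parts through Assumption \ref{assump_fclt_mean}, and to exploit the independent increments of $\mathbf{B}_p$ on $[0,\alpha]$ and $[\alpha,1]$.

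Write $\mathbf{W}_n(r)=n^{-1/2}(\boldsymbol{S}_{1,\fr}-\fr\boldsymbol{\mu}_n)$. By the FCLT and the continuous mapping theorem, the following hold jointly:
\[
n^{-1/2}(\boldsymbol{S}_{1,\fa}-\fa\boldsymbol{\mu}_n)\Rightarrow \boldsymbol{\Gamma}^{1/2}\mathbf{B}_p(\alpha),
\]
\[
n^{-1/2}\big(\mathbf{S}_{\fa+1,\lfloor n s\rfloor}-(\lfloor ns\rfloor-\fa)\boldsymbol{\mu}_n\big)\Rightarrow \boldsymbol{\Gamma}^{1/2}\{\mathbf{B}_p(s)-\mathbf{B}_p(\alpha)\}, \quad s\in[\alpha,1].
\]
Write $\mathbf{G}(u)=\boldsymbol{\Gamma}^{1/2}\{\mathbf{B}_p(\alpha+u(1-\alpha))-\mathbf{B}_p(\alpha)\}/\sqrt{1-\alpha}$. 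This is a $p$-dimensional Brownian motion with covariance $\boldsymbol{\Gamma}$ on $[0,1]$, and it is independent of $\mathbf{B}_p(\alpha)$. Under local alternatives $\sqrt n(\boldsymbol{\mu}_n-\boldsymbol{\mu}_0)\to\mathbf{c}$, combining this with Assumption \ref{assump_sigma_mean} and Slutsky gives $\boldsymbol{\hat P}\to\boldsymbol{P}^\ast$ jointly with the second-part process. For the second part, the recentring at $\boldsymbol{\mu}_0$ adds the drift $(1-\alpha)u\,\mathbf{c}$ to the numerator and cancels in the self-normalizer, because the bridge $\mathbf{S}_{\fa+1,k}-\frac{k-\fa}{n-\fa}\mathbf{S}_{\fa+1,n}$ is invariant to the mean. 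A Riemann-sum argument then gives
\[
Q_n^{(M)}(\alpha)\stackrel{\D}{\to}\frac{\{\boldsymbol{P}^{\ast\top}\mathbf{G}(1)+\sqrt{1-\alpha}\,\boldsymbol{P}^{\ast\top}\mathbf{c}\}^2}{\int_0^1\{\boldsymbol{P}^{\ast\top}(\mathbf{G}(r)-r\mathbf{G}(1))\}^2dr}.
\]
The factors of $(n-\fa)$ and $n$ combine to powers of $(1-\alpha)$, and these cancel between numerator and denominator. The map is continuous wherever the denominator is nonzero, which holds almost surely when $\boldsymbol{P}^\ast\neq0$.

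The key step is the conditioning argument. Because $\boldsymbol{P}^\ast$ is independent of $\mathbf{G}$, conditional on $\boldsymbol{P}^\ast=\boldsymbol{P}$ the process $\boldsymbol{P}^\top\mathbf{G}(\cdot)$ is $\sqrt{\boldsymbol{P}^\top\boldsymbol{\Gamma}\boldsymbol{P}}\,B(\cdot)$. Dividing numerator and denominator by $\boldsymbol{P}^\top\boldsymbol{\Gamma}\boldsymbol{P}$ gives exactly the stated conditional law of $U^{\ast\ast}$. Nondegeneracy needs $\boldsymbol{P}^\top\boldsymbol{\Gamma}\boldsymbol{P}>0$ almost surely. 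This holds if $\boldsymbol{\Gamma}$ is positive definite, and more generally because $\boldsymbol{P}^\ast$ has a component $\boldsymbol{\Gamma}^{1/2}\mathbf{B}_p(\alpha)$, so $\boldsymbol{P}^{\ast\top}\boldsymbol{\Gamma}\boldsymbol{P}^\ast=0$ would force a degenerate event of probability zero when $\boldsymbol{\Gamma}\neq0$. I expect this point to need some care.

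With the conditional law in hand, the three regimes follow.
\begin{itemize}
\item Part (i) and case 3 are the case $\mathbf{c}=\mathbf 0$. The conditional law is $U_1$ for every $\boldsymbol{P}$, so the unconditional law is $U_1$.
\item Case 2 is immediate from the display above.
\item Case 1, $\sqrt n\|\boldsymbol{\mu}_n-\boldsymbol{\mu}_0\|\to\infty$, is the main obstacle, because $\boldsymbol{\hat P}$ then diverges.
\end{itemize}

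For case 1, I would normalize by setting $\mathbf{d}_n=\sqrt n(\boldsymbol{\mu}_n-\boldsymbol{\mu}_0)$ and $\boldsymbol{\hat P}/\|\mathbf{d}_n\|$. Passing to subsequences, $\mathbf{d}_n/\|\mathbf{d}_n\|\to\mathbf{v}$ with $\|\mathbf{v}\|=1$. Then $\boldsymbol{\hat P}/\|\mathbf{d}_n\|\stackrel{p}{\to}\alpha D^{-1}\mathbf{v}$, where $D=diag\{\sigma_i\}$. The statistic is invariant to this scaling of $\boldsymbol{\hat P}$. Its denominator converges to a nondegenerate limit, and its numerator is
\[
\{O_p(1)+\sqrt{1-\alpha}\,\alpha\,\mathbf{v}^\top D^{-1}\mathbf{d}_n\}^2.
\]
Here $\mathbf{v}^\top D^{-1}\mathbf{d}_n\approx\|\mathbf{d}_n\|\,\mathbf{v}^\top D^{-1}\mathbf{v}\to\infty$, since $D^{-1}$ is positive definite. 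Hence $Q_n^{(M)}(\alpha)\stackrel{p}{\to}\infty$ along every subsequence, and therefore along the full sequence.
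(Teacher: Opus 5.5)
Your proposal is correct and takes essentially the paper's route: FCLT plus continuous mapping, then conditioning on the first-subsample limit and using independence of the Brownian increments on $[0,\alpha]$ and $[\alpha,1]$ to obtain the conditional law. The differences are cosmetic: you treat (i), (ii).2 and (ii).3 as one local-alternative computation with $\mathbf{c}$ possibly $\mathbf{0}$, rather than first swapping $\hat{\mathbf{P}}$ for the population-scaled vector $\hat{\mathbf{K}}$. In (ii).1 you rescale $\hat{\mathbf{P}}$ by $\|\mathbf{d}_n\|$ and pass to subsequences, whereas the paper directly shows the numerator is of order $n^2\|\boldsymbol{\mu}_n-\boldsymbol{\mu}_0\|^4$ and the denominator is $O_p(n\|\boldsymbol{\mu}_n-\boldsymbol{\mu}_0\|^2)$.

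One minor slip in (ii).1. Replacing $\hat{\mathbf{P}}/\|\mathbf{d}_n\|$ by its limit inside the drift leaves a remainder of order $o_p(\|\mathbf{d}_n\|)$, not $O_p(1)$. The conclusion $Q^{(M)}_n(\alpha)\stackrel{p}{\to}\infty$ still holds, because the drift equals $\|\mathbf{d}_n\|\{\alpha\sqrt{1-\alpha}\,\mathbf{v}^\top D^{-1}\mathbf{v}+o_p(1)\}$ with a strictly positive constant, and you only need the denominator to be $O_p(1)$.
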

	\begin{remark}\label{rmk_cost}
		For the traditional SN statistic, the computational cost is of order $O(p^2n+p^3)$, which scales quadratically in $p$ and is $O(p^2n)$ if $p<<n$. By contrast, the computational cost for both our SS-SN statistics are of order $O(pn)$, which is linear in $p$. This could result in substantial saving in computation when $p$ is moderate.  
	\end{remark}
	\begin{remark}\label{rmk_2}
		To understand how $SS\text{-}SN_P$ statistic can reduce power loss incurred by $SS\text{-}SN_1$ under dense alternative, we shall focus on the local alternative as in part (ii).2 of Theorem \ref{th_enhance} with $\mathbf{c}=c\boldsymbol{1}$ and $\boldsymbol{\Gamma} = diag\{\sigma_1^2,\dots,\sigma_p^2\}=\mathbf{I}_{p}$, where $\boldsymbol{1}$ is a vector in $\R^{p}$ with all elements being $1$. According to Theorem 6 in \cite{magnus1986}, $E\frac{\boldsymbol{P}^\top\mathbf{c}\mathbf{c}^\top\boldsymbol{P}}{\boldsymbol{P}^\top\boldsymbol{P}} = pc^2$, so on average, the noncentral constant for the numerator of $U^{\ast\ast}$ is $(1{-}\alpha)pc^2$, which is $p$ times the noncentral constant for the numerator of $U^{\ast}$. Hence $SS\text{-}SN_P$ statistic is expected to outperform $SS\text{-}SN_1$ statistic in power under dense alternative when the same $\alpha$ is used.
	\end{remark}
	\begin{remark}\label{rmk_1}
		
		Under the null, the limiting distribution $U_1$ is pivotal and does not depend on the splitting ratio $\alpha$. Under the local alternative $\sqrt{n}(\boldsymbol{\mu}_n-\boldsymbol{\mu}_0)\to \mathbf{c}$, the limiting distributions of our $SS\text{-}SN_1$ and $SS\text{-}SN_P$ test statistics depend on $\alpha$, $\mathbf{c}$, $\{\sigma_j^2\}_{j=1}^p$ and $\boldsymbol{\Gamma}$. According to Lemma 4 in \cite{lobato2001}, the limiting distribution of the traditional SN test statistic is $[\boldsymbol{B}_p(1)+\boldsymbol{\Gamma}^{-1/2}\mathbf{c}]^\top\boldsymbol{V}_p^{-1}[\boldsymbol{B}_p(1)+\boldsymbol{\Gamma}^{-1/2}\mathbf{c}]$, which depends on $\mathbf{c}$ and $\boldsymbol{\Gamma}$.
		
		To understand the power behavior of $SS\text{-}SN_1$ and $SS\text{-}SN_P
		$ statistics, as compared to the traditional SN test, we set $p=10$ and calculate the asymptotic power $P(U^\ast{>}U_{1,0.05})$, $P(U^{\ast\ast}{>}U_{1,0.05})$ and $P([\boldsymbol{B}_{10}(1)+\boldsymbol{\Gamma}^{-1/2}\mathbf{c}]^\top\boldsymbol{V}_{10}^{-1}[\boldsymbol{B}_{10}(1)+\boldsymbol{\Gamma}^{-1/2}\mathbf{c}]{>}U_{10,0.05})$ under the sparse alternative $\mathbf{c}=c\mathbf{e}_1$ and dense alternative $\mathbf{c}=c\boldsymbol{1}$. Here $\boldsymbol{\Gamma}=diag\{\sigma_1^2,\dots,\sigma_p^2\}=\mathbf{I}_{10}$ and $U_{1,0.05}$, $U_{10,0.05}$ are the $95th$ upper percentile of $U_1$ and $U_{10}$ respectively. We plot the asymptotic power as a function of $c$. Here we approximate the asymptotic power by approximating the $p$-dimensional Brownian motion with standardized partial sum of 5000 iid $N(0, I_p)$ random vectors and setting the number of replications as 3000.

		As shown in Fig.~\ref{fig_true_dense_curve} and \ref{fig_true_sparse_curve}, both  $SS\text{-}SN_1$ and $SS\text{-}SN_P$ statistics have power loss compared with the traditional SN test by Lobato, which is expected since only the second part of data (i.e., ${\cal P}_2$) is directly used in constructing the SN statistics. This is a price we have to pay to achieve more accurate size and lower computational cost. For dense alternative, the power loss of $SS\text{-}SN_1$ statistic appears substantially larger than that of  $SS\text{-}SN_P$ statistic for all $\alpha$, which is consistent with the theoretical finding (cf. Remark \ref{rmk_2}). It appears that when $\alpha=0.3,0.5$, the $SS\text{-}SN_P$ statistic achieves the best power as compared to other SS-SN counterparts, and the power loss relative to Lobato's test is moderate. By contrast, the optimal $\alpha$ for $SS\text{-}SN_1$ is $0.15$, suggesting that the optimal $\alpha$ in general depends on which  $SS\text{-}SN$ statistic we use. For sparse alternative, the optimal power corresponds to 
		$SS\text{-}SN_1$ with $\alpha=0.3,0.5$, which outperforms other $SS\text{-}SN$ counterparts and the power loss is very small compared with traditional SN statistic. It is worth noting that there is no advantage to set $\alpha>0.5$, as that is always dominated by $\alpha=0.5$ in power. 
		\begin{figure}[H]
			\begin{subfigure}{0.5\textwidth}
				\includegraphics[width=1\textwidth]{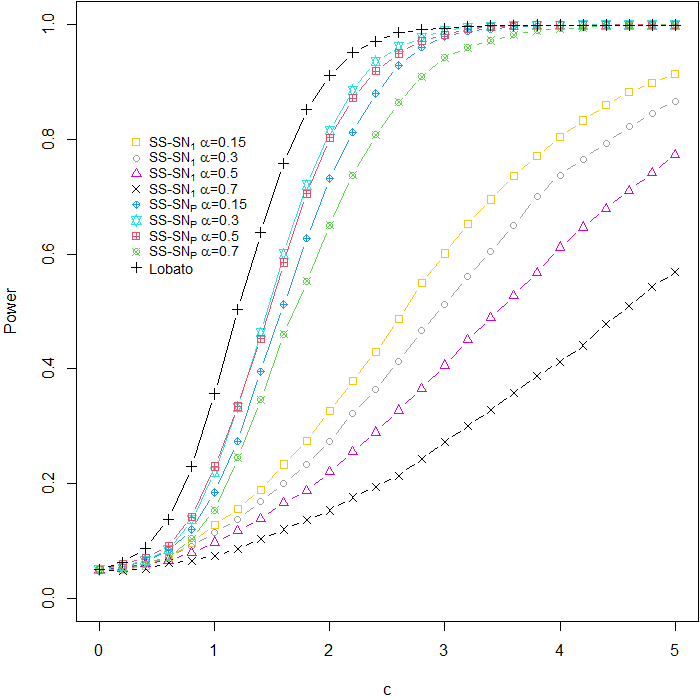}
				\caption{}
				\label{fig_true_dense_curve}
			\end{subfigure}
			\hfill
			\begin{subfigure}{0.5\textwidth}
				\includegraphics[width=1\textwidth]{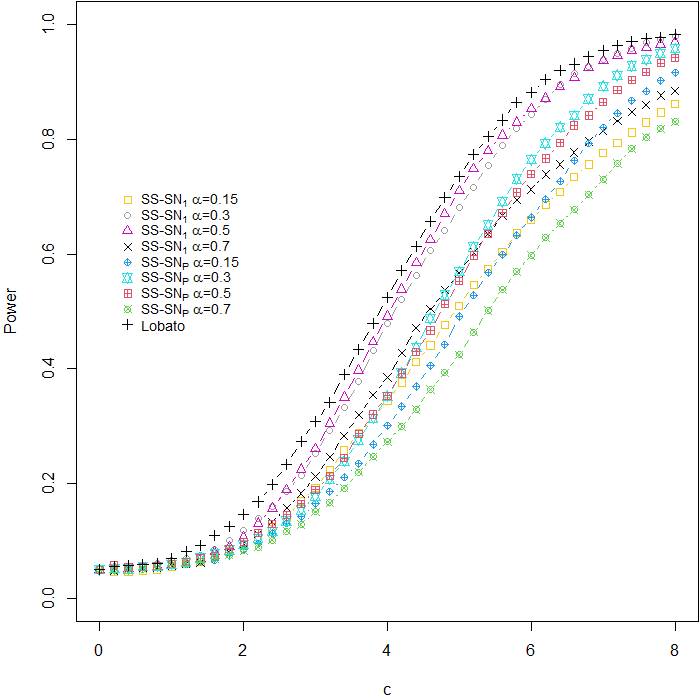}
				\caption{}
				\label{fig_true_sparse_curve}
			\end{subfigure}
			\caption{Asymptotic power under the dense (a) and sparse (b) alternative when testing hypothesis on multivariate mean}
			\label{fig_true_power}
		\end{figure}

	\end{remark}

	As shown in Fig.~\ref{fig_true_power}, $SS\text{-}SN_P$ outperforms $SS\text{-}SN_1$ under dense alternative and $SS\text{-}SN_1$ outperforms $SS\text{-}SN_P$ under sparse alternative. In practice, if the practitioner has the prior knowledge about the type of alternative, then he/she is recommended to choose the one of SS-SN test statistics accordingly. In the absence of such knowledge, we recommend to combine the two SS-SN test statistics via a simple Bonferroni procedure. Since when $\alpha=0.5$, $SS\text{-}SN_1$ have almost best power against sparse alternative and $SS\text{-}SN_P$ have almost best power against dense alternative, we combine $SS\text{-}SN_1$ and $SS\text{-}SN_P$ with $\alpha=0.5$
	and name it $SS\text{-}SN_b$. To be specific, the test using $SS\text{-}SN_b$ rejects the null at 5\% level if either the test using $SS\text{-}SN_1$ with $\alpha=0.5$ or the test using $SS\text{-}SN_P$ with $\alpha=0.5$ rejects the null at 2.5\% level. In Section \ref{ch3}, we show through simulation that the power for $SS\text{-}SN_b$ is close to the best of two SS-SN statistics with overall good performance under both sparse and dense alternatives. 
	
		\begin{remark}\label{rmk_3}
			As pointed out by one of the reviewers, the projection $\mathbf{\hat P}$ defined in Equation (\ref{eq_nopt}) is not necessarily the optimal direction of projection in terms of power maximization. To see that, assume $\boldsymbol{\mu}_0 = \boldsymbol{0}$ and the true mean is $\boldsymbol{\mu}_n\neq \boldsymbol{0}$ under the alternative. For any fixed $\mathbf{P}\in \R^p$, we project the data in the second subsample along the direction $\mathbf{P}$ and construct a one dimensional $SN$ statistic. Then similar to part (ii).2 of Theorem \ref{th_enhance}, the statistic approximately follows the same distribution as $$U_n=\frac{\big\{B(1)+ \sqrt{\frac{1-\alpha}{\boldsymbol{P}^\top\boldsymbol{\Gamma}\boldsymbol{P}}}\boldsymbol{P}^\top\sqrt{n}\boldsymbol{\mu}_n\big\}^2}{\int_0^{1}\Big\{B(r)-rB(1)\Big\}^2 dr}$$ for large enough $n$. Note that the numerator and denominator of $U_n$ are independent and conditioning on $\int_0^{1}\Big\{B(r)-rB(1)\Big\}^2 dr$, $\big\{B(1)+ \sqrt{\frac{1-\alpha}{\boldsymbol{P}^\top\boldsymbol{\Gamma}\boldsymbol{P}}}\boldsymbol{P}^\top\sqrt{n}\boldsymbol{\mu}_n\big\}^2$ follows noncentral chi-square distribution with one degree of freedom and noncentral constant $n(1{-}\alpha)\frac{\boldsymbol{P}^\top\boldsymbol{\mu}_n^\top\boldsymbol{\mu}_n\boldsymbol{P}}{\boldsymbol{P}^\top\boldsymbol{\Gamma}\boldsymbol{P}}$. Following similar argument as in Theorem 3.4.1 of \cite{huang2015projection}, we can show the optimal direction of projection which maximize $P(U_n\geq t)$ for all $t>0$ is the one that maximize the noncentral constant. According to A.4.11 in \cite{seber2003}, it is proportional to $\mathbf{P}_n^\ast = \boldsymbol{\Gamma}^{-1/2}\boldsymbol{\mu}_n$. The projection $\mathbf{\hat P}$ defined in section 2.2 of the paper is an estimator of $ \mathbf{\tilde P}_n=diag\{\sigma_1^2,\sigma_2^2,\dots,\sigma_p^2\}^{-1/2}\boldsymbol{\mu}_n$, which is not the optimal direction in theory.
			
			To pursue the optimal projection, we need to provide a 
			consistent long run covariance matrix estimator, which is hard for moderate dimensional time series when the sample size is small/medium. This point was also expressed in  \cite{korkas2017multiple} for a one-dimensional change-point detection problem, where the authors state that estimating long run variance is a difficult problem in time series analysis and the estimation error would likely not make it worthwhile and they opt to rescale using marginal sample variance.
			This might suggest we estimate $\boldsymbol{\Gamma}(0)^{-1/2}\boldsymbol{\mu}_n$ instead of $\boldsymbol{\Gamma}^{-1/2}\boldsymbol{\mu}_n$, where $\boldsymbol{\Gamma}(0)$ is the marginal covariance matrix. 
			However, the moderate and possibly growing dimensionality [see Section~\ref{sec_diverge}-\ref{sec:asymind}] further adds complication to the estimation of $\boldsymbol{\Gamma}(0)^{-1/2}$, which is well recognized in high-dimensional mean testing for iid data; see \cite{srivastava2008test} and \cite{srivastava2013two}. The latter authors propose to use the diagonal elements of the covariance matrix to replace $\boldsymbol{\Gamma}(0)$ in their testing to avoid the error accumulation due to high-dimensionality. Based on these considerations, we opt to use an estimator of the simple projection vector $\mathbf{\tilde P}_n$, which does not involve any bandwidth parameter and appears to work well in finite sample.

			In Appendix \ref{app_A} of the supplement, we compare different SS-SN
			statistics where the projection vectors are defined using different rescaling methods, and we show that for the SS-SN statistics rescaled by the long run covariance matrix estimator ($SSSN\text{-}L_1$ and $SSSN\text{-}L_P$), the power loss are larger compared with most marginally rescaled SS-SN statistics when there is no or weak cross-sectional dependence (see Fig.~\ref{fig1_rho0.7_dense}, \ref{fig1_rho0.7_sparse}, \ref{fig1c_rho0.7_dense} and \ref{fig1c_rho0.7_sparse}), which confirmed the claim made in \cite{korkas2017multiple} (see the discussion before section 4.1 therein).

		\end{remark}

	\subsection{Asymptotic Theory for $SS\text{-}SN_1$ When the Dimension is Diverging} \label{sec_diverge}
	
	In this subsection, we justify the asymptotic validity of the $SS\text{-}SN_1$ statistic in the multivariate mean testing problem, when the dimension is diverging as sample size grows to infinity. This is consistent with the main theme of this work, that is, to address the large size distortion due to moderate dimensionality of the parameter we test. We shall use a set of notations with their dependence on $n$ being explicit. Specifically, 
	for $t=1,2,\dots,n$, let $\mathbf{X}_{nt}=(X_{nt}^1,X_{nt}^2,\dots,X_{nt}^{p_n})^\top$ be a stationary time series with mean $E(\mathbf{X}_{nt}) = \boldsymbol \mu_n = (\mu_n^1,\mu_n^2,\dots,\mu_n^{p_n})^\top \in \R^{p_n}$ and with long run covariance matrix $\Gamma_n=\sum_{h=-\infty}^{\infty}\cov(\mathbf{X}_{nt},\mathbf{X}_{n(t+|h|)})=(\gamma_{nij})_{i,j=1}^{p_n}$. For any $i=1,2,\dots,p_n$, let $\sigma_{ni}^2$ and $\hat\sigma_{ni}^2$ be the variance of $X_{n1}^i$ and its sample version calculated on $\{X_{n1}^i,\dots,X_{n\fa}^i\}$ and let $\gamma_{ni}=\gamma_{nii}$ be the $i$th diagonal element of $\Gamma_n$. For two functions $p(x)$ and $q(x)$ we write $p\lesssim q$ if there exist constant $C>0$ such that $\limsup\limits_{x\to \infty}|\frac{p(x)}{q(x)}|\leq C$ and we write $p \asymp q$ if $p\lesssim q$ and $q\lesssim p$. In this section, we allow the dimension $p_n$ to grow with $n$ and we want to test the sequence of null hypotheses $H_{n0}:\boldsymbol \mu_n = \boldsymbol{0}$ against $H_{nA}: \boldsymbol \mu_n  \neq \boldsymbol{0}$.   
	
	We use the physical dependence measure of \cite{wu2005} to describe the dependence structure of $\mathbf{X}_{nt}$. Let $\epsilon_i$, $\tilde{\epsilon}_i$, $i \in \Z$ be iid $U[0,1]$ random variables and denote $	\boldsymbol{\epsilon}_t=(\epsilon_t,\epsilon_{t-1},\dots) \in\R^\infty $, $\boldsymbol{\tilde\epsilon}_{t,j}=(\epsilon_t,\dots,\epsilon_{j+1},\tilde{\epsilon}_j,\epsilon_{j-1},\dots)\in\R^\infty $ and $\boldsymbol{\bar \epsilon}_{t,j}=(\epsilon_t,\dots,\epsilon_{j+1},\tilde{\epsilon}_j,\tilde{\epsilon}_{j-1},\dots) \in\R^\infty.$
For some measurable function $\mathbf{G}_n=(G_n^1,\dots,G_n^{p_n})^\top:\R^\infty\to\R^{p_n}$, define 
$$\theta_{n,j,q}=(E||\mathbf{G}_n(\boldsymbol{\epsilon}_0)-\mathbf{G}_n(\boldsymbol{\tilde\epsilon}_{0,-j})||^q)^{\frac{1}{q}},\,\,\,j=0,1,2,\dots$$
where $||\cdot||$ is the Euclidean norm on $\R^{p_n}$. The following assumption is needed to derive a strong approximation result for the partial sum process of $\mathbf{X}_{nt}$.
\begin{assumption} \label{assump_high_dim}
	Assume that $\mathbf{X}_{nt}-\boldsymbol{\mu}_n=\mathbf{G}_n(\boldsymbol{\epsilon}_t)$, and for some constant $q>4,\beta>2$ and $\Theta_n>0$, we have 
	\begin{align}
		\theta_{n,j,q}\leq& \Theta_n \frac{1}{(j\vee 1)^\beta},\,\, j=0,1,2,\dots \\
		(E||\mathbf{X}_{n1}-\boldsymbol{\mu}_n||^q)^{\frac{1}{q}}\leq & \Theta_n 
	\end{align}
	Also, assume that there exist $0<\gamma_{min}<\gamma_{max}<\infty$, $0<\sigma^2_{min}<\sigma^2_{max}<\infty$ such that $\gamma_{min}\leq \gamma_{ni}\leq \gamma_{max}$, $\sigma^2_{min}\leq \sigma^2_{ni}\leq \sigma^2_{max}$ for any $n=1,2,\dots$ and $i=1,2,\dots,p_n$.
\end{assumption}

Define 
$$\xi=
\begin{cases}
	\frac{q-2}{6q-4},  & \beta\geq3 \\
	\frac{(\beta-2)(q-2)}{(4\beta-6)q-4} & \frac{3+2/q}{1+2/q}<\beta<3 \\
	\frac{1}{2}-\frac{1}{\beta},  & 2< \beta\leq \frac{3+2/q}{1+2/q}.
\end{cases}$$
\cite{mies2022} provided a sequential Gaussian approximation result for nonstationary time series in high dimension. Here we provide a slightly refined version of Theorem 3.1 in \cite{mies2022} for stationary time series in the following proposition. 
\begin{proposition}\label{th_high_dim1}
	Suppose Assumption \ref{assump_high_dim} holds, then on a potentially different probability space, there exist random vectors $\{\mathbf{X}'_{nt}\}_{t=1}^n\stackrel{d}{=}\{\mathbf{X}_{nt}\}_{t=1}^n$ and a standard $p_n$-dimensional Brownian motion $B_{p_n}(r)$ such that for any small  $\epsilon>0$, 
	\begin{align}
		E \sup\limits_{r\in[0,1]} ||\frac{1}{\sqrt{n}}\sum_{t=1}^{\fr}(\mathbf{X}'_{nt}-\boldsymbol{\mu}_n)-\mathbf{W}_n(r)||^2 \leq  C\left\{\Theta_n^2 {\log(n)} (\frac{p_n}{n})^{2\xi}+ \frac{p_n\Theta_n^2}{n^{1-\epsilon}}\right\}, \label{eq_high_dim1}
	\end{align}
	where $\mathbf{W}_n(r)=(W_{n1}(r),W_{n2}(r),\dots,W_{np_n}(r))^\top=\boldsymbol{\Gamma}_n^{1/2} \mathbf{B}_{p_n}(r)$ and $C>0$ is a generic  constant.
	
\end{proposition}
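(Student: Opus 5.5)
The plan is to derive the bound from Theorem 3.1 of \cite{mies2022}, specialized to the stationary case, and to track how the constants depend on the dimension. Mies and Steland consider nonstationary processes $\mathbf{X}_{nt}=\mathbf{G}_{n,t}(\boldsymbol{\epsilon}_t)$. Their assumptions are a uniform moment bound, a polynomially decaying physical dependence measure with exponent $\beta>2$, and a total-variation bound on $t\mapsto \mathbf{G}_{n,t}$. They conclude that, on a richer probability space, there is a Gaussian process $\sum_{t\le \fr}\mathbf{Y}_t$ with independent $\mathbf{Y}_t\sim N(\mathbf{0},\boldsymbol{\Sigma}_t)$, where $\boldsymbol{\Sigma}_t$ is the local long-run covariance. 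In the stationary case the map $t\mapsto\mathbf{G}_{n,t}=\mathbf{G}_n$ is constant, so the variation term vanishes. Likewise $\boldsymbol{\Sigma}_t\equiv \boldsymbol{\Gamma}_n$, so the Gaussian partial sum can be embedded as $\sqrt{n}\,\boldsymbol{\Gamma}_n^{1/2}\mathbf{B}_{p_n}(t/n)$ for a standard Brownian motion. Assumption \ref{assump_high_dim} is exactly their moment and dependence condition with constant $\Theta_n$. Their rate, read with the exponent $\xi$ defined above, then gives the first term $\Theta_n^2\log(n)(p_n/n)^{2\xi}$, after squaring their $L^2$ bound.

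The first key step is to re-examine their proof, which proceeds by $m$-dependent approximation, blocking, martingale approximation and a Gaussian coupling of block sums via \cite{eldan} type results. The goal is to make sure the constants depend only on $q$ and $\beta$, and that $\Theta_n$ enters only as the prefactor $\Theta_n^2$. The optimization over the block length and the truncation lag $m$ produces the three cases of $\xi$, according to whether the $m$-dependence error, which is of order $m^{1-\beta}$, or the blocking error dominates.

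The second step is the \emph{refinement}. Their theorem couples the discrete partial sums with $\sum_{t\le \fr}\mathbf{Y}_t$, that is, with Brownian motion evaluated only on the grid $\{t/n\}$. The statement instead requires $\sup_{r\in[0,1]}$ against the continuous process $\mathbf{W}_n(r)$. Since the partial sum is constant on $[t/n,(t+1)/n)$, the extra error is
\[
E\max_{t\le n}\sup_{r\in[t/n,(t+1)/n)}\|\mathbf{W}_n(r)-\mathbf{W}_n(t/n)\|^2 .
\]
I would bound this modulus-of-continuity term using the Gaussian maximal inequality. Each increment is $N(0,n^{-1}\boldsymbol{\Gamma}_n)$-type, with $E\|\cdot\|^2\le n^{-1}\mathrm{tr}(\boldsymbol{\Gamma}_n)\lesssim p_n\Theta_n^2/n$, because $\mathrm{tr}(\boldsymbol{\Gamma}_n)\le (\sum_j\theta_{n,j,2})^2\lesssim\Theta_n^2$ per coordinate summed, and in fact $\lesssim \Theta_n^2$ overall. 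To control the maximum over $n$ blocks, I would take higher moments: the $k$-th moment of a Gaussian norm is at most $C_k(\text{second moment})^{k/2}$, together with $E\max\le (n E|\cdot|^k)^{1/k}$. This gives $n^{2/k}\,p_n\Theta_n^2/n$, and choosing $k$ large yields $p_n\Theta_n^2 n^{-1+\epsilon}$, which is the second term. This is why an arbitrary small $\epsilon$ appears. Combining the two errors with $(a+b)^2\le 2a^2+2b^2$ finishes the argument.

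The main obstacle is the first step. One must verify that the Mies--Steland constants are uniform in $p_n$, with the Euclidean-norm dependence measure entering only through $\Theta_n$. One must also verify that specializing to stationarity legitimately removes the variation term and allows a single $\boldsymbol{\Gamma}_n$, with no approximation error from local covariances. I would check the blocking and coupling lemmas in their proof line by line for this.
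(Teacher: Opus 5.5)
Your proposal is correct and follows the paper's proof. The grid-point error is taken directly from Theorem 3.1 of \cite{mies2022}; the paper simply invokes that theorem, so the line-by-line re-check of constants you flag as the main obstacle is not needed. The between-grid error of $\mathbf{W}_n$ equals $n^{-1}$ times the expected maximum of $n$ i.i.d.\ Brownian suprema. That maximum is $o(n^{\epsilon})$ because all moments are finite, which the paper shows via Doob's inequality and \cite{correa2021}, and combined with $\mathrm{tr}(\boldsymbol{\Gamma}_n)\le C\Theta_n^2$ this gives the second term.

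The only difference is in bounding that maximum. The paper uses the crude bound $\|\boldsymbol{\Gamma}_n^{1/2}\mathbf{B}\|^2\le \mathrm{tr}(\boldsymbol{\Gamma}_n)\|\mathbf{B}\|^2$ and sums over coordinates, which produces the factor $p_n$. Your dimension-free Gaussian moment comparison instead gives the slightly sharper $\Theta_n^2 n^{-1+\epsilon}$, which implies the stated bound. It needs one step you did not spell out: Doob's inequality, to reduce each subinterval supremum to the endpoint increment before comparing moments.
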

Note that in Equation (\ref{eq_high_dim1}), the term $\Theta_n^2 {\log(n)} (\frac{p_n}{n})^{2\xi}$ comes from Theorem 3.1 in \cite{mies2022} and the term $\frac{p_n\Theta_n^2}{n^{1-\epsilon}}$ quantifies the difference between $W_n(\frac{\fr}{n})$ and $W_n(r)$.
The right hand side of Equation (\ref{eq_high_dim1}) converges to 0 if $\Theta_n = O (\sqrt{p_n})$ and $p_n \asymp n^{\psi}$ for some $0<\psi<\frac{\xi}{\xi+\frac{1}{2}}$ and under these two conditions the first term $\Theta_n^2 {\log(n)} (\frac{p_n}{n})^{2\xi}$ dominates.

As in Section \ref{sec_mean}, we define the test statistic as 
\begin{equation}
	T^{(D)}_{n}(\alpha,\hat j_n) =\frac {(n{-}\fa)^{-1} [{S}^{n\hat j_n}_{\fa+1,n}]^2} {V^{(D)}_n(\hat j_n)},
\end{equation}
where 
\begin{equation}
	\hat{j}_n=\argmax_{j = 1,2,\dots,p_n}  \frac{n^{-1} [{S}^{nj}_{1,\fa}]^2}{\hat\sigma_{nj}^2},
\end{equation}
${S}^{nj}_{a,b} = \sum_{t=a}^{b}X^j_{nt}$ and $V^{(D)}_n(j)=(n-\fa)^{-2} \sum_{k=\fa+1}^{n}({S}^{ nj}_{\fa+1,k}-\frac{k-\fa}{n-\fa}{S}^{ nj}_{\fa+1,n})^2$. We derive the asymptotic properties of $T^{(D)}_{n}(\alpha,\hat j_n)$ under two sets of assumptions on the matrix $\Gamma_n$ and dimensionality.
\begin{assumption} \label{assump_high_dim3}
	(a) $\gamma_{nij}=\rho_{nij}\sqrt{ \gamma_{ni}\gamma_{nj} }$ with $|\rho_{nij}|<\bar \rho \in (0,1)$ for any $i,j=1,2,\dots,p_n$ and $ n=1,2,\dots$.
	
	(b) $\Theta_n =O(\sqrt{p_n})$.
	
	(c) $p_n\asymp n^{\psi}$ for some $0<\psi<\frac{\xi}{\xi+\frac{9}{2}}$.
\end{assumption}

\begin{assumption} \label{assump_high_dim2}
	(a) $\Gamma_n$ is diagonal.
	
	(b) $\Theta_n =O(\sqrt{p_n})$.
	
	(c) $p_n \asymp n^{\psi}$ for some $0<\psi<\frac{\xi}{\xi+\frac{1}{2}}$.
\end{assumption}
Note that by Minkowski inequality, a sufficient condition for (b) in previous two assumptions is that $E|X_{n1}^j-\mu_n^j|^q$ is uniformly bounded for $n=1,2,\dots$ and $j=1,2,\dots,p_n$.
The following theorem shows the asymptotic distribution of $T^{(D)}_{n}(\alpha,\hat j_n)$ under the null.
\begin{theorem}\label{th_high_dim2}
	Suppose either Assumptions \ref{assump_high_dim}, \ref{assump_high_dim3} or Assumptions \ref{assump_high_dim}, \ref{assump_high_dim2} hold. Then
	under $H_{n0}$, we have 
	\begin{equation}
		T^{(D)}_{n}(\alpha,\hat j_n) \stackrel{\D}{\to} U_1.
	\end{equation}
\end{theorem}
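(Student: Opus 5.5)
Under $H_{n0}$ we may work on the probability space of Proposition \ref{th_high_dim1}, since the law of $T^{(D)}_n(\alpha,\hat j_n)$ depends only on the law of $\{\mathbf{X}_{nt}\}_{t=1}^n$. There we replace the partial sums by the Gaussian process $\mathbf{W}_n=\boldsymbol{\Gamma}_n^{1/2}\mathbf{B}_{p_n}$. The key structural fact is orthogonality of increments. $\hat j_n$ is a measurable function of $\mathcal{P}_1$, which is close to a function of $\{\mathbf{W}_n(r)\}_{r\le\alpha}$. The statistic is close to a functional of $\{\mathbf{W}_n(r)-\mathbf{W}_n(\alpha)\}_{r\ge\alpha}$, which is independent of the first segment.

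Consider the idealized statistic
$$\tilde T_n(j)=\frac{(1-\alpha)^{-1}\{W_{nj}(1)-W_{nj}(\alpha)\}^2}{(1-\alpha)^{-2}\int_\alpha^1\{W_{nj}(r)-W_{nj}(\alpha)-\frac{r-\alpha}{1-\alpha}(W_{nj}(1)-W_{nj}(\alpha))\}^2dr}.$$
For each fixed $j$, $\tilde T_n(j)\stackrel{d}{=}U_1$ exactly, because $W_{nj}/\sqrt{\gamma_{nj}}$ is a standard Brownian motion and the ratio is scale invariant. Let $\tilde j_n$ be any random index that is measurable with respect to $\sigma(\mathbf{X}'_{n1},\dots,\mathbf{X}'_{n\fa})$ and $\mathbf{W}_n$ on $[0,\alpha]$. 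The coupling of \cite{mies2022} can be arranged so that $\mathcal{P}_1$ and the Brownian increments after $\alpha$ are independent; if this is unclear, one couples the two blocks separately using a $q$-dependent approximation with a small gap. Granting that independence, conditioning gives $P(\tilde T_n(\tilde j_n)\le x)=\sum_j P(\tilde j_n=j)P(\tilde T_n(j)\le x)=P(U_1\le x)$. This is the exact conditioning argument from Theorem \ref{th_mean}, now at finite $n$.

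It remains to show $T^{(D)}_n(\alpha,\hat j_n)-\tilde T_n(\hat j_n)\to_p 0$. The main obstacle is uniformity over the random index among $p_n\to\infty$ coordinates. Write $\Delta_n=\sup_r\|n^{-1/2}\sum_{t\le \fr}\mathbf{X}'_{nt}-\mathbf{W}_n(r)\|$. Under Assumption \ref{assump_high_dim2} or \ref{assump_high_dim3} together with $\Theta_n=O(\sqrt{p_n})$, Proposition \ref{th_high_dim1} gives $E\Delta_n^2\lesssim p_n\log n\,(p_n/n)^{2\xi}\to0$. This controls every coordinate simultaneously, so the numerator and denominator errors are uniform in $j$. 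It then suffices to bound the random denominator from below uniformly. We need $\min_{j\le p_n}D_j\ge \delta_n$ with high probability, where $D_j=\int_0^1\{B_j(r)-rB_j(1)\}^2dr\cdot\gamma_{nj}$. The small-ball estimate $P(\int_0^1 b(r)^2dr\le \epsilon)\le \exp(-c/\epsilon)$ for a Brownian bridge $b$, combined with a union bound over $p_n$ coordinates, gives $\min_j D_j\gtrsim 1/\log p_n$. In addition, $\max_j$ of the numerator is $O_p(\log p_n)$ by Gaussian maxima. A mean value bound on the ratio then yields an error of order $\Delta_n\cdot\mathrm{polylog}(p_n)\cdot\sqrt{\log p_n}\to_p0$. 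The relevant rates are $p_n\asymp n^\psi$ with $\psi<\xi/(\xi+1/2)$, and $\gamma_{ni}$ is bounded above and below.

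The second route, under Assumption \ref{assump_high_dim3}, could in principle avoid the union bound by working with $\hat j_n$ directly: one would show that $\hat j_n$ agrees with an index computed from $\mathbf{W}_n$ with probability tending to one. That requires the gap between the top two normalized statistics to exceed the approximation error. The constraint $|\rho_{nij}|<\bar\rho$ and the stricter rate $\psi<\xi/(\xi+9/2)$ are presumably designed to give this anti-concentration of the maximum of correlated Gaussians. Under the diagonal Assumption \ref{assump_high_dim2}, independence across coordinates makes the direct argument above easier. In both cases, $\hat\sigma^2_{nj}$ enters only through $\hat j_n$, which is $\mathcal{P}_1$-measurable, so its consistency is never needed for the null result.
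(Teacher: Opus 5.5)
\textbf{The gap.} The step you grant carries the whole argument, and Proposition~\ref{th_high_dim1} does not supply it. That proposition only gives a common probability space on which the partial sums of $\{\mathbf{X}'_{nt}\}$ are uniformly close to $\mathbf{W}_n$. It says nothing further about the joint law.

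The sample $\mathcal{P}_2$ depends on $\mathcal{P}_1$ through shared innovations. The increments $\mathbf{W}_n(r)-\mathbf{W}_n(\alpha)$, $r\ge\alpha$, are built jointly with $\mathcal{P}_2$. So nothing prevents them from carrying information about $\mathcal{P}_1$, and hence about $\hat j_n$. You also cannot inherit independence from $\{\mathbf{W}_n(r)\}_{r\le\alpha}$ by closeness, because an argmax is a discontinuous functional. As written, $P(\tilde T_n(\hat j_n)\le x)=P(U_1\le x)$ is therefore unproved.

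\textbf{What the paper does instead.} The paper never conditions on $\hat j_n$. It sets $\hat j^{(W)}_n=\argmax_j W_{nj}(\alpha)^2/\sigma_{nj}^2$, which is independent of the post-$\alpha$ increments for free. It then proves $P(\hat j_n\neq\hat j^{(W)}_n)\to0$, in addition to the uniform bound $\max_j|T^{(D)}_n(\alpha,j)-T^{(W)}_n(\alpha,j)|\to_p0$. That uniform bound is your uniform step, and it coincides with the paper's part (a).

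The matching step is where Lemma~\ref{lemma_var} and the cross-sectional conditions are used:
\begin{itemize}
\item With diagonal $\boldsymbol{\Gamma}_n$, the top spacing of independent scaled $\chi^2_1$ variables has $O_p(1)$ reciprocal.
\item Under Assumption~\ref{assump_high_dim3}, Lemma~\ref{lemma3} with a union bound over $p_n^2$ pairs requires approximation error $o(p_n^{-(4+\epsilon_0)})$. This is what forces $\psi<\xi/(\xi+9/2)$.
\end{itemize}
Your final paragraph therefore describes the paper's main route. It is not an alternative that avoids union bounds.

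\textbf{A workable repair.} Your parenthetical points to a fix, and carrying it out would give a genuinely stronger result.
\begin{itemize}
\item Replace $\mathcal{P}_2$ by $\tilde{\mathbf{X}}_{nt}=E[\mathbf{X}_{nt}\mid\epsilon_t,\dots,\epsilon_{t-m}]$ for $t>\fa+m$. This block depends only on $\{\epsilon_s\}_{s>\fa}$ and inherits the dependence bounds of Assumption~\ref{assump_high_dim}.
\item Couple this block with a Brownian motion via \cite{mies2022}.
\item Glue so that the Brownian motion is conditionally independent of all innovations given the block. This makes it independent of $\{\epsilon_s\}_{s\le\fa}$, hence of $\mathcal{P}_1$.
\end{itemize}
The extra errors come from the $m$ discarded observations and the $m$-dependent truncation. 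In the squared sup-norm they are of order $mp_n/n$ and $p_nm^{2-2\beta}$, which are polynomially small for $m=n^{\delta}$ with suitable $\delta$. The coordinate long-run variances of the truncated series stay bounded away from zero, so your small-ball and union-bound step goes through unchanged.

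\textbf{What this buys.} With that coupling in place you would need none of the following: the argmax matching, consistency of $\hat\sigma^2_{nj}$, the cross-sectional condition (a) of Assumptions~\ref{assump_high_dim3} and \ref{assump_high_dim2}, or the $9/2$ rate. You would need only Assumption~\ref{assump_high_dim2}(b,c), as in Theorem~\ref{th_high_dim4}. The paper instead uses Proposition~\ref{th_high_dim1} as a black box and pays with those extra hypotheses. Either way, this coupling and its error bookkeeping are the substance of your proof, and they must be constructed rather than granted.
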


The following theorem shows the consistency of our test under alternatives.
\begin{theorem}\label{th_high_dim3}
	Under $H_{nA}$ and Assumption \ref{assump_high_dim}, denote $||\boldsymbol{\mu}_n||_\infty=\max_{j=1,2,\dots,p_n}|\mu_n^j|$. Assume that  $p_n \asymp n^{\psi}$ for some $0<\psi<\frac{\xi}{\xi+\frac{1}{2}}$ and there exists $\kappa>0$ such that  $\frac{\sqrt{n}||\boldsymbol{\mu}_n||_\infty}{p_n^\kappa}\to \infty$. Then we have $	T^{(D)}_{n}(\alpha,\hat j_n)\stackrel{p}{\to}\infty$.
\end{theorem}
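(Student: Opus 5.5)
The idea is to show two things. First, $\hat j_n$ picks a coordinate whose standardized mean is of the same order as $\|\boldsymbol{\mu}_n\|_\infty$ up to the noise level. Second, conditionally on that choice, the numerator of $T^{(D)}_n(\alpha,\hat j_n)$ diverges while the self-normalizer stays bounded in probability. Since the statement concerns only convergence in probability, and $\{\mathbf{X}'_{nt}\}\stackrel{d}{=}\{\mathbf{X}_{nt}\}$, I will work on the probability space of Proposition \ref{th_high_dim1}. There $n^{-1/2}\sum_{t\le \fr}(\mathbf{X}_{nt}-\boldsymbol{\mu}_n)=\mathbf{W}_n(r)+\mathbf{R}_n(r)$ with $E\sup_r\|\mathbf{R}_n(r)\|^2\to 0$; this uses $\Theta_n=O(\sqrt{p_n})$ by Assumption \ref{assump_high_dim}'s bound applied as in the remark after the proposition, together with $\psi<\xi/(\xi+1/2)$. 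If $\Theta_n$ is not $O(\sqrt{p_n})$, I will instead keep the rate explicit and use only $\sup_r\|\mathbf{R}_n\|=O_p(\Theta_n\sqrt{\log n}(p_n/n)^{\xi})$. I also need $\max_j|\hat\sigma^2_{nj}-\sigma^2_{nj}|$ to be bounded in probability away from $0$ and $\infty$. I plan to get this from the physical-dependence moment bounds, a union bound over $p_n\asymp n^\psi$ coordinates, and $q>4$. If the uniform bound fails, I only need upper and lower bounds on $\hat\sigma^2_{n\hat j}$ and on $\hat\sigma^2_{nj_0}$, where $j_0=\argmax_j|\mu_n^j|$.

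\textbf{Step 1 (selection).} Write $n^{-1/2}S^{nj}_{1,\fa}=W_{nj}(\alpha)+R_{nj}+\sqrt n\,\mu_n^j\fa/n$. Gaussian maximal inequalities with $\gamma_{ni}\le\gamma_{max}$ give $\max_j|W_{nj}(\alpha)|=O_p(\sqrt{\log p_n})$, and $\max_j|R_{nj}|=o_p(1)$. The criterion maximized by $\hat j_n$ is at least its value at $j_0$, which is $\gtrsim (\alpha\sqrt n\|\boldsymbol\mu_n\|_\infty-O_p(\sqrt{\log p_n}))^2/\sigma^2_{max}$. Hence $\sqrt n|\mu_n^{\hat j_n}|\ge c\sqrt n\|\boldsymbol\mu_n\|_\infty-O_p(\sqrt{\log p_n})$ for some $c>0$ depending on $\sigma^2_{min},\sigma^2_{max}$ and $\alpha$. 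Since $\sqrt n\|\boldsymbol\mu_n\|_\infty/p_n^\kappa\to\infty$ and $\log p_n=o(p_n^{2\kappa})$, this yields $\sqrt n|\mu_n^{\hat j_n}|\to\infty$ in probability, in fact at rate $\gtrsim\sqrt n\|\boldsymbol\mu_n\|_\infty$.

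\textbf{Step 2 (numerator and denominator).} The numerator is $\{(n-\fa)^{-1/2}S^{n\hat j_n}_{\fa+1,n}\}^2$. Up to a factor, this equals $\{W_{n\hat j_n}(1)-W_{n\hat j_n}(\alpha)+O(\max_j\sup_r|R_{nj}|)+\sqrt{n}(1-\alpha)\mu_n^{\hat j_n}\}^2$. The maximum over $j$ of the Gaussian increments is again $O_p(\sqrt{\log p_n})$, so the numerator is $\gtrsim n(\mu_n^{\hat j_n})^2\to\infty$. The denominator is unaffected by the mean because the bridge centering cancels $\mu$ exactly. So $V^{(D)}_n(\hat j_n)\le \max_j (1-\alpha)^{-2}\int_\alpha^1\{\tilde W_{nj}(r)+\tilde R_{nj}(r)\}^2dr$, where $\tilde W_{nj}$ is the Brownian bridge of $W_{nj}$ on $[\alpha,1]$. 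This gives the upper bound $O_p(\log p_n)$.

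\textbf{Step 3 (ratio, main obstacle).} The crux is reconciling the growth rates: the numerator is of order $n\|\boldsymbol\mu_n\|^2_\infty\gg p_n^{2\kappa}$, while the denominator bound is $O_p(\log p_n)$, so the ratio diverges since $p_n^{2\kappa}/\log p_n\to\infty$. If the uniform bound on $\max_j\sup_r|R_{nj}|$ is not $o_p(1)$, I will still have it of order $O_p(\Theta_n\sqrt{\log n}(p_n/n)^{\xi})$. That is polynomially small relative to $p_n^\kappa$ under $p_n\asymp n^\psi$, so the argument goes through either way. I expect the delicate point to be uniform control of $\hat\sigma^2_{nj}$ from below. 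The lower bound enters only through $\hat\sigma^2_{n\hat j_n}$ in Step 1, and only as an upper bound $\hat\sigma^2_{nj_0}\le 2\sigma^2_{max}$ together with a lower bound on $\hat\sigma^2_{nj}$ across all $j$. For the latter I would use $\hat\sigma^2_{nj}\ge$ a uniform LLN lower bound obtained via Nagaev-type inequalities under Assumption \ref{assump_high_dim}.
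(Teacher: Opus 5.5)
Your proposal is correct and is essentially the paper's argument. The paper also sandwiches the selection criterion between its value at a coordinate carrying the largest mean and a bound in terms of $\frac{\lfloor n\alpha\rfloor}{\sqrt n}|\mu_n^{\hat j_n}|$, showing that $\sqrt n|\mu_n^{\hat j_n}|$ outgrows the noise level. It then bounds the numerator below by the triangle inequality and the mean-free self-normalizer above by $\max_j V^{(D)}_n(j)$. The only difference is that you control the Gaussian maxima by $O_p(\sqrt{\log p_n})$, whereas the paper uses the cruder $o_p(p_n^{\epsilon})$ from Lemma \ref{lm2}.

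One correction: $\Theta_n=O(\sqrt{p_n})$ does not follow from Assumption \ref{assump_high_dim}; it is Assumption \ref{assump_high_dim2}(b). Without some such growth bound, your fallback claim that $\Theta_n\sqrt{\log n}(p_n/n)^{\xi}$ is polynomially small is false, so drop the ``either way'' hedge. The paper's proof tacitly relies on the same condition, for example through $a_n=\Theta_n^2/\sqrt n=o(1)$ and through the negligibility of the coupling error in Proposition \ref{th_high_dim1}. This is therefore a gap in the stated hypotheses, not a gap in your argument relative to the paper.
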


\subsection{Asymptotic Theory for $SS\text{-}SN_P$ When the Dimension is Diverging} \label{sec_diverge2}

In this subsection, we justify the asymptotic validity of the $SS\text{-}SN_P$ statistic in the multivariate mean testing problem, when the dimension is diverging as sample size grows to infinity. As in Section \ref{sec_power_e}, we define the test statistic as 
\begin{equation}
	Q^{(D)}_{n}(\alpha) =\frac {(n-\fa)^{-1} \Big\{\boldsymbol{\hat{P}}_n^\top\mathbf{S}_{\fa+1,n} \Big\}^2} {V^{(2)}_n(\alpha)},
\end{equation}
where
\begin{equation}
	\boldsymbol{\hat{P}}_n= (\frac{{S}^{n1}_{1,\fa}}{\sqrt{n}{\hat\sigma_{n1}}},\dots,\frac{{S}^{np_n}_{1,\fa}}{\sqrt{n}{\hat\sigma_{np_n}}})^\top,
\end{equation}
$V^{(2)}_n(\alpha)=(n-\fa)^{-2} \sum_{k=\fa+1}^{n}\Big\{ \boldsymbol{\hat{P}}_n^\top \big[\mathbf{S}_{\fa+1,k}-\frac{k-\fa}{n-\fa}\mathbf{S}_{\fa+1,n}\big]\Big\}^2$ and $\mathbf{S}_{a,b} =({S}^{n1}_{a,b},\dots,{S}^{np_n}_{a,b})^\top = (\sum_{t=a}^{b}X^1_{nt},\dots,\sum_{t=a}^{b}X^{p_n}_{nt})^\top$.

The following theorem shows the asymptotic distribution of $Q^{(D)}_{n}(\alpha)$ under the null.
\begin{theorem}\label{th_high_dim4}
	Suppose Assumptions \ref{assump_high_dim} and \ref{assump_high_dim2} (b,c) hold. Then
	under $H_{n0}$, we have 
	\begin{equation}
		Q^{(D)}_{n}(\alpha) \stackrel{\D}{\to} U_1.
	\end{equation}
\end{theorem}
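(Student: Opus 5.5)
The plan is to condition on the first subsample and reduce to a one-dimensional self-normalized statistic along a direction that is independent of the second subsample. The starting point is that $Q^{(D)}_{n}(\alpha)$ is invariant under rescaling of $\boldsymbol{\hat P}_n$, because numerator and denominator are both homogeneous of degree two in $\boldsymbol{\hat P}_n$. So we may replace $\boldsymbol{\hat P}_n$ by $\mathbf{u}_n=\boldsymbol{\hat P}_n/(\boldsymbol{\hat P}_n^\top\boldsymbol{\Gamma}_n\boldsymbol{\hat P}_n)^{1/2}$. If $\mathbf{u}_n$ were independent of the second part $\{\mathbf{X}_{nt}\}_{t>\fa}$, then conditionally on $\mathbf{u}_n$ the process $\mathbf{u}_n^\top\boldsymbol{\Gamma}_n^{1/2}\{\mathbf{B}_{p_n}(r)-\mathbf{B}_{p_n}(\alpha)\}$, $r\in[\alpha,1]$, would be exactly a standard Brownian motion in $r-\alpha$. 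After rescaling time by $(1-\alpha)$, the continuous mapping theorem applied to $(B(1)^2)/\int_0^1\{B(r)-rB(1)\}^2dr$ would then give $U_1$ conditionally, and hence unconditionally.

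The first step removes the dependence between $\boldsymbol{\hat P}_n$ and the second block. I will insert a gap of length $m_n=\lfloor n^{\delta}\rfloor$ with small $\delta>0$. The statistic computed on $\{\mathbf{X}_{nt}\}_{t>\fa+m_n}$ differs from the original one by a term whose contribution is $o_p(1)$, after the same normalization by $\mathbf{u}_n$ as in the third step. Next, each $\mathbf{X}_{nt}$ with $t>\fa+m_n$ is replaced by its $m_n$-dependent version $\tilde{\mathbf{X}}_{nt}=E[\mathbf{G}_n(\boldsymbol\epsilon_t)\mid\epsilon_t,\dots,\epsilon_{t-m_n}]$. Assumption \ref{assump_high_dim}, with $\beta>2$, gives $(E\|\mathbf{X}_{nt}-\tilde{\mathbf{X}}_{nt}\|^q)^{1/q}\lesssim\Theta_n m_n^{1-\beta}$. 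Summed over $n$ terms and scaled by $n^{-1/2}$, this is negligible for $\delta$ chosen suitably relative to $\psi$. After this replacement the second block is exactly independent of $\boldsymbol{\hat P}_n$. The truncated series still satisfies Assumption \ref{assump_high_dim}, with the same $\Theta_n$ and a long-run covariance $\tilde{\boldsymbol\Gamma}_n$ whose distance from $\boldsymbol\Gamma_n$ is small. We may therefore apply Proposition \ref{th_high_dim1} to the second block alone, obtaining a coupling with $\tilde{\boldsymbol\Gamma}_n^{1/2}\mathbf{B}_{p_n}$ that is independent of $\mathbf{u}_n$.

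The second step is the error bound along the random direction. By Cauchy--Schwarz,
$$\sup_r\big|\mathbf{u}_n^\top\{n^{-1/2}\textstyle\sum(\cdot)-\mathbf{W}_n(r)\}\big|\le\|\mathbf{u}_n\|\,\sup_r\|n^{-1/2}\textstyle\sum(\cdot)-\mathbf{W}_n(r)\|.$$
The second factor is $O_p\big(\{p_n\log n\,(p_n/n)^{2\xi}\}^{1/2}\big)$ by Proposition \ref{th_high_dim1}, using Assumption \ref{assump_high_dim2}(b). Assumption \ref{assump_high_dim2}(c) is a strict inequality, so $p_n^{1+\eta+2\xi}n^{-2\xi}\log n\to0$ for some small $\eta>0$. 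It therefore suffices to show
$$\|\mathbf{u}_n\|^2=\frac{\|\boldsymbol{\hat P}_n\|^2}{\boldsymbol{\hat P}_n^\top\boldsymbol{\Gamma}_n\boldsymbol{\hat P}_n}=O_p(p_n^{\eta}).$$
Given this, the numerator and denominator of $Q^{(D)}_n(\alpha)$ are uniformly close to their Gaussian counterparts. The Gaussian denominator $\int\{B(r)-rB(1)\}^2dr$ is almost surely positive, so the ratio converges, and the small change from $\tilde{\boldsymbol\Gamma}_n$ to $\boldsymbol\Gamma_n$ is absorbed at the same stage.

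The main obstacle is this ratio bound, since Assumption \ref{assump_high_dim2}(b,c) imposes no eigenvalue lower bound on $\boldsymbol\Gamma_n$. First, $\hat\sigma_{ni}^2$ is bounded above and below uniformly in $i$ with probability tending to one. This follows from a moment/union bound with $q>4$ and $p_n\asymp n^{\psi}$, together with $\sigma^2_{min}\le\sigma_{ni}^2\le\sigma^2_{max}$. Hence $\|\boldsymbol{\hat P}_n\|^2\lesssim n^{-1}\|\mathbf{S}_{1,\fa}\|^2=O_p(p_n)$, because $E\|\mathbf{S}_{1,\fa}\|^2/n\lesssim\sum_i\gamma_{ni}\le p_n\gamma_{max}$. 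For the denominator, I will use the Gaussian approximation on the first block: $\boldsymbol{\hat P}_n\approx\hat{\mathbf D}^{-1}\boldsymbol\Gamma_n^{1/2}\mathbf{B}_{p_n}(\alpha)$ with $\hat{\mathbf D}=\mathrm{diag}(\hat\sigma_{ni})$. The quadratic form $\boldsymbol{\hat P}_n^\top\boldsymbol\Gamma_n\boldsymbol{\hat P}_n$ is then roughly $\alpha\,\mathbf{Z}^\top\mathbf{M}\mathbf{Z}$ with $\mathbf{Z}$ standard Gaussian and $\mathbf M=\boldsymbol\Gamma_n^{1/2}\hat{\mathbf D}^{-1}\boldsymbol\Gamma_n\hat{\mathbf D}^{-1}\boldsymbol\Gamma_n^{1/2}$. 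Its trace satisfies $\mathrm{tr}\,\mathbf M\ge\sum_i\gamma_{ni}^2/\hat\sigma_{ni}^2\gtrsim p_n$, using only the diagonal bounds. A Gaussian-chaos small-ball bound, for instance Carbery--Wright, gives $P(\mathbf{Z}^\top\mathbf M\mathbf Z<\epsilon\,\mathrm{tr}\,\mathbf M)\lesssim\epsilon^{1/2}$. This yields $\boldsymbol{\hat P}_n^\top\boldsymbol\Gamma_n\boldsymbol{\hat P}_n\gtrsim p_n$ with probability arbitrarily close to one, so $\|\mathbf{u}_n\|^2=O_p(1)$. The randomness of $\hat{\mathbf D}$ is handled by sandwiching it between deterministic bounds, and the first-block coupling error by the same Proposition \ref{th_high_dim1} rate.
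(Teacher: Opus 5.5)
Your outline is sound, but it takes a longer route than the paper.

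\textbf{The paper's route.} The paper applies Proposition~\ref{th_high_dim1} once to the whole sample. It then replaces the data direction $\boldsymbol{\hat P}_n$ by the Gaussian proxy $\boldsymbol P_n=(W_{n1}(\alpha)/\sigma_{n1},\dots,W_{np_n}(\alpha)/\sigma_{np_n})^\top$ and proves $\|\boldsymbol{\hat P}_n-\boldsymbol P_n\|^2=O_p(b_n)$, with $b_n=\Theta_n^2\log(n)(p_n/n)^{2\xi}$. Since $\boldsymbol P_n$ is measurable with respect to $\{\mathbf W_n(r)\}_{r\le\alpha}$, it is automatically independent of the increments $\mathbf W_n(r)-\mathbf W_n(\alpha)$, $r\ge\alpha$. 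Hence the Gaussian statistic $Q^{(W)}_n(\alpha)$ is exactly $U_1$-distributed by the conditioning argument you describe. No gap, no $m_n$-dependent approximation, no separate coupling of the second block, and no comparison of $\tilde{\boldsymbol\Gamma}_n$ with $\boldsymbol\Gamma_n$ is needed. The remaining work is to show that the Gaussian numerator and denominator are of exact order $p_n$, via $\mathrm{tr}\,\boldsymbol\Lambda_n\ge\gamma_{min}^2p_n/\sigma_{max}^2$ and Lemma~\ref{lemma_chi}. All coupling and variance-estimation errors are then $o_p(p_n^{1/2})$, resp.\ $o_p(p_n)$, by Cauchy--Schwarz.

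\textbf{Comparison.} Your decoupling layer is needed only because you condition on the data-driven direction itself. What it buys you is that scale invariance plus a positive limiting denominator lets you work with additive errors, so you never need a separate lower bound on the numerator. However, your bound $\|\mathbf u_n\|^2=O_p(p_n^{\eta})$ already requires the first-block Gaussian approximation and the same trace/small-ball argument. The paper's route is therefore strictly more economical.

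\textbf{Two details to fix if you keep your route.}

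First, summing the per-term bound $\|\mathbf X_{nt}-\tilde{\mathbf X}_{nt}\|_q\lesssim\Theta_nm_n^{1-\beta}$ over $n$ terms gives $\sqrt n\,\Theta_nm_n^{1-\beta}$. This forces roughly $\delta(\beta-1)>(1+\psi)/2$, not a small $\delta$. That is still compatible with the gap condition $\delta+\psi<1$, since $\psi<1/4$. Alternatively, use the partial-sum bound $\|\sum_t(\mathbf X_{nt}-\tilde{\mathbf X}_{nt})\|_q\lesssim\sqrt n\sum_{j>m_n}\theta_{n,j,q}$.

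Second, $\hat{\mathbf D}$ cannot literally be sandwiched. The form $\mathbf Z^\top\mathbf M\mathbf Z=\|\mathbf A\mathbf Z\|^2$, with $\mathbf A=\boldsymbol\Gamma_n^{1/2}\hat{\mathbf D}^{-1}\boldsymbol\Gamma_n^{1/2}$, is not monotone in $\hat{\mathbf D}$, and $\hat{\mathbf D}$ depends on $\mathbf Z$. Instead use $\|\mathbf A\mathbf Z\|^2\ge(\mathbf Z^\top\mathbf A\mathbf Z)^2/\|\mathbf Z\|^2\ge(\max_i\hat\sigma_{ni})^{-2}(\mathbf Z^\top\boldsymbol\Gamma_n\mathbf Z)^2/\|\mathbf Z\|^2$. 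Lemma~\ref{lemma_chi} then applies to the deterministic form $\mathbf Z^\top\boldsymbol\Gamma_n\mathbf Z$. Alternatively, replace $\hat\sigma_{ni}$ by $\sigma_{ni}$ first, as the paper does.
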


Note that no restrictions on the correlation between different coordinates of $\mathbf{X}_{nt}$ are imposed, so both weak and strong cross-sectional dependence are allowed. The following theorem shows the consistency of our test under alternatives.
\begin{theorem}\label{th_high_dim5}
	Under $H_{nA}$ and Assumption \ref{assump_high_dim}, denote $||\boldsymbol{\mu}_n||=\sqrt{\sum_{j=1}^{p_n}(\mu_n^j)^2}$. Assume that  $p_n \asymp n^{\psi}$ for some $0<\psi<\frac{\xi}{\xi+\frac{1}{2}}$ and there exists $\kappa>0$ such that  $\frac{\sqrt{n}||\boldsymbol{\mu}_n||}{p_n^{1/2+\kappa}}\to \infty$. Then we have $	Q^{(D)}_{n}(\alpha)\stackrel{p}{\to}\infty$.
\end{theorem}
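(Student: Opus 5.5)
We have to show that $Q^{(D)}_n(\alpha)\stackrel{p}{\to}\infty$ under $H_{nA}$. The plan is to show that the numerator grows at least like a power of $n\|\boldsymbol{\mu}_n\|^2/p_n$, while the self-normalizer stays stochastically of the order of its null counterpart. We work on the coupled probability space of Proposition \ref{th_high_dim1}, which is legitimate because the statistic depends only on the law of $\{\mathbf{X}_{nt}\}$. Write $\mathbf{X}_{nt}=\boldsymbol{\mu}_n+\mathbf{Z}_{nt}$ with $\mathbf{Z}_{nt}$ centred, and let $\mathbf{D}=diag\{\hat\sigma_{n1}^{-1},\dots,\hat\sigma_{np_n}^{-1}\}$.

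\textbf{Step 1 (the projection vector).} We have
\[
\boldsymbol{\hat P}_n=\frac{\fa}{\sqrt n}\,\mathbf{D}\boldsymbol{\mu}_n+\mathbf{D}\frac{1}{\sqrt n}\sum_{t\le \fa}\mathbf{Z}_{nt}.
\]
\begin{itemize}
\item Using Assumption \ref{assump_high_dim}, the variance bounds and a uniform (over $j\le p_n$) consistency argument, show that $\max_j|\hat\sigma_{nj}^2-\sigma_{nj}^2|\stackrel{p}{\to}0$. The $q>4$ moments together with a union bound should give this, since $p_n$ is only polynomial in $n$. A cruder bound, $\hat\sigma_{nj}$ bounded above and below with probability tending to one uniformly in $j$, may suffice.
\item By Proposition \ref{th_high_dim1}, the noise part $\boldsymbol{\eta}_n:=n^{-1/2}\sum_{t\le\fa}\mathbf{Z}_{nt}$ satisfies $\|\boldsymbol{\eta}_n\|=O_p(\sqrt{p_n})$, because $E\|\mathbf{W}_n(\alpha)\|^2=\alpha\, tr(\Gamma_n)\le \alpha\gamma_{max}p_n$ and the approximation error tends to zero.
\item The signal part has norm of order $\sqrt n\|\boldsymbol{\mu}_n\|$, which by hypothesis is much larger than $p_n^{1/2+\kappa}$.
\item Consequently $\|\boldsymbol{\hat P}_n\|\asymp_p\sqrt n\|\boldsymbol{\mu}_n\|$.
\end{itemize}

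\textbf{Step 2 (the numerator).} Write
\[
\boldsymbol{\hat P}_n^\top\mathbf{S}_{\fa+1,n}=(n-\fa)\boldsymbol{\hat P}_n^\top\boldsymbol{\mu}_n+\boldsymbol{\hat P}_n^\top\sum_{t>\fa}\mathbf{Z}_{nt}.
\]
\begin{itemize}
\item For the first term, $\boldsymbol{\hat P}_n^\top\boldsymbol{\mu}_n=\frac{\fa}{\sqrt n}\boldsymbol{\mu}_n^\top\mathbf{D}\boldsymbol{\mu}_n+\boldsymbol{\eta}_n^\top\mathbf{D}\boldsymbol{\mu}_n$. The leading part is $\gtrsim \sqrt n\|\boldsymbol{\mu}_n\|^2$ by the uniform lower bound on $\hat\sigma_{nj}^{-1}$. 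The cross part is $O_p(\sqrt{p_n}\|\boldsymbol{\mu}_n\|)=o_p(\sqrt n\|\boldsymbol{\mu}_n\|^2)$, again because $\sqrt n\|\boldsymbol{\mu}_n\|\gg\sqrt{p_n}$.
\item For the second term, use that on the coupled space $n^{-1/2}\sum_{t>\fa}\mathbf{Z}_{nt}=\mathbf{W}_n(1)-\mathbf{W}_n(\alpha)+o_p(1)$, and that this increment is independent of $\mathbf{W}_n(\alpha)$, hence approximately independent of $\boldsymbol{\hat P}_n$. A conditioning argument then gives $|\boldsymbol{\hat P}_n^\top\sum_{t>\fa}\mathbf{Z}_{nt}|=O_p(\sqrt n\,\|\boldsymbol{\hat P}_n\|\,\|\Gamma_n\|_{op}^{1/2})$.
\item The approximation remainder contributes at most $\sqrt n\|\boldsymbol{\hat P}_n\|\cdot o_p(1)$.
\item Since $\|\Gamma_n\|_{op}\le tr\Gamma_n\lesssim p_n$ (no cross-sectional restriction is assumed), the noise term is $O_p(n\|\boldsymbol{\mu}_n\|\sqrt{p_n})$. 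The signal term is of order $n^{3/2}\|\boldsymbol{\mu}_n\|^2$.
\item The ratio signal/noise is $\sqrt n\|\boldsymbol{\mu}_n\|/\sqrt{p_n}\to\infty$. Hence the numerator $\gtrsim_p n^2\|\boldsymbol{\mu}_n\|^4$.
\end{itemize}

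\textbf{Step 3 (the self-normalizer).} The mean cancels in $\mathbf{S}_{\fa+1,k}-\frac{k-\fa}{n-\fa}\mathbf{S}_{\fa+1,n}$, so $V^{(2)}_n(\alpha)$ involves only $\mathbf{Z}$. By Cauchy–Schwarz,
\[
V^{(2)}_n(\alpha)\le \|\boldsymbol{\hat P}_n\|^2\,(n-\fa)^{-2}\sum_k\Big\|\sum_{t=\fa+1}^{k}\mathbf{Z}_{nt}-\tfrac{k-\fa}{n-\fa}\sum_{t>\fa}\mathbf{Z}_{nt}\Big\|^2 .
\]
By Proposition \ref{th_high_dim1}, the last factor is $n^{-1}\cdot O_p(tr\Gamma_n+o(1))=O_p(p_n/n)$. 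Therefore $V^{(2)}_n(\alpha)=O_p(n\|\boldsymbol{\mu}_n\|^2\,p_n/n)=O_p(p_n\|\boldsymbol{\mu}_n\|^2)$.

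\textbf{Conclusion and main obstacle.} Combining the steps,
\[
Q^{(D)}_n(\alpha)\gtrsim_p \frac{n^{-1}n^2\|\boldsymbol{\mu}_n\|^4}{p_n\|\boldsymbol{\mu}_n\|^2}=\frac{n\|\boldsymbol{\mu}_n\|^2}{p_n}\ge p_n^{2\kappa}\frac{n\|\boldsymbol{\mu}_n\|^2}{p_n^{1+2\kappa}}\to\infty,
\]
which proves the claim. The crude Cauchy–Schwarz bound in Step 3 loses a factor but still suffices, which explains the $p_n^{1/2+\kappa}$ margin in the hypothesis. The main obstacle I expect is the uniform control of $\hat\sigma_{nj}$ over $j\le p_n$ under the alternative. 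Here I would first note that $\hat\sigma_{nj}^2$ is location invariant, so it equals its null version. I would then bound $\max_j|\hat\sigma^2_{nj}-\sigma^2_{nj}|$ via a $q/2$-moment inequality for sums of physically dependent sequences plus a union bound over $p_n\asymp n^\psi$ coordinates; this needs $\psi$ small enough, which the restriction on $\psi$ should guarantee.
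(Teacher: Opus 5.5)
Your argument is essentially the paper's. On the coupled space, the numerator is driven by the signal $\frac{\fa(n-\fa)}{\sqrt n}\boldsymbol\mu_n^\top\mathbf D\boldsymbol\mu_n$ and is $\gtrsim_p n^2\|\boldsymbol\mu_n\|^4$. The self-normalizer is bounded crudely by Cauchy--Schwarz; you do this in one step through $\|\boldsymbol{\hat P}_n\|^2$, while the paper splits it into a $\boldsymbol P_n$-part and a $\boldsymbol\mu_n$-part, which is a cosmetic difference. The conclusion is right, but Step 3 and your final display each contain a factor-$n$ error, and the two happen to cancel.

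In Step 3, the factor $(n-\fa)^{-2}\sum_{k}\|\sum_{t=\fa+1}^{k}\mathbf Z_{nt}-\frac{k-\fa}{n-\fa}\sum_{t>\fa}\mathbf Z_{nt}\|^2$ is $O_p(\operatorname{tr}\boldsymbol\Gamma_n)=O_p(p_n)$, not $O_p(p_n/n)$.
\begin{itemize}
\item Each of the $n-\fa$ summands is of order $n\operatorname{tr}\boldsymbol\Gamma_n$: it is $n$ times the squared norm of an approximate $p_n$-dimensional Brownian bridge with covariance $\boldsymbol\Gamma_n$.
\item Compare the fixed-$p$ statement $(n-\fa)^{-2}\sum_k\|\mathbf b_k\|^2\stackrel{\D}{\to}(1-\alpha)^{-2}\int_\alpha^1\|\cdots\|^2dr$ in the proof of Theorem~\ref{th_enhance}.
\item Hence $V^{(2)}_n(\alpha)=O_p(np_n\|\boldsymbol\mu_n\|^2)$. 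The bound $O_p(p_n\|\boldsymbol\mu_n\|^2)$ you state is false; it fails already for fixed $p$, where the self-normalizer is of order $n\|\boldsymbol\mu_n\|^2$.
\item In the final display the numerator must be $n^2\|\boldsymbol\mu_n\|^4$, as you established in Step 2, not $n^{-1}n^2\|\boldsymbol\mu_n\|^4$.
\end{itemize}
With both corrections, $Q^{(D)}_n(\alpha)\gtrsim_p n^2\|\boldsymbol\mu_n\|^4/(np_n\|\boldsymbol\mu_n\|^2)=n\|\boldsymbol\mu_n\|^2/p_n\to\infty$, which is the rate you wrote.

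Two smaller points. First, the conditioning argument in Step 2 is not justified as stated. On the coupled space, $\mathbf W_n(1)-\mathbf W_n(\alpha)$ is independent of $\{\mathbf W_n(r)\}_{r\le\alpha}$, but not necessarily of $\mathbf X'_{n1},\dots,\mathbf X'_{n\fa}$, from which $\boldsymbol{\hat P}_n$ is built. The argument is also unnecessary: you bound $\|\boldsymbol\Gamma_n\|_{op}$ by $\operatorname{tr}\boldsymbol\Gamma_n$ anyway, so plain Cauchy--Schwarz gives the same bound, as in the paper:
$|\boldsymbol{\hat P}_n^\top\sum_{t>\fa}\mathbf Z_{nt}|\le\|\boldsymbol{\hat P}_n\|\,\|\sum_{t>\fa}\mathbf Z_{nt}\|=\|\boldsymbol{\hat P}_n\|\,O_p(\sqrt{np_n})$.

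Second, you control $\|\boldsymbol\eta_n\|$ and the bridge term through second moments ($E\|\mathbf W_n(\alpha)\|^2\le\alpha\gamma_{max}p_n$), rather than through maxima via Lemma~\ref{lm2}. So your route never incurs the $p_n^{\epsilon}$ losses that the paper absorbs with $\kappa$, and it actually only needs $\sqrt n\|\boldsymbol\mu_n\|/\sqrt{p_n}\to\infty$. Your closing remark is therefore slightly off: Cauchy--Schwarz accounts for the $p_n^{1/2}$ in the hypothesis, while the extra $p_n^{\kappa}$ comes from the paper's $o_p(p_n^{1+\epsilon})$ bounds.
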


The asymptotic validity for the original SN test statistic $T_n^p$ has only been provided for the fixed-$p$ case, and whether it works under the diverging $p$ setting is unknown. Therefore we view the justification for SS-SN test statistics in the growing $p$ setting an interesting theoretical contribution to the literature. 

With that being said, it is worth noting that self-normalization has been extended to do inference for high-dimensional time series via the use of U-statistics; see \cite{wangshao2020} and \cite{wzvs2022}. In particular,  \cite{wangshao2020} adopted a trimmed U-statistic and developed a new SN test statistic to test for the mean of high-dimensional stationary time series. The restriction on the growth rate of the dimensionality in their work is minimal and they require $p\rightarrow\infty$ but allow $p>>n$, and in some special cases, $p$ can grow exponentially. In contrast, we are focusing on the testing problem where the dimension of parameter is moderate, and the regime corresponds to either $p$ is fixed or growing $p$ with $p<<n$. So the applicability of the tests developed in \cite{wangshao2020} and ours are fairly different. The test in \cite{wangshao2020} targets the dense alternative and requires weak cross-sectional dependence, whereas our two $SS\text{-}SN$ test statistics together can capture both dense and sparse alternatives, and can accommodate both weak and strong cross-sectional dependence. The technical tools involved are also very different. Here we rely on the strong approximation for partial sum process and a careful analysis of the maximum spacing for an independent but not identically distributed chi-square random variables, whereas the theory in \cite{wangshao2020} 
is built on the weak convergence of sequential U-statistic of high-dimensional dependent observations. 

\subsection{Asymptotic Independence of $SS\text{-}SN_1$ and $SS\text{-}SN_P$}
\label{sec:asymind}

In the literature, there has been a sizeable amount of work on the asymptotic independence between the sum and maximum of a weakly dependent sequence [\cite{hsing1995}, \cite{peng03}] and between the sum-of-squares type test statistic and the maximum-type test statistic in high-dimensional testing problems; see \cite{lixue15}, \cite{xu16} and \cite{he2021}, among others. It is natural to ask whether our  $L_2$-type and $L_\infty$-type  SS-SN statistics are asymptotically independent in the growing dimensional setting. We shall provide an affirmative answer to this question below.

As in the proof of Theorem \ref{th_high_dim4}, denote $\mathbf{D_g}=Diag\{\frac{1}{\sigma_{n1}},\dots,\frac{1}{\sigma_{np_n}}\}$ and $\boldsymbol{\Lambda}_n =\boldsymbol{\Gamma}^{1/2}_n\mathbf{D_g}\boldsymbol{\Gamma}^{1/2}_n\boldsymbol{\Gamma}^{1/2}_n\mathbf{D_g}\boldsymbol{\Gamma}^{1/2}_n $ with eigenvalues $0\leq \lambda_1\leq\lambda_2\leq\cdots\leq\lambda_{p_n}$. The following assumption is needed to prove the asymptotic independence.
\begin{assumption} \label{assum_indi}
	There exist $\epsilon>0$ such that $\lambda_{p_n}/p_n^{1-\epsilon}\to 0$ as $p_n\to \infty$.
\end{assumption}

Let $\mathcal{B}(\R)$ be the Borel $\sigma$-algebra on $\R$, we now state asymptotic independence of $T^{(D)}_{n}(\alpha,\hat j_n)$ and $Q^{(D)}_{n}(\alpha)$, the proof of which is deferred to Appendix \ref{app_asymp_indi} in the supplement.

\begin{theorem}\label{th_asymp_indi}
	Under $H_{n0}$, Suppose Assumptions \ref{assump_high_dim}, \ref{assum_indi} and either Assumption \ref{assump_high_dim3} or Assumtion \ref{assump_high_dim2} hold, then $T^{(D)}_{n}(\alpha,\hat j_n)$ and $Q^{(D)}_{n}(\alpha)$ are asymptotically independent in the sense that 
	$$\Big|P\big(T^{(D)}_{n}(\alpha,\hat j_n)\in A; Q^{(D)}_{n}(\alpha)\in B     \big)-P\big(T^{(D)}_{n}(\alpha,\hat j_n)\in A\big)P\big( Q^{(D)}_{n}(\alpha)\in B\big)\Big|\to 0$$
	for any $A,B\in \mathcal{B}(\R)$.
\end{theorem}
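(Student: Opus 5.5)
Both statistics share the structure ``direction chosen from ${\cal P}_1$, one-dimensional SN statistic on ${\cal P}_2$''. The plan is to reduce everything to Gaussian functionals via Proposition \ref{th_high_dim1} and then exploit conditional independence given the first subsample. On the coupled space we replace the partial sums by $\mathbf{W}_n(r)=\boldsymbol{\Gamma}_n^{1/2}\mathbf{B}_{p_n}(r)$. The estimation steps already carried out in the proofs of Theorems \ref{th_high_dim2} and \ref{th_high_dim4} show that $T^{(D)}_{n}(\alpha,\hat j_n)$ and $Q^{(D)}_{n}(\alpha)$ differ by $o_p(1)$ from Gaussian versions $\tilde T_n$ and $\tilde Q_n$. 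These versions are obtained by replacing $S^{nj}_{1,\fa}/\sqrt n$ with $W_{nj}(\alpha)$, $\hat\sigma_{nj}$ with $\sigma_{nj}$, and the second-part partial sums with increments of $\mathbf{W}_n$ on $[\alpha,1]$. Since $U_1$ has a continuous distribution, $o_p(1)$ perturbations do not affect the limits for continuity sets. Asymptotic independence for all Borel $A,B$ should then follow by proving joint convergence of $(\tilde T_n,\tilde Q_n)$ to a product of two independent $U_1$ laws.

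Conditionally on $\mathcal F_\alpha=\sigma(\mathbf{W}_n(s),s\le\alpha)$, the index $\tilde j_n$ and the direction $\tilde{\mathbf P}_n=\mathbf{D_g}\mathbf{W}_n(\alpha)$ are fixed, and the increment process $\mathbf{W}_n(\alpha+\cdot)-\mathbf{W}_n(\alpha)$ is independent of $\mathcal F_\alpha$. The pair of scalar processes $Y_1(r)=\mathbf e_{\tilde j_n}^\top\{\mathbf W_n(\alpha+r)-\mathbf W_n(\alpha)\}$ and $Y_2(r)=\tilde{\mathbf P}_n^\top\{\mathbf W_n(\alpha+r)-\mathbf W_n(\alpha)\}$ is therefore conditionally a bivariate Brownian motion. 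Its conditional correlation is
$$\rho_n=\frac{\mathbf e_{\tilde j_n}^\top\boldsymbol\Gamma_n\tilde{\mathbf P}_n}{\sqrt{\gamma_{n\tilde j_n}\,\tilde{\mathbf P}_n^\top\boldsymbol\Gamma_n\tilde{\mathbf P}_n}}.$$
Each SN statistic is scale invariant. Hence, conditionally on $\mathcal F_\alpha$, $(\tilde T_n,\tilde Q_n)$ has the law of $(\Phi(B_1),\Phi(\rho_n B_1+\sqrt{1-\rho_n^2}B_2))$, where $B_1,B_2$ are independent standard Brownian motions and $\Phi$ is the SN functional with $\Phi(B)\stackrel{d}{=}U_1$. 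This law is continuous in $\rho_n$, and it equals the product law when $\rho_n=0$. It therefore suffices to show $\rho_n\stackrel{p}{\to}0$; dominated convergence of the conditional probabilities then yields the theorem.

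The main obstacle is proving $\rho_n\to0$ in probability. I write $\mathbf W_n(\alpha)=\boldsymbol\Gamma_n^{1/2}\mathbf Z\sqrt\alpha$ with $\mathbf Z\sim N(0,I_{p_n})$. Then the denominator satisfies
$$\tilde{\mathbf P}_n^\top\boldsymbol\Gamma_n\tilde{\mathbf P}_n=\alpha\,\mathbf Z^\top\mathbf M\mathbf Z,\qquad \mathbf M=\boldsymbol\Gamma_n^{1/2}\mathbf D_g\boldsymbol\Gamma_n^{1/2}\boldsymbol\Gamma_n^{1/2}\mathbf D_g\boldsymbol\Gamma_n^{1/2}=\boldsymbol\Lambda_n.$$
Its mean is $\operatorname{tr}\boldsymbol\Lambda_n=\sum_i\lambda_i$, and its variance is $2\operatorname{tr}\boldsymbol\Lambda_n^2\le 2\lambda_{p_n}\operatorname{tr}\boldsymbol\Lambda_n$. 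The trace is at least of order $p_n$, since $\operatorname{tr}\boldsymbol\Lambda_n=\|\boldsymbol\Gamma_n^{1/2}\mathbf D_g\boldsymbol\Gamma_n^{1/2}\|_F^2\ge\sum_i\gamma_{ni}^2/\sigma_{ni}^2\ge c\,p_n$ by the bounds in Assumption \ref{assump_high_dim}. Assumption \ref{assum_indi} then gives $\mathbf Z^\top\boldsymbol\Lambda_n\mathbf Z/\operatorname{tr}\boldsymbol\Lambda_n\to1$ in probability, so the denominator is of exact order $p_n$. The numerator is $(\boldsymbol\Gamma_n\mathbf D_g\mathbf W_n(\alpha))_{\tilde j_n}$, and I will bound it by $\max_j|(\boldsymbol\Gamma_n\mathbf D_g\mathbf W_n(\alpha))_j|$. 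Each coordinate is Gaussian with variance $\alpha(\boldsymbol\Gamma_n\mathbf D_g\boldsymbol\Gamma_n\mathbf D_g\boldsymbol\Gamma_n)_{jj}\le\alpha\gamma_{max}\lambda_{p_n}$, so the maximum is $O_p(\sqrt{\lambda_{p_n}\log p_n})$. Consequently
$$\rho_n^2=O_p\!\left(\frac{\lambda_{p_n}\log p_n}{p_n}\right)=O_p\!\left(p_n^{-\epsilon}\log p_n\cdot\frac{\lambda_{p_n}}{p_n^{1-\epsilon}}\right)\to0,$$
and this holds under either Assumption \ref{assump_high_dim3} or Assumption \ref{assump_high_dim2}; those assumptions enter only through the Gaussian approximation step. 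If the crude maximum bound turns out to be too loose, the fallback is to use the selection property of $\tilde j_n$ directly. Finally, the $o_p(1)$ differences are converted into statements about probabilities of Borel sets by the standard argument: approximate general $A,B$ by continuity sets, or alternatively note that total-variation control of the conditional laws passes through the continuous limit.
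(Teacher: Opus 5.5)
Your proposal is essentially the paper's proof. After the same coupling reduction to $T^{(W)}_{n}(\alpha,\hat j^{(W)}_n)$ and $Q^{(W)}_{n}(\alpha)$, the paper conditions on the $\sigma$-field of $\{\mathbf{W}_n(r)\}_{r\in[0,\alpha]}$, treats the two second-segment processes as a bivariate Brownian motion with correlation $\varrho$, and proves $\varrho\stackrel{p}{\to}0$ much as you do: it bounds $(\mathbf{e}_{\hat j^{(W)}_n}^\top\boldsymbol{\Gamma}_n\boldsymbol{P}_n)^2/\varsigma_1^2\le\lambda_{p_n}\max_j Z_j$, bounds $\varsigma_2^2=\boldsymbol{P}_n^\top\boldsymbol{\Gamma}_n\boldsymbol{P}_n$ below by order $p_n$, and invokes Assumption \ref{assum_indi}.

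The differences are only technical:
\begin{itemize}
\item The paper makes the last step quantitative. It uses a total-variation bound for bivariate normals plus a mean-value bound on the $\chi^2_1$ cdf, giving an error of order $E|\varrho|$, where you use continuity in $\rho$ and dominated convergence.
\item The paper gets $p_n\varsigma_2^{-2}=O_p(1)$ from the small-ball Lemma \ref{lemma_chi}, where you use a Chebyshev argument.
\item The paper passes from the data statistics to general $A,B$ by the same Slutsky-type reduction you use. Your closing ``approximate general Borel sets by continuity sets'' remark is therefore no weaker than the paper's own step. Strictly, both arguments deliver the conclusion for $U_1$-continuity sets, not arbitrary Borel sets.
\end{itemize}
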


A few remarks are in order. Note that the largest eigenvalue of $\boldsymbol{\Gamma}^{1/2}_n\mathbf{D_g}\boldsymbol{\Gamma}^{1/2}_n$ is $\lambda_{p_n}^{1/2}$. Denote the largest eigenvalue of $\boldsymbol{\Gamma}_n$ as $\tilde \gamma_n$, then according to Lemma \ref{lm_jrssb2}, if Assumption \ref{assump_high_dim} holds, Assumption \ref{assum_indi} is equivalent to  $\tilde \gamma_n/p_n^{(1-\epsilon)/2}\to 0$ as $p_n\to \infty$. According to the Gershgorin Circle Theorem (see \cite{bell1965}), $\tilde \gamma_n$ is upper bounded by the largest absolute row sum of $\boldsymbol{\Gamma}_n$, so Assumption \ref{assum_indi} holds if, under Assumption \ref{assump_high_dim}, $\boldsymbol{\Gamma}_n$ is diagonal or $\gamma_{nij}=c^{|i-j|}$ for some $c\in(-1,1)$ (i.e., AR(1) type correlation). This suggests that when the $p$ components are independent or weakly correlated in the long run, Assumption~\ref{assum_indi} is satisfied and asymptotic independence between our  $L_2$-type and $L_\infty$-type  SS-SN statistics holds.

On the other hand, if $\gamma_{nij} = c\in(0,\gamma_{min})$ for all $i\neq j$, then we have $\boldsymbol{\Gamma}_n = Diag\{\gamma_{n1}{-}c,\dots,\gamma_{np_n}{-}c   \}{+}c\mathbf{1}_n\mathbf{1}_n^\top$ where $\mathbf{1}_n$ is a vector in $\R^{p_n}$ with all elements being $1$. Under Assumption \ref{assump_high_dim} all eigenvalues of $Diag\{\gamma_{n1}{-}c,\dots,\gamma_{np_n}{-}c   \}$ are nonnegative and the largest eigenvalue of $c\mathbf{1}_n\mathbf{1}_n^\top$ is $cp_n$, so we have $\tilde \gamma_n\geq cp_n$ and Assumption \ref{assum_indi} does not hold. This corresponds to the case with strong correlation among the $p_n$ components in the long run. We conjecture that the asymptotic independence between our  $L_2$-type and $L_\infty$-type  SS-SN statistics  does not hold for this case, which is confirmed in our unreported simulations.

\section{Extensions to other testing problems}\label{ch4}
In this section, we generalize the SS-SN approach to testing for zero autocorrelation in a time series in Section \ref{sec_corr}, linear hypothesis testing in a regression model in Section \ref{sec_reg} and testing for a change point in multivariate mean in Section \ref{sec_cp_mean}. For simplicity, we only consider the $SS\text{-}SN_1$ statistic and similar results for $SS\text{-}SN_P$ statistic can be obtained in an analogous fashion.

\subsection{Testing for Zero Autocorrelation}\label{sec_corr}

Let $\{{X}_t\}$ be a univariate stationary time series with mean $E({X_t}) =  \mu$. Testing for white noise is an important problem in time series analysis and there is a rich literature; see \cite{horo2006}, \cite{shao2011joe}, \cite{shao2011et}, \cite{liuzhuzhu2022} and cited references therein. To be specific, we shall test the null hypothesis $H_0:r_1=r_2=\cdots=r_p=0$ against $H_A: r_i\neq 0 \mbox{ for some }i=1,2,\dots,p$, where $p$ is a positive integer and $r_i=E[(X_t-\mu)(X_{t+i}-\mu)]$ is the autocovariance at lag $i$.

We now apply the SS-SN approach to test $H_0$.  
Define $\mathbf{Z}_t=(Z_t^1,Z_t^2,\dots,Z_t^p)^\top$ and $\mathbf{\hat Z}_t=(\hat Z_t^1,\hat Z_t^2,\dots,\hat Z_t^p)^\top$ where $Z_t^i=(X_t-\mu)(X_{t+i}-\mu)$, $\hat Z_t^i=(X_t-\bar X_n)(X_{t+i}-\bar X_n)$ and $\bar X_n=(1/n) \sum_{i=1}^{n}X_i$. The null hypothesis is equivalent to the hypothesis that $\{\mathbf{Z}_t\}$ is a $p$-dimensional mean zero stationary time series. We prove the following proposition about the FCLT for $\{\mathbf{\hat Z}_t\}$.

\begin{proposition}\label{prop_corr}
	If Assumptions \ref{assump_fclt_mean} and \ref{assump_sigma_mean} hold for $\{X_t\}$ and $\{\mathbf{Z}_t\}$, then (i) Assumption \ref{assump_fclt_mean} also holds for $\{\mathbf{\hat Z}_t\}$, and (ii) for $i=1,2,\dots,p$, the sample variance of $\{{\hat Z}^i_t\}$ converges in probability to the variance of ${ Z}^i_t$.
\end{proposition}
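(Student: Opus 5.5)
The approach is to write $\hat Z_t^i$ as $Z_t^i$ plus remainder terms driven by the sample mean error $\bar X_n-\mu$. Each remainder is then shown to be negligible in the sup-norm over $r\in[0,1]$ after scaling by $n^{-1/2}$.

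Write $\delta_n=\bar X_n-\mu$. Expanding gives
\[
\hat Z_t^i = Z_t^i - \delta_n\{(X_t-\mu)+(X_{t+i}-\mu)\} + \delta_n^2 .
\]
Here $t$ runs over $1,\dots,n-i$, or over a range padded to $n$ terms; this boundary convention only affects $O(p)$ terms. Summing up to $\fr$ gives
\[
n^{-1/2}\sum_{t=1}^{\fr}\hat Z_t^i = n^{-1/2}\sum_{t=1}^{\fr}Z_t^i - \delta_n\, n^{-1/2}\Big\{\sum_{t=1}^{\fr}(X_t-\mu)+\sum_{t=1+i}^{\fr+i}(X_{t}-\mu)\Big\} + n^{-1/2}\fr\,\delta_n^2 .
\]

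\textbf{Part (i).} The three pieces are handled as follows.
\begin{itemize}
\item By Assumption \ref{assump_fclt_mean} for $\{X_t\}$, $\delta_n=O_p(n^{-1/2})$.
\item By the continuous mapping theorem, $\sup_r n^{-1/2}|\sum_{t=1}^{\fr}(X_t-\mu)|=O_p(1)$. The shifted sum differs from it by at most $2i$ terms. The maximum of $|X_t-\mu|$ over $t\le n$ is $o_p(n^{1/2})$; this follows from tightness of the FCLT limit, since jumps of the partial sum process vanish in the limit. Hence the middle term is $O_p(n^{-1/2})\cdot O_p(1)=o_p(1)$ uniformly in $r$.
\item The last term is bounded by $n^{1/2}\delta_n^2=O_p(n^{-1/2})=o_p(1)$.
\end{itemize}
Thus $\sup_r\|n^{-1/2}\sum_{t\le \fr}(\hat{\mathbf Z}_t-\mathbf Z_t)\|=o_p(1)$. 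Under $H_0$ we have $E\mathbf Z_t=\mathbf 0$, and the FCLT holds for $\{\mathbf Z_t\}$ by assumption. A Slutsky-type result in $D^p[0,1]$ (a sup-norm negligible perturbation does not change the weak limit) then gives the FCLT for $\{\hat{\mathbf Z}_t\}$ with the same $\boldsymbol\Gamma^{1/2}\mathbf B_p(r)$. The same argument applies with $\boldsymbol\mu$ replaced by the general mean $E\mathbf Z_t$ if one wants the statement outside the null.

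\textbf{Part (ii).} The sample variance of $\{\hat Z^i_t\}$ is $m^{-1}\sum(\hat Z^i_t)^2-(m^{-1}\sum \hat Z^i_t)^2$, where $m$ is the number of terms.
\begin{itemize}
\item By part (i), the second average differs from $m^{-1}\sum Z_t^i$ by $o_p(n^{-1/2})$.
\item For the squares, write $\hat Z_t^i=Z_t^i+R_t$ with $R_t=-\delta_n\{(X_t-\mu)+(X_{t+i}-\mu)\}+\delta_n^2$. Then $m^{-1}\sum R_t^2\le C\delta_n^2\, m^{-1}\sum (X_t-\mu)^2+C\delta_n^4$.
\item By Assumption \ref{assump_sigma_mean} for $\{X_t\}$, $m^{-1}\sum (X_t-\mu)^2 = O_p(1)$, because the sample variance converges and $\delta_n\to0$. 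Hence $m^{-1}\sum R_t^2=O_p(n^{-1})$.
\item The cross term satisfies $m^{-1}\sum Z_t^iR_t\le (m^{-1}\sum (Z_t^i)^2)^{1/2}(m^{-1}\sum R_t^2)^{1/2}=o_p(1)$. Here $m^{-1}\sum (Z_t^i)^2=O_p(1)$ by Assumption \ref{assump_sigma_mean} for $\{\mathbf Z_t\}$ together with the law of large numbers for the mean implied by its FCLT.
\end{itemize}
Therefore the sample variance of $\hat Z^i_t$ equals that of $Z^i_t$ plus $o_p(1)$. By Assumption \ref{assump_sigma_mean} for $\{\mathbf Z_t\}$, it converges in probability to $\var(Z_t^i)$.

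The main obstacle is the uniform control of the shifted partial sums and boundary terms, that is, showing $\max_{t\le n}|X_t-\mu|=o_p(\sqrt n)$ using only the FCLT. I would argue this from the fact that the limit has continuous paths, so the maximal jump of the partial sum process tends to zero in probability. The remaining steps are Cauchy–Schwarz and Slutsky arguments.
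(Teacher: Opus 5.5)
Your proposal is correct and follows the paper's proof: the same expansion of $\hat Z_t$ around $Z_t$ in terms of $\bar X_n-\mu$; sup-norm negligibility of the remainder combined with Billingsley's Theorem 3.1 for part (i); and a Cauchy--Schwarz bound on the squared and cross terms for part (ii). The step you flag as the main obstacle is unnecessary: with $S_k=\sum_{t=1}^{k}(X_t-\mu)$, the shifted sum equals $S_{\fr+i}-S_i$, so its supremum is at most $2\max_k|S_k|=O_p(\sqrt n)$ directly from the FCLT, which is the bound the paper asserts without comment.
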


By Proposition \ref{prop_corr}, we can use $\{\mathbf{\hat Z}_t\}$ to construct a similar test statistic as in Equation (\ref{eq_stat_mean}) to test the zero autocorrelation hypothesis. The asymptotic property of this statistic is stated in the following proposition.
\begin{proposition}\label{th6}
	Suppose Assumption \ref{assump_fclt_mean} holds for $\{X_t\}$ and $\{\mathbf{Z}_t\}$ and the $\delta$th moment of $|X_t|$ is finite for some $\delta>2$. Define the test statistic $T^{(A)}_{n}(\alpha,\hat j)=T^{(M)}_{n'}(\alpha,\hat j)$ according to Equations (\ref{eq_stat_mean}) and (\ref{eq_mean_j}), with $\{\mathbf{X}_t\}$ replaced by $\{\mathbf{\hat Z}_t\}$, $\boldsymbol{\mu}_0$ replaced by $\boldsymbol{0}$ and $n$ replaced by $n'=n-p$, then we have under $H_0$  	
	\begin{equation}
		T^{(A)}_{n}(\alpha,\hat j) \stackrel{\D}{\to} U_1.
	\end{equation}
\end{proposition}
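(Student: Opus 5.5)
The plan is to reduce Proposition \ref{th6} to Theorem \ref{th_mean}(i), applied to the sequence $\{\mathbf{\hat Z}_t\}_{t=1}^{n'}$ with $\boldsymbol{\mu}_0=\boldsymbol{0}$. Theorem \ref{th_mean}(i) only uses two inputs about the data: the FCLT of Assumption \ref{assump_fclt_mean}, and the consistency of the marginal sample variances in Assumption \ref{assump_sigma_mean}. Both are delivered for $\{\mathbf{\hat Z}_t\}$ by Proposition \ref{prop_corr}. The only hypothesis of Proposition \ref{prop_corr} that Proposition \ref{th6} does not state is Assumption \ref{assump_sigma_mean} for $\{X_t\}$ and $\{\mathbf{Z}_t\}$. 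So the first step is to verify it from the stated conditions: the FCLT for $\{X_t\}$ and $\{\mathbf{Z}_t\}$ together with $E|X_t|^\delta<\infty$ for some $\delta>2$.

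\textbf{Step 1: Assumption \ref{assump_sigma_mean} for $\{X_t\}$.} Apply the FCLT for $\{X_t\}$ at $r=1$ to get $\bar X_n-\mu=O_p(n^{-1/2})$. We also need $\sigma^2=\var(X_1)>0$, which holds for any non-degenerate series. For $n^{-1}\sum_t (X_t-\mu)^2\to \var(X_t)$ I would use a weak law of large numbers for the stationary sequence $(X_t-\mu)^2$. These squares are integrable because $\delta>2$. The law of large numbers follows from ergodicity, which is implicit in the weak-dependence setting behind the FCLT, or from the moment and weak-dependence conditions cited after Assumption \ref{assump_fclt_mean}.

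\textbf{Step 1, continued: Assumption \ref{assump_sigma_mean} for $\{\mathbf{Z}_t\}$.} Here $\var(Z^i_t)=E[(X_t-\mu)^2(X_{t+i}-\mu)^2]$ under $H_0$, and we need $(Z^i_t)^2$ to obey a law of large numbers. If $\delta\ge 4$ the same ergodicity argument applies. If $\delta<4$, integrability of $(Z_t^i)^2$ is not automatic from the moment condition alone; it is part of what "Assumption \ref{assump_sigma_mean} holds for $\{\mathbf{Z}_t\}$" means in Proposition \ref{prop_corr}. I would therefore read Proposition \ref{th6} as implicitly imposing the $\sigma$-part of Assumption \ref{assump_sigma_mean} on $\{\mathbf{Z}_t\}$. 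In that reading the $\delta$-moment condition is used only to handle $\bar X_n$, through the cross terms below.

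\textbf{Step 2: the effect of replacing $\mu$ by $\bar X_n$.} Write
\[
\hat Z^i_t = Z^i_t -(\bar X_n-\mu)\big[(X_t-\mu)+(X_{t+i}-\mu)\big]+(\bar X_n-\mu)^2 .
\]
Summing over $t\le \lfloor n'r\rfloor$ and scaling by $n^{-1/2}$:
\begin{itemize}
\item The middle term is $n^{1/2}(\bar X_n-\mu)\cdot n^{-1}\sum_{t\le \lfloor n'r\rfloor}[(X_t-\mu)+(X_{t+i}-\mu)]$. The first factor is $O_p(1)$ and the second is $o_p(1)$ uniformly in $r$, by the FCLT for $X_t$.
\item The last term is $O_p(n^{-1/2})$.
\end{itemize}
Hence the partial-sum process of $\{\mathbf{\hat Z}_t\}$ equals that of $\{\mathbf{Z}_t\}$ up to $o_p(1)$ in sup norm. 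This, together with the analogous sample-variance comparison, is exactly Proposition \ref{prop_corr}. The length change $n\mapsto n'=n-p$ with $p$ fixed is negligible, since $\lfloor n'r\rfloor/n\to r$ uniformly.

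\textbf{Step 3: applying Theorem \ref{th_mean}(i).} Under $H_0$ the mean of $\mathbf{Z}_t$ is $\boldsymbol{0}$. The inputs now hold for $\{\mathbf{\hat Z}_t\}$, and the conclusion of Theorem \ref{th_mean}(i) gives $T^{(A)}_n(\alpha,\hat j)\stackrel{\D}{\to}U_1$.

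One point needs care here. Theorem \ref{th_mean} is stated for a stationary sequence, while $\hat Z_t$ is not stationary because $\bar X_n$ depends on the whole sample. However, its proof uses only the FCLT and variance consistency. Concretely, it uses the conditioning argument: the limit of $\hat j$ depends on $\mathbf{B}_p$ only on $[0,\alpha]$, the second-part statistic depends only on increments on $[\alpha,1]$, and these are independent. So I would state that the proof of Theorem \ref{th_mean}(i) goes through verbatim.

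\textbf{Main obstacle.} The crux is Step 2 together with the variance consistency of $\hat Z^i$. The sample variance of $\hat Z^i_t$ involves products such as $(\bar X_n-\mu)^2 (X_t-\mu)^2$ and $(\bar X_n-\mu)Z^i_t(X_t-\mu)$. To make these vanish I would use Hölder's inequality together with $E|X_t|^\delta<\infty$ and $\bar X_n-\mu=O_p(n^{-1/2})$, each term being $O_p(n^{-1/2})$ times an averaged quantity that is bounded in probability. Since these steps are already packaged in Proposition \ref{prop_corr}, the proof of Proposition \ref{th6} will be short once that proposition is granted.
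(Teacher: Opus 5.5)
Your proposal is correct and matches the paper's argument, which the paper omits as ``trivial'': Proposition~\ref{prop_corr} transfers the FCLT and marginal-variance consistency from $\{\mathbf{Z}_t\}$ to $\{\mathbf{\hat Z}_t\}$, and the proof of Theorem~\ref{th_mean}(i) is then rerun, since it uses only those two inputs plus the independence of the Brownian paths on $[0,\alpha]$ and their increments on $[\alpha,1]$. Your remark that $\delta>2$ alone does not deliver Assumption~\ref{assump_sigma_mean} for $\{\mathbf{Z}_t\}$ is a fair point about the statement; since Proposition~\ref{prop_corr} explicitly requires that assumption, it must be read as implicit, and this is not a gap in your argument.
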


At level $\zeta$, we reject $H_0$ if $T^{(A)}_{n}(\alpha,\hat j)> U_{1,\zeta}$. In Section \ref{sec_corr_size}, we show that our test has accurate size even when the white noise process is not independent over time (i.e., contains higher order dependence) and when the sample size is small.

\subsection{Testing Linear Hypotheses in A Regression Model}\label{sec_reg}

\cite{kiefer2000} pioneered the bandwidth-free test for general linear hypotheses of the parameters in a time series regression model. We now show that SS-SN method is also applicable to their setting. Consider the regression model
\begin{equation}
	y_t=\mathbf{X}_t^\top\boldsymbol{\beta}+\epsilon_t,\qquad\qquad t=1,2,\dots,n,	
\end{equation}
where $\boldsymbol{\beta}$ is a $p$-dimensional parameter, $\mathbf{X}_t$ is a $p$-dimensional regressor and $\epsilon_t$ is a mean zero (conditional on $\mathbf{X}_t$) random process. Let $\mathbf{v}_t=\mathbf{X}_t \epsilon_t$ and $\boldsymbol{\Omega} = \sum_{k = -\infty}^{\infty}E(\mathbf{v}_t\mathbf{v}_{t+k}^\top)$. we assume that the following condition from \cite{kiefer2000} holds.

\begin{assumption} \label{assump_fclt_regression}
	
	(i)	$n^{-1/2}\sum_{t=1}^{\fr}\mathbf{v}_t\Rightarrow \boldsymbol{\Omega}^{1/2}\mathbf{B}_p(r)$ on $D^p[0,1]$. 
	
	(ii) $1/n \sum_{t=1}^{\fr}\mathbf{X}_t\mathbf{X}_t^\top\stackrel{p}{\to}r\mathbf{Q}$ for all $r\in [0,1]$ and $\mathbf{Q}^{-1}$ exists.
	
\end{assumption}	

Suppose we are interested in testing $H_0: \mathbf{R}\boldsymbol{\beta}=\mathbf{s}$ against $H_A: \mathbf{R}\boldsymbol{\beta}\neq\mathbf{s}$, where $\mathbf{s}\in \R^d$ and $\mathbf{R}$ is a $(d\times p)$ matrix of rank $d$. The test statistic proposed by \cite{kiefer2000} is $A_n=n(\mathbf{R}\boldsymbol{\hat \beta}-\mathbf{s})^\top\mathbf{V}_n^{-1}(\mathbf{R}\boldsymbol{\hat \beta}-\mathbf{s})/d$, where	$$\mathbf{V}_n=\mathbf{R}(\frac{1}{n}\sum_{t=1}^n\mathbf{X}_t\mathbf{X}_t^\top)^{-1}  \big[\frac{1}{n^2}\sum_{k=1}^n(\sum_{t=1}^{k}\mathbf{X}_t\hat\epsilon_t)(\sum_{t=1}^{k}\mathbf{X}_t\hat\epsilon_t)^\top\big]   (\frac{1}{n}\sum_{t=1}^n\mathbf{X}_t\mathbf{X}_t^\top)^{-1}   \mathbf{R}^\top,$$ 
$\boldsymbol{\hat \beta}$ is the OLS estimator of $\boldsymbol{\beta}$ and $\hat \epsilon_t$ is the residual. The limiting null distribution of $A_n$ is $U_d/d$. As shown in \cite{kiefer2000}, the size distortion for their test $\boldsymbol{1}(A_n>U_{d,\zeta}/d)$ increases as $d$ increases, and our $SS\text{-}SN_1$ test tackles this problem by focusing on the single hypothesis among $d$ hypotheses which deviates most from the null. Let $\boldsymbol{\hat\beta}_{a:b}$ be the OLS estimator of $\boldsymbol{\beta}$ based on $(y_t,\mathbf{X}_t^\top)$ for $t=a,a+1,\dots,b$ and define
$$\hat{j}^{(R)}=\argmax_{j = 1,2,\dots,d}  \frac{n [\mathbf{e}_j^\top(\mathbf{R}\boldsymbol{\hat\beta}_{1:\fa}-\boldsymbol{s})]^2}{\fa^{-1} \sum_{t=1}^{\fa}({g}_{t}^j-\frac{1}{\fa}\sum_{k=1}^{\fa}{g}_{k}^j)^2},$$
where $(g_t^1,\dots,g_t^d)^\top=\mathbf{R}(\frac{1}{\fa}\sum_{t=1}^{\fa}\mathbf{X}_t\mathbf{X}_t^\top)^{-1}\mathbf{X}_t(y_t-\mathbf{X}^\top_t\boldsymbol{\hat\beta}_{1:\fa})$ for $t=1,2,\dots,\fa$.
So $\hat{j}^{(R)}$ represents the coordinate of $\mathbf{R}\boldsymbol{\beta}-\mathbf{s}$ that deviates most from $\mathbf{0}$ at the sample level. The following assumption on $(g_t^1,\dots,g_t^d)^\top$ is needed to derive the asymptotic properties of our test statistic.
\begin{assumption} \label{assump_reg_var}
	For $j=1,2,\dots,d$, $\fa^{-1} \sum_{t=1}^{\fa}({g}_{t}^j-\frac{1}{\fa}\sum_{k=1}^{\fa}{g}_{k}^j)^2\stackrel{p}{\to}\Upsilon_j>0$.
\end{assumption}

We then find the OLS $\boldsymbol{\hat\beta}_{\fa+1:n}$ on the second part of the data and the residual is $\hat\mu_t =y_t- \mathbf{X}_t^\top\boldsymbol{\hat\beta}_{\fa+1:n}$. Define $$\mathbf{\hat s}_t=(\hat s^1_{t},\hat s^2_{t},\dots,\hat s^d_{t})=\mathbf{R}(\frac{1}{n-\fa}\sum_{t=\fa+1}^{n}\mathbf{X}_t\mathbf{X}_t^\top)^{-1}\mathbf{X}_t\hat\mu_t,$$
and $\tilde S_{a,b}^j=\sum_{t=a}^{b}\hat s^j_{t}$ for $\fa+1\le a\le b \le n$, $j=1,\dots,d$. Our test statistic can be defined as:
\begin{equation}\label{eq_reg_stat}
	T^{(R)}_{n}(\alpha,\hat{j}^{(R)}) =\frac {(n-\fa)   \mathbf{e}_{\hat{j}^{(R)}}^\top(\mathbf{R}\boldsymbol{\hat\beta}_{\fa+1:n}-\boldsymbol{s})(\mathbf{R}\boldsymbol{\hat\beta}_{\fa+1:n}-\boldsymbol{s})^\top\mathbf{e}_{\hat{j}^{(R)}}} { V^{(R)}_n(\hat{j}^{(R)})},
\end{equation}
where $ V^{(R)}_n(j)=(n-\fa)^{-2} \sum_{k=\fa+1}^{n}({\tilde S}^{ j}_{\fa+1,k}-\frac{k-\fa}{n-\fa}{\tilde S}^{ j}_{\fa+1,n})^2$. The following proposition shows the asymptotic property of $T^{(R)}_{n}(\alpha,\hat{j}^{(R)})$.
\begin{proposition}\label{th7}
	Suppose Assumption \ref{assump_fclt_regression} holds, then (i) under $H_0$, we have 
	\begin{equation}
		T^{(R)}_{n}(\alpha,\hat{j}^{(R)}) \stackrel{\D}{\to} U_1.
	\end{equation}
	(ii) under $H_A$, denote $||\mathbf{R}\boldsymbol{\beta}-\mathbf{s}||_\infty=\max_{j=1,2,\dots,d}|\mathbf{R}^\top_j\boldsymbol{\beta}-s_j|$ where $\mathbf{R}^\top_j$ is the jth row of $\mathbf{R}$. Let the jth row of $\mathbf{R}\mathbf{Q}^{-1}\boldsymbol{\Omega}^{1/2}$ be $\mathbf{h}_j^\top\in \R^p$, then we have 
	\begin{enumerate}[1.]
		\item If $\sqrt{n}||\mathbf{R}\boldsymbol{\beta}-\mathbf{s}||_\infty\to\infty$, then $T^{(R)}_{n}(\alpha,\hat{j}^{(R)}) \stackrel{p}{\to} \infty$, thus the limiting power is 1.
		\item If $\sqrt{n}(\mathbf{R}\boldsymbol{\beta}-\mathbf{s})\to \mathbf{c}=(c^1,c^2,\dots,c^d)^\top$ and $||\mathbf{c}||_\infty \neq {0}$, then we have $$\hat{j}^{(R)} \stackrel{\D}{\to}  \argmax_{j = 1,2,\dots,d} \frac{ \big\{\tilde B^{(j)}(\alpha)+\alpha c^j \big\}^2}{\Upsilon_j}\stackrel{d}{=}  j^{(R)} \mbox{ and } T^{(R)}_{n}(\alpha,\hat{j}^{(R)})  \stackrel{\D}{\to}  U^{(R)},$$ where $\tilde B^{(j)}(r)=\mathbf{h}_j^\top\mathbf{B}_p(r)$ are mean zero Brownian motions with covariance $\cov(\tilde B^{(i)}(u),\tilde B^{(j)}(v))=\min \{u,v\}\mathbf{h}_i^\top\mathbf{h}_j$ and the conditional distribution of $U^{(R)}$ given $j^{(R)}=j$ is $$U^{(R)}\big|_{j^{(R)}=j} \stackrel{d}{=}  \frac{\big\{B(1)+ \sqrt{\frac{1-\alpha}{\mathbf{h}_j^\top\mathbf{h}_j}}c^j\big\}^2}{\int_0^{1}\Big\{B(r)-rB(1)\Big\}^2 dr}.$$ 
		\item If $\sqrt{n}||\mathbf{R}\boldsymbol{\beta}-\mathbf{s}||_\infty\to 0$, then $T^{(R)}_{n}(\alpha,\hat{j}^{(R)}) \stackrel{\D}{\to} U_1$, so our test has trivial power asymptotically.
	\end{enumerate}
\end{proposition}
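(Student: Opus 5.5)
The plan is to reduce Proposition~\ref{th7} to the structure of the proof of Theorem~\ref{th_mean}. The reduction rests on the standard OLS expansion on each subsample. For the first part, Assumption~\ref{assump_fclt_regression} gives
\[
\sqrt{n}\,\mathbf{R}(\boldsymbol{\hat\beta}_{1:\fa}-\boldsymbol{\beta})=\mathbf{R}\Big(\tfrac{1}{n}\textstyle\sum_{t=1}^{\fa}\mathbf{X}_t\mathbf{X}_t^\top\Big)^{-1}n^{-1/2}\sum_{t=1}^{\fa}\mathbf{v}_t\stackrel{\D}{\to}\alpha^{-1}\mathbf{R}\mathbf{Q}^{-1}\boldsymbol{\Omega}^{1/2}\mathbf{B}_p(\alpha).
\]
For the second part I would define the partial-sum process $n^{-1/2}\sum_{t=\fa+1}^{\fa+\lfloor (n-\fa)r\rfloor}\hat{\mathbf{s}}_t$. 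Writing $\hat\mu_t=\epsilon_t-\mathbf{X}_t^\top(\boldsymbol{\hat\beta}_{\fa+1:n}-\boldsymbol{\beta})$ and using (ii) of Assumption~\ref{assump_fclt_regression} on $[\alpha,1]$, this process converges jointly with the first-part quantity to $\mathbf{R}\mathbf{Q}^{-1}\boldsymbol{\Omega}^{1/2}$ applied to a Brownian bridge built from the increments $\mathbf{B}_p(\alpha+(1-\alpha)r)-\mathbf{B}_p(\alpha)-r[\mathbf{B}_p(1)-\mathbf{B}_p(\alpha)]$. This follows the KVB argument with the full-sample estimator replaced by the subsample one. Likewise $\sqrt{n}\,\mathbf{R}(\boldsymbol{\hat\beta}_{\fa+1:n}-\boldsymbol{\beta})\to(1-\alpha)^{-1}\mathbf{R}\mathbf{Q}^{-1}\boldsymbol{\Omega}^{1/2}[\mathbf{B}_p(1)-\mathbf{B}_p(\alpha)]$. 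All of these limits are joint via the continuous mapping theorem, and they are driven by the first-part functional $\mathbf{B}_p(\alpha)$ and the independent increment process on $[\alpha,1]$.

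Next I would handle the selection step. Under Assumption~\ref{assump_reg_var} (which I take as implicitly required, as Assumption~\ref{assump_sigma_mean} was in Theorem~\ref{th_mean}), the criterion defining $\hat j^{(R)}$ converges jointly to $\{\tilde B^{(j)}(\alpha)/\alpha+\sqrt{n}(\mathbf{R}\boldsymbol\beta-\mathbf{s})_j\}^2/\Upsilon_j$ in the appropriate scaling. Since the maximizer of a continuous random vector with a.s.\ unique argmax is a continuous functional (ties have probability zero because the $\tilde B^{(j)}(\alpha)$ are Gaussian and, up to equal rows of $\mathbf h_j$, distinct), $\hat j^{(R)}$ converges jointly with the second-part processes. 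Under $H_0$ the limit is $\arg\max_j \tilde B^{(j)}(\alpha)^2/\Upsilon_j$, a functional of $\mathbf{B}_p(\alpha)$ alone.

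The key step is the conditioning argument. For each fixed $j$, the statistic $T^{(R)}_n(\alpha,j)$ is a continuous functional of the second-part limit. By scale invariance the factor $\mathbf{h}_j^\top\mathbf{h}_j$ cancels, and the statistic converges to $\{W(1)+\sqrt{(1-\alpha)/\mathbf h_j^\top\mathbf h_j}\,c^j\}^2/\int_0^1\{W(r)-rW(1)\}^2dr$. Here $W(r)=(1-\alpha)^{-1/2}\mathbf h_j^\top[\mathbf{B}_p(\alpha+(1-\alpha)r)-\mathbf{B}_p(\alpha)]/\|\mathbf h_j\|$ is a standard Brownian motion, and $c^j=0$ under $H_0$. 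Then
\[
P(T^{(R)}_n\le x)=\sum_{j=1}^d P(\hat j^{(R)}=j,\,T^{(R)}_n(\alpha,j)\le x)\to\sum_j P(j^{(R)}=j)\,P(U_j\le x).
\]
This limit uses the joint convergence together with the independence of $\mathbf{B}_p(\alpha)$ from the increments after $\alpha$ (the orthogonal increment property). Under $H_0$ every conditional law is $U_1$, which gives part (i). Part (ii).2 is the same computation with drift: the mean term contributes $\sqrt{n-\fa}\,c^j/\sqrt n$ relative to the scale $\|\mathbf h_j\|\sqrt{1-\alpha}$, producing the stated noncentrality.

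For part (ii).1, let $j_0$ attain the sup-norm. With probability tending to one, $\hat j^{(R)}$ selects a coordinate whose deviation is of the same order as $\|\mathbf{R}\boldsymbol\beta-\mathbf s\|_\infty$. More precisely, the criterion at $\hat j$ is at least that at $j_0$, which is of order $n\|\cdot\|_\infty^2$, and the noise is $O_p(1)$. For that coordinate the numerator diverges while $V^{(R)}_n(j)$ stays $O_p(1)$ and is bounded away from zero, since the residuals from the second-part OLS remove the mean so the denominator is unaffected by the drift. Part (ii).3 follows from the (ii).2 argument with $\mathbf c=\mathbf 0$. The main obstacle I expect is verifying the joint weak convergence of the selector and the second-part self-normalizer in (ii).1 when the deviation is not uniform across coordinates. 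For this I would bound $V^{(R)}_n(j)$ from below uniformly over the finitely many $j$, and argue that every $j$ with criterion comparable to the maximum has a divergent numerator.
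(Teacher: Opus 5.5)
Your proposal is correct and follows the route the paper intends; the paper omits this proof as a direct adaptation of Theorem~\ref{th_mean}. That route has two steps: joint FCLT/continuous-mapping convergence of the first-part selector (with estimated scales handled via Slutsky) and the second-part self-normalized statistic, then conditioning on the selected index using the independence of $\mathbf{B}_p(\alpha)$ from the increments after $\alpha$. Two small corrections. In (ii).1 what you need is the upper bound $\max_j V^{(R)}_n(j)=O_p(1)$, which is immediate since $d$ is fixed, not a lower bound. Also, a.s.\ uniqueness of the limiting argmax comes from linear independence of the $\mathbf h_j$ (rank-$d$ $\mathbf R$ and positive definite $\boldsymbol\Omega$), not merely from the rows $\mathbf h_j$ being distinct.
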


In Section \ref{sec_size_reg}, we show that our test has less size distortion, at the cost of a small loss of power, compared with the test used in \cite{kiefer2000} when the number of restrictions under the null is moderate and strong autocorrelation is present in the data. We do not provide proofs for Proposition \ref{th6} and \ref{th7} since they are trivial in view of the proofs we provided for Theorems \ref{th_mean} and \ref{th_enhance}.

\subsection{Testing for A Change Point in Multivariate Mean}\label{sec_cp_mean}

Let $\mathbf{X}_t=(X^1_{t},X^2_{t},\dots,X^p_{t})^\top$ be a $p$-dimensional time series and let $E(\mathbf{X_t}) = \boldsymbol \mu_t := (\mu^1_{t},\mu^2_{t},\dots,\mu^p_{t})^\top \in \R^p$. Suppose we want to test the null hypothesis $H_0: \boldsymbol \mu_1=\boldsymbol \mu_2=\cdots=\boldsymbol \mu_n$ against $H_A: \boldsymbol \mu_1=\cdots=\boldsymbol{\mu}_{k^\ast}\neq\boldsymbol{\mu}_{k^\ast+1}=\cdots=\boldsymbol \mu_n$ where $k^\ast=\frr$ for some unknown $r_0\in (0,1)$. As in Section \ref{sec_mean}, define the autocovariance matrix as $\boldsymbol \Gamma (k) = E[(\mathbf{X}_t-\boldsymbol{\mu}_t)(\mathbf{X}_{t+k}-\boldsymbol{\mu}_{t+k})^\top]$, and let $\boldsymbol{\Gamma} = \sum_{k = -\infty}^{\infty}\boldsymbol \Gamma (k)$ with $(i,j)$ element being $\Gamma_{ij}$. The following assumption is needed in deriving the asymptotic distribution of our test statistic.

\begin{assumption}\label{assump_cp}
	Assume that (a)
	\begin{eqnarray}
		n^{-1/2}\sum_{t=1}^\fr(\boldsymbol{X}_{t} -\boldsymbol{\mu}_t) \Rightarrow \boldsymbol{\Gamma}^{1/2}\mathbf{B}_p(r) \mbox{ on } D^p[0,1]
	\end{eqnarray}                   
	
	(b) $\fb +1 \leq k^\ast \leq n-\fb-1$ for some $b \in (0,0.5)$
\end{assumption}	

Under this assumption, the change point can not lie in the first and last $\fb$ sample points. Here $b$ is usually called a trimming parameter, see \cite{andrews1993}. For $k=1,2,\dots,\fb$, define $\boldsymbol{W}_{1,k} =({W}_{1,k}^1,\dots,{W}_{1,k}^p)^\top = \mathbf{X}_k-\mathbf{X}_{n-\fb+k}$, $\boldsymbol{M}_{1,k} =({M}_{1,k}^1,\dots,{M}_{1,k}^p)^\top = \sum_{t=1}^k\mathbf{W}_{1,t}$ and $\vartheta^2_j=\var({W}_{1,1}^j)$, $\hat \vartheta^2_j=\fb^{-1} \sum_{t=1}^{\fb}({W}_{1,t}^j-\frac{1}{\fb}\sum_{k=1}^{\fb}{W}_{1,k}^j)^2$ for $j=1,2,\dots,p$. We use the difference between the first and last $\fb$ points of the data to find the coordinate of $\mathbf{X}_t$ that has the strongest signal of a mean change, then we apply the SN test statistic used in \cite{shaozhang2010}. To be specific, define
\[\hat{ j}=\argmax_{j\in\{1,2,\dots,p\}} \frac {n^{-1} [\mathbf{e}_j^\top\boldsymbol{M}_{1,\fb}]^2} {\hat\vartheta_j^2 }   .\]
Then we apply the SN based test statistic used in \cite{shaozhang2010} on $\{X_{t,\hat j}\}_{t=\fb+1}^{n-\fb}$. Define $S^j_{a,b}=\sum_{t=a}^{b}X^{ j}_{t}$. Let
$$G_n = \sup_{k=\fb+1,\fb+2,\dots,n-\fb-1}\frac{T_n(\hat j,k)^2}{V_n(\hat j,k)},$$
where $$T_n(j,k)=\frac{1}{\sqrt{n-2\fb}}\sum_{t=\fb+1}^{k}\Big(X^{ j}_{t}-\frac {S^j_{\fb+1,n-\fb}}{n-2\fb}\Big),$$
\begin{align*}
	V_n(j,k) &= \frac{1}{(n-2\fb)^{2}}\Big[\sum_{t=\fb+1}^{k}\Big\{S^j_{\fb+1,t}-\frac{t-\fb}{k-\fb}S^j_{\fb+1,k}\Big\}^2  \\
	& \qquad\qquad\qquad\qquad\qquad+\sum_{t=k+1}^{n-\fb}\Big\{S^j_{t,n-\fb}-\frac{n-\fb-t+1}{n-\fb-k}S^j_{k+1,n-\fb}\Big\}^2 \Big].
\end{align*}
The following theorem shows the asymptotic properties of $G_{n}$ under the null and alternative.

\begin{theorem}\label{theorem_cp}
	Suppose Assumption \ref{assump_cp} holds and $\hat \vartheta^2_j\stackrel{p}{\to}{\vartheta_j^2}>0$ for $j=1,2,\dots,p$, then (i) under $H_0$, we have 
	\begin{equation}
		G_{n} \stackrel{\D}{\to} G\stackrel{d}{=}\sup_{r \in [0,1]}  \frac{\Big\{B(r)- rB(1) \Big\}^2}  {\int_{0}^{r} \Big\{B(s)-\frac{s}{r}B(r)\Big\}^2 ds + \int_{r}^{1}\Big\{ B(1)-B(s)-\frac{1-s}{1-r}(B(1)-B(r)) \Big\}^2 ds },
	\end{equation}
	(ii) under $H_A$, denote $\boldsymbol{\Delta}_n=(\Delta_n^1,\Delta_n^2,\dots,\Delta_n^p)^\top=E(\mathbf{X}_{k^\ast+1})-E(\mathbf{X}_{k^\ast})$ and $||\boldsymbol{\Delta}_n||_\infty=\max_{j=1,2,\dots,p}|\Delta_n^p|$, we have 
	\begin{enumerate}[1.]
		\item If $\sqrt{n}||\boldsymbol{\Delta}_n||_\infty\to\infty$, then $G_{n} \stackrel{p}{\to} \infty$, thus the limiting power of the level $\zeta$ test $\boldsymbol{1}(G_n>G_\zeta)$ for $\zeta\in(0,1)$ is 1, where $G_\zeta$ is the $100(1-\zeta)$th upper percentile of $G$.
		\item If $\sqrt{n}\boldsymbol{\Delta}_n\to \mathbf{c}:=(c^1,c^2,\dots,c^p)\in \R^p$ and $||\mathbf{c}||_\infty \neq {0}$, then we have $$\hat{j} \stackrel{\D}{\to}\argmax\limits_{j \in \{1,2,\dots,p\}}\frac{\Big[B^{(j)}(b)-\big(B^{(j)}(1)-B^{(j)}(1-b)\big)-bc^j\Big]^2  }{\vartheta_j^2}  \stackrel{d}{=}  j^\ast, $$ $$G_{n} \stackrel{\D}{\to}  G^\ast,$$ where $B^{(j)}(r)$ are mean zero Brownian motions with covariance $\cov(B^{(i)}(u),B^{(j)}(v))=2\min \{u,v\}\Gamma_{ij}$. The conditional distribution of $G^\ast$ given $j^\ast=j$ is
		\begin{equation}
			G^\ast\big|_{j^\ast=j} \stackrel{d}{=} \sup_{r \in [0,1]}  \frac{\Big\{B'(r)- rB'(1) \Big\}^2}  {\int_{0}^{r} \Big\{B'(s)-\frac{s}{r}B'(r)\Big\}^2 ds + \int_{r}^{1}\Big\{ B'(1)-B'(s)-\frac{1-s}{1-r}(B'(1)-B'(r)) \Big\}^2 ds },
		\end{equation}
		where the process $B'(r)$ is defined as $B'(r)=B(r){+} \frac{1}{\sqrt{1-2b}}H_j((1{-}2b)r{+}b)$ with $H_j=\frac{1}{\sqrt{\Gamma_{ j  j}}} \mathbf{e}_j^\top\mathbf{H}(r)$ and $\mathbf{H}(r)=(r-r_0)\mathbf{c}\mathbf{1}_{r\geq r_0}$ where $\mathbf{1}_{r\geq r_0}=1$ if $r\geq r_0$ and $0$ otherwise. 
		\item If $\sqrt{n}||\boldsymbol{\Delta}_n||_\infty\to 0$, then $$G_n \stackrel{\D}{\to} G,$$so our test has trivial power asymptotically.
	\end{enumerate}
\end{theorem}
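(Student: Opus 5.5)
The plan mirrors the proof of Theorem \ref{th_mean}: weak convergence from Assumption \ref{assump_cp}(a), the continuous mapping theorem, and a conditioning argument that uses the orthogonal increments of Brownian motion. The selection step $\hat j$ uses only the first and last $\fb$ observations. The statistic $G_n$ uses only the middle block $\{\fb+1,\dots,n-\fb\}$. In the limit these correspond to disjoint time intervals $[0,b]\cup[1-b,1]$ and $[b,1-b]$.

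\textbf{Step 1: the selection statistic.} Write $n^{-1/2}\boldsymbol{M}_{1,\fb}$ as
\[
n^{-1/2}\Big[\sum_{t\le \fb}(\mathbf{X}_t-\boldsymbol{\mu}_t)-\sum_{t>n-\fb}(\mathbf{X}_t-\boldsymbol{\mu}_t)\Big]
\]
plus a deterministic drift. The drift is $0$ under $H_0$ and $-n^{-1/2}\fb\,\boldsymbol{\Delta}_n$ under $H_A$, by Assumption \ref{assump_cp}(b). Under (ii).2 the drift converges to $-b\mathbf{c}$.

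Jointly with this, consider the middle-block partial-sum process
\[
n^{-1/2}\sum_{t=\fb+1}^{\fb+\lfloor (n-2\fb)r\rfloor}(\mathbf{X}_t-\boldsymbol{\mu}_t).
\]
Assumption \ref{assump_cp}(a) and the continuous mapping theorem give joint convergence of both objects to functionals of $\boldsymbol{\Gamma}^{1/2}\mathbf{B}_p$:
\[
\boldsymbol{\Gamma}^{1/2}\{\mathbf{B}_p(b)-\mathbf{B}_p(1)+\mathbf{B}_p(1-b)\}
\quad\text{and}\quad
\boldsymbol{\Gamma}^{1/2}\{\mathbf{B}_p(b+(1-2b)r)-\mathbf{B}_p(b)\}.
\]
Because these involve increments over disjoint intervals, they are independent. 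The first can be written in the notation of the theorem, with $B^{(j)}$ rescaled to match the stated covariance $2\min\{u,v\}\Gamma_{ij}$.

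Combined with $\hat\vartheta_j^2\to\vartheta_j^2>0$ and continuity of argmax, this gives the stated limit $j^\ast$. Ties occur with probability zero because the limits are nondegenerate Gaussians. The independence means $j^\ast$ is independent of the middle-block limit process.

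\textbf{Step 2: the statistic for a fixed coordinate.} For fixed $j$, $G_n(j)$ is a continuous functional of the $j$th middle-block process. The coordinate-wise mean $\mu_t^j$ enters $T_n$ and $V_n$ only through centred contrasts.
\begin{itemize}
\item Under $H_0$ the mean cancels entirely. Normalizing by $\sqrt{(1-2b)\Gamma_{jj}}$, the process becomes a standard Brownian motion $B$ on $[0,1]$, and the continuous mapping theorem yields $G$, exactly as in \cite{shaozhang2010}.
\item Under (ii).2 the mean contributes the drift $n^{-1/2}\sum(\mu_t^j-\mu_1^j)$, which converges to $\mathbf{e}_j^\top\mathbf{H}$ evaluated at $(1-2b)r+b$. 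This gives $B'$, after checking the normalizing constants.
\end{itemize}

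\textbf{Step 3: combining via conditioning.} The key identity is
\[
P(G_n\le x)=\sum_j P(\hat j=j,\;G_n(j)\le x).
\]
For each $j$ the pair $(\mathbf{1}\{\hat j=j\},G_n(j))$ converges jointly to a pair of independent limits. Hence $P(G_n\le x)\to\sum_j P(j^\ast=j)P(G^\ast_j\le x)$ at continuity points. Under $H_0$ every $G^\ast_j$ has the same law $G$, so the mixture is $G$; this is the pivotality claim.

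\textbf{Main obstacle: the supremum near the boundary.} The supremum over $k$ near the boundary of the middle block makes the functional continuous only away from $r\in\{0,1\}$, where the denominator tends to zero. I would handle this as in \cite{shaozhang2010}: show the ratio remains well behaved, or note that its numerator vanishes at a matching rate. A fallback is an almost-sure argument via Skorokhod representation, using the fact that the limit denominator is positive on $(0,1)$.

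\textbf{Remaining alternatives.} For (ii).3 the drift is $o(1)$, so the argument reduces to the null case. For (ii).1, let $j_0$ attain $\|\boldsymbol{\Delta}_n\|_\infty$. Then $n^{-1}[M^{j_0}_{1,\fb}]^2/\hat\vartheta_{j_0}^2\to\infty$, so with probability tending to one $\hat j$ is a coordinate $j$ with $\sqrt n|\Delta_n^j|\to\infty$; more precisely, a coordinate whose drift is of the same order as the maximal one. For such a $j$, evaluate $G_n$ at $k=k^\ast$. There $T_n$ grows like $\sqrt n|\Delta_n^j|$, while $V_n(j,k^\ast)$ stays $O_p(1)$, because the mean is constant within each of the two segments and so cancels inside $V_n$. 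Hence $G_n\to\infty$ in probability.
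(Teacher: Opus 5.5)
Your proposal is correct and follows essentially the paper's argument. The shared steps are: the continuous mapping theorem applied to the FCLT; independence of Brownian increments on $[0,b]\cup[1-b,1]$ versus $[b,1-b]$ to decouple $\hat j$ from the middle-block SN statistic; rescaling to a standard Brownian motion plus the drift $H_j$; and, for (ii).1, showing $\sqrt n|\Delta_n^{\hat j}|\to\infty$ in probability and evaluating at $k=k^\ast$, where the segment-wise constant means cancel in $V_n$. The only difference is that you absorb $\hat\vartheta_j$ by Slutsky, whereas the paper first passes to an oracle index $\hat h$ built with $\vartheta_j$ and makes the (ii).1 step precise by a two-sided sandwich on the selection criterion.

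One correction: the boundary obstacle you describe is not there. As $r\to0$ the second integral in the denominator tends to $\int_0^1\{B(s)-sB(1)\}^2ds>0$, and symmetrically as $r\to1$. Hence the denominator is a.s.\ bounded away from zero on all of $[0,1]$, and the supremum functional is continuous at the limit path without any rate or Skorokhod argument.
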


Interestingly, the limiting null distribution $G$ is pivotal and is identical to the one for the SN test in \cite{shaozhang2010}, who has tabulated its critical values.  In Section \ref{sec_size_cp}, we show that our test has substantially smaller size distortion than the test used in \cite{shaozhang2010} when the dimension of $\mathbf{X}_t$ is moderate and there is strong autocorrelation in the data.

\section{Simulation Studies} \label{ch3} 

In this section, we examine the size and power properties of our SS-SN test statistics in finite sample. Specifically in Section \ref{sec_mean_sizepower}, we examine the empirical size and power of our test statistics in testing hypotheses on multivariate mean  and compare with the traditional SN statistic  proposed in \cite{lobato2001}. In Section \ref{sec_corr_size}, we show the favorable size performance of our test statistics when testing for uncorrelation in a univariate time series. In Sections \ref{sec_size_reg} and \ref{sec_size_cp}, we present the size and size adjusted power of our test statistics for testing linear hypotheses in a regression model and the existence of a change point in multivariate mean, respectively.

\subsection{Finite sample size and power for multivariate mean tests}\label{sec_mean_sizepower}
In this subsection, we examine the empirical size and power of our test statistics in testing hypotheses on multivariate mean. Under the null, we assume the data comes from the following VAR(1) model:
$\mathbf{X}_t = \rho \mathbf{I}_p\mathbf{X}_{t-1}+\boldsymbol{\epsilon}_t,$ where $\boldsymbol{\epsilon}_t \stackrel{iid}{\sim} N(\boldsymbol{0},\mathbf{I}_p)$. We set the nominal level at $5\%$. The experiment is repeated 5000 times with the length of time series $n\in \{100,300\}$, $\rho\in\{-0.7,-0.5,0.2,0.5,0.7\}$ and $p\in\{5,10\}$. We compare the empirical sizes for $T^{(M)}_{n}(\alpha,\hat j)$ (denoted as $SS\text{-}SN_1$), $Q^{(M)}_{n}(\alpha)$ (denoted as $SS\text{-}SN_P$), their Bonferroni combination when $\alpha=0.5$ (denoted as $SS\text{-}SN_b$) and the test statistic used in \cite{lobato2001} (denoted as Lobato) for different combinations of $n,\,p$, and $\rho$. As Table \ref{tab_mean_size} shows, $SS\text{-}SN_1$, $SS\text{-}SN_P$ and $SS\text{-}SN_b$ have more accurate size than Lobato when $|\rho|$ is close to 1 and the sizes for $SS\text{-}SN_1$ and $SS\text{-}SN_p$ are very similar while the size of $SS\text{-}SN_b$ is often slightly more distorted compared with these two. The distortion for Lobato gets more severe when $p$ increases from $5$ to $10$, whereas for our tests the impact of the dimension on the size is minimal. When we increase the sample size from $n=100$ to $300$, we see noticeable improvements in size distortion for all tests. For both $SS\text{-}SN_1$ and $SS\text{-}SN_P$, the choice of $\alpha$ seems to have little impact on the size distortion and no particular value of $\alpha$ dominates others in size accuracy. Furthermore, our SS-SN tests exhibit more size stability across the range of $\rho$s as compared to Lobato, especially at $n=300$. This stability, which is achieved by dimension reduction step involved in the SS-SN procedure,  is attractive since in practice the amount of temporal dependence is usually unknown.

For the size adjusted power, we generate the data  from the process: $\mathbf{X}_t -\mu\mathbf{e}_1 = \rho \mathbf{I}_p(\mathbf{X}_{t-1}-\mu\mathbf{e}_1)+\boldsymbol{\epsilon}_t$ under the sparse alternative and from the process: $\mathbf{X}_t -\mu\boldsymbol{1} = \rho \mathbf{I}_p(\mathbf{X}_{t-1}-\mu\boldsymbol{1})+\boldsymbol{\epsilon}_t$ under the dense alternative, where $\boldsymbol{\epsilon}_t \stackrel{iid}{\sim} N(\boldsymbol{0},\mathbf{I}_p)$. We set $n=300$, $\rho=0.2$, $p=10$ and the experiment is repeated 2000 times at nominal level $5\%$. As Fig.~\ref{fig_dense_power_mean} and \ref{fig_sparse_power_mean} show, $SS\text{-}SN_1$ has relatively larger power loss than $SS\text{-}SN_P$, as compared with Lobato under the dense alternative. The power loss is relatively smaller under sparse alternative and in this case $SS\text{-}SN_1$ outperforms $SS\text{-}SN_P$. Note that under both dense and sparse alternatives, the power curve of $SS\text{-}SN_b$ is close to that of the SS-SN statistic which performs better. Hence the $SS\text{-}SN_b$ can have good all-round power against both types of alternatives. The power loss of $SS\text{-}SN_b$ relative to Lobato is moderate, but its gain in size stability and accuracy can be substantial, especially when $p$ is moderate and temporal dependence is strong. 
We also tried other settings (e.g., $n=300, \rho=0.7, p=10$) for the size-adjusted power and the results are quantitatively similar so are skipped.

\begin{table}[H]
	\centering
	\renewcommand{\arraystretch}{0.87}
	\setlength{\tabcolsep}{5pt}
	\begin{tabular}{ccc|ccc|ccc|c|c}
		\toprule
		\midrule
		\multirow{2}{*}{$n$}    &\multirow{2}{*}{$p$}&\multirow{2}{*}{$\rho$}&\multicolumn{3}{c|}{$SS\text{-}SN_1$}&\multicolumn{3}{c|}{$SS\text{-}SN_{P}$}&\multirow{2}{*}{$SS\text{-}SN_{b}$}&\multirow{2}{*}{Lobato}\\
		&&&$\alpha=0.15$ & $\alpha=0.3$&  $\alpha=0.5$&$\alpha=0.15$ & $\alpha=0.3$&  $\alpha=0.5$&    &   \\ \hline
		\multirow{10}{*}{100}                       & \multirow{5}{*}{5}  &-0.7 &    3.22   &2.94   &2.20   &2.96 &2.72 &2.40 &1.52  &0.62               \\ \cline{3-11} 
		&                     &-0.5 &    3.88   &3.84   &3.38   &4.02 &3.80 &3.48 &2.60  &1.88               \\ \cline{3-11} 
		&                     &0.2  &    6.00   &5.72   &5.72   &5.40 &5.90 &5.38 &5.14  &7.08                \\ \cline{3-11} 
		&                     &0.5  &    6.56   &6.94   &7.14   &6.28 &6.78 &7.12 &6.54  &11.90             \\ \cline{3-11} 
		&                     &0.7  &    7.92   &8.62   &9.40   &7.46 &8.24 &9.68 &10.12 &21.04               \\ \cline{2-11} 	
		& \multirow{5}{*}{10} &-0.7 &    3.10   &2.92   &2.10   &2.80 &2.92 &2.28 &1.44  &0.06               \\ \cline{3-11} 
		&                     &-0.5 &    3.80   &3.88   &2.94   &3.84 &3.73 &3.42 &2.42  &0.64               \\ \cline{3-11} 
		&                     &0.2  &    5.20   &5.70   &5.52   &5.18 &5.50 &5.48 &5.16  &10.42                    \\ \cline{3-11} 
		&                     &0.5  &    6.02   &6.92   &7.44   &6.18 &7.08 &7.34 &7.42  &26.00               \\ \cline{3-11} 
		&                     &0.7  &    7.38   &8.42   &10.48  &7.16 &8.78 &9.52 &10.62 &52.78               \\ \cline{1-11} 	
		\multirow{10}{*}{300}                       & \multirow{5}{*}{5}  &-0.7 &    4.34   &3.80   &3.74   &4.04 &3.62 &3.84 &2.84  &2.26               \\ \cline{3-11} 
		&                     &-0.5 &    4.50   &4.10   &4.48   &4.34 &4.18 &4.30 &3.64  &3.28               \\ \cline{3-11} 
		&                     &0.2  &    5.12   &4.52   &5.32   &4.98 &4.96 &5.18 &4.98  &6.18                    \\ \cline{3-11} 
		&                     &0.5  &    5.22   &4.74   &5.56   &5.44 &5.20 &5.92 &5.54  &7.18               \\ \cline{3-11} 
		&                     &0.7  &    5.82   &5.54   &6.32   &5.82 &5.96 &6.72 &6.38  &9.42               \\ \cline{2-11} 	
		& \multirow{5}{*}{10} &-0.7 &    4.64   &3.92   &4.06   &3.84 &3.86 &3.92 &3.36  &0.56               \\ \cline{3-11} 
		&                     &-0.5 &    4.92   &4.42   &4.60   &4.22 &4.36 &4.16 &4.08 &2.18               \\ \cline{3-11} 
		&                     &0.2  &    5.28   &5.40   &5.06   &4.98 &5.00 &5.12 &4.88 &6.42                    \\ \cline{3-11} 
		&                     &0.5  &    5.62   &5.74   &5.52   &5.46 &5.50 &5.60 &5.54 &10.68               \\ \cline{3-11} 
		&                     &0.7  &    5.96   &6.22   &6.46   &5.90 &6.04 &6.38 &6.28 &18.34               \\ \bottomrule
	\end{tabular}
	\caption{Empirical rejection rate (in percentage) under the null when testing hypothesis on multivariate mean.}\label{tab_mean_size}
\end{table}

\subsection{Finite sample size for testing zero autocorrelation}\label{sec_corr_size}

In this subsection, we present the empirical size of $T^{(A)}_{n}(\alpha,\hat j)$ statistic (denoted as $SS\text{-}SN_1$), its $L_2$-type counterpart (denoted as $SS\text{-}SN_P$) and their Bonferroni combination (denoted as $SS\text{-}SN_b$) in testing zero autocorrelation at nominal level $5\%$. Under the null hypothesis, we assume the data comes from the same models used in \cite{lobato2001}. Let $u_t\stackrel{i.i.d}{\sim}N(0,1)$ and the 8 models are (1), $i.i.d\,N(0,1)$; (2), $t(6)$; (3), demeaned standard log normal; (4), $1$-dependent process $X_t=u_tu_{t-1}$; (5), the heteroscedastic process $X_t=s_tu_tu_{t-1}$, where $s_t$ is the infinite repetition of the sequence $\{1,1,1,2,3,1,1,1,1,2,4,6\}$; (6), the uncorrelated non-martingale difference process $X_t=u_{t-2}u_{t-1}(u_{t-2}+u_t+1)$; (7) the GARCH(1,1) model $X_t=\delta_tu_t$ where $\delta_t^2=0.001+0.02 X_{t-1}^2+0.8\delta^2_{t-1}$; (8), the bilinear model $X_t=u_t=0.5u_{t-1}X_{t-2}$. The experiment is repeated 5000 times with $n\in\{100,500\}$ and the results are shown in Table \ref{tab_corr_2}. The size for our tests and the Ljung-Box test are close to the nominal level for N(0,1), t(6) and GARCH(1) models, while Lobato test is severely undersized when $n=100$ or $p=20$. For the bilinear model, our test statistics have accurate size while Ljung-Box test is oversized. For LogNormal model, $SS\text{-}SN_1$ and Ljung-Box have slightly more accurate size than $SS\text{-}SN_P$ and the Lobato test is noticeably undersized for all cases. 
For the RT, Hetero and No-MDS models, Lobato and Ljung-Box test both have severe size distortion, while our tests are mildly undersized. In addition, for different splitting ratio $\alpha$ and $p$, the size for our two SS-SN 
tests does not change much. Overall it is fair to say that our SS-SN tests have the most accurate and stable sizes across all DGPs. Note that the size for $SS\text{-}SN_b$ is generally slightly more distorted than $SS\text{-}SN_1$ and $SS\text{-}SN_P$ but the difference is small.

\begin{table}[H]
	\small
	\renewcommand{\arraystretch}{0.824}
	\centering
	\setlength{\tabcolsep}{2.5pt}
	\begin{tabular}{ccc|ccc|ccc|c|cc}
		\toprule
		\midrule
		\multirow{2}{*}{model} 	&	\multirow{2}{*}{$n$}         &	\multirow{2}{*}{$p$}      &\multicolumn{3}{c|}{$SS\text{-}SN_1$}&\multicolumn{3}{c|}{$SS\text{-}SN_{P}$}  &\multirow{2}{*}{$SS\text{-}SN_b$} &\multirow{2}{*}{Lobato}   &	\multirow{2}{*}{ Ljung-Box}    \\ 
		&& & $\alpha=0.15$ & $\alpha=0.3$& $\alpha=0.5$& $\alpha=0.15$ & $\alpha=0.3$& $\alpha=0.5$ &&& \\\hline
		\multirow{6}{*}{N(0,1)}&\multirow{3}{*}{100}             & 5           &    4.90       &   4.60  &4.36  &4.46 &4.60 &4.58  &3.30  &4.16 &5.48    \\ \cline{3-12} 
		&								 & 10          &    4.40       &   4.42  &4.26  &4.70 &4.66 &3.70  &3.52  &1.46 &5.24      \\ \cline{3-12} 
		&								 & 20          &    4.63       &   4.68  &4.50  &4.28 &4.32 &4.46  &4.22  &0.66 &4.76       \\ \cline{2-12} 
		&\multirow{3}{*}{500}            & 5           &    4.72       &   5.40  &5.16  &4.72 &4.50 &4.68  &4.56  &5.20 &4.96     \\ \cline{3-12} 
		&								 & 10          &    4.96       &   5.32  &4.78  &4.62 &4.60 &4.82  &4.58  &5.50 &5.12     \\ \cline{3-12} 
		&								 & 20          &    5.18       &   5.30  &4.48  &4.66 &4.90 &4.96  &4.58  &2.98 &4.52     \\ \bottomrule
		\multirow{6}{*}{t(6)}  &\multirow{3}{*}{100}             & 5           &    4.52       &   3.88  &3.80  &4.42 &3.90 &3.74  &2.90  &2.44 &4.76    \\ \cline{3-12} 
		&								 & 10          &    4.58       &   3.74  &4.36  &4.38 &4.20 &3.72  &3.62  &0.80 &4.44      \\ \cline{3-12} 
		&								 & 20          &    4.70       &   4.40  &4.42  &4.28 &4.16 &4.92  &4.28  &0.48 &4.58       \\ \cline{2-12} 
		&\multirow{3}{*}{500}            & 5           &    4.78       &   4.80  &4.90  &4.52 &4.52 &4.68  &3.84  &5.04 &4.24     \\ \cline{3-12} 
		&								 & 10          &    4.80       &   4.84  &4.86  &4.94 &4.98 &5.12  &4.36  &4.24 &5.56     \\ \cline{3-12} 
		&								 & 20          &    4.32       &   4.42  &4.66  &4.18 &4.26 &4.28  &4.56  &2.12 &4.98     \\ \bottomrule
		\multirow{6}{*}{LogNormal} &\multirow{3}{*}{100}             & 5           &    3.78       &   3.36  &3.14  &2.98&2.88&2.62  &2.46  &0.62 &3.14    \\ \cline{3-12} 
		&								 & 10          &    3.76       &   3.96  &3.82  &3.22&3.00&2.92  &3.16  &0.24 &3.58      \\ \cline{3-12} 
		&								 & 20          &    4.16       &   4.40  &4.16  &3.30&3.68&3.74  &3.34  &0.12 &3.58       \\ \cline{2-12} 
		&\multirow{3}{*}{500}            & 5           &    4.68       &   5.16  &4.74  &3.98&4.22&4.16  &3.42  &3.06 &4.10     \\ \cline{3-12} 
		&								 & 10          &    4.32       &   5.38  &4.28  &3.76&4.20&3.70  &3.58  &1.54 &4.08     \\ \cline{3-12} 
		&								 & 20          &    4.70       &   4.88  &4.50  &3.72&3.76&3.62  &3.64  &0.26 &4.12     \\ \bottomrule
		\multirow{6}{*}{RT}    &\multirow{3}{*}{100}             & 5           &    3.40       &   3.30  &2.64  &3.04&3.44&2.52  &1.50  &0.86 &21.34    \\ \cline{3-12} 
		&								 & 10          &    2.80       &   2.94  &2.58  &2.88&2.92&2.64  &1.88  &0.12 &21.90    \\ \cline{3-12} 
		&								 & 20          &    3.28       &   3.30  &2.94  &3.46&3.60&3.44  &2.62  &0.02 &21.94      \\ \cline{2-12} 
		&\multirow{3}{*}{500}            & 5           &    4.62       &   4.46  &4.44  &4.70&4.40&4.08  &3.58  &3.86 &24.10     \\ \cline{3-12} 
		&								 & 10          &    4.62       &   4.18  &4.10  &4.84&4.22&4.40  &3.54  &1.60 &24.58     \\ \cline{3-12} 
		&					             & 20          &    4.60       &   4.28  &4.00  &4.66&4.62&4.34  &3.62  &0.38 &24.08 \\ \bottomrule		 
		\multirow{6}{*}{Hetero}    &\multirow{3}{*}{100}             & 5           &    2.80       &   2.58  &1.92  &2.56&2.48&1.66  &1.12  &0.48 &26.06    \\ \cline{3-12} 
		&								 & 10          &    3.00       &   2.58  &2.72  &2.54&2.28&2.26  &1.88  &0.16 &25.94     \\ \cline{3-12} 
		&								 & 20          &    3.10       &   3.20  &2.96  &3.32&3.06&3.86  &2.74  &0.10 &25.48      \\ \cline{2-12} 
		&\multirow{3}{*}{500}            & 5           &    4.06       &   3.64  &3.46  &4.46&4.16&3.14  &2.56  &2.80 &32.48    \\ \cline{3-12} 
		&								 & 10          &    4.06       &   4.14  &3.64  &4.26&4.06&4.12  &2.82  &1.08 &32.52    \\ \cline{3-12} 
		&								 & 20          &    3.78       &   3.92  &3.30  &4.28&3.68&3.54  &2.98  &0.24 &31.92    \\ \bottomrule
		\multirow{6}{*}{No-MDS}    &\multirow{3}{*}{100}             & 5           &    2.02       &   2.34  &1.54  &1.90&1.98&1.80  &1.08  &0.30 &27.26    \\ \cline{3-12} 
		&								 & 10          &    1.86       &   1.82  &2.38  &2.10&2.04&1.76  &1.44  &0.02 &28.70   \\ \cline{3-12} 
		&								 & 20          &    1.96       &   2.12  &2.48  &2.26&2.54&2.50  &2.04  &0.06 &28.28      \\ \cline{2-12} 
		&\multirow{3}{*}{500}            & 5           &    3.56       &   3.54  &3.18  &3.86&3.58&3.20  &2.32  &2.32 &38.42    \\ \cline{3-12} 
		&								 & 10          &    3.22       &   3.68  &3.22  &3.82&3.32&3.30  &2.48  &0.48 &39.22    \\ \cline{3-12} 
		&								 & 20          &    3.20       &   3.40  &3.32  &3.54&3.20&3.08  &2.58  &0.06 &39.60   \\ \bottomrule
		\multirow{6}{*}{GARCH(1)}&\multirow{3}{*}{100}           & 5           &    4.52       &   4.70  &4.08  &5.06&4.56&4.28  &3.66  &3.68 &5.66    \\ \cline{3-12} 
		&								 & 10          &    4.46       &   4.22  &3.96  &4.30&4.66&4.08  &3.52  &1.44 &5.34      \\ \cline{3-12} 
		&								 & 20          &    4.78       &   4.66  &4.32  &4.76&4.88&4.32  &4.24  &0.60 &5.06       \\ \cline{2-12} 
		&\multirow{3}{*}{500}            & 5           &    4.72       &   4.94  &4.70  &4.38&5.06&4.86  &4.02  &5.02 &4.98     \\ \cline{3-12} 
		&								 & 10          &    5.22       &   5.00  &4.90  &4.72&5.00&5.00  &4.76  &4.96 &5.62     \\ \cline{3-12} 
		&								 & 20          &    5.30       &   4.84  &4.54  &4.40&4.90&4.66  &4.56  &2.66 &5.58     \\ \bottomrule
		\multirow{6}{*}{Bilinear}  &\multirow{3}{*}{100}             & 5           &    5.32       &   4.44  &4.08  &4.94&3.88&3.70  &2.88  &2.90 &12.66    \\ \cline{3-12} 
		&								 & 10          &    5.18       &   4.58  &3.84  &4.08&3.68&3.86  &3.24  &1.00 &12.16   \\ \cline{3-12} 
		&								 & 20          &    4.44       &   4.50  &3.62  &4.42&4.38&3.88  &3.40  &0.30 &13.10      \\ \cline{2-12} 
		&\multirow{3}{*}{500}            & 5           &    5.16       &   4.34  &4.86  &4.90&5.44&4.64  &4.02  &6.18 &14.32    \\ \cline{3-12} 
		&								 & 10          &    4.70       &   4.70  &5.26  &5.10&4.80&4.84  &4.22  &4.24 &14.22   \\ \cline{3-12} 
		&								 & 20          &    4.90       &   4.96  &4.94  &4.62&3.96&4.50  &4.36  &2.20 &14.54   \\ \bottomrule				  
	\end{tabular}
	\caption{Empirical rejection rate (in percentage) under the null with different models when testing zero autocorrelation}\label{tab_corr_2} 
\end{table}

\subsection{Finite sample size and power for testing linear hypotheses in a regression model}\label{sec_size_reg}
In this subsection, we report the result of a simulation experiment to compare the finite sample size and power of the statistic $T^{(R)}_{n}(\alpha,\hat{j}^{(R)})$ (denoted as $SS\text{-}SN_1$) defined in Equation (\ref{eq_reg_stat}), its $L_2$-type counterpart (denoted as $SS\text{-}SN_P$), their Bonferroni combination (denoted as $SS\text{-}SN_b$) and the test statistic used in \cite{kiefer2000} (denoted as KVB). For $p \in \{5,10,20\}$, $n\in \{300,600\}$ and $\rho \in \{-0.7,-0.5,0.2,0.5,0.7\}$, we assume the data is generated from the following model
\begin{equation}\label{eq_size_reg}
	y_t=\sum_{i=1}^{p}X^i_t\beta_i+\epsilon_t,\quad t=1,2,\dots,n,	\nonumber
\end{equation}
where $\{X^i_t\}$ and $\{\epsilon_t\}$ come from $(p+1)$ independent AR(1) processes $\eta_t=\rho\eta_t+e_t$ with $e_t \stackrel{i.i.d}{\sim}N(0,1-\rho^2)$ so that the marginal distribution of $\eta_t$ is $N(0,1)$. The null hypothesis is $H_0: \beta_1=\beta_2=\cdots=\beta_p=0$. The empirical rejection rate based on 5000 Monte Carlo replications under $H_0$ is shown in Table \ref{tab_reg}. In general $SS\text{-}SN_1$, $SS\text{-}SN_P$, $SS\text{-}SN_b$ and KVB have relatively accurate size when $p=5$ and $\rho=0.2$, i.e. when the dimension and temporal dependence is small, and are oversized for other parameter combinations. When $n=300$, as $p$ and $|\rho|$ increases, the size distortion for KVB increases drastically, while the size distortion for $SS\text{-}SN_1$ and $SS\text{-}SN_P$ are small when $|\rho|<0.7$ for all values of $p$. When $n=600$, the size for all SS-SN statistics are less than $10\%$ for all but one parameter combination while KVB still have large size distortion when $p=10,20$ and $|\rho|=0.7$. Overall, the improvement of size stability and accuracy across the dimension and range of $\rho$s from KVB to SS-SN is apparent, and this is mainly due to the dimension reduction step in our SS-SN procedure.

\begin{table}[H]
	\centering
	\setlength{\tabcolsep}{6pt}
	\begin{tabular}{ccc|ccc|ccc|c|c}
		\toprule
		\midrule
		\multirow{2}{*}{$n$}                       &\multirow{2}{*}{$p$}&\multirow{2}{*}{$\rho$}&\multicolumn{3}{c|}{$SS\text{-}SN_1$}&\multicolumn{3}{c|}{$SS\text{-}SN_{P}$}&\multirow{2}{*}{$SS\text{-}SN_{b}$}&\multirow{2}{*}{KVB}\\
		&&&$\alpha=0.15$&$\alpha=0.3$&$\alpha=0.5$&  $\alpha=0.15$&$\alpha=0.3$&$\alpha=0.5$& &  \\ \hline
		\multirow{15}{*}{300}                       & \multirow{5}{*}{5}  &-0.7 &    6.80       &   7.58        &8.62 &7.26 &6.96 &8.08 &8.34 &   10.62        \\ \cline{3-11} 
		&                     &-0.5 &    6.14       &   6.30        &6.72 &5.78 &5.88 &7.04 &6.32 &   7.46         \\ \cline{3-11} 
		&                     &0.2  &    5.50       &   5.34        &6.00 &5.54 &5.38 &5.90 &5.84 &   5.60        \\ \cline{3-11} 
		&                     &0.5  &    6.24       &   6.48        &6.42 &6.10 &6.24 &6.52 &6.30 &   7.52         \\ \cline{3-11} 
		&                     &0.7  &    6.80       &   8.04        &8.58 &6.76 &7.60 &8.40 &7.90 &   10.28          \\ \cline{2-11} 
		& \multirow{5}{*}{10} &-0.7 &    7.58       &   8.80        &9.58 &7.90 &8.62 &10.50&11.00&   20.18          \\ \cline{3-11} 
		&                     &-0.5 &    6.60       &   7.32        &7.68 &6.72 &6.96 &7.78 &7.74 &  11.56        \\ \cline{3-11} 
		&                     &0.2  &    5.48       &   5.38        &5.76 &5.74 &5.78 &5.92 &5.46 &   7.36          \\ \cline{3-11} 
		&                     &0.5  &    6.34       &   6.30        &7.28 &6.54 &6.52 &7.44 &7.22 &   11.72          \\ \cline{3-11} 
		&                     &0.7  &    7.66       &   7.78        &10.66&7.74 &8.28 &10.58&10.68&  20.08          \\ \cline{2-11} 
		& \multirow{5}{*}{20} &-0.7 &    9.64       &   10.74       &12.52&9.16 &11.16&12.14&14.00&   49.20        \\ \cline{3-11} 
		&                     &-0.5 &    7.58       &   7.88        &8.76 &7.38 &8.16 &8.58 &8.98 &   23.70         \\ \cline{3-11} 
		&                     &0.2  &    6.38       &   7.00        &6.90 &6.00 &6.00 &6.88 &7.34 &   11.58          \\ \cline{3-11} 
		&                     &0.5  &    7.64       &   8.46        &9.34 &7.48 &8.08 &8.48 &9.96 &   23.86        \\ \cline{3-11} 
		&                     &0.7  &    10.40      &   10.72       &12.76&9.98 &11.30&13.08&14.12&   50.32        \\ \hline
		\multirow{15}{*}{600}                           & \multirow{5}{*}{5}  &-0.7 &    6.24       &   5.56        &6.60 &6.88 &6.26 &6.66 &6.26 &   7.96         \\ \cline{3-11} 
		&                     &-0.5 &    5.62       &   5.64        &5.78 &5.88 &6.14 &5.94 &5.10 &   6.64         \\ \cline{3-11} 
		&                     &0.2  &    5.06       &   5.64        &5.10 &5.30 &5.02 &5.04 &4.66 &   5.52         \\ \cline{3-11} 
		&                     &0.5  &    6.04       &   5.86        &5.54 &5.32 &5.92 &5.60 &5.12 &   6.26         \\ \cline{3-11} 
		&                     &0.7  &    5.80       &   6.36        &6.28 &5.54 &6.48 &5.54 &5.56 &   7.70          \\ \cline{2-11} 
		& \multirow{5}{*}{10} &-0.7 &    6.92       &   6.80        &7.62 &6.40 &6.50 &7.28 &7.70 &   12.66          \\ \cline{3-11} 
		&                     &-0.5 &    6.14       &   6.28        &6.48 &5.96 &5.96 &6.48 &6.62 &   8.64        \\ \cline{3-11} 
		&                     &0.2  &    5.42       &   5.70        &5.72 &5.22 &5.16 &5.46 &5.34 &   5.82          \\ \cline{3-11} 
		&                     &0.5  &    5.72       &   5.48        &5.40 &5.50 &6.18 &6.22 &5.90 &   7.76          \\ \cline{3-11} 
		&                     &0.7  &    6.26       &   6.80        &6.84 &7.02 &7.12 &7.22 &7.66 &   12.02          \\ \cline{2-11} 
		& \multirow{5}{*}{20} &-0.7 &    7.46       &   7.80        &9.18 &7.84 &8.60 &8.86 &9.86 &   27.26         \\ \cline{3-11} 
		&                     &-0.5 &    6.72       &   6.48        &7.12 &6.50 &6.60 &7.18 &6.92 &   13.96         \\ \cline{3-11} 
		&                     &0.2  &    5.24       &   5.58        &6.18 &5.28 &5.90 &6.40 &5.78 &   8.48          \\ \cline{3-11} 
		&                     &0.5  &    5.68       &   6.76        &7.72 &6.00 &6.34 &7.34 &7.56 &   13.74        \\ \cline{3-11} 
		&                     &0.7  &    6.42       &   8.50        &9.10 &8.02 &7.36 &8.96 &10.10&   26.74        \\ \bottomrule	
	\end{tabular}
	\caption{Empirical rejection rate (in percentage) under the null at level $5\%$ in a regression model}\label{tab_reg}
\end{table}

Next, we examine the power of our SS-SN statistics under two alternative hypotheses. For the sparse alternative, we assume $\boldsymbol{\beta}=\beta\boldsymbol{e}_1^\top$ for some $\beta>0$, so only the first component of $\boldsymbol{\beta}$ deviates from $H_0$. For the dense alternative, we assume $\boldsymbol{\beta}=\beta\boldsymbol{1}^\top$. We assume the same model as in Section \ref{sec_size_reg} with $n=300$, $p=10$ and $\rho=0.2$. We repeat the experiment 2000 times and the curve for size adjusted power against $\beta$ for the dense and sparse alternatives are shown in Fig.~\ref{fig_dense_power_reg} and \ref{fig_sparse_power_reg}. The findings here are qualitatively similar to those reported in Fig.~\ref{fig_dense_power_mean} and \ref{fig_sparse_power_mean}. 
Under dense alternative, the power loss of $SS\text{-}SN_P$ compared with KVB is significantly smaller than that of $SS\text{-}SN_1$. Under sparse alternative, the power curve of $SS\text{-}SN_1$ is very close to that of KVB. Overall, we recommend the user to employ $SS\text{-}SN_b$, which achieves good all-round power and exhibits moderate  power loss as compared to KVB under both alternatives.    

\subsection{Finite sample size and power for testing a change point in multivariate mean}\label{sec_size_cp}

In this subsection, we calculate the empirical size of our proposed tests in testing the existence of a change point in the mean of a VAR(1) process. As in previous simulations, the $L_\infty$-type, $L_2$-type and the Bonferroni combination are denoted as $SS\text{-}SN_1$, $SS\text{-}SN_P$ and $SS\text{-}SN_b$. Under the null hypothesis, we assume the data comes from the VAR(1) process $\mathbf{X}_t = \rho \mathbf{I}_p\mathbf{X}_{t-1}+\boldsymbol{\epsilon}_t$, where $\boldsymbol{\epsilon}_t \stackrel{iid}{\sim} N(\boldsymbol{0},\mathbf{I}_p)$ and we set the trimming constant $b$ in Assumption \ref{assump_cp} to be $0.15$, following the convention [\cite{andrews1993}]. The experiment, with nominal level at $5\%$, is repeated 5000 times with the length of time series $n\in\{300,600\}$, $\rho\in\{-0.7,-0.5,0.2,0.5,0.7\}$ and $p\in\{5,10,20\}$. We also calculate the empirical size for the test used in \cite{shaozhang2010} (denoted as SZ) and compare them under different combinations of $n,\,p$, and $\rho$.

As shown in Table \ref{tab_cp}, when $n=300$, our tests are slightly under-sized when $\rho=-0.7$ and over-sized when $\rho=0.7$, but the size distortion does not get worse as $p$ increases, which is not the case for SZ. When $n=600$, the sizes for our tests are more accurate than that for $n=300$ and close to the nominal level uniformly over $p$ and $\rho$. For SZ, the size also gets more accurate, but there is still large size distortion when $p$ is large and $|\rho|$ is close to 1. Again SS-SN improves the size stability and accuracy across the dimension and $\rho$s.

To examine the size adjusted power, assume data $\mathbf{Y}_t$ comes from the following model:
$$\mathbf{Y}_t=
\begin{cases}
	\mathbf{X}_t,  & 1\leq t\leq k_0=\frr \\
	\mathbf{X}_t+\boldsymbol\mu & k_0<t \leq n,
\end{cases}$$
where $\mathbf{X}_t$ is generated from the model in the null hypothesis. For the sparse alternative, we let $\boldsymbol\mu = \mu\mathbf{e}_1$ and for the dense alternative, we let $\boldsymbol\mu = \mu\boldsymbol{1}$. We set $n=300$, $\rho=0.2$, $p=10$, $r_0=1/2$ and the experiment is repeated 2000 times. The results, as shown in Fig.~\ref{fig_dense_power_cp} and \ref{fig_sparse_power_cp}, are qualitatively similar to those reported in Fig.~\ref{fig_dense_power_mean}, \ref{fig_sparse_power_mean}, \ref{fig_dense_power_reg} and \ref{fig_sparse_power_reg}. Note that under sparse alternative, the power curve of $SS\text{-}SN_1$ and $SS\text{-}SN_b$ are very close to that of SZ and even slightly outperforms SZ when $\mu$ is large. 
The use of $SS\text{-}SN_b$ is again recommended due to its overall good performance under both alternatives.

\begin{table}[H]
	\vspace{1em}
	\centering
	\setlength{\tabcolsep}{7pt}
	\begin{tabular}{cc|cccc|cccc}
		\toprule
		\midrule
		\multirow{2}{*}{$p$}     & 	\multirow{2}{*}{$\rho$} & 	\multicolumn{4}{c|}{$n=300$}  & \multicolumn{4}{c}{$n=600$}    \\
		&                              &    $SS\text{-}SN_1$&$SS\text{-}SN_P$ &$SS\text{-}SN_b$&SZ                  &      $SS\text{-}SN_1$&$SS\text{-}SN_P$ &$SS\text{-}SN_b$ & SZ      \\\hline
		\multirow{5}{*}{5}  &-0.7 &    2.94   &3.30  &2.50  &   1.10        &    3.38    &3.70 &2.98  &   2.80 \\ \cline{3-10}
		&-0.5 &    3.98   &4.54  &3.38  &   2.20        &    3.84    &4.36 &3.34  &   3.12           \\ \cline{3-10} 
		&0.2  &    5.06   &6.06  &4.88  &   5.10        &    4.88    &4.72 &4.02  &   5.02      \\ \cline{3-10} 
		&0.5  &    6.04   &6.96  &5.96  &   7.86        &    5.20    &5.38 &4.40  &   6.84     \\ \cline{3-10}  
		&0.7  &    7.10   &7.94  &7.44  &   12.88       &    5.82    &6.18 &5.00  &   8.38      \\ \cline{2-10} 
		\multirow{5}{*}{10} &-0.7 &    3.10   &2.90  &2.50  &   0.22         &    4.28   &4.06 &3.80   &   1.26      \\ \cline{3-10}
		&-0.5 &    4.22   &3.96  &3.66  &   1.22         &    5.26   &4.78 &4.58   &   2.32    \\ \cline{3-10} 
		&0.2  &    5.48   &5.20  &5.24  &   7.70         &    5.92   &5.30 &5.52   &   6.28   \\ \cline{3-10} 
		&0.5  &    6.62   &6.28  &6.24  &   13.46        &    6.02   &5.64 &5.86   &   9.46     \\ \cline{3-10} 
		&0.7  &    7.36   &7.50  &7.66  &   28.82        &    6.50   &6.22 &6.70   &   14.24        \\ \cline{2-10}
		\multirow{5}{*}{20} &-0.7 &    2.82   &3.12  &2.56  &   0.00           &    4.52 &4.30 &4.00     &   0.18  \\ \cline{3-10}
		&-0.5 &    3.62   &3.90  &3.26  &   0.34           &    5.12 &4.82 &5.08     &   1.02      \\ \cline{3-10} 
		&0.2  &    4.92   &5.14  &4.90  &   11.80          &    5.58 &5.44 &5.56     &   7.92    \\ \cline{3-10} 
		&0.5  &    5.82   &5.94  &6.04  &   32.84          &    5.82 &5.90 &6.10     &   16.80   \\ \cline{3-10} 
		&0.7  &    7.20   &6.88  &7.60  &   73.20          &    6.36 &6.32 &7.04     &   35.82     \\ \bottomrule		
	\end{tabular}
	\caption{Empirical rejection rate (in percentage) under the null at level $5\%$ when testing for change point in multivariate mean}\label{tab_cp}
\end{table}

Based on the simulation results reported in Section~\ref{sec_mean_sizepower}-Section~\ref{sec_size_cp}, we conclude that both SS-SN test statistics offer very stable and relatively accurate size across a wide range of data generating processes for most combinations of $(n,p,\rho)$ we examined, as compared to the traditional bandwidth-free tests. The latter often yield very large size distortion in the case of small sample size and/or large dimension when the magnitude of temporal dependence is moderate. The size stability and accuracy with respect to the dimension and magnitude of dependence is a major gain of the SS-SN procedures. 
As a consequence of the usual size-power trade-off,  there is a power loss for SS-SN tests due to the use of sample splitting. However, when comparing the optimal SS-SN test to the traditional bandwidth-free counterparts (e.g., $SS\text{-}SN_P$ for dense alternative, and $SS\text{-}SN_1$ for sparse alternative), the power loss is mild. In a sense, this is similar to the ``more accurate size but less power" phenomenon when comparing tests based on fixed-$b$ asymptotics versus small-$b$ asymptotics [\cite{kiefer2005}].  
In practice, when there is no prior knowledge about the type of alternative, we recommend the user to employ the Bonferonni test, i.e., $SS\text{-}SN_b$ by setting $\alpha=0.5$. The asymptotic independence between $L_2$-type and $L_{\infty}$-type SS-SN test statistics as stated in Theorem~\ref{th_asymp_indi} further lends theoretical support to the Bonferroni test as it is expected to be non-conservative when the dimension of the parameter is moderate and sample size is large.

\begin{figure}[H]

	\begin{subfigure}{0.5\textwidth}
		\includegraphics[width=0.83\textwidth]{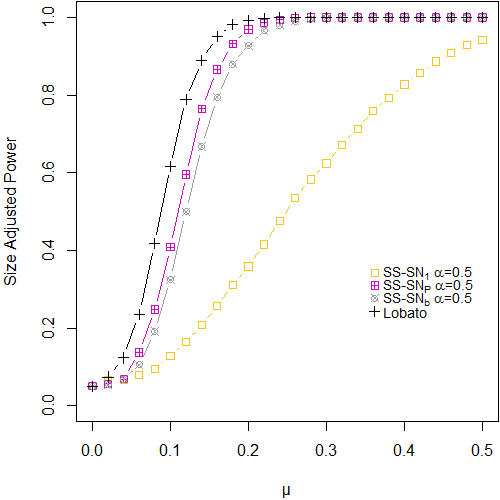}
		\caption{}
		\label{fig_dense_power_mean}
	\end{subfigure}
	\hfill
	\begin{subfigure}{0.5\textwidth}
		\includegraphics[width=0.83\textwidth]{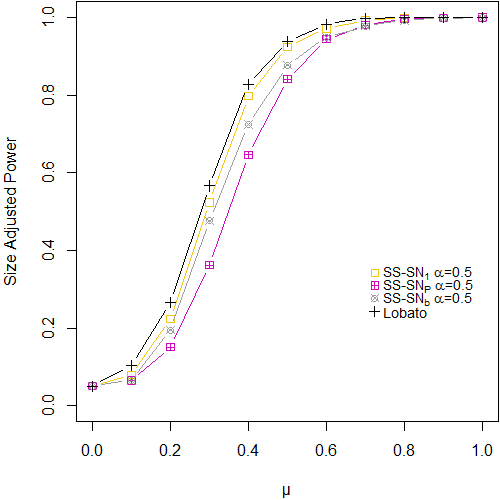}
		\caption{}
		\label{fig_sparse_power_mean}
	\end{subfigure}

	\begin{subfigure}{0.5\textwidth}
		\includegraphics[width=0.83\textwidth]{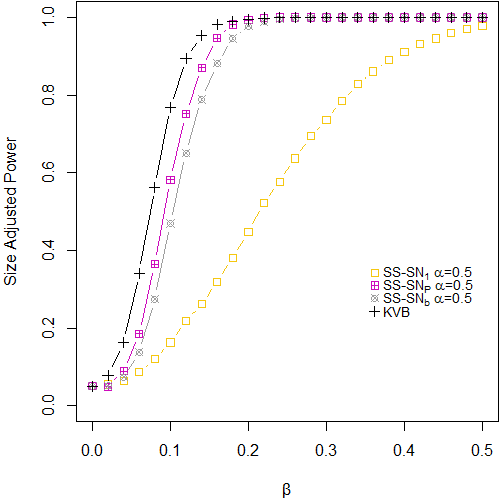}
		\caption{}
		\label{fig_dense_power_reg}
	\end{subfigure}
	\hfill
	\begin{subfigure}{0.5\textwidth}
		\includegraphics[width=0.83\textwidth]{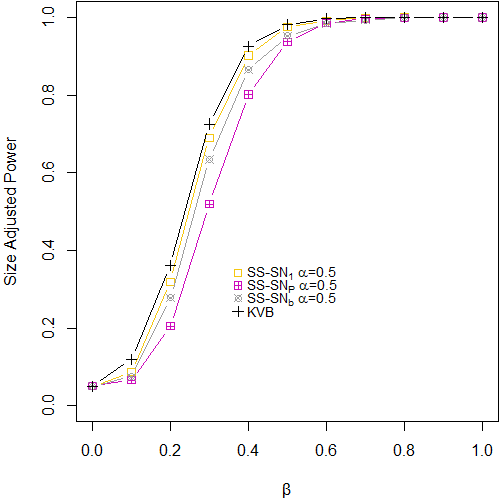}
		\caption{}
		\label{fig_sparse_power_reg}
	\end{subfigure}

	\begin{subfigure}{0.5\textwidth}
		\includegraphics[width=0.83\textwidth]{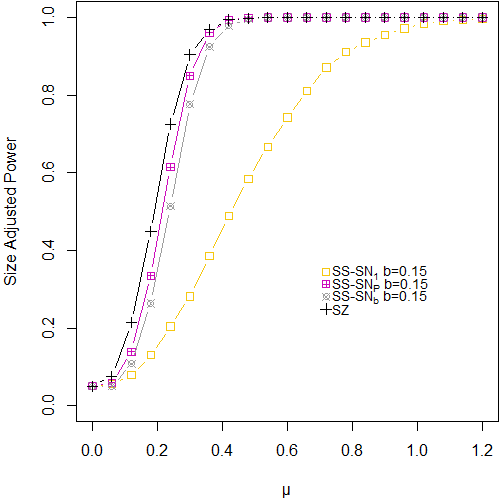}
		\caption{}
		\label{fig_dense_power_cp}
	\end{subfigure}
	\hfill
	\begin{subfigure}{0.5\textwidth}
		\includegraphics[width=0.83\textwidth]{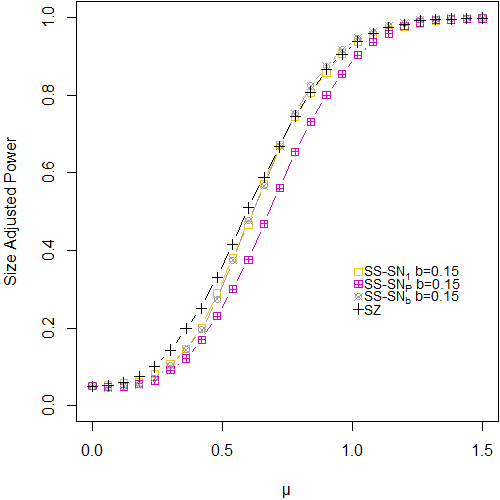}
		\caption{}
		\label{fig_sparse_power_cp}
	\end{subfigure}
	
	\caption{Size adjusted power for testing hypothesis on multivariate mean (first row), in a regression model (second row) and the existence of a change point (last row) under the dense (left column) and sparse (right column) alternatives}
	\label{fig_power_reg}
\end{figure}

\section{Conclusion} \label{ch6} 

In this article, we propose a class of new tests for hypotheses on a multi-dimensional parameter based on SN and sample splitting. Our two SS-SN statistics do not involve any bandwidth parameter and the asymptotic null distribution is pivotal and is independent of the sample splitting proportion $\alpha$. The construction of both SS-SN statistics are rather straightforward and the test statistics applied to the second part of sample ${\cal P}_2$ after dimension reduction based on the first part ${\cal P}_1$ are effectively targeting at parameter of dimension one. This sample splitting approach is broadly applicable to many time series testing problems, and we only cover testing hypotheses on marginal means, autocorrelations, regression parameter and a change point in multivariate mean to illustrate its usefulness. Overall, the SS-SN methodology provides an important addition to the existing SN toolbox owing to its superior ability of dealing with moderate dimensional parameter in the inference of low or moderate dimensional time series.

Below we shall highlight several appealing features of our test statistics. (a) For a moderate dimensional parameter, the size of our test statistics is considerably more accurate than traditional SN statistic, especially when temporal dependence is strong.  As a price to pay, the SS-SN test loses some power. However, the power loss is moderate as seen from both theoretical power analysis and simulation studies. In practice,  we recommend the practitioner to set $\alpha=0.5$, and use the Bonferroni test that combines the two SS-SN test statistics so the power is adaptive to both sparse and dense alternatives. Simulation results show that the Bonferroni test exhibits accurate size and all-round good power in all settings. 
(b)  We managed to show the asymptotic validity of $SS\text{-}SN_1$ and  $SS\text{-}SN_P$ test statistics and their asymptotic independence under the null in multivariate mean testing problem in a growing-dimensional setting, which is an interesting theoretical contribution to the SN literature. The theory is consistent with the empirical observation that the size is robust for SS-SN test statistics for a broad range of dimensions.
(c) As a by-product of dimension reduction involved in SS-SN, there is substantial saving in computational cost as compared to traditional SN test statistics. In the mean testing problem, the cost of our SS-SN test statistics scales linearly in $p$, which is superior to that for  the traditional SN statistic.

To conclude, we mention some possible extensions. The scope of this paper can be considerably expanded by using the GMM framework of \cite{kiefer2005}. Also one can regard KVB's test as a special case of the so-called fixed-$b$ asymptotics (\cite{kiefer2005}) with $b=1$ and the use of Bartlett kernel. It is expected that the fixed-$b$ based tests and also other fixed-smoothing based tests as advocated in \cite{sun2014b3}, \cite{hwang2017}, \cite{wang2020}, will encounter the same size distortion problem when the dimension is moderate and temporal dependence is moderate/strong. Hence it would be interesting to extend the SS-SN idea to fixed-smoothing methods and to GMM settings. These topics are left for future research.

\bigskip

{\bf Acknowledgement:}
We would like to thank the Associate Editor and two reviewers, whose thoughtful comments led to a substantial improvement of the paper.

\bigskip
{\it Conflict of Interest:} None declared.
\bigskip

{\bf Funding Statement:}
The research project is partially supported by NSF-DMS 2210002.
\bigskip

{\bf Data Availability:}
The data used in the empirical example in Appendix \ref{app_B} is contained in the dataset \textbf{NelPlo} in the R package \textbf{tseries} \\
(https://cran.r-project.org/web/packages/tseries/index.html).

\bibliographystyle{chicago}

\bibliography{Bib_SSS}

\newpage\null\thispagestyle{empty}
\begin{center}
        \huge
\textbf{List of Figure Legends}
 \vspace{0.5cm}
\normalsize
\end{center}
\begin{enumerate}
	\item[\textbf{\textsf{Fig.1}}] Empirical size for traditional SN test on multivariate mean
	\item[\textbf{\textsf{Fig.2}}] Asymptotic power under the dense (a) and sparse (b) alternative when testing hypothesis on multivariate mean
	\item[\textbf{\textsf{Fig.3}}] Size adjusted power for testing hypothesis on multivariate mean (first row), in a regression model (second row) and the existence of a change point (last row) under the dense (left column) and sparse (right column) alternatives
	\item[\textbf{\textsf{Fig.4}}] Size adjusted power for testing hypothesis on multivariate mean for ALT1 (first row), ALT2 (second row) and ALT3 (third row) under the dense (left column) and sparse (right column) alternatives
	\item[\textbf{\textsf{Fig.5}}] Size adjusted power for testing hypothesis on multivariate mean for ALT4 (first row), ALT5 (second row) under the dense (left column) and sparse (right column) alternatives
	\item[\textbf{\textsf{Fig.6}}] Size adjusted power for testing a change point in multivariate mean for ALT1 (first row), ALT2 (second row) and ALT3 (third row) under the dense (left column) and sparse (right column) alternatives
	\item[\textbf{\textsf{Fig.7}}] Size adjusted power for testing a change point in multivariate mean for ALT4 (first row), ALT5 (second row) under the dense (left column) and sparse (right column) alternatives
	\item[\textbf{\textsf{Fig.8}}] Annual U.S. CPI growth rate
	\item[\textbf{\textsf{Fig.9}}] Sample ACF of the CPI growth rate series

\end{enumerate}

\newpage

\bigskip
\begin{center}
	{\large\bf SUPPLEMENT TO ``Another Look at Bandwidth-free Inference: a Sample Splitting Approach''}
\end{center}

\begin{center}
	{BY Yi Zhang AND Xiaofeng Shao}
\end{center}
The supplement is organized as follows. In Appendix \ref{app_A}, we compare different SS-SN statistics where the projection vectors are defined using different rescaling methods. In Appendix \ref{app_B}, we use a real data example to illustrate the finite sample performance of our SS-SN tests for zero autocorrelation. In Appendix~\ref{ch7}, we provide the proofs for all main results presented in the paper. In Appendix~\ref{ch8}, we collect all auxiliary lemmas and their proofs.

\appendix

\section{Comparison of SS-SN Statistics Defined Using Different Rescaling Methods}\label{app_A}

\subsection{Variants of SS-SN Statistics Based on Different Rescaling Methods}\label{subsec1}
In Sections \ref{sec_mean} and \ref{sec_power_e}, the projection vectors $\mathbf{e}_{\hat j}$ and $\mathbf{\hat P}$ are calculated on the first subsample ${\cal P}_1$ and an important step involved is the rescaling, which is conducted separately for each dimension using the sample marginal variance. As pointed out by a referee, this may not be the best approach. In this section, we introduce four alternative rescaling methods, define their corresponding $L_\infty$-type and $L_2$-type SS-SN statistics and compare their finite sample performances. 
We denote the $L_\infty$-type and $L_2$-type SS-SN statistics defined in Sections \ref{sec_mean} and \ref{sec_power_e} as $SS\text{-}SN_1$ and $SS\text{-}SN_P$. 

For $j=1,2,\dots,p$, let $\hat \Gamma_j$ be the kernel estimator of the long run variance of $j$th component series and $\boldsymbol{\hat{\Gamma}}$ be the kernel estimator of the long run covariance matrix. For consistent long run variance (matrix) estimation, we use the Bartlett kernel  [\cite{newey1987}] and select the bandwidth according to the data dependent method proposed in \cite{newey1994}. Note that  we only use ${\cal P}_1$ for the estimation. 

Denote $V^{(M)}_n(j)=(n{-}\fa)^{-2} \sum_{k=\fa+1}^{n}({S}^{ j}_{\fa+1,k}{-}\frac{k-\fa}{n-\fa}{S}^{ j}_{\fa+1,n})^2$ and $V_n(\boldsymbol{\hat{P}})=(n-\fa)^{-2} \sum_{k=\fa+1}^{n}\Big\{ \boldsymbol{\hat{P}}^\top \big[\mathbf{S}_{\fa+1,k}-\frac{k-\fa}{n-\fa}\mathbf{S}_{\fa+1,n}\big]\Big\}^2$ then the first two variants of SS-SN statistics are defined as 
\begin{align}
	SSSN\text{-}LM_1 =&\frac {(n{-}\fa)^{{-}1} ({S}^{\hat j_{LM}}_{\fa+1,n}{-}(n{-}\fa)\mu_{0}^{\hat j_{LM}})^2} {V^{(M)}_n(\hat j_{LM})}, \nonumber\\
	SSSN\text{-}LM_P =&\frac {(n-\fa)^{-1} \Big\{\boldsymbol{\hat{P}}_{LM}^\top\big[\mathbf{S}_{\fa+1,n}{-}(n{-}\fa)\boldsymbol{\mu}_0\big] \Big\}^2} {V_n(\boldsymbol{\hat{P}}_{LM})} \nonumber
\end{align}
with 
\begin{align*}
	\hat{j}_{LM}=&\argmax_{j = 1,2,\dots,p}  \frac{n^{-1} ({S}^j_{1,\fa}{-}\fa {\mu}^j_0)^2}{\hat\Gamma_{j}}, \\
	\boldsymbol{\hat{P}}_{LM} = & diag\big\{\frac{1}{\sqrt{\hat\Gamma_{1}}},\dots,\frac{1}{\sqrt{\hat\Gamma_{p}}}\big\} \frac{1}{\sqrt{n}}(\boldsymbol{S}_{1,\fa}-\fa\boldsymbol{\mu}_0).
\end{align*}
Note that $\hat{j}_{LM}$ and $\boldsymbol{\hat{P}}_{LM}$ are rescaled separately for each dimension using the corresponding consistent long run variance estimator. Similarly, we can rescale the projection vectors using the consistent long run covariance matrix estimator $\boldsymbol{\hat{\Gamma}}$ and define
\begin{align}
	SSSN\text{-}L_1 =&\frac {(n{-}\fa)^{{-}1} ({S}^{\hat j_{L}}_{\fa+1,n}{-}(n{-}\fa)\mu_{0}^{\hat j_{L}})^2} {V^{(M)}_n(\hat j_{L})}, \nonumber\\
	SSSN\text{-}L_P =&\frac {(n-\fa)^{-1} \Big\{\boldsymbol{\hat{P}}_{L}^\top\big[\mathbf{S}_{\fa+1,n}{-}(n{-}\fa)\boldsymbol{\mu}_0\big] \Big\}^2} {V_n(\boldsymbol{\hat{P}}_{L})} \nonumber
\end{align}
with 
\begin{align*}
	\hat{j}_{L}=&\argmax_{j = 1,2,\dots,p}  \big|\frac{1}{\sqrt{n}} \mathbf{e}_j^\top\boldsymbol{\hat{\Gamma}}^{-1/2}(\boldsymbol{S}_{1,\fa}{-}\fa\boldsymbol{\mu}_0)\big|, \\
	\boldsymbol{\hat{P}}_{L} = &\boldsymbol{\hat{\Gamma}}^{-1/2}\frac{1}{\sqrt{n}}(\boldsymbol{S}_{1,\fa}{-}\fa\boldsymbol{\mu}_0).
\end{align*}

Note that $\boldsymbol{\hat{P}}_{L}$ is the estimator of the optimal projection (see Remark \ref{rmk_3} in Section \ref{sec_power_e}). The above rescaling methods involve the choice of bandwidth parameters, which is a bit contrary to the spirit of bandwidth-free testing. Alternatively, we can introduce new $L_\infty$-type and $L_2$-type statistics with the projection vectors rescaled using SN method. Denote $\tilde V^{(M)}_n(j)=\fa^{-2} \sum_{k=1}^{\fa}({S}^{ j}_{1,k}{-}\frac{k}{\fa}{S}^{ j}_{1,\fa})^2$ and define
\begin{align}
	SSSN\text{-}SNM_1 =&\frac {(n{-}\fa)^{{-}1} ({S}^{\hat j_{SNM}}_{\fa+1,n}{-}(n{-}\fa)\mu_{0}^{\hat j_{SNM}})^2} {V^{(M)}_n(\hat j_{SNM})} ,\nonumber\\
	SSSN\text{-}SNM_P =&\frac {(n-\fa)^{-1} \Big\{\boldsymbol{\hat{P}}_{SNM}^\top\big[\mathbf{S}_{\fa+1,n}{-}(n{-}\fa)\boldsymbol{\mu}_0\big] \Big\}^2} {V_n(\boldsymbol{\hat{P}}_{SNM})} \nonumber
\end{align}
with 
\begin{align*}
	\hat{j}_{SNM}=&\argmax_{j = 1,2,\dots,p}  \frac{n^{-1} ({S}^j_{1,\fa}{-}\fa {\mu}^j_0)^2}{\tilde V^{(M)}_n(j)}, \\
	\boldsymbol{\hat{P}}_{SNM} = & diag\big\{\frac{1}{\sqrt{\tilde V^{(M)}_n(1)}},\dots,\frac{1}{\sqrt{\tilde V^{(M)}_n(p)}}\big\} \frac{1}{\sqrt{n}}(\boldsymbol{S}_{1,\fa}-\fa\boldsymbol{\mu}_0).
\end{align*}
Note that the rescaling involved in forming $\hat{j}_{SNM}$ and $\boldsymbol{\hat{P}}_{SNM}$ is done via the self-normalizer $\{\tilde V^{(M)}_n(j)\}$ for each dimension. Similarly we can rescale them using the SN matrix $ \boldsymbol{\tilde V}_n=\fa^{-2}\sum_{t=1}^{\fa}\{\boldsymbol{S}_{1,t}-(t/\fa)\boldsymbol{S}_{1,\fa}\}\{\boldsymbol{S}_{1,t}-(t/\fa)\boldsymbol{S}_{1,\fa}\}^\top$ and define 
\begin{align}
	SSSN\text{-}SN_1 =&\frac {(n{-}\fa)^{{-}1} ({S}^{\hat j_{SN}}_{\fa+1,n}{-}(n{-}\fa)\mu_{0}^{\hat j_{SN}})^2} {V^{(M)}_n(\hat j_{SN})}, \nonumber\\
	SSSN\text{-}SN_P =&\frac {(n-\fa)^{-1} \Big\{\boldsymbol{\hat{P}}_{SN}^\top\big[\mathbf{S}_{\fa+1,n}{-}(n{-}\fa)\boldsymbol{\mu}_0\big] \Big\}^2} {V_n(\boldsymbol{\hat{P}}_{SN})} \nonumber
\end{align}
with 
\begin{align*}
	\hat{j}_{SN}=&\argmax_{j = 1,2,\dots,p}  \big|\frac{1}{\sqrt{n}} \mathbf{e}_j^\top\boldsymbol{\tilde V}_n^{-1/2}(\boldsymbol{S}_{1,\fa}{-}\fa\boldsymbol{\mu}_0)\big|, \\
	\boldsymbol{\hat{P}}_{SN} = &\boldsymbol{\tilde V}_n^{-1/2}\frac{1}{\sqrt{n}}(\boldsymbol{S}_{1,\fa}{-}\fa\boldsymbol{\mu}_0).
\end{align*}
The asymptotic property for these variants of SS-SN test statistics can be obtained by following a similar argument used in the proofs of Theorem~\ref{th_mean} and Theorem~\ref{th_enhance} when the dimension $p$ is fixed. The details are omitted. 
The limiting null distributions for all these variants of SS-SN statistics are still $U_1$ so we can construct the test by using the tabulated critical values for $U_1$. 

\subsection{Empirical Size Comparison}\label{size_diff_sssn}

To examine the empirical size of different SS-SN statistics, we focus on the testing of multivariate mean. Under the null, we assume the data comes from the following VAR(1) model:
$\mathbf{X}_t =  \rho \mathbf{I}_p\mathbf{X}_{t-1}+\boldsymbol{\epsilon}_t,$ where $\boldsymbol{\epsilon}_t \stackrel{iid}{\sim} N(\boldsymbol{0},\mathbf{\Sigma}_p)$. We consider four DGPs with different dependence structures: for DGP1 we let $\mathbf{\Sigma}_p=\mathbf{I}_p$; for DGP2 we let $\mathbf{\Sigma}_p=[\Sigma_{i,j}]_{i,j}=[0.7^{|i-j|}]_{i,j}$; for DGP3 we let $\mathbf{\Sigma}_p=[\Sigma_{i,j}]_{i,j}=[(-0.7)^{|i-j|}]_{i,j}$; for DGP4 we set $\mathbf{\Sigma}_p=0.4\mathbf{I}_p+0.6\mathbf{1}\mathbf{1}^\top$. Here DGP1, DGP2$\&$3 and DGP4 correspond to independent, weakly dependent and strongly dependent components for $\{\boldsymbol{\epsilon}_t\}$ and $\{\mathbf{X}_t\}$ respectively. We set the nominal level at $5\%$ and the experiment is repeated 3000 times with the length of time series $n\in \{100,300\}$, $\rho\in\{-0.7,-0.5,0.2,0.5,0.7\}$ and $p\in\{5,10\}$. We also include the test statistic used in \cite{lobato2001} (denoted as Lobato) as comparison. We only show the result for the splitting ratio $\alpha=0.5$ and qualitatively similar results hold for $\alpha=0.15, 0.3$. The original $L_\infty$-type and $L_2$-type SS-SN statistics are denoted as $M_1$ and $M_P$ while all other newly defined SS-SN statistics are denoted using their corresponding suffixes.

As Table \ref{tab_sss1} and \ref{tab_sss2} show, all the SS-SN statistics have similar empirical sizes, which are much more accurate than the traditional SN statistic. Also the size stability across $p$ and $\rho$ for these SS-SN test statistics, as compared to Lobato, are very noticeable.

\renewcommand{\arraystretch}{0.92}

\begin{table}[H]
	
	\begin{subtable}[H]{1\textwidth}
		
		\centering
		\setlength{\tabcolsep}{5pt}
		\begin{tabular}{ccc|ccccc|ccccc|c}
			\toprule
			\midrule
			\multirow{2}{*}{$n$}    &\multirow{2}{*}{$p$}&\multirow{2}{*}{$\rho$}&\multicolumn{5}{c|}{$L_\infty$}&\multicolumn{5}{c|}{$L_2$}&\multirow{2}{*}{Lobato}\\
			&&&$M_1$ & $LM_1$&  $L_1$&$SNM_1$ & $SN_1$& $M_P$ & $LM_P$&  $L_P$&$SNM_P$ & $SN_P$&  \\ \hline
			\multirow{10}{*}{100}                       & \multirow{5}{*}{5}  &-0.7 &    2.27   &2.07   &2.17   &2.23 &2.50 &2.56 &2.73 &2.73  &2.30     &2.50 & 0.53         \\ \cline{3-14} 
			&                     &-0.5 &    3.67   &3.50   &3.17   &3.50 &3.10 &3.53 &3.73 &3.20  &3.17     &3.10 & 1.97         \\ \cline{3-14} 
			&                     &0.2  &    6.03   &5.93   &5.57   &5.53 &5.17 &5.13 &5.20 &5.30  &5.20     &5.17 &  7.87         \\ \cline{3-14} 
			&                     &0.5  &    7.37   &6.73   &6.57   &6.97 &6.27 &7.27 &7.07 &6.87  &6.93     &6.27 &  12.43       \\ \cline{3-14} 
			&                     &0.7  &    9.57   &9.53   &9.03   &8.93 &9.10 &9.50 &9.30 &9.27  &9.23     &9.10 &  22.33         \\ \cline{2-14} 	
			& \multirow{5}{*}{10} &-0.7 &    2.03   &1.83   &2.30   &1.97 &2.40 &2.17 &2.53 &2.20  &2.07     &2.40 &  0.03        \\ \cline{3-14} 
			&                     &-0.5 &    3.07   &2.50   &3.07   &2.90 &3.37 &3.60 &3.33 &3.50  &3.67     &3.37 &  0.40        \\ \cline{3-14} 
			&                     &0.2  &    5.30   &5.80   &5.40   &5.03 &5.73 &5.93 &6.33 &5.50  &5.67     &5.73 &   10.83             \\ \cline{3-14} 
			&                     &0.5  &    7.23   &7.13   &7.10   &6.87 &7.23 &7.47 &7.63 &7.73  &7.00     &7.23 &  27.47         \\ \cline{3-14} 
			&                     &0.7  &    10.30  &9.63   &10.16  &10.33&9.67 &9.80 &9.53 &10.33 &9.23     &9.67 &  55.17         \\ \cline{1-14} 	
			\multirow{10}{*}{300}                       & \multirow{5}{*}{5}  &-0.7 &    3.53   &3.70   &3.50   &3.67 &4.43 &3.97 &4.03 &4.00  &4.03     &4.43 &   2.43       \\ \cline{3-11} 
			&                     &-0.5 &    4.23   &4.07   &4.17   &4.33 &5.43 &4.47 &4.47 &5.03  &4.50     &5.43 &    3.43      \\ \cline{3-14} 
			&                     &0.2  &    5.03   &4.63   &4.93   &4.73 &5.90 &5.63 &5.97 &5.67  &5.47     &5.90 &    5.27           \\ \cline{3-14} 
			&                     &0.5  &    5.23   &4.87   &4.93   &4.97 &6.43 &6.57 &6.60 &6.40  &6.20     &6.43 &   6.87       \\ \cline{3-14} 
			&                     &0.7  &    6.03   &5.93   &5.67   &5.83 &7.37 &7.37 &7.07 &7.60  &6.93     &7.37 &   9.97       \\ \cline{2-14} 	
			& \multirow{5}{*}{10} &-0.7 &    4.37   &4.53   &4.60   &4.23 &3.73 &4.23 &4.07 &3.63  &3.87     &3.73 &    1.07      \\ \cline{3-14} 
			&                     &-0.5 &    4.63   &4.80   &4.77   &4.90 &4.87 &4.60 &4.43 &4.77  &4.53     &4.87 &    2.60     \\ \cline{3-14} 
			&                     &0.2  &    5.23   &5.80   &5.70   &5.57 &5.67 &5.50 &5.70 &5.20  &5.37     &5.67 &  7.23            \\ \cline{3-14} 
			&                     &0.5  &    5.70   &5.87   &6.00   &6.10 &6.30 &5.90 &5.77 &5.50  &5.70     &6.30 &   11.37       \\ \cline{3-14} 
			&                     &0.7  &    6.73   &6.87   &6.77   &6.87 &7.30 &6.67 &6.57 &6.87  &6.60     &7.30 &   19.07       \\ \bottomrule
		\end{tabular}
		\caption{ }
	\end{subtable}
	
	\hfill
	
	\begin{subtable}[H]{1\textwidth}
		\centering
		\setlength{\tabcolsep}{5pt}
		\begin{tabular}{ccc|ccccc|ccccc|c}
			\toprule
			\midrule
			\multirow{2}{*}{$n$}    &\multirow{2}{*}{$p$}&\multirow{2}{*}{$\rho$}&\multicolumn{5}{c|}{$L_\infty$}&\multicolumn{5}{c|}{$L_2$}&\multirow{2}{*}{Lobato}\\
			&&&$M_1$ & $LM_1$&  $L_1$&$SNM_1$ & $SN_1$& $M_P$ & $LM_P$&  $L_P$&$SNM_P$ & $SN_P$&  \\ \hline
			\multirow{10}{*}{100}                       & \multirow{5}{*}{5}  &-0.7 &    2.20   &2.10   &2.47   &2.10 &2.23 &2.56 &2.73 &2.37  &3.03     &2.20 & 0.53         \\ \cline{3-14} 
			&                     &-0.5 &    3.33   &3.70   &3.03   &3.01 &3.07 &4.00 &3.90 &3.50  &3.83     &3.17 & 1.97         \\ \cline{3-14} 
			&                     &0.2  &    6.16   &6.00   &5.70   &5.83 &5.37 &5.70 &5.77 &5.43  &5.73     &5.57 &  7.87         \\ \cline{3-14} 
			&                     &0.5  &    7.86   &7.57   &7.23   &7.07 &7.00 &7.93 &7.63 &7.00  &7.33     &7.17 &  12.43       \\ \cline{3-14} 
			&                     &0.7  &    9.83   &9.40   &9.37   &9.03 &9.23 &9.50 &9.53 &9.27  &9.23     &10.03&  22.33         \\ \cline{2-14} 	
			& \multirow{5}{*}{10} &-0.7 &    2.10   &2.10   &2.00   &2.30 &2.03 &2.93 &2.73 &2.50  &2.87     &2.37 &  0.03        \\ \cline{3-14} 
			&                     &-0.5 &    3.37   &3.67   &3.17   &3.27 &3.23 &4.00 &4.07 &3.37  &4.03     &3.37 &  0.40        \\ \cline{3-14} 
			&                     &0.2  &    5.80   &5.83   &6.20   &5.13 &5.87 &6.17 &6.00 &5.87  &5.87     &5.67 &   10.83             \\ \cline{3-14} 
			&                     &0.5  &    7.00   &7.47   &7.50   &7.02 &7.23 &6.97 &6.83 &7.60  &6.70     &7.33 &  27.47         \\ \cline{3-14} 
			&                     &0.7  &    9.87   &10.27  &9.90   &10.10&10.03&8.90 &8.87 &9.80  &9.03     &10.37&  55.17         \\ \cline{1-14} 	
			\multirow{10}{*}{300}                       & \multirow{5}{*}{5}  &-0.7 &    3.53   &3.57   &3.50   &3.53 &3.50 &3.83 &3.67 &3.80  &3.47     &4.23 &   2.43       \\ \cline{3-11} 
			&                     &-0.5 &    4.23   &4.10   &4.10   &4.13 &4.00 &4.43 &4.27 &4.67  &4.13     &5.03 &    3.43      \\ \cline{3-14} 
			&                     &0.2  &    5.47   &5.47   &5.37   &5.30 &5.10 &5.47 &5.53 &5.23  &5.23     &5.73 &    5.27           \\ \cline{3-14} 
			&                     &0.5  &    5.90   &5.83   &5.87   &5.70 &5.50 &5.87 &5.90 &5.63  &6.10     &6.13 &   6.87       \\ \cline{3-14} 
			&                     &0.7  &    6.93   &6.83   &6.87   &6.87 &6.73 &6.57 &6.60 &6.57  &6.60     &6.93 &   9.97       \\ \cline{2-14} 	
			& \multirow{5}{*}{10} &-0.7 &    3.37   &3.47   &4.00   &3.70 &3.73 &3.23 &3.23 &3.70  &3.27     &4.30 &    1.07      \\ \cline{3-14} 
			&                     &-0.5 &    4.13   &4.23   &4.20   &4.13 &4.20 &3.83 &3.77 &4.30  &3.93     &4.93 &    2.60     \\ \cline{3-14} 
			&                     &0.2  &    5.07   &5.13   &5.33   &4.90 &5.30 &4.67 &4.83 &5.17  &5.17     &5.43 &  7.23            \\ \cline{3-14} 
			&                     &0.5  &    5.30   &5.40   &5.23   &5.27 &5.90 &5.33 &5.27 &5.77  &5.60     &6.43 &   11.37       \\ \cline{3-14} 
			&                     &0.7  &    5.83   &5.83   &6.33   &6.10 &6.26 &6.10 &6.17 &6.73  &6.59     &7.23 &   19.07       \\ \bottomrule
		\end{tabular}
		\caption{ }
	\end{subtable}

	\caption{Empirical rejection rate (in percentage) under the null when testing hypothesis on multivariate mean for DGP1 (a) and DGP2 (b).}\label{tab_sss1}
\end{table}

\renewcommand{\arraystretch}{1}

\renewcommand{\arraystretch}{0.92}

\begin{table}[H]
	
	\begin{subtable}[H]{1\textwidth}
		
		\centering
		\setlength{\tabcolsep}{5pt}
		\begin{tabular}{ccc|ccccc|ccccc|c}
			\toprule
			\midrule
			\multirow{2}{*}{$n$}    &\multirow{2}{*}{$p$}&\multirow{2}{*}{$\rho$}&\multicolumn{5}{c|}{$L_\infty$}&\multicolumn{5}{c|}{$L_2$}&\multirow{2}{*}{Lobato}\\
			&&&$M_1$ & $LM_1$&  $L_1$&$SNM_1$ & $SN_1$& $M_P$ & $LM_P$&  $L_P$&$SNM_P$ & $SN_P$&  \\ \hline
			\multirow{10}{*}{100}                       & \multirow{5}{*}{5}  &-0.7 &    2.33   &2.53   &2.30   &2.20 &2.10 &2.40 &2.33 &2.30  &2.17     &2.63 & 0.53         \\ \cline{3-14} 
			&                     &-0.5 &    3.33   &3.37   &3.23   &3.33 &3.47 &3.40 &3.20 &3.13  &3.40     &3.23 & 1.97         \\ \cline{3-14} 
			&                     &0.2  &    5.93   &5.77   &5.67   &5.67 &5.63 &5.43 &5.47 &5.50  &5.20     &4.83 &  7.87         \\ \cline{3-14} 
			&                     &0.5  &    6.90   &7.03   &6.76   &6.90 &7.03 &7.13 &6.93 &7.27  &7.03     &6.63 &  12.43       \\ \cline{3-14} 
			&                     &0.7  &    8.77   &8.83   &8.83   &8.73 &9.10 &9.90 &9.80 &9.23  &9.43     &9.37 &  22.33         \\ \cline{2-14} 	
			& \multirow{5}{*}{10} &-0.7 &    2.03   &2.27   &2.37   &2.13 &2.43 &2.40 &2.37 &2.63  &2.50     &2.57 &  0.03        \\ \cline{3-14} 
			&                     &-0.5 &    2.70   &3.07   &3.50   &3.13 &3.20 &3.37 &3.43 &3.60  &3.27     &3.73 &  0.40        \\ \cline{3-14} 
			&                     &0.2  &    5.20   &5.60   &4.87   &5.27 &5.13 &5.27 &5.27 &5.97  &5.50     &5.90 &   10.83             \\ \cline{3-14} 
			&                     &0.5  &    7.37   &7.20   &7.17   &7.20 &6.90 &7.00 &6.87 &7.60  &7.17     &7.67 &  27.47         \\ \cline{3-14} 
			&                     &0.7  &    10.27  &10.17  &9.83   &9.83 &10.00&9.37 &9.20 &9.50  &9.37     &9.77 &  55.17         \\ \cline{1-14} 	
			\multirow{10}{*}{300}                       & \multirow{5}{*}{5}  &-0.7 &    3.60   &3.57   &3.43   &3.53 &3.57 &4.07 &4.10 &4.00  &4.37     &4.43 &   2.43       \\ \cline{3-11} 
			&                     &-0.5 &    4.50   &4.47   &4.30   &4.53 &4.23 &4.80 &4.87 &4.57  &4.93     &5.30 &    3.43      \\ \cline{3-14} 
			&                     &0.2  &    5.03   &5.13   &5.23   &5.20 &5.27 &5.60 &5.53 &5.50  &5.53     &5.77 &    5.27           \\ \cline{3-14} 
			&                     &0.5  &    5.57   &5.57   &5.47   &5.37 &5.63 &6.03 &5.87 &5.73  &5.93     &6.23 &   6.87       \\ \cline{3-14} 
			&                     &0.7  &    6.43   &6.63   &6.37   &6.47 &6.53 &6.80 &6.93 &7.00  &6.87     &7.30 &   9.97       \\ \cline{2-14} 	
			& \multirow{5}{*}{10} &-0.7 &    3.57   &3.50   &3.60   &3.53 &3.60 &3.77 &3.93 &3.63  &3.90     &3.57 &    1.07      \\ \cline{3-14} 
			&                     &-0.5 &    4.20   &4.40   &3.93   &3.93 &3.93 &4.63 &4.60 &4.13  &4.50     &4.30 &    2.60     \\ \cline{3-14} 
			&                     &0.2  &    5.17   &5.13   &5.07   &4.73 &5.23 &5.27 &5.20 &4.53  &5.40     &5.43 &  7.23            \\ \cline{3-14} 
			&                     &0.5  &    5.67   &5.47   &5.47   &5.07 &6.03 &5.97 &6.00 &5.00  &6.00     &6.00 &   11.37       \\ \cline{3-14} 
			&                     &0.7  &    6.27   &6.13   &6.67   &5.90 &7.07 &6.70 &6.77 &6.20  &6.27     &6.67 &   19.07       \\ \bottomrule
		\end{tabular}
		\caption{ }
	\end{subtable}
	
	\hfill
	
	\begin{subtable}[H]{1\textwidth}
		\centering
		\setlength{\tabcolsep}{5pt}
		\begin{tabular}{ccc|ccccc|ccccc|c}
			\toprule
			\midrule
			\multirow{2}{*}{$n$}    &\multirow{2}{*}{$p$}&\multirow{2}{*}{$\rho$}&\multicolumn{5}{c|}{$L_\infty$}&\multicolumn{5}{c|}{$L_2$}&\multirow{2}{*}{Lobato}\\
			&&&$M_1$ & $LM_1$&  $L_1$&$SNM_1$ & $SN_1$& $M_P$ & $LM_P$&  $L_P$&$SNM_P$ & $SN_P$&  \\ \hline
			\multirow{10}{*}{100}                       & \multirow{5}{*}{5}  &-0.7 &    2.43   &2.13   &2.27   &2.17 &2.23 &2.57 &2.57 &2.33  &2.37     &2.20 & 0.53         \\ \cline{3-14} 
			&                     &-0.5 &    3.20   &3.53   &3.00   &3.17 &2.93 &3.37 &3.73 &3.10  &3.53     &3.07 & 1.97         \\ \cline{3-14} 
			&                     &0.2  &    5.80   &5.87   &5.67   &5.43 &5.40 &5.90 &5.93 &5.17  &5.87     &5.50 &  7.87         \\ \cline{3-14} 
			&                     &0.5  &    7.37   &7.27   &7.23   &6.77 &6.80 &7.73 &7.73 &6.93  &7.67     &7.00 &  12.43       \\ \cline{3-14} 
			&                     &0.7  &    9.60   &9.47   &9.27   &8.83 &9.40 &9.70 &9.53 &9.37  &9.93     &9.50 & 22.33       \\ \cline{2-14} 	
			& \multirow{5}{*}{10} &-0.7 &    2.20   &2.60   &2.10   &2.33 &2.07 &2.47 &2.23 &2.47  &2.33     &2.47 &  0.03        \\ \cline{3-14} 
			&                     &-0.5 &    3.40   &3.63   &3.20   &3.30 &3.40 &3.37 &3.63 &3.30  &3.43     &3.73 &  0.40        \\ \cline{3-14} 
			&                     &0.2  &    5.27   &5.40   &5.63   &5.07 &5.57 &5.40 &5.43 &5.70  &5.30     &6.03 &   10.83             \\ \cline{3-14} 
			&                     &0.5  &    6.57   &6.73   &6.50   &6.60 &6.80 &6.57 &6.77 &7.30  &6.80     &7.67 &  27.47     \\ \cline{3-14} 
			&                     &0.7  &    9.00   &9.23   &9.70   &9.73 &9.73 &8.77 &8.53 &10.23 &8.83     &9.63 &  55.17        \\ \cline{1-14} 	
			\multirow{10}{*}{300}                       & \multirow{5}{*}{5}  &-0.7 &    3.53   &3.73   &3.53   &3.67 &3.70 &3.90 &3.83 &4.10  &4.07     &4.77 &   2.43       \\ \cline{3-14} 
			&                     &-0.5 &    4.50   &4.07   &4.37   &4.07 &4.70 &4.47 &4.37 &4.80  &4.50     &5.20 &    3.43      \\ \cline{3-14} 
			&                     &0.2  &    5.10   &5.03   &5.10   &5.07 &5.20 &5.57 &5.50 &5.70  &5.47     &5.73 &    5.27         \\ \cline{3-14} 
			&                     &0.5  &    5.73   &5.70   &5.83   &5.87 &5.60 &6.03 &5.87 &6.17  &6.03     &6.30 &  6.87      \\ \cline{3-14} 
			&                     &0.7  &    6.43   &6.27   &6.33   &6.20 &6.53 &6.33 &6.50 &7.07  &6.53     &7.00 &  9.97      \\ \cline{2-14} 	
			& \multirow{5}{*}{10} &-0.7 &    3.67   &3.57   &3.80   &3.57 &3.77 &3.57 &3.60 &3.63  &3.63     &4.40 &    1.07      \\ \cline{3-14} 
			&                     &-0.5 &    4.00   &3.87   &4.10   &3.93 &4.60 &3.83 &3.73 &4.00  &4.03     &5.00 &    2.60     \\ \cline{3-14} 
			&                     &0.2  &    5.27   &5.13   &5.70   &5.33 &5.70 &4.93 &4.87 &5.40  &4.80     &5.60 &  6.23            \\ \cline{3-14} 
			&                     &0.5  &    5.50   &5.43   &6.03   &5.67 &6.03 &5.40 &5.47 &6.00  &5.47     &6.13 & 11.37     \\ \cline{3-14} 
			&                     &0.7  &    6.07   &5.83   &6.70   &6.77 &6.63 &6.00 &6.17 &6.67  &6.13     &6.97 &  19.07     \\ \bottomrule
		\end{tabular}
		\caption{ }
	\end{subtable}

	\caption{Empirical rejection rate (in percentage) under the null when testing hypothesis on multivariate mean for DGP3 (a) and DGP4 (b).}\label{tab_sss2}
\end{table}

\renewcommand{\arraystretch}{1}

\subsection{Size Adjusted Power Comparison for Multivariate Mean Testing}\label{power_diff_sssn}

For the size adjusted power, we consider five alternatives. For ALT1, ALT2 and ALT3 we assume $\{\mathbf{X}_t{ -}\boldsymbol{\mu}\}$ are generated from DGP1, DGP2 and DGP3 as in Appendix \ref{size_diff_sssn} with $\rho=0.7$ (we have also run simulations with $\rho=0.2$ and the results are qualitatively similar). For ALT4, we assume $\mathbf{X}_t{ -}\boldsymbol{\mu} = \boldsymbol{\rho}_1\boldsymbol{\epsilon}_{t-1}+\boldsymbol{\epsilon}_{t},\boldsymbol{\epsilon}_t\stackrel{iid}{\sim}\boldsymbol{\Sigma}_1\cN(0,\mathbf{I}_p)$ with $\boldsymbol{\rho}_1 = diag\{{-}0.8,{-}0.8,{-}0.9,{-}0.8,{-}0.8,\dots\}$ and $\boldsymbol{\Sigma}_1  = diag\{1,1,2,1,1,\dots\}$. For ALT5, we assume $\mathbf{X}_t{ -}\boldsymbol{\mu} = \boldsymbol{\rho}_2\boldsymbol{\epsilon}_{t-1}+\boldsymbol{\epsilon}_{t},\boldsymbol{\epsilon}_t\stackrel{iid}{\sim}\boldsymbol{\Sigma}_2\cN(0,\mathbf{I}_p)$ with $\boldsymbol{\rho}_2 = diag\{{-}0.8,{-}0.8,\frac{\sqrt{10}}{5}{-}1,{-}0.8,{-}0.8,\dots\}$ and $\boldsymbol{\Sigma}_2  = diag\{1,1,\frac{\sqrt{10}}{10},1,1,\dots\}$. Note that for ALT4 and ALT5, the marginal long run variances all equal to $\frac{1}{25}$, but the marginal variances are not equal. Except for the third component, all marginal variances are equal to 1, while the marginal variance for the third component is 4 in ALT4 and $\frac{1}{10}$ in ALT5.

For each alternative, we consider both sparse and dense cases. Under the sparse alternative, we assume $\boldsymbol{\mu} =\mu\mathbf{e}_3 $ and under the dense alternative we assume $\boldsymbol{\mu} =\mu\boldsymbol{1} $. We set $n=300$, $p=20$ $\alpha=0.5$ and nominal level at $5\%$, and the experiment is repeated 2000 times. The power curves for different alternatives are shown in Fig.~\ref{fig_power_dgp1}-\ref{fig_power_dgp45}. Since the $L_2$-type statistic generally performs better under the dense alternative and $L_\infty$-type statistic performs better under sparse alternative, we only plot the power curves for the $L_{\infty}$-type ($L_2$-type) SS-SN statistics under the sparse (dense) alternative to allow easy visualization.

\begin{figure}[H]

	\begin{subfigure}{0.5\textwidth}
		\includegraphics[width=1\textwidth]{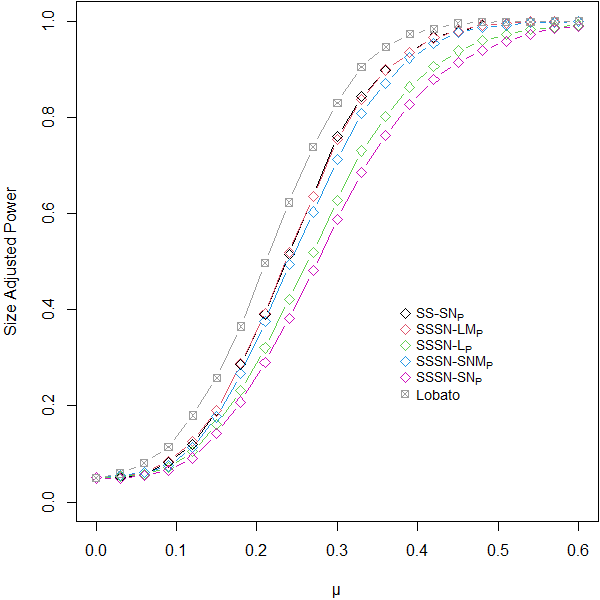}
		\caption{}
		\label{fig1_rho0.7_dense}
	\end{subfigure}
	\hfill
	\begin{subfigure}{0.5\textwidth}
		\includegraphics[width=1\textwidth]{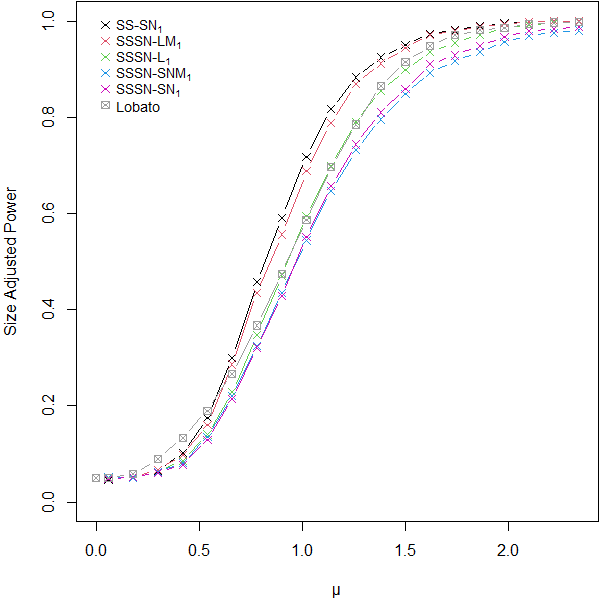}
		\caption{}
		\label{fig1_rho0.7_sparse}
	\end{subfigure}

	\begin{subfigure}{0.5\textwidth}
		\includegraphics[width=1\textwidth]{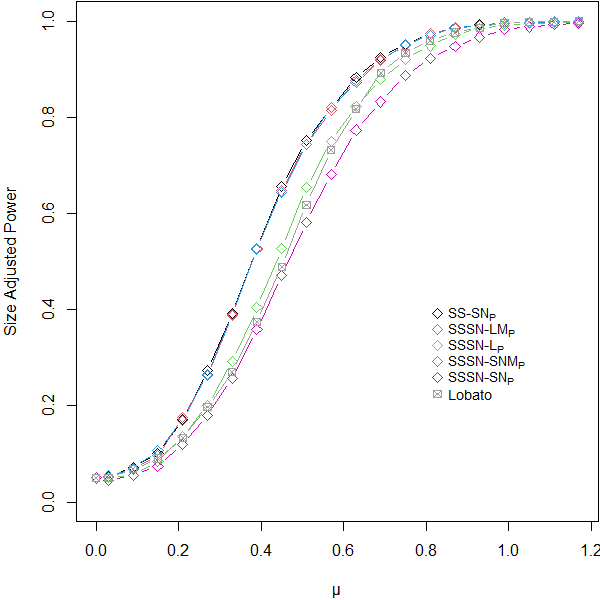}
		\caption{}
		\label{fig2_rho0.7_dense}
	\end{subfigure}
	\hfill
	\begin{subfigure}{0.5\textwidth}
		\includegraphics[width=1\textwidth]{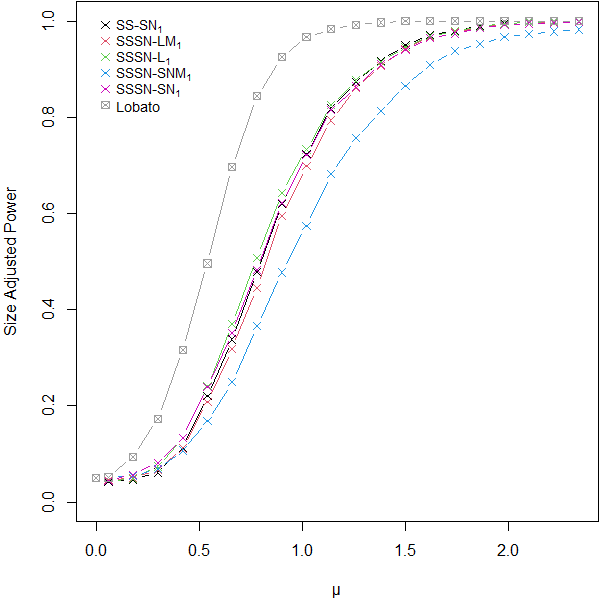}
		\caption{}
		\label{fig2_rho0.7_sparse}
	\end{subfigure}

	\begin{subfigure}{0.5\textwidth}
		\includegraphics[width=1\textwidth]{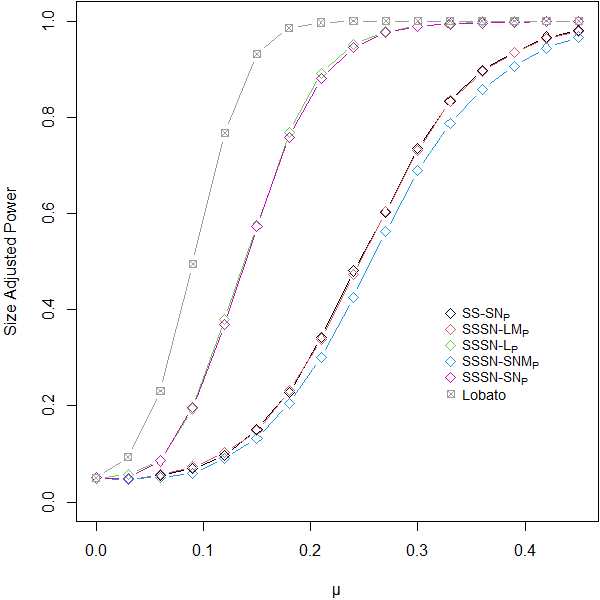}
		\caption{}
		\label{fig3_rho0.7_dense}
	\end{subfigure}
	\hfill
	\begin{subfigure}{0.5\textwidth}
		\includegraphics[width=1\textwidth]{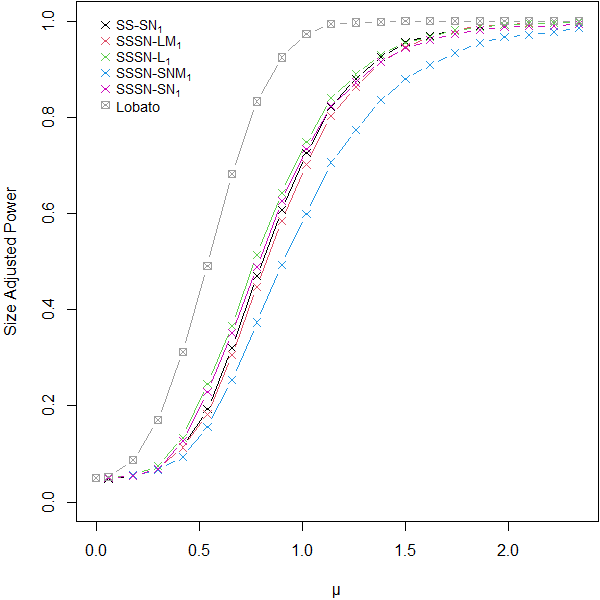}
		\caption{}
		\label{fig3_rho0.7_sparse}
	\end{subfigure}

	\caption{Size adjusted power for testing hypothesis on multivariate mean for ALT1 (first row), ALT2 (second row) and ALT3 (third row) under the dense (left column) and sparse (right column) alternatives}
	\label{fig_power_dgp1}
\end{figure}

\begin{figure}[H]

	\begin{subfigure}{0.5\textwidth}
		\includegraphics[width=1\textwidth]{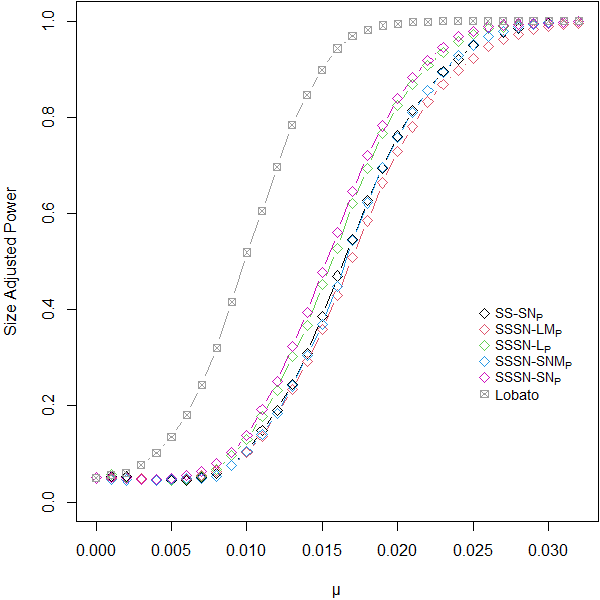}
		\caption{}
		\label{fig4_dgp4_dense}
	\end{subfigure}
	\hfill
	\begin{subfigure}{0.5\textwidth}
		\includegraphics[width=1\textwidth]{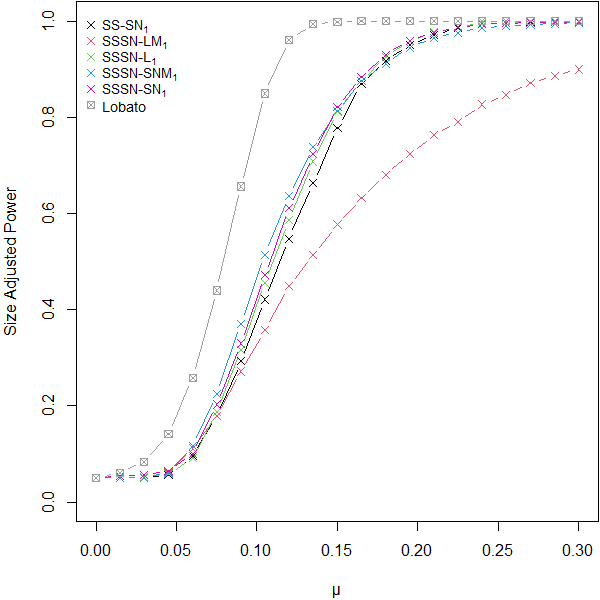}
		\caption{}
		\label{fig4_dgp4_sparse}
	\end{subfigure}

	\begin{subfigure}{0.5\textwidth}
		\includegraphics[width=1\textwidth]{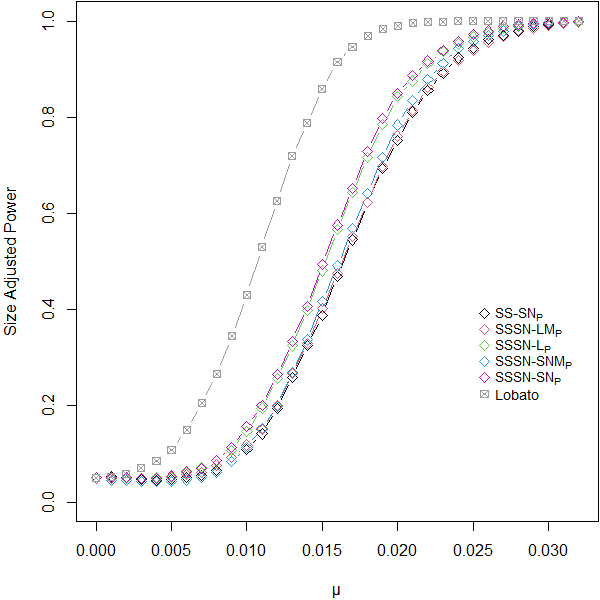}
		\caption{}
		\label{fig4_dgp5_dense}
	\end{subfigure}
	\hfill
	\begin{subfigure}{0.5\textwidth}
		\includegraphics[width=1\textwidth]{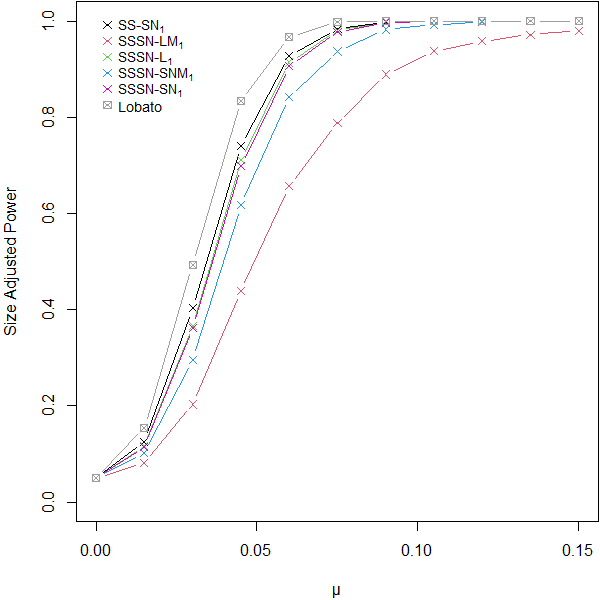}
		\caption{}
		\label{fig4_dgp5_sparse}
	\end{subfigure}

	\caption{Size adjusted power for testing hypothesis on multivariate mean for ALT4 (first row), ALT5 (second row) under the dense (left column) and sparse (right column) alternatives}
	\label{fig_power_dgp45}
\end{figure}

\subsection{Size Adjusted Power Comparison for Change Point Testing}\label{power_cp_diff_sssn}

To examine the size adjusted power of different SS-SN statistics, we further consider the change point testing problem under the ALTs of Appendix \ref{power_diff_sssn}. Assume the time series data $\mathbf{Y}_t$ comes from the following model:
$$\mathbf{Y}_t=
\begin{cases}
	\mathbf{X}_t,  & 1\leq t\leq k_0=\frr \\
	\mathbf{X}_t+\boldsymbol\mu & k_0<t \leq n,
\end{cases}$$
where $\mathbf{X}_t$ is generated according to the five DGPs  in Appendix \ref{power_diff_sssn}. For the sparse alternative, we let $\boldsymbol\mu = \mu\mathbf{e}_3$ and for the dense alternative, we let $\boldsymbol\mu = \mu\boldsymbol{1}$. We set $r_0=1/2$, the trimming parameter $b=0.15$ and the experiment is repeated 2000 times. As a comparison, we also plot the power curve of the statistic proposed in \cite{shaozhang2010} (denoted as SZ).

\begin{figure}[H]

	\begin{subfigure}{0.5\textwidth}
		\includegraphics[width=1\textwidth]{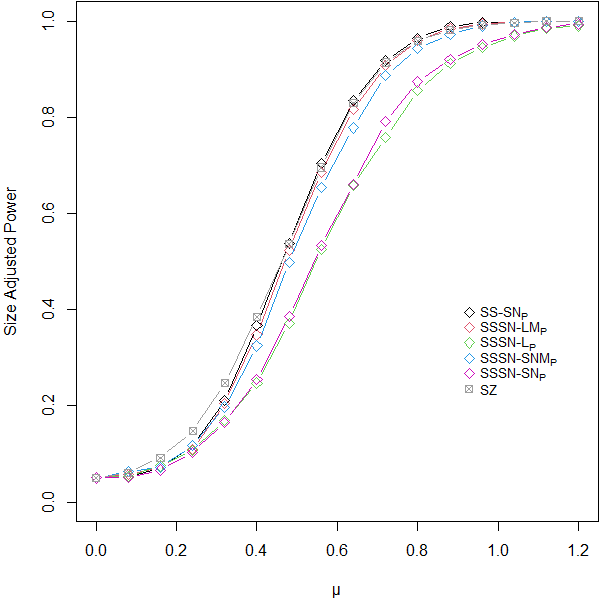}
		\caption{}
		\label{fig1c_rho0.7_dense}
	\end{subfigure}
	\hfill
	\begin{subfigure}{0.5\textwidth}
		\includegraphics[width=1\textwidth]{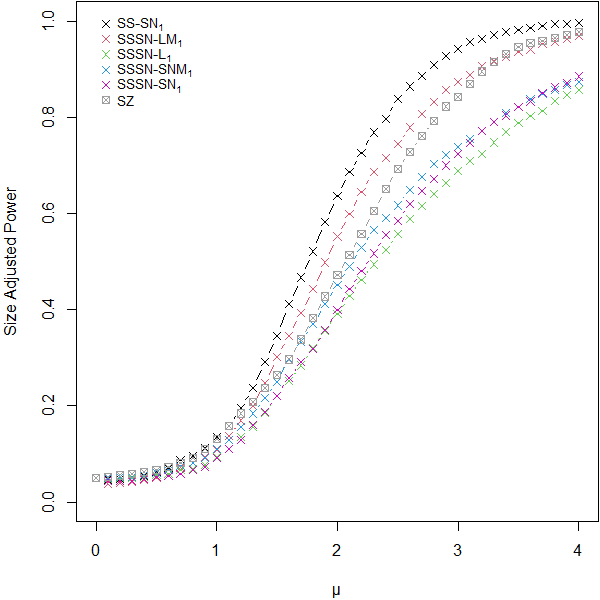}
		\caption{}
		\label{fig1c_rho0.7_sparse}
	\end{subfigure}

	\begin{subfigure}{0.5\textwidth}
		\includegraphics[width=1\textwidth]{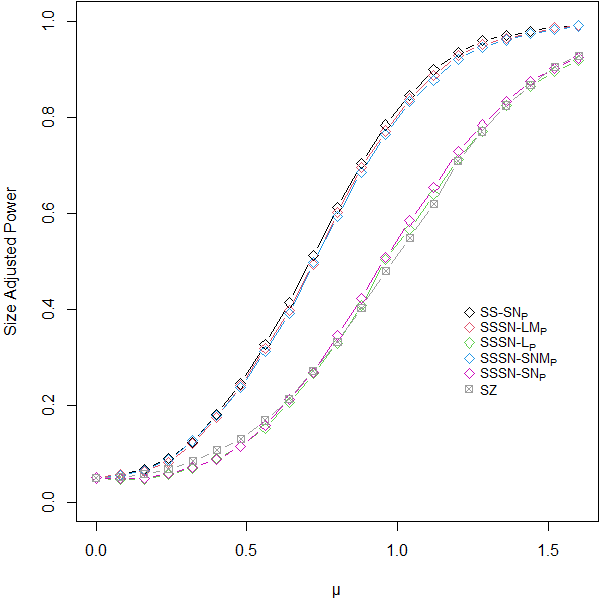}
		\caption{}
		\label{fig2c_rho0.7_dense}
	\end{subfigure}
	\hfill
	\begin{subfigure}{0.5\textwidth}
		\includegraphics[width=1\textwidth]{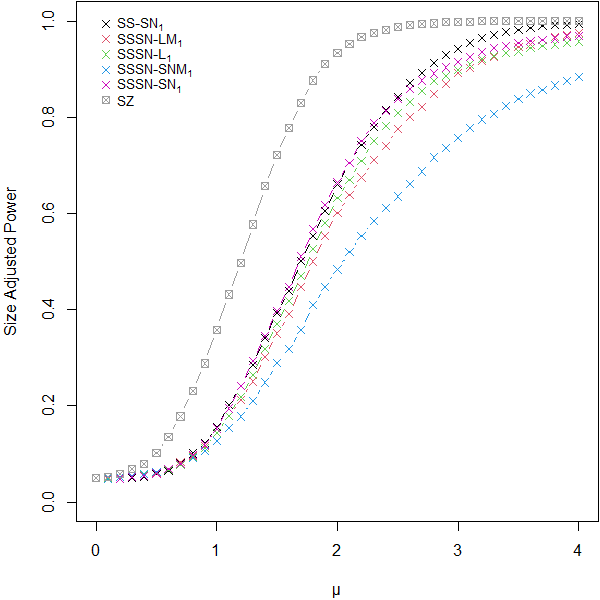}
		\caption{}
		\label{fig2c_rho0.7_sparse}
	\end{subfigure}
	
	\begin{subfigure}{0.5\textwidth}
		\includegraphics[width=1\textwidth]{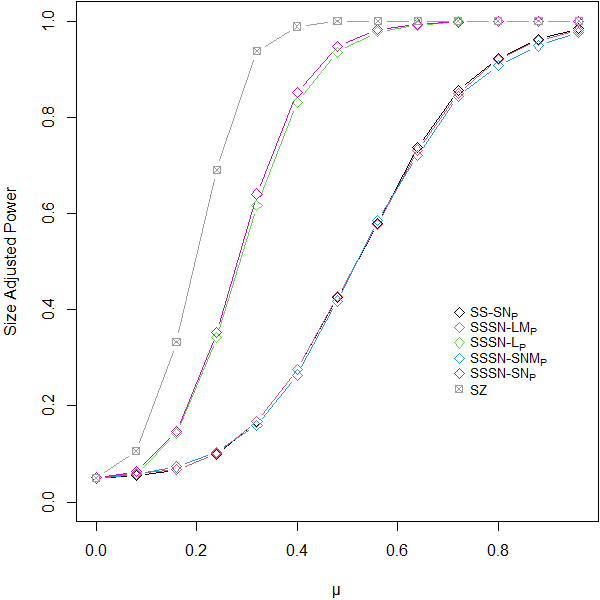}
		\caption{}
		\label{fig3c_rho0.7_dense}
	\end{subfigure}
	\hfill
	\begin{subfigure}{0.5\textwidth}
		\includegraphics[width=1\textwidth]{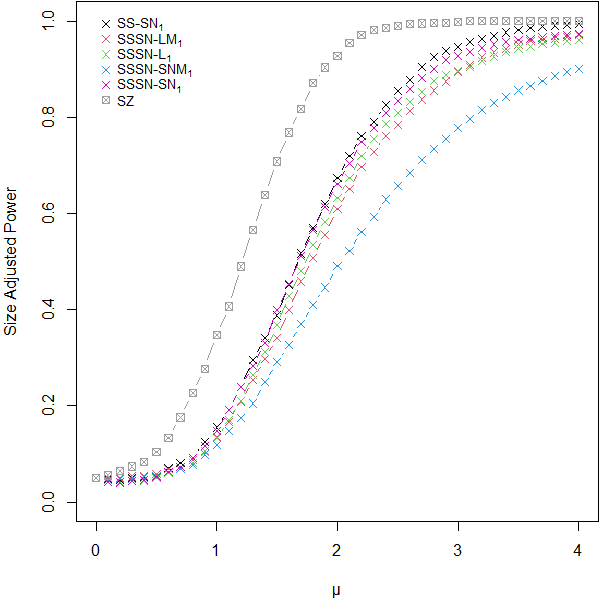}
		\caption{}
		\label{fig3c_rho0.7_sparse}
	\end{subfigure}

	\caption{Size adjusted power for testing a change point in multivariate mean for ALT1 (first row), ALT2 (second row) and ALT3 (third row) under the dense (left column) and sparse (right column) alternatives}
	\label{figc_power_dgp1}
\end{figure}

\begin{figure}[H]

	\begin{subfigure}{0.5\textwidth}
		\includegraphics[width=1\textwidth]{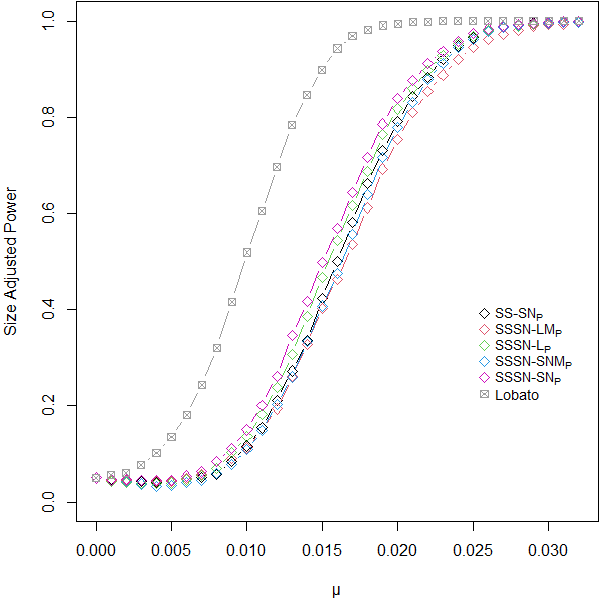}
		\caption{}
		\label{fig4c_rho0.2_dense}
	\end{subfigure}
	\hfill
	\begin{subfigure}{0.5\textwidth}
		\includegraphics[width=1\textwidth]{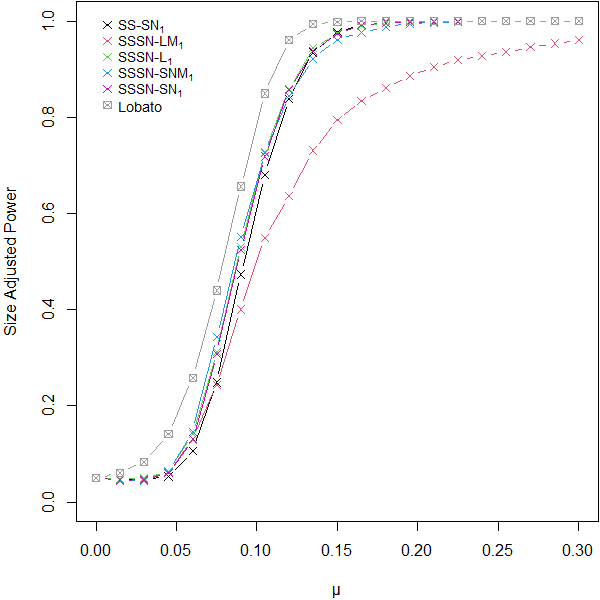}
		\caption{}
		\label{fig4c_rho0.2_sparse}
	\end{subfigure}

	\begin{subfigure}{0.5\textwidth}
		\includegraphics[width=1\textwidth]{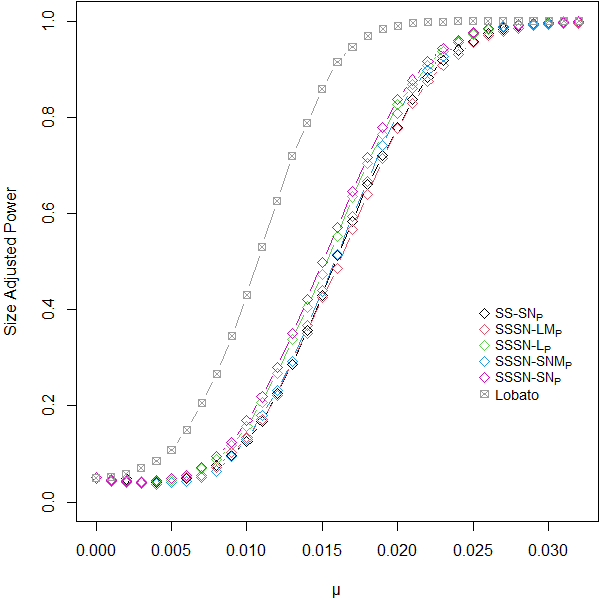}
		\caption{}
		\label{fig4c_rho0.7_dense}
	\end{subfigure}
	\hfill
	\begin{subfigure}{0.5\textwidth}
		\includegraphics[width=1\textwidth]{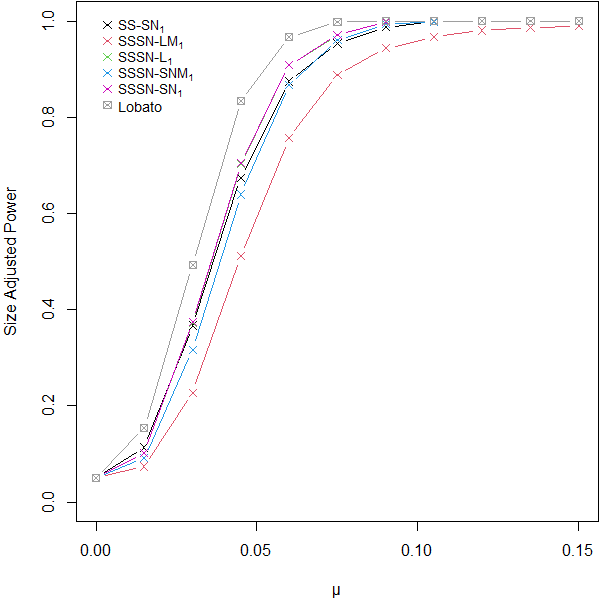}
		\caption{}
		\label{fig4c_rho0.7_sparse}
	\end{subfigure}

	\caption{Size adjusted power for testing a change point in multivariate mean for ALT4 (first row), ALT5 (second row) under the dense (left column) and sparse (right column) alternatives}
	\label{figc_power_dgp4}
\end{figure}

\subsection{Summary of Simulation Results for Different SS-SN Statistics}

As Tables \ref{tab_sss1} and \ref{tab_sss2} show, all the SS-SN statistics have similar empirical sizes for all DGPs. The size accuracy and stability across $p$ and $\rho$, as compared to Lobato, is a main advantage. This allows us to focus on the power comparison to understand which SS-SN test works better and in what situations.

According to size adjusted power curves presented in the four Figures (i.e., Fig.~\ref{fig_power_dgp1} --\ref{figc_power_dgp4}), the power performance of  SS-SN statistics highly depend on the alternatives and data generating processes. There is no SS-SN test statistic that dominates all others in power.
For the SS-SN statistics rescaled by the long run covariance matrix estimator ($SSSN\text{-}L_1$ and $SSSN\text{-}L_P$) or the SN matrix ($SSSN\text{-}SN_1$ and $SSSN\text{-}SN_P$), the power loss are larger compared with most marginally rescaled SS-SN statistics when there is no or weak cross-sectional dependence (see ALT1 size adjusted power in Fig.~\ref{fig1_rho0.7_dense}, \ref{fig1_rho0.7_sparse}, \ref{fig1c_rho0.7_dense} and \ref{fig1c_rho0.7_sparse}). When the cross-sectional dependence is strong (ALT2 and ALT3), $SSSN\text{-}L_1$ and $SSSN\text{-}L_P$, as well as $SSSN\text{-}SN_1$ and $SSSN\text{-}SN_P$, still have larger power loss in dense case of ALT2 (see Fig.~\ref{fig2_rho0.7_dense} and \ref{fig2c_rho0.7_dense}). While in ALT3, they have smaller power loss under the dense alternative and the power curves are similar to other SS-SN statistics under the sparse alternative (see Fig.~\ref{fig3_rho0.7_dense}, \ref{fig3_rho0.7_sparse}, \ref{fig3c_rho0.7_dense} and \ref{fig3c_rho0.7_sparse}).

For the SS-SN statistics rescaled marginally by the marginal variance estimator ($SS\text{-}SN_1$ and $SS\text{-}SN_P$), long run variance estimator ($SSSN\text{-}LM_1$ and $SSSN\text{-}LM_P$) and the marginal self-normalizer ($SSSN\text{-}SNM_1$ and $SSSN\text{-}SNM_P$), $SSSN\text{-}SNM_1$ and $SSSN\text{-}SNM_P$ have noticeably larger power loss compared with other marginally rescaled SS-SN statistics for ALT1, ALT2 and ALT3, whereas for ALT4 and ALT5, they have similar power curves as $SS\text{-}SN_1$ and $SS\text{-}SN_P$. $SS\text{-}SN_1$ and $SS\text{-}SN_P$ outperform $SSSN\text{-}LM_1$ and $SSSN\text{-}LM_P$ in all ALTs and in some settings the improvement is more pronounced (see Fig.~\ref{fig4_dgp4_sparse}, \ref{fig4_dgp5_sparse}, \ref{fig1c_rho0.7_sparse}, \ref{fig2c_rho0.7_sparse}, \ref{fig3c_rho0.7_sparse}, \ref{fig4c_rho0.2_sparse} and \ref{fig4c_rho0.7_sparse}). In the sparse case of ALT4 (see Fig.~\ref{fig4_dgp4_sparse} and \ref{fig4c_rho0.2_sparse}), $SSSN\text{-}LM_1$ has large power loss compared with $SS\text{-}SN_1$ although ALT4 appears more favorable to $SSSN\text{-}LM_1$. This may be due to the intrinsic difficulty of long run variance estimation, especially when the underlying process is not autoregressive and temporal dependence is strong.

Given these empirical observations, we chose to adopt $SS\text{-}SN_1$ and $SS\text{-}SN_P$ in Sections \ref{sec_mean} and \ref{sec_power_e}, for its simplicity and bandwidth-free nature. This is also supported by the simulation results presented here, as they compare quite well relative to  other marginally rescaled SS-SN statistics.
When compared to the rescaling methods using the $p$-dimensional long run covariance matrix estimator or the SN matrix, $SS\text{-}SN_1$ and $SS\text{-}SN_P$ have less power loss in most of the ALTs. In addition, the asymptotic properties for $SS\text{-}SN_1$ and $SS\text{-}SN_P$ are easier to analyze, especially in the growing dimensional setting.

\section{Empirical Example}\label{app_B}
In this section, we consider a real data example where we  test for zero autocorrelations of a univariate time series. The data comes from the extended Nelson and Plosser Economic Dataset, which contains fifteen annual economic time series for the U.S. economy. We present the result for the growth rate  of consumer price index (CPI), which covers the period from 1861 to 1988. There are 128 observations in total and Fig.~\ref{fig_cpi} plots the CPI growth rate series.

\begin{figure}[H]
	\centering
	\includegraphics[width=0.8\textwidth]{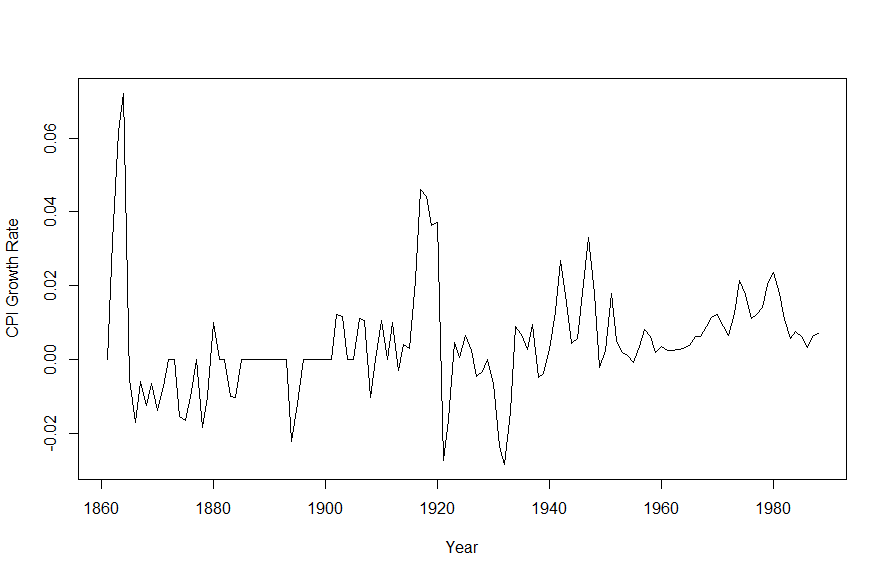}
	\caption{Annual U.S. CPI growth rate}
	\label{fig_cpi}
\end{figure}

Consider testing the null hypothesis $H_0:r_1=r_2=\cdots=r_p=0$ for $p\in\{1,5,10,20\}$. The sample autocorrelations, $\hat r _p = \hat{\gamma}_p/\hat{\gamma}_0$, where $ \hat{\gamma}_p=\frac{1}{n}\sum_{t=1}^{n-p}(X_t-\bar X_n)(X_{t+p}-\bar X_n)$ are shown in the ACF plot in Fig.~\ref{fig_sample_corr}. It appears  that $\hat r_1 = 0.61$ and $|\hat r_p|<0.21$ for $p\geq2$. The Ljung-Box test gives a $p$-value less than $2.99\times 10^{-12}$ when $p=1$, which is a strong indication that the null hypothesis $H_0:r_1=0$ does not hold. Note that the limiting null distribution of Ljung-Box test statistic is $\chi^2$ and is obtained under the strong iid assumption, whereas the test developed by \cite{lobato2001}  and ours can accommodate higher-order dependence under the null of zero autocorrelations. 

The values for the $SS\text{-}SN_1$, $SS\text{-}SN_P$ statistics are shown in Table \ref{tab_emp_example}, which also includes the test statistic used in \cite{lobato2001} (denoted as Lobato) as a comparison. For $p\in\{1,5,10\}$, all SS-SN statistics reject the null at level $10\%$ and all SS-SN statistics with $\alpha=0.15$ reject the null at level $5\%$. For $p=20$, $SS\text{-}SN_1$ with $\alpha=0.15$ rejects the null at level $5\%$ while the $SS\text{-}SN_P$ does not reject the null for all $\alpha$s. This may be due to the "sparsity" of the alternative since only the sample autocorrelation at lag one appears to be significantly different from zero. For Lobato's test, the null hypothesis is rejected at level $10\%$ only when $p=1$, which is consistent with the simulation result in Table \ref{tab_corr_2}. The intuition is that Lobato is severely under-sized when $p$ is large/moderate, suggesting that the critical value based on the limiting null distribution is too large when $p$ is moderate, resulting in low power in this setting.

\begin{figure}[H]
	\centering
	\includegraphics[width=0.8\textwidth]{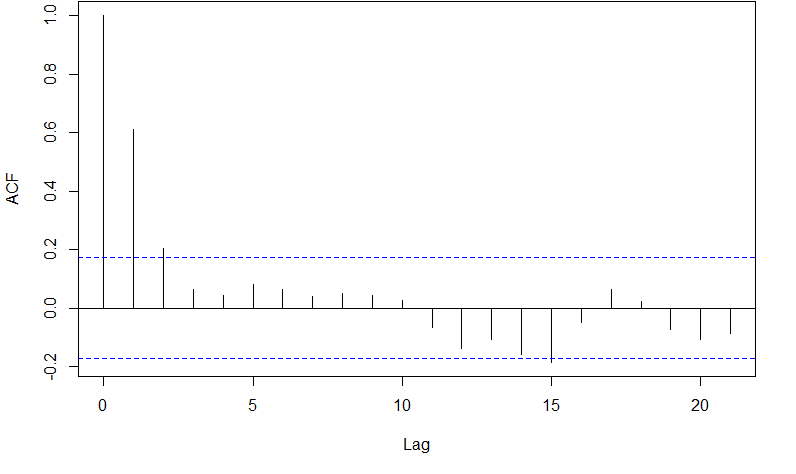}
	\caption{Sample ACF of the CPI growth rate series}
	\label{fig_sample_corr}
\end{figure}

\begin{table}[H]
	
	\centering
	\setlength{\tabcolsep}{9pt}
	\begin{tabular}{c|ccc|ccc|c}
		\toprule
		\midrule
		\multirow{2}{*}{$p$}  &\multicolumn{3}{c|}{$SS\text{-}SN_1$}&\multicolumn{3}{c|}{$SS\text{-}SN_{P}$}&\multirow{2}{*}{Lobato}\\
		&$\alpha=0.15$ & $\alpha=0.3$&  $\alpha=0.5$&$\alpha=0.15$ & $\alpha=0.3$&  $\alpha=0.5$&      \\ \hline
		1 &    65.48$^{\ast\ast}$ &36.49$^{\ast}$  &44.75$^{\ast}$   &65.48$^{\ast\ast}$ &36.49$^{\ast}$ &44.75$^{\ast}$ &28.48   $^{\ast}$         \\ \cline{2-8} 
		5 &    73.05$^{\ast\ast}$ &45.41$^{\ast\ast}$  &61.83 $^{\ast\ast}$  &76.79$^{\ast\ast}$ &95.60$^{\ast\ast}$ &53.41$^{\ast\ast}$ &244.82            \\ \cline{2-8} 
		10&    62.53$^{\ast\ast}$ &36.04$^{\ast}$  &42.81$^{\ast}$   &50.74$^{\ast\ast}$ &70.01$^{\ast\ast}$ &33.42$^{\ast}$ &429.76            \\ \cline{2-8}    
		20&    52.89$^{\ast\ast}$ &10.26  &9.63    &23.12 &20.14 &8.70  &373.98            \\ \bottomrule                        
	\end{tabular}
	\caption{Values of $SS\text{-}SN_1$, $SS\text{-}SN_P$ and Lobato. $^{\ast\ast}$ indicates rejecting the null at $5\%$ level; $^{\ast}$ indicates rejecting the null at $10\%$ level.}\label{tab_emp_example}
\end{table}

\section{Proofs of Main Results}\label{ch7}

\subsection{Proof of Theorem \ref{th_mean}}\label{appen_th_mean}

Without loss of generality, we assume $\boldsymbol{\mu}_0=\boldsymbol{0}$. For $j=1,2,\dots,p$, denote $S_j = n^{-1}(S_{1,\fa}^j)^2/\sigma_j^2$, $\hat k = \argmax_{k = 1,2,\dots,p}S_k$ and let $S_{(1)}\leq S_{(2)}\leq\cdots\leq S_{(p)}$ be the order statistics of $\{S_1,S_2,\dots,S_p\}$.

First we prove part (i) of Theorem \ref{th_mean}. It suffices to show $P(\hat j \neq \hat k)\to 0$ and $T^{(M)}_n(\alpha,\hat k) \stackrel{\D}{\to}U_1$. So we divide the proof into two parts:

(a) $P(\hat j \neq \hat k)\to 0$.

(b)$T^{(M)}_n(\alpha,\hat k) \stackrel{\D}{\to}U_1$.

\textbf{\textit{Proof of part (a)}}: Note that for fixed $\epsilon>0$,
\begin{align}\label{appen_jrssb2}
	P(\hat k=\hat j; S_{(p)}{>}S_{(p{-}1)}{+}\epsilon )\leq P(\hat k=\hat j)\leq 1
\end{align}
Denote $B^{(j)}(r)=\boldsymbol{\Gamma}_j^\top\mathbf{B}_p(r)$, and $\mathcal{W}_j =B^{(j)}(\alpha)/\sigma_j^2 $ with order statistics $\mathcal{W}_{(1)}\leq \mathcal{W}_{(2)}\leq\cdots\leq \mathcal{W}_{(p)}$. By the continuous mapping theorem, we have $P( S_{(p)}{>} S_{(p-1)}{+}\epsilon ) \to P( \mathcal{W}_{(p)}{>} \mathcal{W}_{(p-1)}{+}\epsilon )$ as $n\to \infty$. Note that 
\begin{align}\label{appen_jrssb1}
	P(\hat k=\hat j; S_{(p)}{>}S_{(p-1)}{+}\epsilon ) =& \sum_{j=1	}^{p}P(\hat k=\hat j=j; S_{(p)}{>}S_{(p-1)}{+}\epsilon )\nonumber \\
	=& \sum_{j=1}^{p}\Big[P(\hat j =j; S_{(p)}{>}S_{(p-1)}{+}\epsilon  ){-}P(\hat k\neq j;\hat j=j; S_{(p)}{>}S_{(p-1)}{+}\epsilon )\Big] \nonumber \\
	=& P( S_{(p)}{>}S_{(p-1)}{+}\epsilon)- \sum_{j=1}^{p}P(\hat k\neq j;\hat j=j; S_{(p)}{>}S_{(p-1)}{+}\epsilon ).
\end{align}
Fix $j\in\{1,2,\dots,p\}$, from Assumption \ref{assump_sigma_mean} we have $\max_{j=1,2,\dots,p}|\sigma_j^2/\hat \sigma_j^2-1|\stackrel{p}{\to}0$, so 
\begin{align}
	P(\hat k\neq j;\hat j=j; S_{(p)}{>}S_{(p-1)}{+}\epsilon ) \leq& \sum_{i\in\{1,2,\dots,p\} ;i\neq j}P(S_j{>}S_i{+}\epsilon;\frac{\sigma_j^2S_j}{\hat \sigma_j^2}\leq \frac{\sigma_i^2S_i}{\hat \sigma_i^2}) \nonumber \\
	\to& 0,
\end{align}
and from Equation (\ref{appen_jrssb1}) we have $P(\hat k=\hat j; S_{(p)}{>}S_{(p-1)}{+}\epsilon )\to P( \mathcal{W}_{(p)}{>} \mathcal{W}_{(p-1)}{+}\epsilon )$ as $n\to \infty$. So from Equation (\ref{appen_jrssb2}), 
$$ P( \mathcal{W}_{(p)}{>} \mathcal{W}_{(p-1)}{+}\epsilon )\leq \liminf P(\hat k=\hat j)\leq 1$$
and part (a) is proved by letting $\epsilon \to 0$.

\textbf{\textit{Proof of part (b)}}: By Lemma \ref{lm_jrssb1}, the continuous mapping theorem and Assumption \ref{assump_fclt_mean}, we have 
\begin{align}
	\hat{k} &\stackrel{\D}{\to} \argmax_{j = 1,2,\dots,p} \frac{\big\{B^{(j)}(\alpha) \big\}^2}{\sigma_j^2}= \tilde{k},\\
	T^{(M)}_n(\alpha,\hat k) &\stackrel{\D}{\to} \frac{\big\{B^{(\tilde k)}(1)-B^{(\tilde k)}(\alpha)\big\}^2}{\frac{1}{1-\alpha}\int_\alpha^1\big\{B^{(\tilde k)}(r)-\frac{1-r}{1-\alpha}B^{(\tilde k)}(\alpha)-\frac{r-\alpha}{1-\alpha}B^{(\tilde k)}(1)\big\}^2 dr}= \tilde{T},
\end{align}

Since for any $t\in[0,\alpha]$, $r\in[\alpha,1]$ and $j,k\in\{1,2,\dots,p\}$:
\begin{equation}
	\cov[(B^{(j)}(r)-B^{(j)}(\alpha)),B^{(k)}(t)]=(\min\{r,t\}-\min\{\alpha,t\})T_{jk}=0,
\end{equation}
we have $\{B^{(j)}(r)-B^{(j)}(\alpha)\}_{r\in (\alpha,1],j\in\{1,\dots,p\}}$ and $\{B^{(k)}(t)\}_{t\in[0,\alpha],k\in\{1,\dots,p\}}$ are independent. Then the conditional distribution of $\tilde T$ given $\tilde k$ is:
\begin{align}
	\tilde T\Big|\tilde k &\stackrel{d}{=}\frac{\big\{B^{(\tilde k)}(1)-B^{(\tilde k)}(\alpha)\big\}^2}{\frac{1}{1-\alpha}\int_\alpha^1\big\{B^{(\tilde k)}(r)-\frac{1-r}{1-\alpha}B^{(\tilde k)}(\alpha)-\frac{r-\alpha}{1-\alpha}B^{(\tilde k)}(1)\big\}^2 dr}\Big|\tilde k\\
	&\stackrel{d}{=}\frac{\big\{\frac{B^{(\tilde k)}(1)}{\sqrt{\Gamma_{\tilde k \tilde k}}}-\frac{B^{(\tilde k)}(\alpha)}{\sqrt{\Gamma_{\tilde k \tilde k}}}\big\}^2}{\frac{1}{1-\alpha}\int_\alpha^1\big\{\frac{B^{(\tilde k)}(r)}{\sqrt{\Gamma_{\tilde k \tilde k}}}-\frac{1-r}{1-\alpha}\frac{B^{(\tilde k)}(\alpha)}{\sqrt{\Gamma_{\tilde k \tilde k}}}-\frac{r-\alpha}{1-\alpha}\frac{B^{(\tilde k)}(1)}{\sqrt{\Gamma_{\tilde k \tilde k}}}\big\}^2 dr}\Big|\tilde k \\
	&\stackrel{d}{=}\frac{\big\{B(1)-B(\alpha)\big\}^2}{\frac{1}{1-\alpha}\int_\alpha^1\big\{B(r)-\frac{1-r}{1-\alpha}B(\alpha)-\frac{r-\alpha}{1-\alpha}B(1)\big\}^2 dr} \label{eq_appen_32}\\
	&\stackrel{d}{=}\frac{\big\{B(1)-B(\alpha)\big\}^2}{\frac{1}{1-\alpha}\int_\alpha^1\big\{B(r)-B(\alpha)-\frac{r-\alpha}{1-\alpha}(B(1)-B(\alpha))\big\}^2 dr}\\
	&\stackrel{d}{=}\frac{\big\{B(1)-B(\alpha)\big\}^2}{\frac{1}{1-\alpha}\int_0^{1-\alpha}\big\{B(r+\alpha)-B(\alpha)-\frac{r+\alpha-\alpha}{1-\alpha}(B(1)-B(\alpha))\big\}^2 dr}\label{eq5}\\
	&\stackrel{d}{=}\frac{\big\{B(1-\alpha)\big\}^2}{\frac{1}{1-\alpha}\int_0^{1-\alpha}\big\{B(r)-\frac{r}{1-\alpha}B(1-\alpha)\big\}^2 dr} \label{eq2}\\
	&\stackrel{d}{=}\frac{\big\{B(1-\alpha)\big\}^2}{\int_0^{1}\big\{B((1-\alpha)r)-rB(1-\alpha)\big\}^2 dr} \label{eq4}\\
	&\stackrel{d}{=}\frac{\big\{B(1)\big\}^2}{\int_0^{1}\big\{B(r)-rB(1)\big\}^2 dr}. \label{eq3}
\end{align}

For Equation (\ref{eq_appen_32}), we used the fact that $\tilde k $ is a function of $\{B^{(k)}(t)\}_{t\in[0,\alpha],k\in\{1,\dots,p\}}$, which is independent of $\{B^{(j)}(r)-B^{(j)}(\alpha)\}_{r\in (\alpha,1],j\in\{1,\dots,p\}}$. For Equation (\ref{eq2}), we used the property $\{B(r+\alpha)-B(\alpha):r\in[0,1-\alpha]\}\stackrel{d}{=}\{B(r):r\in[0,1-\alpha]\}$. For Equations (\ref{eq5}) and (\ref{eq4}), we used the change of variable property for integration. For Equation (\ref{eq3}), we used the property $\{B((1-\alpha)r):r\in[0,1]\} \stackrel{d}{=}\{\sqrt{1-\alpha}B(r):r\in[0,1]\}$.

So the limiting null distribution is
\begin{equation}\label{eq123}
	\tilde T=\frac{\big\{B(1)\big\}^2}{\int_0^{1}\big\{B(r)-rB(1)\big\}^2 dr}=U_1, 
\end{equation}
and part (i) is proved.

To prove part (ii).1 of Theorem \ref{th_mean}, note that there exist a sequence $\{j_n\}\in \{1,2,\dots,p\}$ such that $\sqrt{n}|\mu_n^{j_n}|\to \infty$. Denote $A^\alpha(j)=n^{-{1/2}} [\mathbf{e}_j^\top(\boldsymbol{S}_{1,\fa}{-}\fa\boldsymbol{\mu}_n)]$, by the definition of $\hat j$ we have
\begin{align}
	&\frac {\max\limits_{j=1,2,\dots,p}[A^\alpha(j)]^2 {+} 2\max\limits_{j=1,2,\dots,p}    |A^\alpha(j)|\frac{\fa}{\sqrt{n}}|\mu_n^{\hat j}|     {+}   \frac{\fa^2}{n}[\mu_n^{\hat j}]^2 } {\min\limits_{j=1,2,\dots,p}\hat\sigma_{j}^2}  \nonumber \\
	\geq& \frac {n^{-1} [\mathbf{e}_{\hat j}^\top\boldsymbol{S}_{1,\fa}]^2} {\hat\sigma_{\hat j}^2 }   		\geq \frac {n^{-1} [\mathbf{e}_{j_n}^\top\boldsymbol{S}_{1,\fa}]^2} {\hat\sigma_{j_n}^2 }    \nonumber \\
	\geq& \frac {\min\limits_{j=1,2,\dots,p}[A^\alpha(j)]^2 {-}  2\max\limits_{j=1,2,\dots,p}    |A^\alpha(j)|\frac{\fa}{\sqrt{n}}|\mu_n^{j_n}|     {+}  \frac{\fa^2}{n} [\mu_n^{j_n}]^2           } {\max\limits_{j=1,2,\dots,p}\hat\sigma_{j}^2 }. \label{eq_proof_cp11}
\end{align}
Since $\{\max\limits_{j=1,2,\dots,p}\hat\sigma_{j}^2\}^{-1}$, $\max\limits_{j=1,2,\dots,p}    |A^\alpha(j)|$ and $\min\limits_{j=1,2,\dots,p}[A^\alpha(j)]^2$ are all of the order $O_p(1)$, the right hand side of Inequality (\ref{eq_proof_cp11}) diverges to $\infty$ in probability. Since $\max\limits_{j=1,2,\dots,p}[A^\alpha(j)]^2$ and $\{\min\limits_{j=1,2,\dots,p}\hat\sigma_{j}^2\}^{-1}$ are also of order $O_p(1)$, we have that $\sqrt{n}|\mu_n^{\hat j}|\stackrel{p}{\to}\infty$. Denote $D^\alpha(j ) = (n{-}\fa)^{-1/2}[{S}^{j}_{\fa+1,n}{-}(n{-}\fa)\mu_n^{ j}]$ and we have
\begin{align}
	T^{(M)}_{n}(\alpha,\hat j)&=\frac {[D^\alpha(\hat j )]^2   +2D^\alpha(\hat j )\sqrt{n-\fa}\mu_n^{\hat j}   {+}    (n{-}\fa)(\mu_n^{\hat j} )^2              } {V^{(M)}_n(\hat j)} \nonumber \\
	&\geq \frac{\min\limits_{j=1,2,\dots,p}[D^\alpha(j )]^2{-} 2\max\limits_{j=1,2,\dots,p}|D^\alpha(j)|\sqrt{n{-}\fa}|\mu_n^{\hat j}|+ (n-\fa)|\mu_n^{\hat j}|^2}{\max\limits_{j=1,2,\dots,p}V^{(M)}_n(j)}.\label{eq_app_mean1}
\end{align}
Since $[\max\limits_{j=1,2,\dots,p}V^{(M)}_n(j)]^{-1}$, $\min\limits_{j=1,2,\dots,p}[D^\alpha(j )]^2$ and $\max\limits_{j=1,2,\dots,p}|D^\alpha(j)|$ are of the order $O_p(1)$ and $\sqrt{n-\fa}|\mu_n^{\hat j}|\stackrel{p}{\to}\infty$, we have $T^{(M)}_{n}(\alpha,\hat j)\stackrel{p}{\to}\infty$ and part (ii).1 of Theorem \ref{th_mean} is proved.

To prove part (ii).2 of Theorem \ref{th_mean}, following the same idea as in the proof of part (i), we only need to show
$$\hat{k} \stackrel{\D}{\to}   \argmax_{j = 1,2,\dots,p} \frac{\big\{B^{(j)}(\alpha)+\alpha c^j \big\}^2}{\sigma_j^2}\mbox{ and }T^{(M)}_{n}(\alpha,\hat k)  \stackrel{\D}{\to}  U^\ast.$$

By continuous mapping theorem, 
\begin{align*}
	\hat{k} = \argmax_{j = 1,2,\dots,p}  \frac{n^{-1}[ S^j_{1,\fa}-\fa\mu_n^j+\fa\mu_n^j]^2}{\sigma_{j}^2} \stackrel{\D}{\to}  \argmax_{j = 1,2,\dots,p} \frac{ \big\{B^{(j)}(\alpha)+\alpha c^j \big\}^2}{\sigma_j^2}=j^\ast,
\end{align*}
\begin{align*}
	T^{(M)}_n(\alpha,\hat k) \stackrel{\D}{\to}  \frac{\big\{B^{(j^\ast)}(1)-B^{(j^\ast)}(\alpha)+(1-\alpha)c^{j^\ast}\big\}^2}{\frac{1}{1-\alpha}\int_\alpha^1\big\{B^{(j^\ast)}(r)-\frac{1-r}{1-\alpha}B^{(j^\ast)}(\alpha)-\frac{r-\alpha}{1-\alpha}B^{(j^\ast)}(1)\big\}^2 dr}= U^\ast.
\end{align*}
Since $j^\ast \in \{B^{(k)}(t)\}_{t\in[0,\alpha],k\in\{1,\dots,p\}}$ and $U^\ast|_{j^\ast=j} \in \{B^{(j)}(r)-B^{(j)}(\alpha)\}_{r\in (\alpha,1]}$ are independent for any fixed $j$, we have 
\begin{align}
	U^\ast|_{j^\ast=j}  \stackrel{d}{=}& \frac{\big\{\Gamma_{ j  j}^{-1/2}B^{(j)}(1)-\Gamma_{ j  j}^{-1/2}B^{(j)}(\alpha)+\Gamma_{ j  j}^{-1/2}(1-\alpha)c^{j}\big\}^2}{\frac{1}{1-\alpha}\int_\alpha^1\big\{\Gamma_{ j  j}^{-1/2}B^{(j)}(r)-\Gamma_{ j  j}^{-1/2}\frac{1-r}{1-\alpha}B^{(j)}(\alpha)-\Gamma_{ j  j}^{-1/2}\frac{r-\alpha}{1-\alpha}B^{(j)}(1)\big\}^2 dr}  \nonumber  \\
	\stackrel{d}{=}& \frac{\big\{ B(1)- B(\alpha)+\Gamma_{ j  j}^{-1/2}(1-\alpha)c^{j}\big\}^2}{\frac{1}{1-\alpha}\int_\alpha^1\big\{ B(r)- \frac{1-r}{1-\alpha}B(\alpha)- \frac{r-\alpha}{1-\alpha}B(1)\big\}^2 dr}  \label{app_alt2}    \\
	\stackrel{d}{=}& \frac{\big\{ B(1-\alpha)+\Gamma_{ j  j}^{-1/2}(1-\alpha)c^{j}\big\}^2}{\frac{1}{1-\alpha}\int_0^{1-\alpha}\big\{ B(r)- \frac{r}{1-\alpha}B(1-\alpha)\big\}^2 dr}    \label{app_alt3}      \\
	\stackrel{d}{=}&  \frac{\big\{ B(1)+\sqrt{\frac{1-\alpha}{\Gamma_{ j  j}}}c^{j}\big\}^2}{\int_0^{1}\big\{ B(r)- rB(1)\big\}^2 dr}.  \label{app_alt4}
\end{align}
Equation (\ref{app_alt2}) follows from the definition of $B^{(j)}(r)$. For Equation (\ref{app_alt3}), we used the property $\{B(r+\alpha)-B(\alpha):r\in[0,1-\alpha]\}\stackrel{d}{=}\{B(r):r\in[0,1-\alpha]\}$ and the change of variable property for integration. For Equation (\ref{app_alt4}), we used the property $\{B((1-\alpha)r):r\in[0,1]\} \stackrel{d}{=}\{\sqrt{1-\alpha}B(r):r\in[0,1]\}$. So part (ii).2 of Theorem \ref{th_mean} is proved. 

To prove part (ii).3 of Theorem \ref{th_mean}, note that 
\begin{align*}
	T^{(M)}_n(\alpha,\hat j) = \frac {[D^\alpha(\hat j )]^2   +2D^\alpha(\hat j )\sqrt{n-\fa}\mu_n^{\hat j}   {+}    (n{-}\fa)(\mu_n^{\hat j} )^2              } {V^{(M)}_n(\hat j)} .
\end{align*}
Since $\min_{j=1,2,\dots,p}V^{(M)}_n(j)  \leq V^{(M)}_n(\hat j)\leq \max_{j=1,2,\dots,p}V^{(M)}_n(j)$, we have $\{V^{(M)}_n(\hat j)\}^{-1}=O_p(1)$. Since $|D^\alpha(\hat j )|\leq \max_{j=1,2,\dots,p}n^{-1/2}|D^\alpha( j )|=O_p(1)$ and $\sqrt{n}|\mu_n^{\hat j}|\leq \sqrt{n}||\boldsymbol{\mu}_n-\boldsymbol{\mu}_0||_\infty\to 0$, the limiting distribution of $T^{(M)}_n(\alpha,\hat j)$ is the same as the limiting distribution of 
$$\frac {(n-\fa)^{-1} ({S}^{\hat j}_{\fa+1,n}-(n-\fa)\mu_n^{\hat j})^2 } {V^{(M)}_n(\hat j)},$$
which, according to part (i) of this theorem, is $U_1$. So part (ii).3 of Theorem \ref{th_mean} is proved.
\qed

\subsection{Proof of Theorem \ref{th_enhance}}\label{appen_th_enhance}
Without loss of generality, we assume $\boldsymbol{\mu}_0=\boldsymbol{0}$. 
Denote $\mathbf{\mathfrak{D}}_n=diag\big\{\frac{1}{\sqrt{\hat\sigma_{1}^2}},\dots,\frac{1}{\sqrt{\hat\sigma_{p}^2}}\big\}$, $\mathbf{\mathfrak{D}}=diag\big\{\frac{1}{\sqrt{\sigma_{1}^2}},\dots,\frac{1}{\sqrt{\sigma_{p}^2}}\big\}$ and define 
$$	\boldsymbol{\hat{K}} =\mathbf{\mathfrak{D}} \frac{\boldsymbol{S}_{1,\fa}}{\sqrt{n}},\,\,\,	Q^{(M)}_{n}(\alpha,\boldsymbol{\hat{K}}) =  \frac {(n-\fa)^{-1} \Big\{\boldsymbol{\hat{K}}^\top\big[\mathbf{S}_{\fa+1,n}{-}(n{-}\fa)\boldsymbol{\mu}_0\big] \Big\}^2} {V_n(\alpha,\boldsymbol{\hat{K}})}$$
where $V_n(\alpha,\boldsymbol{\hat{K}})=(n-\fa)^{-2} \sum_{k=\fa+1}^{n}\Big\{ \boldsymbol{\hat{K}}^\top \big[\mathbf{S}_{\fa+1,k}-\frac{k-\fa}{n-\fa}\mathbf{S}_{\fa+1,n}\big]\Big\}^2$.

First we prove part (i) of Theorem \ref{th_enhance}. It suffices to show 	$Q^{(M)}_{n}(\alpha) =Q^{(M)}_{n}(\alpha,\boldsymbol{\hat{K}})(1+o_p(1)) $ and $Q^{(M)}_{n}(\alpha,\boldsymbol{\hat{K}}) \stackrel{\D}{\to}U_1$. So we divide the proof into two parts:

(a)	$Q^{(M)}_{n}(\alpha) =Q^{(M)}_{n}(\alpha,\boldsymbol{\hat{K}})(1+o_p(1)) +o_p(1)$ .

(b) $Q^{(M)}_{n}(\alpha,\boldsymbol{\hat{K}}) \stackrel{\D}{\to}U_1$.

\textbf{\textit{Proof of part (a)}}: Note that by Slutsky's Theorem, $||	\boldsymbol{\hat{P}}-	\boldsymbol{\hat{K}}|| = o_p(1)$ and 
\begin{align}\label{eq_appen_jrssb2}
	&(n-\fa)^{-1}\Big| \Big\{\boldsymbol{\hat{P}}^\top\big[\mathbf{S}_{\fa+1,n}{-}(n{-}\fa)\boldsymbol{\mu}_0\big] \Big\}^2-\Big\{\boldsymbol{\hat{K}}^\top\big[\mathbf{S}_{\fa+1,n}{-}(n{-}\fa)\boldsymbol{\mu}_0\big] \Big\}^2\Big| \nonumber \\
	\leq &||\boldsymbol{\hat{P}}-	\boldsymbol{\hat{K}}||(||\boldsymbol{\hat{P}}||+||\boldsymbol{\hat{K}}||)(n-\fa)^{-1} ||\mathbf{S}_{\fa+1,n}{-}(n{-}\fa)\boldsymbol{\mu}_0||^2.
\end{align}
By Assumption \ref{assump_fclt_mean}, $(n-\fa)^{-1} ||\mathbf{S}_{\fa+1,n}{-}(n{-}\fa)\boldsymbol{\mu}_0||^2\stackrel{\D}{\to}(1-\alpha)^{-1}\boldsymbol{\Gamma}^{1/2}||\mathbf{B}_p(1)-\mathbf{B}_p(\alpha)||^2$, so from Equation (\ref{eq_appen_jrssb2}) we have 
$$(n-\fa)^{-1}\Big| \Big\{\boldsymbol{\hat{P}}^\top\big[\mathbf{S}_{\fa+1,n}{-}(n{-}\fa)\boldsymbol{\mu}_0\big] \Big\}^2-\Big\{\boldsymbol{\hat{K}}^\top\big[\mathbf{S}_{\fa+1,n}{-}(n{-}\fa)\boldsymbol{\mu}_0\big] \Big\}^2\Big|\stackrel{p}{\to}0.$$
Following similar argument, we have $V_n(\alpha)-V_n(\alpha,\boldsymbol{\hat{K}})\stackrel{p}{\to}0$ and, by the continuous mapping theorem,
$$V_n(\alpha,\boldsymbol{\hat{K}})\stackrel{\D}{\to}{\frac{1}{1-\alpha}\int_\alpha^1\big\{ \boldsymbol{B}_p(\alpha)^\top\boldsymbol{\Gamma}^{1/2}\mathbf{\mathfrak{D}}\boldsymbol{\Gamma}^{1/2} [\mathbf{B}_p (r)-\frac{1-r}{1-\alpha}\mathbf{B}_p (\alpha)-\frac{r-\alpha}{1-\alpha}\mathbf{B}_p (1)]\big\}^2 dr},$$
which is a continuous random variable supported on $[0,\infty)$.
Then we have 
$$Q^{(M)}_{n}(\alpha) = \frac {(n{-}\fa)^{-1} \Big\{\boldsymbol{\hat{K}}^\top\big[\mathbf{S}_{\fa+1,n}{-}(n{-}\fa)\boldsymbol{\mu}_0\big] \Big\}^2{+}o_p(1)} {V_n(\alpha,\boldsymbol{\hat{K}}){+}o_p(1)} = 	Q^{(M)}_{n}(\alpha,\boldsymbol{\hat{K}})(1+o_p(1)){+}o_p(1)$$
and part (a) is proved.

\textbf{\textit{Proof of part (b)}}: Following the same argument as in the proof of Lemma \ref{lm_jrssb1}, We know that the functional $F(\cdot):D^p[0,1]\to\R$ defined as 
$$F(\mathbf{x}) = \frac{\big\{\mathbf{x}(\alpha)^\top\mathbf{\mathfrak{D}}[\mathbf{x}(1)-\mathbf{x}(\alpha)]\big\}^2}{\frac{1}{1-\alpha}\int_\alpha^1\big\{ \mathbf{x}(\alpha)^\top\mathbf{\mathfrak{D}} [\mathbf{x} (r)-\frac{1-r}{1-\alpha}\mathbf{x} (\alpha)-\frac{r-\alpha}{1-\alpha}\mathbf{x} (1)]\big\}^2 dr}$$
is continuous a.e. with respect to the probability measure induced by the Brownian motion $\boldsymbol{\Gamma}^{1/2}\mathbf{B}_p(r)$, so by continuous mapping theorem, Assumptions \ref{assump_fclt_mean} and \ref{assump_sigma_mean}, we have 
\begin{align}
	\boldsymbol{\hat{K}} &\stackrel{\D}{\to} \mathbf{\mathfrak{D}} \boldsymbol{\Gamma}^{1/2}\mathbf{B}_p(\alpha) = \boldsymbol{\tilde{K}},\\
	Q^{(M)}_{n}(\alpha,\boldsymbol{\hat{K}}) &\stackrel{\D}{\to} \frac{\big\{\boldsymbol{\tilde{P}}^\top\boldsymbol{\Gamma}^{1/2}[\mathbf{B}_p (1)-\mathbf{B}_p (\alpha)]\big\}^2}{\frac{1}{1-\alpha}\int_\alpha^1\big\{ \boldsymbol{\tilde{P}}^\top\boldsymbol{\Gamma}^{1/2} [\mathbf{B}_p (r)-\frac{1-r}{1-\alpha}\mathbf{B}_p (\alpha)-\frac{r-\alpha}{1-\alpha}\mathbf{B}_p (1)]\big\}^2 dr}= \tilde{Q}.
\end{align}
Denote $\tilde \Gamma=\sqrt{\boldsymbol{\tilde{K}}^\top\boldsymbol{\Gamma}\boldsymbol{\tilde{K}}}$, by the same conditioning method used in the proof of part (i) of Theorem \ref{th_mean}, the conditional distribution of $\tilde Q$ given $\boldsymbol{\tilde K}$ is 
\begin{align*}
	\tilde Q\Big|\boldsymbol{\tilde K}  &\stackrel{d}{=} \frac{\big\{\boldsymbol{\tilde{K}}^\top\boldsymbol{\Gamma}^{1/2}[\mathbf{B}_p (1)-\mathbf{B}_p (\alpha)]\big\}^2}{\frac{1}{1-\alpha}\int_\alpha^1\big\{ \boldsymbol{\tilde{K}}^\top\boldsymbol{\Gamma}^{1/2} [\mathbf{B}_p (r)-\frac{1-r}{1-\alpha}\mathbf{B}_p (\alpha)-\frac{r-\alpha}{1-\alpha}\mathbf{B}_p (1)]\big\}^2 dr}\Big|\boldsymbol{\tilde K}\\
	&\stackrel{d}{=}\frac{\big\{\tilde \Gamma[B(1)-B(\alpha)]\big\}^2}{\frac{1}{1-\alpha}\int_\alpha^1\big\{\tilde \Gamma[B(r)-\frac{1-r}{1-\alpha}B(\alpha)-\frac{r-\alpha}{1-\alpha}B(1)]\big\}^2 dr}   \Big|\boldsymbol{\tilde K} \\
	&\stackrel{d}{=}\frac{\big\{B(1)-B(\alpha)\big\}^2}{\frac{1}{1-\alpha}\int_\alpha^1\big\{B(r)-\frac{1-r}{1-\alpha}B(\alpha)-\frac{r-\alpha}{1-\alpha}B(1)\big\}^2 dr} \\
	&\stackrel{d}{=}\frac{\big\{B(1)\big\}^2}{\int_0^{1}\big\{B(r)-rB(1)\big\}^2 dr}. 
\end{align*}
So part (i) of Theorem \ref{th_enhance} is proved.

To prove part (ii).1 of Theorem \ref{th_enhance}, note that the numerator of $Q_n(\alpha)$ can be written as
\small
\begin{align}
	&\frac{n}{n{-}\fa}\Big\{ \frac{1}{\sqrt{n}}\big[ \boldsymbol{S}_{1,\fa}{-}\fa\boldsymbol{\mu}_n{+}\fa(\boldsymbol{\mu}_n{-}\boldsymbol{\mu}_0) \big]^\top \mathbf{\mathfrak{D}}_n  \frac{1}{\sqrt{n}}\big[ \mathbf{S}_{\fa+1,n}{-}(n{-}\fa)\boldsymbol{\mu}_n{+}(n{-}\fa)(\boldsymbol{\mu}_n-\boldsymbol{\mu}_0)            \big]             \Big\}^2 \nonumber\\
	=&\frac{n}{n{-}\fa}\Big\{N_1+N_2+N_3+  \frac{\fa(n{-}\fa)}{n}(\boldsymbol{\mu}_n{-}\boldsymbol{\mu}_0)^\top \mathbf{\mathfrak{D}}_n (\boldsymbol{\mu}_n-\boldsymbol{\mu}_0)   \Big\}^2, \label{app_eq_th2}
\end{align}
\normalsize
where $N_1 = \frac{1}{\sqrt{n}}\big[ \boldsymbol{S}_{1,\fa}{-}\fa\boldsymbol{\mu}_n \big]^\top \mathbf{\mathfrak{D}}_n \frac{1}{\sqrt{n}}\big[ \mathbf{S}_{\fa+1,n}{-}(n{-}\fa)\boldsymbol{\mu}_n\big] \stackrel{\D}{\to} \mathbf{B}_p(\alpha)^\top \mathbf{\mathfrak{D}}[\mathbf{B}_p(1)-\mathbf{B}_p(\alpha)]$. Since 
\begin{align}
	|N_2| &= \Bigg|\frac{1}{\sqrt{n}}\big[ \boldsymbol{S}_{1,\fa}{-}\fa\boldsymbol{\mu}_n \big]^\top \mathbf{\mathfrak{D}}_n  \frac{n{-}\fa}{\sqrt{n}}   (\boldsymbol{\mu}_n{-}\boldsymbol{\mu}_0) \Bigg| \nonumber \\
	&\leq ||\frac{1}{\sqrt{n}}\big[ \boldsymbol{S}_{1,\fa}{-}\fa\boldsymbol{\mu}_n \big]|| \frac{1}{\min\limits_{j=1,2,\dots,p}\sqrt{\hat\sigma_{j}^2}}\frac{n{-}\fa}{\sqrt{n}} ||\boldsymbol{\mu}_n{-}\boldsymbol{\mu}_0||,  \nonumber
\end{align}
and $ ||\frac{1}{\sqrt{n}}\big[ \boldsymbol{S}_{1,\fa}{-}\fa\boldsymbol{\mu}_n \big]|| \frac{1}{\min\limits_{j=1,2,\dots,p}\sqrt{\hat\sigma_{j}^2}} \stackrel{\D}{\to} ||\mathbf{B}_p(\alpha)||\frac{1}{\min\limits_{j=1,2,\dots,p}\sqrt{\sigma_{j}^2}}$, we have $N_2=O_p(\sqrt{n} ||\boldsymbol{\mu}_n{-}\boldsymbol{\mu}_0||)$. For the same reason, $N_3 = \frac{\fa}{\sqrt{n}}(\boldsymbol{\mu}_n-\boldsymbol{\mu}_0)^\top \mathbf{\mathfrak{D}}_n \frac{1}{\sqrt{n}}\big[ \mathbf{S}_{\fa+1,n}{-}(n{-}\fa)\boldsymbol{\mu}_n\big] = O_p(\sqrt{n} ||\boldsymbol{\mu}_n{-}\boldsymbol{\mu}_0||)$. Since $(\boldsymbol{\mu}_n{-}\boldsymbol{\mu}_0)^\top \mathbf{\mathfrak{D}}_n (\boldsymbol{\mu}_n{-}\boldsymbol{\mu}_0) \geq \frac{1}{\max\limits_{j \in \{1,2,\dots,p\}}\sqrt{\hat\sigma_{j}^2}}||\boldsymbol{\mu}_n{-}\boldsymbol{\mu}_0||^2$, the numerator is of the same order as $n^2||\boldsymbol{\mu}_n{-}\boldsymbol{\mu}_0||^4$.

For the denominator of $Q_n(\alpha)$, denote $\mathbf{b}_k =\mathbf{S}_{\fa+1,k}{-}\frac{k-\fa}{n-\fa}\mathbf{S}_{\fa+1,n},k=\fa{+}1,\dots,n$, then we have 
\begin{align}
	&\Big\{ \boldsymbol{\hat{P}}^\top \big[\mathbf{S}_{\fa+1,k}-\frac{k-\fa}{n-\fa}\mathbf{S}_{\fa+1,n}\big]\Big\}^2 \nonumber \\
	=& \Big\{\frac{1}{\sqrt{n}}\big[ \boldsymbol{S}_{1,\fa}{-}\fa\boldsymbol{\mu}_n \big]^\top \mathbf{\mathfrak{D}}_n \mathbf{b}_k {+}\frac{\fa}{\sqrt{n}}(\boldsymbol{\mu}_n{-}\boldsymbol{\mu}_0)^\top \mathbf{\mathfrak{D}}_n \mathbf{b}_k\Big\}^2 \nonumber \\
	\leq& 2\Big\{\frac{1}{\sqrt{n}}\big[ \boldsymbol{S}_{1,\fa}{-}\fa\boldsymbol{\mu}_n \big]^\top \mathbf{\mathfrak{D}}_n \mathbf{b}_k \Big\}^2 {+}2\frac{\fa^2}{n}||\boldsymbol{\mu}_n{-}\boldsymbol{\mu}_0||^2\frac{1}{\min\limits_{j=1,2,\dots,p}\hat\sigma_{j}^2}||\mathbf{b}_k||^2 \nonumber \\
	\leq& 2||\frac{1}{\sqrt{n}}\big[ \boldsymbol{S}_{1,\fa}{-}\fa\boldsymbol{\mu}_n \big]||^2\frac{1}{\min\limits_{j=1,2,\dots,p}\hat\sigma_{j}^2}||\mathbf{b}_k||^2 {+}2\frac{\fa^2}{n}||\boldsymbol{\mu}_n{-}\boldsymbol{\mu}_0||^2\frac{1}{\min\limits_{j=1,2,\dots,p}\hat\sigma_{j}^2}||\mathbf{b}_k||^2. \label{app_eq2_th2}
\end{align}
Since $(n{-}\fa)^{-2} \sum_{k=\fa{+}1}^{n}||\mathbf{b}_k||^2\stackrel{\D}{\to}(1{-}\alpha)^{-2}\int_{\alpha}^{1}||\mathbf{B}_p(r){-}\mathbf{B}_p(\alpha){-}\frac{r{-}\alpha}{1{-}\alpha}(\mathbf{B}_p(1){-}\mathbf{B}_p(\alpha))||^2dr$, the denominator of $Q_n(\alpha)$ is of the same order as $n||\boldsymbol{\mu}_n{-}\boldsymbol{\mu}_0||^2$ and part (ii).1 of Theorem \ref{th_enhance} is proved.

To prove part (ii).2 of Theorem \ref{th_enhance}, following the same idea as in the proof of part (i), we only need to show
$$\boldsymbol{\hat{k}} \stackrel{\D}{\to} \mathbf{\mathfrak{D}} \big[\boldsymbol{\Gamma}^{1/2}\mathbf{B}_p(\alpha) +\alpha\mathbf{c}\big] =  \boldsymbol{P}^\ast \mbox{ and } 	Q^{(M)}_{n}(\alpha,\boldsymbol{\hat{K}}) \stackrel{\D}{\to}  U^{\ast\ast}.$$

By continuous mapping theorem, 
\begin{align*}
	\boldsymbol{\hat{K}} =\mathbf{\mathfrak{D}} \frac{\boldsymbol{S}_{1,\fa}}{\sqrt{n}}  \stackrel{\D}{\to}  \mathbf{\mathfrak{D}}  \big[\boldsymbol{\Gamma}^{1/2}\mathbf{B}_p(\alpha) +\alpha\mathbf{c}\big]= \boldsymbol{P}^\ast, 
\end{align*}
\begin{align}
	Q^{(M)}_{n}(\alpha,\boldsymbol{\hat{K}})  \stackrel{\D}{\to}  \frac{\big\{ (\boldsymbol{P}^\ast)^\top \boldsymbol{\Gamma}^{1/2}(\mathbf{B}_p(1){-}\mathbf{B}_p(\alpha))  {+} (1{-}\alpha)(\boldsymbol{P}^\ast)^\top \mathbf{c}   \big\}^2}{\frac{1}{1-\alpha}\int_\alpha^1\big\{ (\boldsymbol{{P}}^\ast)^\top\boldsymbol{\Gamma}^{1/2} [\mathbf{B}_p (r){-}\mathbf{B}_p (\alpha){-}\frac{r-\alpha}{1-\alpha}(\mathbf{B}_p(1)-\mathbf{B}_p(\alpha))]\big\}^2 dr} = U^{\ast\ast}.\nonumber
\end{align}
Since conditioning on $\boldsymbol{P}^\ast$, $\{ (\boldsymbol{P}^\ast)^\top \boldsymbol{\Gamma}^{1/2}[\mathbf{B}_p(r){-}\mathbf{B}_p(\alpha)]\}_{r\in (\alpha,1]}$ is a Gaussian process with covariance function $\cov\big\{(\boldsymbol{P}^\ast)^\top \boldsymbol{\Gamma}^{1/2}[\mathbf{B}_p(r_1){-}\mathbf{B}_p(\alpha)],(\boldsymbol{P}^\ast)^\top \boldsymbol{\Gamma}^{1/2}[\mathbf{B}_p(r_2){-}\mathbf{B}_p(\alpha)]\big\}=\min\{r_1-\alpha,r_2-\alpha\} (\boldsymbol{P}^\ast)^\top \boldsymbol{\Gamma}\boldsymbol{P}^\ast$, part (ii).2 of Theorem \ref{th_enhance} can be proved in a similar way as in the proof for part (ii).2 and (ii).3 of Theorem \ref{th_mean}. 

To prove part (ii).3 of Theorem \ref{th_enhance}, note that $\sqrt{n}||\boldsymbol{\mu}_n{-}\boldsymbol{\mu}_0||\to 0$ implies $N_2,N_3$ and $ \frac{\fa(n{-}\fa)  }{n} ||\boldsymbol{\mu}_n{-}\boldsymbol{\mu}_0||^2  $ from Equation (\ref{app_eq_th2}) are of order $o_p(1)$. Also from Equation (\ref{app_eq2_th2}), we have 
$$ 2\frac{\fa^2}{n}||\boldsymbol{\mu}_n{-}\boldsymbol{\mu}_0||^2\frac{1}{\min\limits_{j=1,2,\dots,p}\hat\sigma_{j}^2} \frac{1}{(n{-}\fa)^2}\sum_{k=\fa{{+}}1}^n||\mathbf{b}_k||^2=o_p(1),$$
so $Q^{(M)}_{n}(\alpha)$ will have the same limiting distribution as under the null hypothesis. 
\qed
\subsection{Proof of Proposition \ref{th_high_dim1}}\label{appen_th_high_dim1}

From Theorem 3.1 in \cite{mies2022}, we have 
\begin{align}\label{eq_app2}
	\Big(E\big\{ \max\limits_{k\leq n} ||\frac{1}{\sqrt{n}}\sum_{t=1}^{k}(\mathbf{X}'_{nt}-\boldsymbol{\mu}_n)-W_n(\frac{k}{n})||^2  \big\}\Big)^{\frac{1}{2}} &\leq C\Theta_n \sqrt{\log(n)} (\frac{p_n}{n})^{\xi }.
\end{align}
For $k=1,2,\dots,n$, define iid random variables $Q_{nk}=\sup\limits_{r\in[0,1]}||W^{(k)}_n(r)||^2=\sup\limits_{r\in[0,1]}||\Gamma_n^{1/2}B^{(k)}_{p_n}(r)||^2$, where $B^{(k)}_{p_n}(r) = (B^{(k)}_{p_n,1}(r),B^{(k)}_{p_n,2}(r),\dots,B^{(k)}_{p_n,p_n}(r))^\top$ are independent standard $p_n$-dimensional Brownian motions, then we have 
\begin{align}
	E\big\{ \sup\limits_{r\in[0,1]} ||W_n(r)	-W_n(\frac{\fr}{n})||^2\big\} =\frac{1}{n}	E \big\{ \max\limits_{k\leq n}  Q_{nk}\big\}. \nonumber
\end{align}
Note that $||W_n^{(k)}(r)||^2 =B^{(k)}_{p_n}(r)^\top\Gamma_nB^{(k)}_{p_n}(r)\leq tr(\Gamma_n) ||B^{(k)}_{p_n}(r)||^2$, so we have 
\begin{align}
	E\big\{\max\limits_{k\leq n} Q_{nk}\big\}&\leq tr(\Gamma_n)	E\big\{\max\limits_{k\leq n} \sup\limits_{r\in[0,1]} ||B^{(k)}_{p_n}(r)||^2\big\} \nonumber \\
	&\leq tr(\Gamma_n) \sum_{j=1}^{p_n} E\big\{\max\limits_{k\leq n} \sup\limits_{r\in[0,1]} |B^{(k)}_{p_n,j}(r)|^2\big\} \nonumber \\
	&= p_n tr(\Gamma_n) E\big\{\max\limits_{k\leq n} \sup\limits_{r\in[0,1]} |B^{(k)}(r)|^2\big\}, \nonumber
\end{align}
where $B^{(k)}(r)$ are independent standard $1$-dimensional Brownian motions. By Doob's inequality in $L^p$, we know $\sup\limits_{r\in[0,1]}|B(r)|^2$ has all positive order moments. It then follows from  Corollary 1.1 in \cite{correa2021} that  
\begin{align}
	E \big\{ \max\limits_{k\leq n}  \sup\limits_{r\in[0,1]}|B^{(k)}(r)|^2\big\}=o(n^\epsilon) \mbox{ for any small positive } \epsilon. \nonumber
\end{align}
From this and the fact that $tr(\Gamma_n)\leq C\Theta_n^2$ for some fixed constant $C$ (see Proposition 5.4 in \cite{mies2022}), we can derive 
\begin{align} \label{eq_app1}
	E\big\{ \sup\limits_{r\in[0,1]} ||W_n(r)	-W_n(\frac{\fr}{n})||^2 \big\}\leq C\frac{p_n\Theta_n^2}{n^{1-\epsilon}} \mbox{ for any small positive } \epsilon,
\end{align}
Combining Equations (\ref{eq_app2}) and (\ref{eq_app1}), the proposition is proved. 	\qed

\subsection{Proof of Theorem \ref{th_high_dim2}}\label{appen_th_high_dim2}	
Denote 
\begin{align}
	\hat{j}^{(W)}_{n}&=\argmax_{j = 1,2,\dots,p_n}  \frac{[{W}_{nj}(\alpha)]^2}{\sigma_{nj}^2} \nonumber \\
	T^{(W)}_{n}(\alpha,j)&=\frac{(1-\alpha)[{W}_{nj}(1)-{W}_{nj}(\alpha)]^2}{\int_{\alpha}^{1}[W_{nj}(s)-W_{nj}(\alpha)-\frac{s-\alpha}{1-\alpha}(W_{nj}(1)-W_{nj}(\alpha))]^2ds}. \nonumber
\end{align}
Let $\tilde T^{(D)}_n(\alpha,\tilde{j}_{n})$ denote the SS-SN statistic calculated using $\{\mathbf{X}'_{nt}\}_{t=1}^n$. If we can show $\tilde T^{(D)}_n(\alpha,\tilde{j}_{n})-T^{(W)}_{n}(\alpha,\hat{j}^{(W)}_{n})\stackrel{p}{\to}0$, then since $T^{(W)}_{n}(\alpha,\hat{j}^{(W)}_{n})\stackrel{d}{=}U_1$, by Slutsky's Theorem we have $\tilde T^{(D)}_n(\alpha,\tilde{j}_{n})\stackrel{\D}{\to}U_1$. Note that $\tilde T^{(D)}_n(\alpha,\tilde{j}_{n}) \stackrel{d}{=} T^{(D)}_n(\alpha,\hat{j}_{n})$, so we have $T^{(D)}_n(\alpha,\hat{j}_{n})\stackrel{\D}{\to}U_1$. So for notation simplicity, without loss of generality, we treat $\{\mathbf{X}_{nt}\}_{t=1}^n$ as $\{\mathbf{X}'_{nt}\}_{t=1}^n$ and assume the former is defined in the same probability space as $\mathbf{W}_n(r)$.

If we can show $\hat{j}_{n}	-\hat{j}^{(W)}_{n}\stackrel{p}{\to}0$ and $\max\limits_{j\leq p_n}\Big|T^{(D)}_n(\alpha,j)-  T^{(W)}_{n}(\alpha,j) \Big|\stackrel{p}{\to}0$, then the conclusion follows since for any $\epsilon>0$,
\begin{align}
	P(|T^{(D)}_n(\alpha,\hat{j}_{n})-T^{(W)}_{n}(\alpha,\hat{j}^{(W)}_{n})|>\epsilon)\leq P(\hat{j}_{n}\neq \hat{j}^{(W)}_{n})+P(\max\limits_{j\leq p_n}\Big|T^{(D)}_n(\alpha,j)-  T^{(W)}_{n}(\alpha,j) \Big|>\epsilon). \nonumber
\end{align}	
On the space of cadlag functions $D[0,1]$, define $\chi_{ni}(r) =\frac{1}{\sqrt{n}}\sum_{t=1}^{\fr}X^i_{nt} $, $\boldsymbol{\chi}_n(r) = (\chi_{n1}(r),\dots,\chi_{np_n}(r))^\top$,
$$F(x)=\frac{(x(1))^2}{\int_{0}^{1}(x(s)-sx(1))^2ds},\mbox{ }F_\chi(i)=\frac{(\chi_{ni}(1))^2}{\frac{1}{n}\sum_{k=1}^{n}(X_{nk}^i-\frac{1}{n}\sum_{j=1}^{n}X_{nj}^i)^2},\mbox{ }F_W(i)=\frac{(W_{ni}(1))^2}{\sigma_{ni}^2},$$
and let $j_{\chi_n} = \argmax\limits_{i\leq p_n}F_\chi(i)$, $j_{W_n} = \argmax\limits_{i\leq p_n}F_W(i)$. By the translation invariance and scaling properties of Brownian motion (see Chapter 7.1 in \cite{durrett2019}), it suffices to show $j_{\chi_n} -j_{W_n}\stackrel{p}{\to}0$ and $\max\limits_{i\leq p_n}|F(\chi_{ni})-F(W_{ni})|\stackrel{p}{\to}0$, the proof of which is divided into three parts:

(a) Under Assumption \ref{assump_high_dim} and assume $p_n\asymp n^{\psi}$ for some $0<\psi<\frac{\xi}{\xi+\frac{1}{2}}$, we have $\max\limits_{i\leq p_n}|F(\chi_{ni})-F(W_{ni})|\stackrel{p}{\to}0$.

(b) Under Assumption \ref{assump_high_dim} and \ref{assump_high_dim2}, we have $j_{\chi_n} -j_{W_n}\stackrel{p}{\to}0$.

(c) Under Assumption \ref{assump_high_dim} and \ref{assump_high_dim3}, we have $j_{\chi_n} -j_{W_n}\stackrel{p}{\to}0$.	

\textbf{\textit{Proof of part (a)}}: For any $i=1,2,\dots,p_n$, denote $B_{ni}=W_{ni}/\sqrt{\gamma_{ni}}$, which is a standard Brownian motion. Let $b_n=\Theta_n^2 {\log(n)} (\frac{p_n}{n})^{2\xi}$ be the dominant term on the right hand side of Equation (\ref{eq_high_dim1}) and let $a_n=\frac{\Theta_n^2}{\sqrt{n}}$, then we have $b_n=o(p_n^{-(\frac{2\xi}{\psi}-2\xi-1-\eta)})$ for any small $\eta>0$ and $a_n=O(p_n^{1-\frac{1}{2\psi}})$. Note that $\xi$ is always smaller than $\frac{1}{6}$ and, as a consequence, $\psi$ is always smaller than $\frac{1}{4}$ and $a_n=o(p_n^{-1})$. Denote $D_{ni} =\int_{0}^{1}(W_{ni}(r)-rW_{ni}(1))^2dr$, then for any $\epsilon>0$ we have 
\begin{align}\label{eq9}
	P(\max\limits_{i\leq p_n}|F(\chi_{ni}){-}F(W_{ni})|\geq \epsilon)&\leq P(\min\limits_{i\leq p_n}D_{ni}\leq \lambda_n) 
	{+} P(\min\limits_{i\leq p_n}D_{ni} > \lambda_n;\max\limits_{i\leq p_n}|F(\chi_{ni}){-}F(W_{ni})|\geq \epsilon). 
\end{align}	

Set $\lambda_n = \frac{\gamma_{min}}{10\log p_n}$. Then it follows from Lemma \ref{lm1} that
\begin{align}\label{eq10}
	P(\min\limits_{i\leq p_n}D_{ni}\leq  \lambda_n)&\leq 	P(\gamma_{min}\min\limits_{i\leq p_n}\int_{0}^{1}(B_{ni}(r)-rB_{ni}(1))^2dr\leq  \lambda_n)   \nonumber\\
	&\leq p_n P(\int_{0}^{1}(B(r)-rB(1))^2dr\leq \frac{1}{10\log p_n})\to 0. 
\end{align}
By the uniform upper and lower bound of $\gamma_{ni}$, we have that given $\min\limits_{i\leq p_n}D_{ni} > \lambda_n$, 	
\begin{align}
	F(\chi_{ni})=\frac{(\chi_{ni}(1))^2}{\int_{0}^{1}(\chi_{ni}(r)-r\chi_{ni}(1))^2dr} = \frac{(W_{ni}(1))^2+|W_{ni}(1)|O_p(\sqrt{b_n})}{D_{ni}+\sqrt{D_{ni}}O_p(\sqrt{b_n})}, \nonumber
\end{align}
where $O_p(\sqrt{b_n})$ is uniform in $i=1,2,\dots,p_n$. It follows that
\begin{align}
	&|F(\chi_{ni})-F(W_{ni})| \nonumber\\
	=& \Big|F(W_{ni}) \Big\{ \frac{D_{ni}}{D_{ni}+\sqrt{D_{ni}}O_p(\sqrt{b_n})} -1    \Big\}   +   \frac{|W_{ni}(1)|O_p(\sqrt{b_n})}{D_{ni}+\sqrt{D_{ni}}O_p(\sqrt{b_n})}     \Big|        \nonumber\\
	\leq& F(W_{ni})\frac{\sqrt{D_{ni}}O_p(\sqrt{b_n})}{D_{ni}+\sqrt{D_{ni}}O_p(\sqrt{b_n})} +\frac{|W_{ni}(1)|O_p(\sqrt{b_n})}{D_{ni}+\sqrt{D_{ni}}O_p(\sqrt{b_n})}   \nonumber \\
	=& \frac{|W_{ni}(1)|^2O_p(\sqrt{b_n})}{(D_{ni})^{3/2}+D_{ni}O_p(\sqrt{b_n})} +\frac{|W_{ni}(1)|O_p(\sqrt{b_n})}{D_{ni}+\sqrt{D_{ni}}O_p(\sqrt{b_n})} \nonumber \\
	\leq&\frac{|W_{ni}(1)|^2O_p(\frac{\sqrt{b_n}}{\lambda_n^{3/2}})}{1+O_p(\sqrt{\frac{b_n}{\lambda_n}})}  +\frac{|W_{ni}(1)|O_p(\frac{\sqrt{b_n}}{\lambda_n})}{1+O_p(\sqrt{\frac{b_n}{\lambda_n}})}.  \label{eq8}
\end{align}	

Since $P(\max\limits_{i\leq p_n}|W_{ni}(1)|>\epsilon)\leq P(\max\limits_{i\leq p_n}|B_{ni}(1)|>\epsilon/\sqrt{\gamma_{max}})$ and $|B_{ni}(1)|$ has all orders of moments, by Lemma \ref{lm2} we have $\max\limits_{i\leq p_n}|W_{ni}(1)|=o_p(p_n^\delta)$ for any small $\delta>0$ and the same result also holds for $\max\limits_{i\leq p_n}|W_{ni}(1)|^2$. By Equation (\ref{eq8}) we have $	\max\limits_{i\leq p_n}|F(\chi_{ni})-F(W_{ni})|=o_p(b_n^{\frac{1}{2}-\eta})$ for any small $\eta>0$ given $\min\limits_{i\leq p_n}D_{ni} > \lambda_n$. Combining this result with Equation (\ref{eq10}), we have the right hand side of Equation (\ref{eq9}) converges to 0 as $n\to \infty$. So part (a) is proved.		

\textbf{\textit{Proof of part (b)}}: 
From Lemma \ref{lemma_var} we have $\max\limits_{i \leq p_n}|\frac{1}{n}\sum_{k=1}^{n}(X_{nk}^i-\frac{1}{n}\sum_{j=1}^{n}X_{nj}^i)^2-\sigma_{ni}^2|=O_p(a_n)$ and 
\begin{align}
	&|F_\chi(i)-F_W(i)| \nonumber\\
	=&\frac{(W_{ni}(1))^2}{\sigma_{ni}^2}\Big[ \frac{\sigma_{ni}^2}{\sigma_{ni}^2+O_p(a_n)} -1 \Big]+\frac{|W_{ni}(1)|O_P(\sqrt{b_n})}{\sigma_{ni}^2+O_p(a_n)}\nonumber \\
	\leq &\frac{\max_{i\leq p_n}(W_{ni}(1))^2}{\sigma_{min}^2}\Big[ \frac{O_p(a_n)}{\sigma_{min}^2+O_p(a_n)}  \Big]+\frac{\max_{i\leq p_n}|W_{ni}(1)|O_P(\sqrt{b_n})}{\sigma_{min}^2+O_p(a_n)},\nonumber 
\end{align}	
so we can find a small positive real number $\epsilon_0$ such that $\max\limits_{i\leq p_n} p_n^{\epsilon_0}|F_\chi(i)-F_W(i)| = o_p(1)$. For $i=1,2,\dots,p_n$ denote $a_{ni}=\frac{\gamma_{ni}}{\sigma_{ni}^2}$, $V_{ni}=F_W(i)\stackrel{d}{=}a_{ni}Z_{i}$ where $Z_{i}$ are i.i.d chi-square distributed with one degree of freedom. Note that under Assumption \ref{assump_high_dim}, there exist $0<\underbar a <\bar a  $ such that $\underbar a \leq a_{ni}\leq\bar a$ for all $i=1,2,\dots,p_n$. Let $K_{ni}(\cdot)$ and $k_{ni}(\cdot)$ be the pdf and cdf of $V_{ni}$ and $V_{(1)}\leq V_{(2)}\leq\cdots\leq V_{(p_n)}$ be the order statistics of $\{V_1,V_2\dots,V_{p_n}\}$.

Note that $j_{\chi_n} \neq j_{W_n}$ implies $V_{(p_n)}-V_{(p_n-1)} \leq 2\max\limits_{i\leq p_n}|F_\chi(i)-F_W(i)|$, so it suffices to show $p_n^{\epsilon_0}(V_{(p_n)}-V_{(p_n-1)})\stackrel{p}{\to}\infty$. Since the joint pdf of $(V_{(p_n)},V_{(p_n-1)})^\top$  at $(x_1,x_2)^\top$ is
\begin{align}
	f_V(x_1,x_2) = \sum\limits_{(i_1,i_2,\dots,i_{p_n})}k_{ni_1}(x_1)k_{ni_2}(x_2)\prod_{j=3}^{p_n}K_{ni_j}(x_2), \nonumber
\end{align}
where the summation $\sum\limits_{(i_1,i_2,\dots,i_{p_n})}$ is over all permutations of $\{1,2,\dots,p_n\}$, the joint pdf of $(V_{(p_n)}-V_{(p_n-1)},V_{(p_n-1)})^\top$ at $(y_1,y_2)^\top$ is 
\begin{align}
	f_S(y_1,y_2) = \sum\limits_{(i_1,i_2,\dots,i_{p_n})}k_{ni_1}(y_1+y_2)k_{ni_2}(y_2)\prod_{j=3}^{p_n}K_{ni_j}(y_2) \nonumber
\end{align}
and the pdf of $V_{(p_n)}-V_{(p_n-1)}$ at $y_1$ is 
\begin{align}
	f_A(y_1) = \sum\limits_{(i_1,i_2,\dots,i_{p_n})}\int_{0}^{\infty}k_{ni_1}(y_1+y_2)k_{ni_2}(y_2)\prod_{j=3}^{p_n}K_{ni_j}(y_2)dy_2. \nonumber
\end{align}
Let $Y_i(a)\stackrel{i.i.d}{\sim}Gamma(1/2,2a)$, or equivalently $Y_i(a) \stackrel{d}{=} aZ_{i}$, with scale parameter ${2a}$ and denote its pdf and cdf as $f_a(\cdot)$ and $F_a(\cdot)$. Let  $Y_{(1)}(a)\leq Y_{(2)}(a)\leq\cdots\leq Y_{(p_n)}(a)$ be the order statistics of $\{Y_1(a),Y_2(a)\dots,Y_{p_n}(a)\}$. For fixed $y\in(0,\infty)$, it is well known that $F_a(y)$ is a decreasing function of $a$, $f_a(y)$ is a decreasing function of $a$ for $a>y$ and an increasing function of $a$ for $a<y$. Fix $(i_1,i_2,\dots,i_{p_n})^\top$ as any permutation of $\{1,2,\dots,p_n\}$ and $y_1\in(0,\frac{\underbar a}{2})$, we have 
\begin{align}
	&\int_{0}^{\infty}k_{ni_1}(y_1+y_2)k_{ni_2}(y_2)\prod_{j=3}^{p_n}K_{ni_j}(y_2)dy_2 \nonumber \\
	\leq& \int_{0}^{\underbar a/2}f_{\underbar a}(y_1{+}y_2)f_{\underbar a}(y_2)[F_{\underbar a}(y_2)]^{p_n{-}2}dy_2 {+}\sup_{b \in [\underbar a,\bar a ],y_1\in(0,\frac{\underbar a}{2})}\int_{\underbar a/2}^{\infty}f_{b}(y_1{+}y_2)f_{b}(y_2)[F_{\underbar a}(y_2)]^{p_n{-}2}dy_2 \nonumber \\
	\leq& C \int_{0}^{\infty}f_{\underbar a}(y_1{+}y_2)f_{\underbar a}(y_2)[F_{\underbar a}(y_2)]^{p_n{-}2}dy_2, \nonumber
\end{align}
for some constant $C>1$ which does not depend on $y_1$. Summing over all permutations of $\{1,2,\dots,p_n\}$, the above inequality implies the density of $V_{(p_n)}{-}V_{(p_n{-}1)}$ at $y_1\in(0,\frac{\underbar a}{2})$ is less than $C$ times the density of $Y_{(p_n)}(\underbar a ){-}Y_{(p_n{-}1)}(\underbar a )$ at $y_1$. By Corollary 4.2.11 in \cite{emb2013}, we have $\frac{1}{2\underbar a}[Y_{(p_n)}(\underbar a ){-}Y_{(p_n{-}1)}(\underbar a )]\stackrel{\D}{\to}\tilde Y\sim exp(1)$. For any $\epsilon>0$, fix $0<\tilde c<\frac{\underbar a}{2}$ such that $P(\tilde Y<\frac{\tilde c}{2\underbar a})<\frac{\epsilon}{C}$, then we have 
\begin{align}
	P(V_{(p_n)}{-}V_{(p_n{-}1)}<\tilde c)\leq C P(Y_{(p_n)}(\underbar a ){-}Y_{(p_n{-}1)}(\underbar a )<\tilde c)\to CP(\tilde Y<\frac{\tilde c}{2\underbar a})<\epsilon,
\end{align}
which implies $(V_{(p_n)}{-}V_{(p_n{-}1)})^{-1}=O_p(1)$ and part (b) is proved.

\textbf{\textit{Proof of part (c)}}: Note that under Assumption \ref{assump_high_dim3}, $\frac{2\xi}{\psi}-2\xi-1>8$. It follows from Equation (\ref{eq8}) that we can find some small number $\epsilon_0>0$ such that for any $\epsilon>0$, $\max\limits_{i\leq p_n}|F(\chi_{ni})-F(W_{ni})| = o_p( p_n^{-(4+\epsilon_0)})$. From the proof of part (b), we know that  $j_{\chi_n} \neq j_{W_n}$ implies that $V_{(p_n)}-V_{(p_n-1)} \leq 2\max\limits_{i\leq p_n}|F_\chi(i)-F_W(i)|$. If for some fixed $C_1>0$, we can show that 
\begin{align}
	P(V_{(p_n)}-V_{(p_n-1)} \leq C_1p_n^{-(4+\epsilon_0)})\to 0,
\end{align}	
then 
\begin{align}
	P(\frac{V_{(p_n)}{-}V_{(p_n-1)}}{\max\limits_{i\leq p_n}|F(\chi_{ni}){-}F(W_{ni})|}\leq2)\leq& P(\frac{C_1p_n^{-(4+\epsilon_0)}}{\max\limits_{i\leq p_n}|F(\chi_{ni}){-}F(W_{ni})|}\leq2){+} P(V_{(p_n)}{-}V_{(p_n-1)} \leq C_1p_n^{-(4+\epsilon_0)}) \nonumber \\
	\to& 0, \nonumber
\end{align}
and the problem is proved.

From Lemma \ref{lemma3} and the uniform boundedness of $\rho_{nij}, \gamma_{ni}$ and $\sigma_{ni}^2$, we know that for $0<C<1$, $P(|V_i-V_j|<C)\leq \frac{1}{\sqrt{\pi(1-\rho_{nij}^2)(1-|\rho_{nij}|)}}(\sqrt{\frac{\sigma^2_{ni}}{\gamma_{ni}}}+\sqrt{\frac{\sigma^2_{nj}}{\gamma_{nj}}})\sqrt{C} \leq  \frac{2}{\sqrt{\pi(1-\bar \rho^2)(1-\bar \rho)}}\sqrt{\frac{\sigma^2_{max}}{\gamma_{min}}}\sqrt{C}$. For large enough $n$, $C_1 p_n^{-(4+\epsilon_0)} \in (0,1)$, so we can derive that 
\begin{align}
	P(V_{(p_n)}-V_{(p_n-1)}\leq C_1p_n^{-(4+\epsilon_0)})  &\leq \sum\limits_{i\neq j}P(|V_i-V_j|<C_1p_n^{-(4+\epsilon_0)}) \nonumber \\
	& \leq p_n(p_n-1) \frac{8}{4\sqrt{\pi(1-\bar \rho^2)(1-\bar \rho)}}\sqrt{\frac{\sigma^2_{max}}{\gamma_{min}}} \sqrt{C_1} p_n^{-2-\frac{\epsilon_0}{2}} \nonumber \\
	& \to 0, \nonumber
\end{align}
and part (c) is proved. 	
\qed

\subsection{Proof of Theorem \ref{th_high_dim3}}\label{appen_th_high_dim3}

Following the same argument as at the beginning of the proof of Theorem \ref{th_high_dim2}, we treat $\{\mathbf{X}_{nt}\}_{t=1}^n$ as $\{\mathbf{X}'_{nt}\}_{t=1}^n$ and assume the former is defined in the same probability space as $\mathbf{W}_n(r)$.

Note that for any $n=1,2,\dots$, there exist positive integer $j_n\leq p_n$ such that $\frac{\sqrt{n}|{\mu}_n^{j_n}|}{p_n^\kappa}\to \infty$. Denote $A_n^{\alpha}(j)={S}^{nj}_{1,\fa}-\fa\mu_n^j$, then by the definition of $\hat j_n$ we have	
\begin{align}
	&\frac {2\fa^{-1} \max\limits_{j=1,2,\dots,p_n} [A_n^{\alpha}(j)]^2    +   2\fa[\mu_n^{\hat j_n}]^2           } {\min\limits_{j=1,2,\dots,p_n}\hat \sigma_{nj}^2}  \nonumber \\
	\geq&  \frac{\fa^{-1} [{S}^{n\hat j_n}_{1,\fa}]^2}{\hat \sigma_{n\hat j_n}^2}		\geq   \frac{\fa^{-1} [{S}^{nj_n}_{1,\fa}]^2}{\hat \sigma_{nj_n}^2}	 \nonumber \\
	\geq&\frac {\fa^{-1} \min\limits_{j=1,2,\dots,p_n} [A_n^{\alpha}(j)]^2 -  \frac {2}{\sqrt{\fa}}\max\limits_{j=1,2,\dots,p_n}    |A_n^{\alpha}(j)|\sqrt{\fa}|\mu_n^{j_n}|     +   \fa[\mu_n^{j_n}]^2           } {\max\limits_{j=1,2,\dots,p_n}\hat \sigma_{nj}^2} . \label{eq_app_c1}
\end{align}
Note that 
\begin{align}	\label{eq_app_13}
	n^{-1} \max\limits_{j=1,2,\dots,p_n} [A_n^{\alpha}(j)]^2 \leq  2\max\limits_{j=1,2,\dots,p_n} |W_{nj}(\alpha)|^2+2\sup\limits_{r\in[0,1]} ||\frac{1}{\sqrt{n}}\sum_{t=1}^{\fr}(\mathbf{X}'_{nt}-\boldsymbol{\mu}_n)-W_n(r)||^2,
\end{align}
and by Lemma \ref{lm2} we have $\max\limits_{j=1,2,\dots,p_n} |W_{nj}(\alpha)|^2=o_p(p_n^\epsilon)$ for any small $\epsilon>0$, so we have $\fa^{-1} \max\limits_{j=1,2,\dots,p_n} [A_n^{\alpha}(j)]^2=o_p(p_n^\epsilon)$ and $\frac{1}{\sqrt{\fa}} \max\limits_{j=1,2,\dots,p_n} |A_n^{\alpha}(j)|=o_p(p_n^\frac{\epsilon}{2})$ for any small $\epsilon>0$. Multiplying $\max\limits_{j=1,2,\dots,p_n}\hat \sigma_{nj}^2$ on both sides of Inequality (\ref{eq_app_c1}), we get
\begin{align}
	&\frac{\max\limits_{j=1,2,\dots,p_n}\hat \sigma_{nj}^2}{\min\limits_{j=1,2,\dots,p_n}\hat \sigma_{nj}^2}  \Big\{o_p(p_n^\epsilon)   +   \fa[\mu_n^{\hat j_n}]^2  \Big\} 
	\geq O_p(1)-o_p(p_n^\frac{\epsilon}{2})\sqrt{\fa}|\mu_n^{j_n}| + \fa[\mu_n^{j_n}]^2  . \label{ineq_app1}
\end{align}	
Since $\sigma_{nj}^2$ are uniformly bounded and from Lemma \ref{lemma_var}, $\max\limits_{j\leq p_n}|\hat\sigma_{nj}^2-\sigma_{nj}^2|=O_p(a_n)=o_p(1)$, it is easy to see that $\frac{\max\limits_{j=1,2,\dots,p_n}\hat \sigma_{nj}^2}{\min\limits_{j=1,2,\dots,p_n}\hat \sigma_{nj}^2} =\frac{\max\limits_{j=1,2,\dots,p_n} \sigma_{nj}^2}{\min\limits_{j=1,2,\dots,p_n} \sigma_{nj}^2}(1+o_p(1))$. So for any small $\epsilon\in (0,2\kappa)$, Inequality (\ref{ineq_app1}) can be written as  
\begin{align}\label{eq_app_15}
	O_p(1)\{o_p(p_n^\epsilon) +  {\fa[\mu_n^{\hat j_n}]^2  }\} 
	\geq   {O_p(1) + {\fa[\mu_n^{j_n}]^2  } (1+o_p(1)) }.
\end{align}
Dividing both sides of Inequality (\ref{eq_app_15}) by $p_n^{2\kappa}$, it is easy to see $\frac{\sqrt{n}|\mu_n^{\hat j_n}|}{p_n^{\kappa}}\stackrel{p}{\to}\infty$. Denote $A_n(j) ={S}^{nj}_{\fa+1,n}{-}(n{-}\fa)\mu_n^{j}$, we have
\begin{align}	
	&T^{(D)}_{n}(\alpha,\hat j_n)\nonumber \\
	=&\frac {(n-\fa)^{-1} [{S}^{n\hat j_n}_{\fa+1,n}]^2} {V^{(D)}_n(\hat j_n)} \nonumber \\
	\geq &\frac{\frac{1}{n{-}\fa} \min\limits_{j=1,2,\dots,p_n} [A_n(j)]^2 {-}  \frac {2}{\sqrt{n{-}\fa}}\max\limits_{j=1,2,\dots,p_n}    |A_n(j)|\sqrt{n{-}\fa}|\mu_n^{\hat j_n}|     {+ } (n{-}\fa)[\mu_n^{\hat j_n}]^2 }{\max\limits_{j=1,2,\dots,p_n}V^{(D)}_n(j)}. \label{eq_app_16}
\end{align}
Following the same argument as in Equation (\ref{eq_app_13}), it is easy to see that $\max\limits_{j=1,2,\dots,p_n}V_n^{(D)}(j)=o_p(p_n^\epsilon)$ for any small $\epsilon>0$. Dividing both numerator and denominator on the right hand side of Inequality (\ref{eq_app_16}) by $p_n^{2\kappa}$, we get
\begin{align}	
	T^{(D)}_{n}(\alpha,\hat j_n)
	\geq \frac{o_p(1) {-}  o_p(1)\frac{\sqrt{n}|\mu_n^{\hat j_n}|  }{p_n^{\kappa}}   {+ } \frac{(n{-}\fa)[\mu_n^{\hat j_n}]^2}{p_n^{2\kappa}} }{o_p(1)} \stackrel{p}{\to}\infty, \nonumber
\end{align}
and the theorem is proved.
\qed
\subsection{Proof of Theorem \ref{th_high_dim4}}\label{appen_th_high_dim4}	

Following the same argument as at the beginning of the proof of Theorem \ref{th_high_dim2}, we treat $\{\mathbf{X}_{nt}\}_{t=1}^n$ as $\{\mathbf{X}'_{nt}\}_{t=1}^n$ and assume the former is defined in the same probability space as $\mathbf{W}_n(r)$.

Denote $\boldsymbol{{P}}_n=(\frac{W_{n1}(\alpha)}{\sigma_{n1}},\dots,\frac{W_{np_n}(\alpha)}{\sigma_{np_n}})^\top$. We first prove a bound on $||\boldsymbol{\hat{P}}_n-\boldsymbol{P}_n||^2$. Recall that $a_n=\frac{\Theta_n^2}{\sqrt{n}}=o(1)$ and from Lemma \ref{lemma_var} we have $\max\limits_{j\leq p_n}|\hat\sigma_{nj}^2-\sigma_{nj}^2|=O_p(a_n)$. For $i=1,2,\dots,p_n$
\begin{align}
	|\hat\sigma_{ni}-\sigma_{ni}|=\frac{|\hat\sigma_{ni}^2-\sigma_{ni}^2|}{|\hat\sigma_{ni}+\sigma_{ni}|}\leq \frac{O_p(a_n)}{\sqrt{2\sigma_{min}^2+O_p(a_n)}}=O_p(a_n),
\end{align}
and 
\begin{align}
	||\boldsymbol{\hat{P}}_n-\boldsymbol{P}_n||^2=&\sum_{i=1}^{p_n}(\frac{\chi_{ni}(\alpha)}{\hat \sigma_{ni}}{-}\frac{W_{ni}(\alpha)}{\sigma_{ni}})^2 \nonumber\\
	=&\sum_{i=1}^{p_n}\Big\{ \frac{W_{ni}(\alpha)}{\sigma_{ni}}\big[\frac{\sigma_{ni}}{\sigma_{ni}{+}O_p(a_n)}-1\big]  +\frac{\chi_{ni}(\alpha)-W_{ni}(\alpha)}{\sigma_{ni}{+}O_p(a_n)}     \Big\}^2 \nonumber \\
	\leq &\frac{2O_p(a^2_n)\sum_{i=1}^{p_n}W_{ni}(\alpha)^2}{\sigma^2_{min}(\sigma_{min}{+}O_p(a_n))^2}+\frac{2||\boldsymbol{\chi}_n(\alpha)-\mathbf{W}_n(\alpha)||^2}{(\sigma_{min}{+}O_p(a_n))^2}\nonumber \\
	=&\frac{2O_p(a^2_n)\sum_{i=1}^{p_n}W_{ni}(\alpha)^2}{\sigma^2_{min}(\sigma_{min}{+}O_p(a_n))^2}+\frac{O_p(b_n)}{(\sigma_{min}{+}O_p(a_n))^2}.\label{eq_app_l21}
\end{align}
Since $\sum_{i=1}^{p_n}W_{ni}(\alpha)^2\leq p_n \max\limits_{i=1,2,\dots,p_n}W_{ni}(\alpha)^2$, by Lemma \ref{lm2}, $\sum_{i=1}^{p_n}W_{ni}(\alpha)^2=o_p(p_n^{1+\epsilon})$ and $a_n^2\sum_{i=1}^{p_n}W_{ni}(\alpha)^2=o_p(\frac{p_n^{3+\epsilon}}{n})$ for any small $\epsilon>0$. Since $\xi<\frac{1}{6}$, we have $n^{-1/4}p_n=o(1)$. Furthermore, withour loss of generality, assume $\Theta_n^2=p_n$, then $b_n=p_n\log(n)(\frac{p_n}{n})^{2\xi}\geq p_n\log(n)(\frac{p_n}{n})^{1/3} \geq \log(n)\frac{p_n^{4}}{n}$ for large enough $n$, so the second term on the right hand side of Equation (\ref{eq_app_l21}) dominates and $||\boldsymbol{\hat{P}}_n-\boldsymbol{P}_n||^2=O_p(b_n)$. 

On the space of cadlag functions $D[0,1]$, define $\chi_{ni}(r) =\frac{1}{\sqrt{n}}\sum_{t=1}^{\fr}X^i_{nt}$, $\boldsymbol{\chi}_n(r) = (\chi_{n1}(r),\dots,\chi_{np_n}(r))^\top$ and 
\begin{equation}
	Q^{(W)}_{n}(\alpha) =\frac{(1{-}\alpha)N_Q}{D_Q}=\frac {(1{-}\alpha) \Big\{\boldsymbol{{P}}_n^\top\big[\mathbf{W}_n(1){-}\mathbf{W}_n(\alpha)\big]\Big\}^2} {\int_\alpha^1\big\{ \boldsymbol{{P}}_n^\top[\mathbf{W}_n (r){-}\mathbf{W}_n (\alpha)      -\frac{r{-}\alpha}{1{-}\alpha}(\mathbf{W}_n (1){-}\mathbf{W}_n (\alpha))]\big\}^2 dr}\stackrel{d}{=}U_1.
\end{equation}
If we can show $\frac{1}{n}\big\{\boldsymbol{\hat{P}}_n^\top\mathbf{S}_{\fa+1,n} \big\}^2=N_Q+o_p(N_Q)$ and $$\int_\alpha^1\big\{ \boldsymbol{{\hat P}}_n^\top[\boldsymbol{\chi}_n(r){-}\boldsymbol{\chi}_n (\alpha)      {-}\frac{r{-}\alpha}{1{-}\alpha}(\boldsymbol{\chi}_n (1){-}\boldsymbol{\chi}_n (\alpha))]\big\}^2dr = D_Q+o_p(D_Q),$$ then we have $Q^{(D)}_{n}(\alpha)-Q^{(W)}_{n}(\alpha)\stackrel{p}{\to}0$ and the theorem is proved. We divide the proof into two parts:

(a) $\frac{1}{n}\Big\{\boldsymbol{\hat{P}}_n^\top\mathbf{S}_{\fa+1,n} \Big\}^2=N_Q+o_p(N_Q)$.

(b) $\int_\alpha^1\big\{ \boldsymbol{{\hat P}}_n^\top[\boldsymbol{\chi}_n(r){-}\boldsymbol{\chi}_n (\alpha)      {-}\frac{r{-}\alpha}{1{-}\alpha}(\boldsymbol{\chi}_n (1){-}\boldsymbol{\chi}_n (\alpha))]\big\}^2 dr= D_Q+o_p(D_Q)$.

\textbf{\textit{Proof of part (a)}}: Note that 
\begin{equation}
	\frac{1}{\sqrt{n}}\boldsymbol{\hat{P}}_n^\top\mathbf{S}_{\fa{+}1,n} = \boldsymbol{{P}}_n^\top\big[\mathbf{W}_n(1){-}\mathbf{W}_n(\alpha)\big]+N_{Q1}+N_{Q2}+N_{Q3},
\end{equation}
where $N_{Q1}=\big[\boldsymbol{\hat{P}}_n-\boldsymbol{{P}}_n\big]^\top\big[\mathbf{W}_n(1){-}\mathbf{W}_n(\alpha)\big]$, $N_{Q2}=\boldsymbol{{P}}_n^\top\big\{\frac{1}{\sqrt{n}}\mathbf{S}_{\fa{+}1,n}- \big[\mathbf{W}_n(1){-}\mathbf{W}_n(\alpha)\big]  \big\}$ and $N_{Q3}=\big[\boldsymbol{\hat{P}}_n-\boldsymbol{{P}}_n\big]^\top\big\{\frac{1}{\sqrt{n}}\mathbf{S}_{\fa{+}1,n}- \big[\mathbf{W}_n(1){-}\mathbf{W}_n(\alpha)\big]  \big\}$. By the uniform boundedness of $\gamma_{ni}$, $\sigma_{ni}$ and Lemma \ref{lm2}, we have $||\mathbf{W}_n(1){-}\mathbf{W}_n(\alpha)||^2=o_p(p_n^{1+\epsilon})$, $||\boldsymbol{P}_n||^2=o_p(p_n^{1+\epsilon})$ for any small $\epsilon>0$. By Cauchy-Schwarz inequality, $N_{Q1}\leq ||\boldsymbol{\hat{P}}_n-\boldsymbol{{P}}_n||||\mathbf{W}_n(1){-}\mathbf{W}_n(\alpha)||=o_p(p_n^{1/2+\epsilon}\sqrt{b_n})$, $N_{Q2}\leq||\boldsymbol{{P}}_n||||\frac{1}{\sqrt{n}}\mathbf{S}_{\fa{+}1,n}- \big[\mathbf{W}_n(1){-}\mathbf{W}_n(\alpha)\big] ||=o_p(p_n^{1/2+\epsilon}\sqrt{b_n})$ for any small $\epsilon>0$ and $N_{Q3}\leq ||\boldsymbol{\hat{P}}_n-\boldsymbol{{P}}_n||||\frac{1}{\sqrt{n}}\mathbf{S}_{\fa{+}1,n}- \big[\mathbf{W}_n(1){-}\mathbf{W}_n(\alpha)\big] ||=O_p(b_n)$. Since $p_n^{\epsilon_1}b_n\to0$ for some small $\epsilon_1>0$, all of $N_{Q1},N_{Q2}$ and $N_{Q3}$ are of the order $o_p(p_n^{1/2})$. 

Next we prove $\big\{\boldsymbol{{P}}_n^\top\big[\mathbf{W}_n(1){-}\mathbf{W}_n(\alpha)\big]\big\}^{-1}=O_p(p_n^{-1/2})$. Denote $\mathbf{D_g}=Diag\{\frac{1}{\sigma_{n1}},\dots,\frac{1}{\sigma_{np_n}}\}$, note that $\boldsymbol{{P}}_n^\top\big[\mathbf{W}_n(1){-}\mathbf{W}_n(\alpha)\big] = \mathbf{B}_{p_n}(\alpha)^\top\boldsymbol{\Gamma}^{1/2}_n\mathbf{D_g}\boldsymbol{\Gamma}^{1/2}_n\big[\mathbf{B}_{p_n}(1)-\mathbf{B}_{p_n}(\alpha)\big]$ and let $\boldsymbol{\Lambda}_n =\boldsymbol{\Gamma}^{1/2}_n\mathbf{D_g}\boldsymbol{\Gamma}^{1/2}_n\boldsymbol{\Gamma}^{1/2}_n\mathbf{D_g}\boldsymbol{\Gamma}^{1/2}_n $ with eigenvalues $\lambda_1\leq\lambda_2\leq\cdots\leq\lambda_{p_n}$. We have $\mathbf{B}_{p_n}(\alpha)^\top\boldsymbol{\Lambda}_n\mathbf{B}_{p_n}(\alpha)\stackrel{d}{=}{\alpha}\sum\limits_{i=1}^{p_n}\lambda_i\chi_i$  where $\chi_i$ are i.i.d chi-square distributed with one degree of freedom. Let $\boldsymbol{\Gamma}_{ni}$ be the $i$th row of the symmetric matrix $ \boldsymbol{\Gamma}_{n}^{1/2}$, since $\sum\limits_{i=1}^{p_n}\lambda_i = trace(\boldsymbol{\Lambda}_n)=\sum\limits_{i,j=1}^{p_n}\{\boldsymbol{\Gamma}_{ni}^\top\mathbf{D_g} \boldsymbol{\Gamma}_{nj}\}^2\geq \sum\limits_{i=1}^{p_n}\{\boldsymbol{\Gamma}_{ni}^\top\mathbf{D_g} \boldsymbol{\Gamma}_{ni}\}^2\geq \frac{\gamma^2_{min}}{\sigma^2_{max}}p_n$, by Lemma \ref{lemma_chi}, for $K>0$,
\begin{align}
	P(p_n\big\{\mathbf{B}_{p_n}(\alpha)^\top\boldsymbol{\Lambda}_n\mathbf{B}_{p_n}(\alpha)\big\}^{-1}\geq K)=&P(\mathbf{B}_{p_n}(\alpha)^\top\boldsymbol{\Lambda}_n\mathbf{B}_{p_n}(\alpha) \leq \frac{1}{K}p_n) \nonumber \\
	\leq & P(\sum\limits_{i=1}^{p_n}\lambda_i\chi_i\leq \frac{\sigma^2_{max}}{\alpha\gamma^2_{min}}\frac{1}{K}\sum\limits_{i=1}^{p_n}\lambda_i )\nonumber \\
	\leq &\Big(\frac{e}{{\alpha}} \frac{\sigma^2_{max}}{\gamma^2_{min}}\frac{1}{K}\Big)^{1/2}. \nonumber
\end{align}
So $\big\{\mathbf{B}_{p_n}(\alpha)^\top\boldsymbol{\Lambda}_n\mathbf{B}_{p_n}(\alpha)\big\}^{-1} = O_p(p_n^{-1})$. Let $\Phi(\cdot)$ be the cdf of standard normal distribution and note that conditioning on $\boldsymbol{{P}}_n$, we have $\boldsymbol{{P}}_n^\top\big[\mathbf{W}_n(1){-}\mathbf{W}_n(\alpha)\big]\sim N(0,(1{-}\alpha)\mathbf{B}_{p_n}(\alpha)^\top\boldsymbol{\Lambda}_n\mathbf{B}_{p_n}(\alpha))$, so 
\begin{align}
	P(\big|p_n^{1/2}\big\{\boldsymbol{{P}}_n^\top\big[\mathbf{W}_n(1){-}\mathbf{W}_n(\alpha)\big]\big\}^{-1}\big|\geq K)=&E\Big\{2\Phi\big(\frac{1}{\sqrt{1{-}\alpha}}   \sqrt{ p_n\big\{\mathbf{B}_{p_n}(\alpha)^\top\boldsymbol{\Lambda}_n\mathbf{B}_{p_n}(\alpha)\big\}^{-1} }  \frac{1}{K}\big)   {-}1\Big\} \nonumber \\
	\leq&2\Phi\big(\frac{1}{\sqrt{1{-}\alpha}}  \frac{1}{\sqrt{K}}\big){-}1 {+}P(p_n\big\{\mathbf{B}_{p_n}(\alpha)^\top\boldsymbol{\Lambda}_n\mathbf{B}_{p_n}(\alpha)\big\}^{-1}\geq K)\nonumber \\
	\to & 0 \mbox{ as } K\to\infty,
\end{align}
which implies $\big\{\boldsymbol{{P}}_n^\top\big[\mathbf{W}_n(1){-}\mathbf{W}_n(\alpha)\big]\big\}^{-1}=O_p(p_n^{-1/2})$.

\textbf{\textit{Proof of part (b)}}: Denote $\boldsymbol{\tilde\chi}_n(r) = \boldsymbol{\chi}_n(r){-}\boldsymbol{\chi}_n (\alpha)      {-}\frac{r{-}\alpha}{1{-}\alpha}(\boldsymbol{\chi}_n (1){-}\boldsymbol{\chi}_n (\alpha))$, $\mathbf{\tilde W}_n (r)=(\tilde W_{n1} (r),\dots,\tilde W_{np_n} (r))^\top = \mathbf{W}_n (r){-}\mathbf{W}_n (\alpha)      -\frac{r{-}\alpha}{1{-}\alpha}(\mathbf{W}_n (1){-}\mathbf{W}_n (\alpha))$, $\mathbf{\tilde B}_n (r) = \mathbf{B}_n (r){-}\mathbf{B}_n (\alpha)      -\frac{r{-}\alpha}{1{-}\alpha}(\mathbf{B}_n (1){-}\mathbf{B}_n (\alpha))$, $D_{Q1}(r) = \boldsymbol{{P}}_n^\top \mathbf{\tilde W}_n (r)$, $D_{Q2}(r) =\big[\boldsymbol{\hat{P}}_n-\boldsymbol{{P}}_n\big]^\top \mathbf{\tilde W}_n (r) $, $D_{Q3}(r) =\boldsymbol{{P}}_n^\top\big[\boldsymbol{\tilde\chi}_n(r)-\mathbf{\tilde W}_n (r)\big] $ and $D_{Q4}(r) =\big[\boldsymbol{\hat{P}}_n-\boldsymbol{{P}}_n\big]^\top \big[\boldsymbol{\tilde\chi}_n(r)-\mathbf{\tilde W}_n (r)\big] $, then 
\begin{align}
	\int_\alpha^1\big\{ \boldsymbol{{\hat P}}_n^\top[\boldsymbol{\chi}_n(r){-}\boldsymbol{\chi}_n (\alpha)      {-}\frac{r{-}\alpha}{1{-}\alpha}(\boldsymbol{\chi}_n (1){-}\boldsymbol{\chi}_n (\alpha))]\big\}^2 dr {=} \int_\alpha^1\big\{D_{Q1}(r){+}D_{Q2}(r){+}D_{Q3}(r){+}D_{Q4}(r) \big\}^2 dr. \nonumber
\end{align}
For $i=2,3,4$, since $\big|\int_\alpha^1\big\{D_{Q1}(r)D_{Qi}(r) \big\} dr\big|\leq\big\{\int_\alpha^1D^2_{Q1}(r)dr    \big\}^{1/2}\big\{\int_\alpha^1D^2_{Qi}(r)dr    \big\}^{1/2}$, it suffice to show $\int_\alpha^1D^2_{Qi}(r)dr  =o_p(\int_\alpha^1D^2_{Q1}(r)dr)$. Note that $\int_\alpha^1D^2_{Q2}(r)dr\leq O_p(b_n)\int_\alpha^1||\mathbf{\tilde W}_n (r)||^2dr$ and by Lemma \ref{lm1} and \ref{lm2}, $\int_\alpha^1||\mathbf{\tilde W}_n (r)||^2dr=o_p(p_n^{1+\epsilon})$ for any small $\epsilon>0$, we have $\int_\alpha^1D^2_{Q2}(r)dr=o_p(p_n)$. Since $||\boldsymbol{{P}}_n||^2=o_p(p_n^{1+\epsilon})$ for any small $\epsilon>0$ and $\int_\alpha^1||\boldsymbol{\tilde\chi}_n(r)-\mathbf{\tilde W}_n (r)||^2dr=O_p(b_n)$, we have $\int_\alpha^1D^2_{Q3}(r)dr=o_p(p_n)$ and $\int_\alpha^1D^2_{Q4}(r)dr=O_p(b_n^2)$, so it suffice to show $\big\{\int_\alpha^1D^2_{Q1}(r)dr\big\}^{-1}=O_P(p_n^{-1})$. Note that conditioning on $\boldsymbol{{P}}_n$, $$\int_\alpha^1D^2_{Q1}(r)dr\stackrel{d}{=}(1{-}\alpha)^2\mathbf{B}_{p_n}(\alpha)^\top\boldsymbol{\Lambda}_n\mathbf{B}_{p_n}(\alpha)\int_{0}^1\big\{B(r)-rB(1)\big\}^2dr.$$
Let $H(\cdot)$ be the cdf of $\int_{0}^1\big\{B(r)-rB(1)\big\}^2dr$, so for $K>0$, 
\begin{align}
	P(p_n\big\{\int_\alpha^1D^2_{Q1}(r)dr\big\}^{-1}\geq K) = &P(\int_\alpha^1D^2_{Q1}(r)dr\leq \frac{1}{K}p_n) \nonumber \\
	= & E\Big\{H\Big(\frac{1}{(1{-}\alpha)^2K}\frac{p_n}{\mathbf{B}_{p_n}(\alpha)^\top\boldsymbol{\Lambda}_n\mathbf{B}_{p_n}(\alpha)}\Big)\Big\} \nonumber \\
	\leq & H\Big(\frac{1}{(1{-}\alpha)^2\sqrt{K}}\Big) +P(p_n\big\{\mathbf{B}_{p_n}(\alpha)^\top\boldsymbol{\Lambda}_n\mathbf{B}_{p_n}(\alpha)\big\}^{-1}\geq \sqrt{K}) \nonumber \\
	\to & 0 \mbox{ as } K\to\infty,
\end{align}
which implies $\big\{\int_\alpha^1D^2_{Q1}(r)dr\big\}^{-1}=O_P(p_n^{-1})$.
\qed

\subsection{Proof of Theorem \ref{th_high_dim5}}\label{appen_th_high_dim5}	
Following the same argument as at the beginning of the proof of Theorem \ref{th_high_dim2}, we treat $\{\mathbf{X}_{nt}\}_{t=1}^n$ as $\{\mathbf{X}'_{nt}\}_{t=1}^n$ and assume the former is defined in the same probability space as $\mathbf{W}_n(r)$.

From the proof of part (a) in Appendix \ref{appen_th_high_dim4}, we know the numerator of $T_n^{(2)}(\alpha)$ equals to $$(1-\alpha)^{-1}\Big\{\boldsymbol{{P}}_n^\top\big[\mathbf{W}_n(1){-}\mathbf{W}_n(\alpha)\big]\Big\}^2(1+o_p(1))+\Big\{  \fa\sqrt{\frac{n{-}\fa}{n}}\boldsymbol{\mu}_n^\top \mathbf{D_g}\boldsymbol{\mu}_n     \Big\}^2(1+o_p(1)).$$
By Cauchy-Schwarz inequality, $\boldsymbol{{P}}_n^\top\big[\mathbf{W}_n(1){-}\mathbf{W}_n(\alpha)\big]\leq||\boldsymbol{{P}}_n||||\mathbf{W}_n(1){-}\mathbf{W}_n(\alpha)||=o_p(p_n^{1+\epsilon})$ for any small $\epsilon>0$. Since $\frac{1}{\sigma_{max}}n||\boldsymbol{\mu}_n||^2\leq n\boldsymbol{\mu}_n^\top \mathbf{D_g}\boldsymbol{\mu}_n \leq \frac{1}{\sigma_{min}}n||\boldsymbol{\mu}_n||^2$, the numerator of $T_n^{(2)}(\alpha)$ is of the same order as $n^2||\boldsymbol{\mu}_n||^4$.

For the same reason, from the proof of part (b) in Appendix \ref{appen_th_high_dim4}, we know the denominator of $T_n^{(2)}(\alpha)$ equals to
$$(1{-}\alpha)^{-2}\int_\alpha^1D^2_{Q1}(r)dr(1+o_p(1))+(1{-}\alpha)^{-2}\int_\alpha^1\big\{\frac{\fa}{\sqrt{n}}\boldsymbol{\mu}_n^\top\mathbf{D_g}\mathbf{\tilde W}_n (r)\big\}^2dr(1+o_p(1))$$
By Cauchy-Schwarz inequality, $\int_\alpha^1D^2_{Q1}(r)dr\leq ||\boldsymbol{{P}}_n||^2\int_\alpha^1||\mathbf{\tilde W}_n (r)||^2dr=o_p(p_n^{2+\epsilon})$ for any small $\epsilon>0$. Since $\int_\alpha^1\big\{{\sqrt{n}}\boldsymbol{\mu}_n^\top\mathbf{D_g}\mathbf{\tilde W}_n (r)\big\}^2dr\leq  \frac{1}{\sigma^2_{min}}n||\boldsymbol{\mu}_n||^2\int_\alpha^1||\mathbf{\tilde W}_n (r)||^2dr= \frac{1}{\sigma^2_{min}}n^2||\boldsymbol{\mu}_n||^4\frac{\int_\alpha^1||\mathbf{\tilde W}_n (r)||^2dr}{n||\boldsymbol{\mu}_n||^2}=o_p(n^2||\boldsymbol{\mu}_n||^4)$ and $\frac{p_n^{2+\epsilon}}{n^2||\boldsymbol{\mu}_n||^4}\to 0$ for $\epsilon<4\kappa$, the denominator of $T_n^{(2)}(\alpha)$ is of order $o_p(n^2||\boldsymbol{\mu}_n||^4)$ and the theorem is proved.
\qed

\subsection{Proof of Proposition \ref{prop_corr}}\label{appen_prop_corr}

We prove this proposition for $p=1$ since the case for $p>1$ is analogous. For part (i), when $p=1$, $Z_t=(X_t-\mu)(X_{t+1}-\mu)$ and $\hat Z_t=(X_t-\bar X_n)(X_{t+1}-\bar X_n)$, we have 
\begin{align}
	\frac{1}{\sqrt{n}}\sum_{t=1}^{\fr} \hat{Z}_t =&	\frac{1}{\sqrt{n}}\sum_{t=1}^{\fr} Z_t-(\bar X_n-\mu)\Big\{ 	\frac{1}{\sqrt{n}}\sum_{t=1}^{\fr} (X_{t+1}-\mu) + \frac{1}{\sqrt{n}}\sum_{t=1}^{\fr} (X_t-\mu) -\frac{\fr}{\sqrt{n}}(\bar X_n-\mu)\Big\}. \nonumber
\end{align}
Since the uniform metric on $D[0,1]$ is stronger than the Skorokhod metric, by Theorem 3.1 of \cite{bill2013}, it suffice to show
\begin{align}
	\sup\limits_{r\in[0,1]}|(\bar X_n-\mu)\Big\{ 	\frac{1}{\sqrt{n}}\sum_{t=1}^{\fr} (X_{t+1}-\mu) + \frac{1}{\sqrt{n}}\sum_{t=1}^{\fr} (X_t-\mu) -\frac{\fr}{\sqrt{n}}(\bar X_n-\mu)\Big\}|=o_p(1). \label{app_eq_prop}
\end{align}
By the FCLT assumption on $\{X_t\}$, we have $\bar X_n-\mu=o_p(1)$, $\sup\limits_{r\in[0,1]}|\frac{1}{\sqrt{n}}\sum_{t=1}^{\fr} (X_{t+1}-\mu) |=O_p(1)$, $\sup\limits_{r\in[0,1]}|\frac{1}{\sqrt{n}}\sum_{t=1}^{\fr} (X_t-\mu) |=O_p(1)$ and $\sqrt{n}(\bar X_n-\mu)=O_p(1)$. So Equation (\ref{app_eq_prop}) is proved and the proof for part (i) is complete.

For part (ii), since 
\begin{align}
	\hat{Z}_t =Z_t+(\bar X_n-\mu)^2 +(X_t+X_{t+1}-2\mu)(\bar X_n-\mu), \nonumber
\end{align}
we have $n^{-1}\sum_{t=1}^{n}\hat{Z}_t = n^{-1}\sum_{t=1}^{n}{Z}_t +(\bar X_n-\mu)^2+(\bar X_n+n^{-1}\sum_{t=1}^nX_{t+1}-2\mu)(\bar X_n-\mu)\stackrel{p}{\to}E(Z_1)$. Also, since 
\begin{align*}
	n^{-1}\sum_{t=1}^{n}(X_t+X_{t+1}-2\mu)^2\leq  n^{-1}\sum_{t=1}^{n}\big\{ (X_t-\mu)^2+(X_{t+1}-\mu)^2\big\} =O_P(1) ,
\end{align*}
and $\bar X_n-\mu=o_p(1)$, we have $n^{-1}\sum_{t=1}^{n}\hat{Z}_t^2= n^{-1}\sum_{t=1}^{n}{Z}_t^2+o_p(1)\stackrel{p}{\to}E(Z_1^2)$. So part (ii) is proved.
\qed

\subsection{Proof of Theorem \ref{theorem_cp}}\label{app_th_cp}

Denote
$$\hat{h}=\argmax_{\mathbf{j}\in\{1,2,\dots,p\}} \frac {n^{-1} [\mathbf{e}_j^\top\boldsymbol{M}_{1,\fb}]^2} {\vartheta_j^2 } \mbox{ and } \tilde G_n = \sup_{k=\fb+1,\fb+2,\dots,n-\fb-1}\frac{T_n(\hat h,k)^2}{V_n(\hat h,k)},$$

First to prove part (i) of Theorem \ref{theorem_cp}. Following the same argument as in the proof of Theorem \ref{th_mean}, we have $P(\hat j\neq \hat h)\to 0$ and $P(\tilde G_n\neq G_n)\to 0$, so it suffices to prove $\tilde G_n\stackrel{\D}{\to}G$. By simple calculation we can see that
\begin{align}
	\hat h \stackrel{\D}{\to}  \,\argmax\limits_{j \in \{1,2,\dots,p\}}\frac{\Big[B^{(j)}(b)-(B^{(j)}(1)-B^{(j)}(1-b))\Big]^2  }{\vartheta_j^2  } =\tilde h,\nonumber
\end{align}
where $B^{(j)}(r)$ is the $j$th coordinate of $\boldsymbol{\Gamma}^{1/2}\mathbf{B}_p(r)$.  Also, let $k= \lfloor \tau n \rfloor $ for some $\tau\in(b,1-b)$, by continuous mapping theorem, we have
\begin{align}
	T_n(\hat h, k)^2& \stackrel{\D}{\to}\frac{1}{1-2b} \Big\{B^{(\tilde h)}(\tau)-B^{(\tilde h)}(b)-\frac{\tau-b}{1-2b}(B^{(\tilde h)}(1-b)-B^{(\tilde h)}(b)) \Big\}^2 \nonumber\\ 
	& = T(\tilde h,\tau)^2 \nonumber 
\end{align}
\begin{align}
	V_n(\hat h, k) \stackrel{d}{\to}& \frac{1}{(1-2b)^2} \Big[\int_b^{\tau} \big\{B^{(\tilde h)}(s)-B^{(\tilde h)}(b) - \frac{(s-b)}{\tau-b}(B^{(\tilde h)}(\tau)-B^{(\tilde h)}(b))\big\}^2 ds  \nonumber\\
	&\qquad\qquad + \int_{\tau}^{1-b} \big\{B^{(\tilde h)}(1-b)-B^{(\tilde h)}(s) - \frac{1-b-s}{1-b-\tau} (B^{(\tilde h)}(1-b)-B^{(\tilde h)}(\tau)) \big\}^2 ds\Big] \nonumber\\
	& =V(\tilde h,\tau),\nonumber
\end{align}
so the limiting distribution of $\tilde G_n$ is $\sup_{\tau \in [b,1-b]}\frac{T(\tilde h,\tau)^2}{V({\tilde h},\tau)}$. By the orthogonal increment property of Brownian motion, we have $\{B^{(j)}(r)\}_{r\in[0,b],j\in\{1,\dots,p\}}$, $\{B^{(k)}(s)-B^{(k)}(b)\}_{s\in[b,\tau],k\in\{1,\dots,p\}}$, $\{B^{(l)}(1-b)-B^{(l)}(t)\}_{t\in[\tau,1-b],l\in\{1,\dots,p\}}$ and $\{B^{(m)}(x)-B^{(m)}(1-b)\}_{x\in[1-b,1],m\in\{1,\dots,p\}}$ are independent, which implies that the conditional distribution of $\sup_{\tau \in [0,1]}\frac{T(\tilde h,\tau)^2}{V(\tilde h,\tau)}$ given $\tilde h$ is 

\footnotesize 
\begin{align}
	&\sup_{\tau \in [b,1-b]}\frac{T(\tilde h,\tau)^2}{V(\tilde h,\tau)} \Big| \tilde h \stackrel{d}{=} \sup_{\tau \in [b,1{-}b]}\frac{\frac{T(\tilde h,\tau)^2}{{ \Gamma_{\tilde h \tilde h}}}}{\frac{V(\tilde h,\tau)}{{ \Gamma_{\tilde h \tilde h}}}} \Big| \tilde h \nonumber\\
	\stackrel{d}{=} &\sup_{\tau \in [b,1{-}b]}  \frac{(1{-}2b)\Big[B(\tau){-}B(b){-} \frac{\tau{-}b}{1{-}2b}(B(1{-}b){-}B(b))  \Big]^2}  {\int_{b}^{\tau} \Big\{B(s){-}B(b){-}\frac{s{-}b}{\tau{-}b}(B(\tau){-}B(b))\Big\}^2 ds + \int_{\tau}^{1{-}b}\Big\{ B(1{-}b){-}B(s){-}\frac{1{-}b{-}s}{1{-}b{-}\tau}(B(1{-}b){-}B(\tau))    \Big\}^2 ds       }  \label{cpeq11}\\
	\stackrel{d}{=}& \sup_{\tau \in [b,1{-}b]}  \frac{(1{-}2b)\Big[B(\tau){-}B(b){-} \frac{\tau{-}b}{1{-}2b}(B(1{-}b){-}B(b))  \Big]^2}  {\int_{0}^{\tau{-}b} \Big\{B(s+b){-}B(b){-}\frac{s}{\tau{-}b}(B(\tau){-}B(b))\Big\}^2 ds + \int_{\tau{-}b}^{1{-}2b}\Big\{ B(1{-}b){-}B(s+b){-}\frac{1{-}2b{-}s}{1{-}b{-}\tau}(B(1{-}b){-}B(\tau))    \Big\}^2 ds       }  \label{cpeq21}\\ 
	\stackrel{d}{=}& \sup_{\tau \in [b,1{-}b]}  \frac{(1{-}2b)\Big[B(\tau{-}b){-} \frac{\tau{-}b}{1{-}2b}(B(1{-}2b)) \Big]^2}  {\int_{0}^{\tau{-}b} \Big\{B(s){-}\frac{s}{\tau{-}b}B(\tau{-}b)\Big\}^2 ds + \int_{\tau{-}b}^{1{-}2b}\Big\{ B(1{-}2b){-}B(s){-}\frac{1{-}2b{-}s}{1{-}b{-}\tau}(B(1{-}2b){-}B(\tau{-}b)) \Big\}^2 ds } \label{cpeq31}\\
	\stackrel{d}{=}& \sup_{r \in [0,1]}  \frac{(1{-}2b)\Big[B((1{-}2b)r){-} rB(1{-}2b) \Big]^2}  {\int_{0}^{(1{-}2b)r} \Big\{B(s){-}\frac{s}{(1{-}2b)r}B((1{-}2b)r)\Big\}^2 ds + \int_{(1{-}2b)r}^{1{-}2b}\Big\{ B(1{-}2b){-}B(s){-}\frac{1{-}2b{-}s}{(1{-}2b)(1{-}r)}(B(1{-}2b){-}B((1{-}2b)r)) \Big\}^2 ds } \label{cpeq41} \\
	\stackrel{d}{=}& \sup_{r \in [0,1]}  \frac{\Big[B(r){-} rB(1) \Big]^2}  {\int_{0}^{r} \Big\{B(s){-}\frac{s}{r}B(r)\Big\}^2 ds + \int_{r}^{1}\Big\{ B(1){-}B(s){-}\frac{1{-}s}{1{-}r}(B(1){-}B(r)) \Big\}^2 ds } \label{cpeq51}
\end{align}
\normalsize
For Equation (\ref{cpeq11}), we used the fact that for any fixed $j$, $\frac{T(j,\tau)^2}{V( j,\tau)}$ is independent of $ \tilde h$. For Equation (\ref{cpeq21}) we used the change of variable property for integration. For Equation (\ref{cpeq31}) we used the property $\{B(r+b)-B(b):r\in[0,1-2b]\}\stackrel{d}{=}\{B(r):r\in[0,1-2b]\}$. For Equation (\ref{cpeq41}) we substitute $\tau-b$ with $(1-2b)r$ and for Equation (\ref{cpeq51}) we used the change of variable property for integration and the fact $\{B((1-2b)r):r\in[0,1]\} \stackrel{d}{=}\{\sqrt{1-2b}B(r):r\in[0,1]\}$.

To prove part (ii).1 of Theorem \ref{theorem_cp}, note that there exist a sequence $\{j_n\}\in \{1,2,\dots,p\}$ such that $\sqrt{n}|\Delta_n^{j_n}|\to \infty$. By the definition of $\hat j$ we have
\begin{align}
	&\frac {\fb^{-1} \max\limits_{j=1,2,\dots,p}[\mathbf{e}_j^\top(\boldsymbol{M}_{1,\fb}+\boldsymbol{\Delta}_n)]^2 +  \frac {2}{\sqrt{\fb}}\max\limits_{j=1,2,\dots,p}    |\mathbf{e}_j^\top(\boldsymbol{M}_{1,\fb}+\boldsymbol{\Delta}_n)|\sqrt{\fb}|\Delta_n^{\hat j}|     +   \fb[\Delta_n^{\hat j}]^2           } {\min\limits_{j=1,2,\dots,p}\hat \vartheta_j^2 }  \nonumber \\
	\geq& \frac {\fb^{-1} [\mathbf{e}_{\hat j}^\top\boldsymbol{M}_{1,\fb}]^2} {\hat \vartheta_{\hat j}^2  }   		\geq \frac {\fb^{-1} [\mathbf{e}_{j_n}^\top\boldsymbol{M}_{1,\fb}]^2} {\hat \vartheta_{j_n}^2  } \nonumber \\
	\geq& \frac {\fb^{-1} \min\limits_{j=1,2,\dots,p}[\mathbf{e}_j^\top(\boldsymbol{M}_{1,\fb}+\boldsymbol{\Delta}_n)]^2 -  \frac {2}{\sqrt{\fb}}\max\limits_{j=1,2,\dots,p}    |\mathbf{e}_j^\top(\boldsymbol{M}_{1,\fb}+\boldsymbol{\Delta}_n)|\sqrt{\fb}|\Delta_n^{j_n}|     +  \fb [\Delta_n^{j_n}]^2           } {\max\limits_{j=1,2,\dots,p}\hat \vartheta_j^2 }. \label{eq_proof_cp1}
\end{align}
Since $\{\max\limits_{j=1,2,\dots,p}\hat \vartheta_j^2 \}^{-1}$, $\frac {2}{\sqrt{\fb}}\max\limits_{j=1,2,\dots,p}    |\mathbf{e}_j^\top(\boldsymbol{M}_{1,\fb}+\boldsymbol{\Delta}_n)|$ and $\fb^{-1}\min\limits_{j=1,2,\dots,p}[\mathbf{e}_j^\top(\boldsymbol{M}_{1,\fb}+\boldsymbol{\Delta}_n)]^2$ are all of the order $O_p(1)$, the right hand side of Inequality (\ref{eq_proof_cp1}) diverges to $\infty$ in probability. Since $\max\limits_{j=1,2,\dots,p}[\mathbf{e}_j^\top(\boldsymbol{M}_{1,\fb}+\boldsymbol{\Delta}_n)]^2$ and $\{\min\limits_{j=1,2,\dots,p}\hat \vartheta_j^2 \}^{-1}$ are also of order $O_p(1)$, we have that $\sqrt{n}|\Delta_n^{\hat j}|\stackrel{p}{\to}\infty$. Since 
\begin{align}
	G_n\geq & \frac{T_n(\hat j, k^\ast)^2}{V_n(\hat j,k^\ast)} \nonumber\\
	=& \frac{\Big[\frac{1}{\sqrt{n-2\fb}}\sum_{t=\fb+1}^{k^\ast}\big\{(X^{\hat j}_{t}-\mu_t^{\hat j})-\frac {1}{n-2\fb} \sum_{i=\fb+1}^{n-\fb}(X^{\hat j}_{i}-\mu_i^{\hat j})\big\}-\frac{(k^\ast-\fb)(n-\fb-k^\ast)\Delta_n^{\hat j}}{(n-2\fb)^{3/2}}\Big]^2}{V_n(\hat j,k^\ast)} \nonumber \\
	\geq& \frac{\min\limits_{j=1,2,\dots,p}[A(j , k^\ast)]^2- 2\max\limits_{j=1,2,\dots,p}|A(j,k^\ast)||C_n|+ C_n^2}{\max\limits_{j=1,2,\dots,p}V_n( j,k^\ast)}\label{eq_proof_cp},
\end{align}
where $A(j,k^\ast)=\frac{1}{\sqrt{n-2\fb}}\sum_{t=\fb+1}^{k^\ast}\big\{(X^{ j}_{t}-\mu_t^{ j})-\frac {1}{n-2\fb} \sum_{i=\fb+1}^{n-\fb}(X^{ j}_{i}-\mu_i^{ j})\big\}$ and $C_n=\frac{(k^\ast-\fb)(n-\fb-k^\ast)\Delta_n^{\hat j}}{(n-2\fb)^{3/2}}$. Since $[\max_{j=1,2,\dots,p}V_n( j,k^\ast)]^{-1}$ , $\min_{j=1,2,\dots,p}[A(j,k^\ast)]^2$ and $\max_{j=1,2,\dots,p}|A(j,k^\ast)|$ are all of order $O_p(1)$ and $|C_n|\stackrel{p}{\to}\infty$, the right hand side of Inequality (\ref{eq_proof_cp}) diverges to $\infty$ in probability and part (ii).1 of Theorem \ref{theorem_cp} is proved.

To prove part (ii).2 and (ii).3, for $r\in[0,1]$ define $\mathbf{F}_n(r)=n^{-1/2}\sum_{t=1}^\fr(\boldsymbol{X}_{t} -\boldsymbol{\mu}_t)$, $ \mathbf{\tilde F}_n(r)=n^{-1/2}\sum_{t=1}^{k^\ast}(\boldsymbol{X}_{t} -\boldsymbol{\mu}_t)+n^{-1/2}\sum_{t=k^\ast+1}^{\fr}(\boldsymbol{X}_{t} -\boldsymbol{\mu}_t+\boldsymbol{\Delta}_n)$ and $\mathbf{H}_n(r)=n^{-1/2}\sum_{t=k^\ast+1}^\fr\boldsymbol{\Delta}_n$ (For any nonnegative integer $a,b$ and sequence $\{y_t\}\in \R^d$, we set $\sum_{t=a}^{b}y_t=0$ if $a<b$). By Assumption \ref{assump_cp}, $\mathbf{F}_n(r)=\mathbf{\tilde F}_n(r)-\mathbf{H}_n(r)\Rightarrow \boldsymbol{\Gamma}^{1/2}\mathbf{B}_p(r)$. Since $\mathbf{H}_n(r)$ converges to $\mathbf{H}(r)=(r-r_0)\mathbf{c}\mathbf{1}_{r\geq r_0}$ in the uniform metric and $\mathbf{H}(r)$ is a continuous function in $D^p[0,1]$, we have 
$$\mathbf{\tilde F}_n(r)\Rightarrow \boldsymbol{\Gamma}^{1/2}\mathbf{B}_p(r)+\mathbf{H}(r).$$
Define $H_j(r)=\frac{1}{\sqrt{\Gamma_{ j  j}}} \mathbf{e}_j^\top\mathbf{H}(r)$, $C(r)=B(r)+H_j(r)-B(b)$, $C'(r)=B(r)+H_j(r)$, $D(r)=B(r-b)+H_j(r)$, $a_s=\frac{s}{(1-2b)r}$ and $b_s = \frac{1-2b-s}{(1-2b)(1-r)}$. From this we can use the same conjecture as in the proof of part (i) of this theorem (by replacing $\mathbf{F}_n(r)$ and $\boldsymbol{\Gamma}^{1/2}\mathbf{B}_p(r)$ with $\mathbf{\tilde F}_n(r)$ and $\boldsymbol{\Gamma}^{1/2}\mathbf{B}_p(r)+\mathbf{H}(r)$) and derive the conditional distribution
\footnotesize
\begin{align}
	&G^\ast\big|_{j^\ast=j} \nonumber\\
	\stackrel{d}{=} &\sup_{r \in [b,1{-}b]}  \frac{(1{{-}}2b)\Big\{C(r){-} \frac{r{-}b}{1{-}2b}(C(1-b))  \Big\}^2}  {\int_{b}^{r} \Big\{C(s){-}\frac{s{-}b}{r{-}b}(C(r))\Big\}^2 ds {+} \int_{r}^{1{-}b}\Big\{ C'(1-b){-}C'(s){-}\frac{1{-}b{-}s}{1{-}b{-}r}(C'(1-b){-}C'(r))    \Big\}^2 ds    }  \nonumber \\
	\stackrel{d}{=} &\sup_{r \in [b,1{-}b]}  \frac{(1{-}2b)\Big\{C(r){-} \frac{r{-}b}{1{-}2b}C(1-b) \Big\}^2}  {\int_{0}^{r{-}b} \Big\{C(s+b){-}\frac{s}{r{-}b}C(r)\Big\}^2 ds {+} \int_{r{-}b}^{1{-}2b}\Big\{ C'(1-b){-}C'(s+b){-}\frac{1{-}2b{-}s}{1{-}b{-}r}(C'(1-b){-}C'(r))    \Big\}^2 ds       }  \nonumber \\
	\stackrel{d}{=} &\sup_{r \in [b,1{-}b]}  \frac{(1{-}2b)\Big\{D(r){-} \frac{r{-}b}{1{-}2b}D(1-b) \Big\}^2}  {\int_{0}^{r{-}b} \Big\{D(s+b){-}\frac{s}{r{-}b}D(r)\Big\}^2 ds {+} \int_{r{-}b}^{1{-}2b}\Big\{ D(1-b){-}D(s+b){-}\frac{1{-}2b{-}s}{1{-}b{-}r}(D(1-b){-}D(r))    \Big\}^2 ds       }  \nonumber \\
	\stackrel{d}{=} &\sup_{r \in [b,1{-}b]}  \frac{(1{-}2b)\big\{D((1{-}2b)r{+}b){-} rD(1{-}b) \big\}^2}  {\int\limits_{0}^{(1{-}2b)r} \big\{D(s{+}b){-}a_sD((1{-}2b)r{+}b)\big\}^2 ds {+} \int\limits_{(1{-}2b)r}^{1{-}2b}\big\{ D(1{-}b){-}D(s{+}b){-}b_s(D(1{-}b){-}D((1{-}2b)r{+}b))    \big\}^2 ds  }  \nonumber\\		
	\stackrel{d}{=} &\sup_{r \in [b,1{-}b]}  \frac{\Big\{D((1{-}2b)r{+}b){-} rD(1{-}b) \Big\}^2}  {\int\limits_{0}^{r} \Big\{D((1{-}2b)s{+}b){-}\frac{s}{r}D((1{-}2b)r{+}b)\Big\}^2 ds {+} \int\limits_{r}^{1}\Big\{ D(1{-}b){-}D((1{-}2b)s{+}b){-}\frac{1{-}s}{1{-}r}(D(1{-}b){-}D((1{-}2b)r{+}b))    \Big\}^2 ds  }  \nonumber\\		
	\stackrel{d}{=} &\sup_{r \in [0,1]}  \frac{\Big\{B'(r)- rB'(1) \Big\}^2}  {\int_{0}^{r} \Big\{B'(s)-\frac{s}{r}B'(r)\Big\}^2 ds + \int_{r}^{1}\Big\{ B'(1)-B'(s)-\frac{1-s}{1-r}(B'(1)-B'(r)) \Big\}^2 ds }\nonumber
\end{align}
\normalsize
So part (ii).2 and (ii).3 are proved.
\qed

\subsection{Proof of Theorem \ref{th_asymp_indi}}\label{app_asymp_indi}

Following the same argument as at the beginning of the proof of Theorem \ref{th_high_dim2}, we treat $\{\mathbf{X}_{nt}\}_{t=1}^n$ as $\{\mathbf{X}'_{nt}\}_{t=1}^n$ and assume the former is defined in the same probability space as $\mathbf{W}_n(r)$. As in the proofs of Theorem \ref{th_high_dim2} and Theorem \ref{th_high_dim4}, denote the statistics defined using the Brownian motion as 
\begin{align}
	\hat{j}^{(W)}_{n}&=\argmax_{j = 1,2,\dots,p_n}  \frac{[{W}_{nj}(\alpha)]^2}{\sigma_{nj}^2} \nonumber \\
	T^{(W)}_{n}(\alpha,j)&=\frac{(1-\alpha)[{W}_{nj}(1)-{W}_{nj}(\alpha)]^2}{\int_{\alpha}^{1}[W_{nj}(s)-W_{nj}(\alpha)-\frac{s-\alpha}{1-\alpha}(W_{nj}(1)-W_{nj}(\alpha))]^2ds}. \nonumber \\
	\boldsymbol{{P}}_n&=(\frac{W_{n1}(\alpha)}{\sigma_{n1}},\dots,\frac{W_{np_n}(\alpha)}{\sigma_{np_n}})^\top \nonumber \\
	Q^{(W)}_{n}(\alpha) &=\frac {(1{-}\alpha) \Big\{\boldsymbol{{P}}_n^\top\big[\mathbf{W}_n(1){-}\mathbf{W}_n(\alpha)\big]\Big\}^2} {\int_\alpha^1\big\{ \boldsymbol{{P}}_n^\top[\mathbf{W}_n (r){-}\mathbf{W}_n (\alpha)      -\frac{r{-}\alpha}{1{-}\alpha}(\mathbf{W}_n (1){-}\mathbf{W}_n (\alpha))]\big\}^2 dr}.\nonumber
\end{align}

In the following proof we use $C$ to denote a generic positive constant whose value might vary from line to line. 
We have shown that $T^{(D)}_{n}(\alpha,\hat j_n) - T^{(W)}_{n}(\alpha,\hat j^{(W)}_n)\stackrel{p}{\to}0$ and $Q^{(D)}_{n}(\alpha) - Q^{(W)}_{n}(\alpha)\stackrel{p}{\to}0$, which implies, by Slutsky's Theorem, 
$$\big(T^{(D)}_{n}(\alpha,\hat j_n),Q^{(D)}_{n}(\alpha)\big) - \big(T^{(W)}_{n}(\alpha,\hat j^{(W)}_n), Q^{(W)}_{n}(\alpha)\big)\stackrel{p}{\to}0.$$
So it suffice to show 
\begin{align} \label{eq_ap2}
	\Big|P\big(T^{(W)}_{n}(\alpha,\hat j^{(W)}_n)\in A; Q^{(W)}_{n}(\alpha)\in B     \big)-P\big(T^{(W)}_{n}(\alpha,\hat j^{(W)}_n)\in A\big)P\big( Q^{(W)}_{n}(\alpha)\in B\big)\Big|\to 0
\end{align}
for any $A,B\in \mathcal{B}(\R)$. From Section 3.3 in \cite{chung2000} we know that a sufficient condition for Equation (\ref{eq_ap2}) to hold is
\begin{align} \label{app_eq1}
	\Big|P\big(T^{(W)}_{n}(\alpha,\hat j^{(W)}_n)\leq a; Q^{(W)}_{n}(\alpha)\leq b     \big)-P\big(T^{(W)}_{n}(\alpha,\hat j^{(W)}_n)\leq a\big)P\big( Q^{(W)}_{n}(\alpha)\leq b\big)\Big|\to 0
\end{align}
for any $a,b,\in (0,\infty)$. Let $\mathcal{B}([0,\alpha])$ be the $\sigma$-algebra generated by $\{\mathbf{W}_n(r)\}_{r\in[0,\alpha]}$ and $\mathcal{B}([\alpha,1])$ be the $\sigma$-algebra generated by $\{\mathbf{W}_n(r)-\mathbf{W}_n(\alpha)\}_{r\in[\alpha,1]}$. It is clear that $\mathcal{B}([0,\alpha])$ and $\mathcal{B}([\alpha,1])$ are independent and $	\hat{j}^{(W)}_{n},	\boldsymbol{{P}}_n$ are $\mathcal{B}([0,\alpha])$ measurable. Note that conditioning on $	\hat{j}^{(W)}_{n}$ and $ \boldsymbol{{P}}_n$, 
\begin{align}
	\{(W_{n	\hat{j}^{(W)}_{n}}(r){-}W_{n	\hat{j}^{(W)}_{n}}(\alpha),	\boldsymbol{{P}}_n^\top(\mathbf{W}_n(r){-}\mathbf{W}_n(\alpha))^\top  \}_{r\in[\alpha,1]}{=}& \{(\tilde W_1(r{-}\alpha),\tilde W_2(r{-}\alpha))^\top\}_{r\in[\alpha,1]}  \nonumber \\
	\stackrel{d}{=}& \Sigma_1^{1/2}\{(B_1(r{-}\alpha),B_2(r{-}\alpha))^\top\}_{r\in[\alpha,1]}, \nonumber
\end{align}
where $\{(B_1(r),B_2(r))^\top\}_{r\in[0,\infty)}$ is a standard bivariate Brownian motion,
$$\Sigma_1=\begin{bmatrix}
	\varsigma_1^2 &\varrho\varsigma_1\varsigma_2 \\
	\varrho\varsigma_1\varsigma_2 &\varsigma_2^2
\end{bmatrix},$$
$\varsigma_1^2 = 	\mathbf{e}_{\hat{j}^{(W)}_{n}}^\top\boldsymbol{\Gamma}_n\mathbf{e}_{\hat{j}^{(W)}_{n}}$, $\varsigma_2^2 = \boldsymbol{{P}}_n^\top \boldsymbol{\Gamma}_n\boldsymbol{{P}}_n$ and $\varrho = \mathbf{e}_{\hat{j}^{(W)}_{n}}^\top\boldsymbol{\Gamma}_n  \boldsymbol{{P}}_n/(\varsigma_1\varsigma_2)$. Let $\Sigma_2 = diag\{\varsigma_1^2, \varsigma_2^2\}$, then we have $||\Sigma_2^{-1/2}\Sigma_1\Sigma_2^{-1/2}-I_2||_F = \sqrt{2}\varrho $, where $||\cdot||_F$ denote the Frobenius norm.

Denote $N(W) = \int_{0}^{1}(W(r)-rW(1))^2dr$, and let $\mathcal{\tilde B}$ be the $\sigma$ algebra generated by $\{ \varsigma_1,\varsigma_2,\varrho,N(\tilde W_1),N(\tilde W_2)\}$, then the left hand side of Equation (\ref{app_eq1}) can be written as 
\small
\begin{align}
	&\Big|P\big(T^{(W)}_{n}(\alpha,\hat j^{(W)}_n)\leq a; Q^{(W)}_{n}(\alpha)\leq b     \big)-P\big(T^{(W)}_{n}(\alpha,\hat j^{(W)}_n)\leq a\big)P\big( Q^{(W)}_{n}(\alpha)\leq b\big)\Big| \nonumber \\
	=& \Bigg| E\Bigg\{  P\Big(\frac{\tilde W^2_1(1)}{N(\tilde W_1)}\leq a;\frac{\tilde W^2_2(1)}{N(\tilde W_2)}\leq b\Big| \varsigma_1,\varsigma_2,\varrho\Big)         {-}   P(U_1\leq a)P(U_1\leq b)          \Bigg\}               \Bigg| \nonumber \\
	=& \Bigg|  E\Bigg\{ E\Big\{ P\Big(\frac{\tilde W^2_1(1)}{N(\tilde W_1)}\leq a;\frac{\tilde W^2_2(1)}{N(\tilde W_2)}\leq b\Big| \mathcal{\tilde B}\Big)   {-}    P\Big(\frac{\tilde W^2_1(1)}{N(\tilde W_1)}\leq a\Big|\mathcal{\tilde B}\Big) P\Big(\frac{\tilde W^2_2(1)}{N(\tilde W_2)}\leq b\Big|\mathcal{\tilde B}\Big)     \Big|  \varsigma_1,\varsigma_2,\varrho      \Big\}  \nonumber \\
	&\quad\quad {+}  E\Big\{   P\Big(\frac{\tilde W^2_1(1)}{N(\tilde W_1)}\leq a\Big|\mathcal{\tilde B}\Big) P\Big(\frac{\tilde W^2_2(1)}{N(\tilde W_2)}\leq b\Big|\mathcal{\tilde B}\Big)     \Big|  \varsigma_1,\varsigma_2,\varrho      \Big\} {-}   P(U_1\leq a)P(U_1\leq b)                                   \Bigg\}   \Bigg| \nonumber \\
	\leq & E|I^{(1)}|+E|I^{(2)}|. \nonumber
\end{align}
Here
\begin{align}
	E|I^{(1)}| = & E\Bigg|  E\Big\{ P\Big(\frac{\tilde W^2_1(1)}{N(\tilde W_1)}\leq a;\frac{\tilde W^2_2(1)}{N(\tilde W_2)}\leq b\Big| \mathcal{\tilde B}\Big)   {-}    P\Big(\frac{\tilde W^2_1(1)}{N(\tilde W_1)}\leq a\Big|\mathcal{\tilde B}\Big) P\Big(\frac{\tilde W^2_2(1)}{N(\tilde W_2)}\leq b\Big|\mathcal{\tilde B}\Big)     \Big|  \varsigma_1,\varsigma_2,\varrho      \Big\}    \Bigg| \nonumber \\
	\leq &  E \Bigg\{  E\Big\{ \Bigg|P\Big(\frac{\tilde W^2_1(1)}{N(\tilde W_1)}\leq a;\frac{\tilde W^2_2(1)}{N(\tilde W_2)}\leq b\Big| \mathcal{\tilde B}\Big)   {-}    P\Big(\frac{\tilde W^2_1(1)}{N(\tilde W_1)}\leq a\Big|\mathcal{\tilde B}\Big) P\Big(\frac{\tilde W^2_2(1)}{N(\tilde W_2)}\leq b\Big|\mathcal{\tilde B}\Big)   \Bigg|   \Big|  \varsigma_1,\varsigma_2,\varrho      \Big\}   \Bigg\} \nonumber \\
	\leq & 3\sqrt{2}E|\varrho|,
\end{align}
where the last inequality follows from Lemma \ref{lemma2018} and the fact that $\{N(\tilde W_1),N(\tilde W_2)\}$ are independent of $\{\tilde W_2(1),\tilde W_1(1)\}$. To show $E|\varrho|\to 0$, note that $|\varrho|\leq 1$ and according to part (a) of the proof of Theorem \ref{th_high_dim4}, $p_n\varsigma_2^{-2} = O_p(1)$ and $\varsigma_2^2 = \mathbf{B}_{p_n}(\alpha)^\top\boldsymbol{\Lambda}_n\mathbf{B}_{p_n}(\alpha)\stackrel{d}{=}{\alpha}\sum\limits_{i=1}^{p_n}\lambda_i\chi_i$  where $\chi_i$ are i.i.d chi-square distributed with one degree of freedom. So it suffice to show $p_n^{-1}( \mathbf{e}_{\hat{j}^{(W)}_{n}}^\top\boldsymbol{\Gamma}_n  \boldsymbol{{P}}_n)^2/\varsigma_1^2 = o_p(1)$. Since 
\begin{align}
	( \mathbf{e}_{\hat{j}^{(W)}_{n}}^\top\boldsymbol{\Gamma}_n  \boldsymbol{{P}}_n)^2/\varsigma_1^2  \leq \max_{j=1,2,\dots,p_n}\Big(\frac{\mathbf{e}_{j}^\top\boldsymbol{\Gamma}_n^{1/2}}{\sqrt{\mathbf{e}_{j}^\top\boldsymbol{\Gamma}_n\mathbf{e}_{j}}}\boldsymbol{\Lambda}_n^{1/2}\mathbf{B}_{p_n}(\alpha)\Big)^2\leq \lambda_{p_n}\max_{j=1,2,\dots,p_n}Z_j, \nonumber
\end{align}
where $Z_j =\Big(\frac{\mathbf{e}_{j}^\top\boldsymbol{\Gamma}_n^{1/2}}{\sqrt{\mathbf{e}_{j}^\top\boldsymbol{\Gamma}_n\mathbf{e}_{j}}}\boldsymbol{\Lambda}_n^{1/2}\mathbf{B}_{p_n}(\alpha)\Big)^2 /\big(\frac{\mathbf{e}_{j}^\top\boldsymbol{\Gamma}_n^{1/2}   \boldsymbol{\Lambda}_n   \boldsymbol{\Gamma}_n^{1/2}   \mathbf{e}_{j} }{\mathbf{e}_{j}^\top\boldsymbol{\Gamma}_n\mathbf{e}_{j}}  \big)$ follows chi-square distribution with one degree of freedom. From Lemma \ref{lm2}, we know $\max_{j=1,2,\dots,p_n}Z_j=o_p(p_n^\epsilon)$ for any $\epsilon>0$, so $E|\varrho|\to 0$ follows from Assumption \ref{assum_indi}.

Next, we show 
$$E|I^{(2)}| = E\Bigg| E\Big\{   P\Big(\frac{\tilde W^2_1(1)}{N(\tilde W_1)}\leq a\Big|\mathcal{\tilde B}\Big) P\Big(\frac{\tilde W^2_2(1)}{N(\tilde W_2)}\leq b\Big|\mathcal{\tilde B}\Big)     \Big|  \varsigma_1,\varsigma_2,\varrho      \Big\} {-}   P(U_1\leq a)P(U_1\leq b)                                   \Bigg|\to 0.$$
Note that conditioning on $\{ \varsigma_1,\varsigma_2,\varrho\}$, $\{(\tilde W_1(r),\tilde W_2(r))^\top\}_{r\in[0,1]}\stackrel{d}{=}\{(\varsigma_1B_1(r),\varrho\varsigma_2B_1(r){+}\sqrt{1{-}\varrho^2}\varsigma_2B_2(r))^\top\}_{r\in[0,1]}$. Denote $a(\varrho) =\varrho^2bN(B_1){-}\varrho^2bN(B_2)+b\varrho\sqrt{1{-}\varrho^2}\tilde N(B_1,B_2) $ where $\tilde N(B_1,B_2)=2\int_{0}^{1}(B_1(r)-rB_1(1))(B_2(r)-rB_2(1))dr$, we have 
\begin{align}
	&E\Big\{   P\Big(\frac{\tilde W^2_1(1)}{N(\tilde W_1)}\leq a\Big|\mathcal{\tilde B}\Big) P\Big(\frac{\tilde W^2_2(1)}{N(\tilde W_2)}\leq b\Big|\mathcal{\tilde B}\Big)     \Big|  \varsigma_1,\varsigma_2,\varrho      \Big\} \nonumber \\
	=& E\Big\{  \mathcal{Q}\big(aN(\tilde W_1)/\varsigma_1^2\big)  \mathcal{Q}\big(bN(\tilde W_2)/\varsigma_2^2\big) \Big|  \varsigma_1,\varsigma_2,\varrho      \Big\} \nonumber \\
	=&E\Big\{  \mathcal{Q}\big(aN(B_1)\big) \mathcal{Q}\big(bN(\varrho B_1{+}\sqrt{1{-}\varrho^2}B_2)\big) \Big|  \varsigma_1,\varsigma_2,\varrho      \Big\} \nonumber \\
	= & E\Big\{  \mathcal{Q}\big(aN(B_1)\big) \mathcal{Q}\big(bN(B_2){+}a(\varrho)\big) \Big|  \varsigma_1,\varsigma_2,\varrho      \Big\} \nonumber \\
	=&  E\Big\{  \mathcal{Q}\big(aN(B_1)\big) \mathcal{Q}\big(bN(B_2)\big)\Big|  \varsigma_1,\varsigma_2,\varrho      \Big\}                  {+} E\Big\{ \mathcal{Q}\big(aN(B_1)\big)\mathcal{Q}'\big(bN(B_2)+\xi a(\varrho)\big)a(\varrho) \Big|  \varsigma_1,\varsigma_2,\varrho      \Big\}, \nonumber
\end{align}
where $\mathcal{Q}(\cdot),\mathcal{Q}'(\cdot)$ are the cdf and pdf of chi-square distribution with one degree of freedom and $\xi\in(0,1)$. Note that $  E\Big\{  \mathcal{Q}\big(aN(B_1)\big) \mathcal{Q}\big(bN(B_2)\big)\Big|  \varsigma_1,\varsigma_2,\varrho      \Big\}    = P(U_1\leq a)P(U_1\leq b)    $, so we have 
\begin{align}
	E|I^{(2)}| = & E\Bigg|E\Big\{ \mathcal{Q}\big(aN(B_1)\big)\mathcal{Q}'\big(bN(B_2)+\xi a(\varrho)\big)a(\varrho) \Big|  \varsigma_1,\varsigma_2,\varrho      \Big\} \Bigg|\nonumber \\
	\leq & E\Bigg\{E\Big\{\mathcal{Q}'\big(bN(B_2)+\xi a(\varrho)\big)|a(\varrho)| \Big|  \varsigma_1,\varsigma_2,\varrho      \Big\} \Bigg\} \label{eqz1}\\
	\leq& C E\Bigg\{E\Big\{\big(bN(B_2)+\xi a(\varrho)\big)^{-1/2}|a(\varrho)| \Big|  \varsigma_1,\varsigma_2,\varrho      \Big\} \Bigg\} \label{eqz2} \\
	\leq& CE\Bigg\{     \sqrt{   E\Big\{\big(bN(B_2)+\xi a(\varrho)\big)^{-1} \Big|  \varsigma_1,\varsigma_2,\varrho      \Big\}      E\Big\{|a(\varrho)|^2 \Big|  \varsigma_1,\varsigma_2,\varrho      \Big\}                   }                                                     \Bigg\}.\label{eqz3} 
\end{align}
In the above display, Equation (\ref{eqz1}) follows from the fact that $|\mathcal{Q}\big(aN(B_1)\big)|\leq 1$, Equation (\ref{eqz2}) follows since $\mathcal{Q}'(x)\leq Cx^{-1/2}$ for any $x>0$ and Equation (\ref{eqz3}) follows from Holder's Inequality.

Note that $|\tilde N(B_1,B_2)|\leq 2 (N(B_1)N(B_2))^{1/2}$ and from Lemma \ref{lm1} and \cite{tolmatz2002}, we know that $N(B_1),N(B_2)$ has finite $z$-th moment for any $z\in\Z$. Hence  we have $E\Big\{|a(\varrho)|^2 \Big|  \varsigma_1,\varsigma_2,\varrho      \Big\}   \leq C\varrho^2$. 
Note that $bN(B_2)+ a(\varrho) = bN(\varrho B_1{+}\sqrt{1{-}\varrho^2}B_2)>0$ and $bN(B_2)>0$, which implies $\big(bN(B_2)+\xi a(\varrho)\big)^{-1}\leq \big(bN(B_2)+ a(\varrho)\big)^{-1}$ when $a(\varrho)<0$ and $\big(bN(B_2)+\xi a(\varrho)\big)^{-1}\leq \big(bN(B_2)\big)^{-1}$ when $a(\varrho)>0$. This implies that $\big(bN(B_2)+\xi a(\varrho)\big)^{-1}\leq \big(bN(B_2)+ a(\varrho)\big)^{-1}{+}\big(bN(B_2)\big)^{-1}$, which yields 
\begin{align}
	E\Big\{\big(bN(B_2)+\xi a(\varrho)\big)^{-1} \Big|  \varsigma_1,\varsigma_2,\varrho      \Big\}   \leq& E\Big\{\big(bN(B_2)+ a(\varrho)\big)^{-1}\Big|  \varsigma_1,\varsigma_2,\varrho      \Big\} +E\Big\{\big(bN(B_2)\big)^{-1} \Big|  \varsigma_1,\varsigma_2,\varrho      \Big\} \nonumber \\
	=&E\Big\{\big(bN(\varrho B_1{+}\sqrt{1{-}\varrho^2}B_2)\big)^{-1}\Big|  \varsigma_1,\varsigma_2,\varrho      \Big\} +C \nonumber \\
	=& E\Big\{\big(bN( B_1)\big)^{-1}\Big|  \varsigma_1,\varsigma_2,\varrho      \Big\} +C \nonumber\\
	=&2C. \nonumber
\end{align} 
This implies $E|I^{(2)}| \leq CE|\varrho| \to 0$ and the theorem is proved.

\qed

\section{Auxiliary Lemmas and Its Proofs}\label{ch8}

\begin{lemma}\label{lm_jrssb1}
	Let $\mathbf{B}_p(r)=(B^1(r),\dots,B^p(r))^\top$ be a $p$-dimensional standard Brownian motion and $\mathbf{W}_p(r) = \boldsymbol{\Gamma}^{1/2}\mathbf{B}_p(r)$ where $\boldsymbol{\Gamma}$ is a positive definite matrix. For any $\mathbf{x}(r) = (x^1(r),\dots,x^p(r))^\top\in D^p[0,1]$, define 
	$$\tilde f(\mathbf{x}) = \argmax_{j = 1,2,\dots,p} \frac{\big\{x^j(\alpha) \big\}^2}{a_j},$$
	where $\{a_1,a_2\dots,a_j\}$ are some positive constants. Then both $\tilde f(\mathbf{x})$ and the functional $F(\cdot):D^p[0,1]\to\R$ defined as 
	\begin{align}
		F(\mathbf{x}) = \frac{\big\{x^{\tilde f(\mathbf{x})}(1)-x^{\tilde f(\mathbf{x})}(\alpha)\big\}^2}{\frac{1}{1-\alpha}\int_\alpha^1\big\{x^{\tilde f(\mathbf{x})}(r)-\frac{1-r}{1-\alpha}x^{\tilde f(\mathbf{x})}(\alpha)-\frac{r-\alpha}{1-\alpha}x^{\tilde f(\mathbf{x})}(1)\big\}^2 dr}
	\end{align}
	for some $\alpha\in(0,1)$ are continuous a.e. with respect to the probability measure induced by $\mathbf{W}_p(r)$.
\end{lemma}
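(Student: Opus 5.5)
We will exhibit a set $\mathcal{C}\subset D^p[0,1]$ with $P(\mathbf{W}_p\in\mathcal{C})=1$ on which both $\tilde f$ and $F$ are continuous in the Skorokhod topology. Take $\mathcal{C}$ to be the set of $\mathbf{x}$ such that:
\begin{enumerate}[(i)]
\item $\mathbf{x}$ is continuous on $[0,1]$;
\item the values $\{x^j(\alpha)\}^2/a_j$, $j=1,\dots,p$, are pairwise distinct, so the argmax is unique;
\item for every $j$, the bridge $x^j(r)-\frac{1-r}{1-\alpha}x^j(\alpha)-\frac{r-\alpha}{1-\alpha}x^j(1)$ is not identically zero on $[\alpha,1]$.
\end{enumerate}

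We first check that $\mathcal{C}$ has full probability. Condition (i) holds almost surely because Brownian paths are continuous. For (ii), take $i\neq j$ and consider the event $\{W^i(\alpha)\}^2/a_i=\{W^j(\alpha)\}^2/a_j$. This says $(W^i(\alpha),W^j(\alpha))$ lies on the union of the two lines $W^i(\alpha)=\pm\sqrt{a_i/a_j}\,W^j(\alpha)$. Since $\boldsymbol{\Gamma}$ is positive definite, $(W^i(\alpha),W^j(\alpha))$ is a nondegenerate bivariate Gaussian, so this null set has probability zero. For (iii), the bridge of $W^j$ on $[\alpha,1]$ is a nondegenerate Gaussian process; for example, its value at $r=(1+\alpha)/2$ has positive variance, being a multiple of $\Gamma_{jj}>0$. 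Hence the bridge is almost surely not identically zero.

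Next, continuity of $\tilde f$ on $\mathcal{C}$. Let $\mathbf{x}_n\to\mathbf{x}$ in the Skorokhod metric with $\mathbf{x}\in\mathcal{C}$. Because $\mathbf{x}$ is continuous, Skorokhod convergence to it is equivalent to uniform convergence (Billingsley, Section 12). In particular $x_n^j(\alpha)\to x^j(\alpha)$ for every $j$. The finitely many scores $\{x^j(\alpha)\}^2/a_j$ are separated by a strict gap at the maximum, so for all large $n$ the scores of $\mathbf{x}_n$ have the same argmax. Thus $\tilde f(\mathbf{x}_n)=\tilde f(\mathbf{x})$ eventually, and the integer-valued map $\tilde f$ is continuous at $\mathbf{x}$.

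Finally, continuity of $F$ on $\mathcal{C}$. Once $\tilde f(\mathbf{x}_n)=\tilde f(\mathbf{x})=k$, the quantity $F(\mathbf{x}_n)$ is a fixed functional of the single coordinate $x_n^k$. Uniform convergence gives convergence of the numerator $\{x_n^k(1)-x_n^k(\alpha)\}^2$. The integrand of the denominator converges uniformly on $[\alpha,1]$, so the denominator converges by bounded convergence.

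The main obstacle is the denominator, which must have a strictly positive limit. Condition (iii) handles this: the limiting bridge is continuous and not identically zero, so its squared integral is strictly positive. With a positive limiting denominator, the ratio converges, which completes the argument.
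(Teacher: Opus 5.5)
Your proof is correct and follows essentially the same route as the paper: restrict to continuous paths with a unique maximizer and a nondegenerate bridge, use that Skorokhod convergence to a continuous limit is uniform convergence, observe that the argmax eventually stabilizes, and conclude by continuity of the ratio away from a zero denominator. Your version is slightly more thorough in one respect: you actually verify that the exceptional set is null, via nondegeneracy of $(W^i(\alpha),W^j(\alpha))$ and the positive variance of the bridge at $r=(1+\alpha)/2$, whereas the paper simply asserts $P(\mathbf{W}_p\in C_h)=1$.
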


\begin{proof}[Proof of Lemma \ref{lm_jrssb1}]
	Denote $C^p[0,1]$ as the set of $\R^p$-valued continuous functions defined on $[0,1]$ and let 
	$$C_h = \Big\{   \mathbf{x}(r)\in C^p[0,1]:   \tilde f(\mathbf{x}) \mbox{ is unique and } \frac{1}{1{-}\alpha}\int_\alpha^1\big\{x^{\tilde f(\mathbf{x})}(r){-}\frac{1{-}r}{1{-}\alpha}x^{\tilde f(\mathbf{x})}(\alpha){-}\frac{r{-}\alpha}{1{-}\alpha}x^{\tilde f(\mathbf{x})}(1)\big\}^2 dr{>}0       \Big\}.$$
	Then we have $P(\mathbf{W}_p(r)\in C_h)=1$ and, according to \cite{bill2013}, p124, it suffices to show that for any $\{\mathbf{x}_n(r)\}\subset D^p[0,1]$ and $\mathbf{x}(r)\in C_h$, we have $\tilde f(\mathbf{x}_n)\to \tilde f(\mathbf{x})$ and $F(\mathbf{x}_n)\to F(\mathbf{x})$ if $\rho(\mathbf{x}_n,\mathbf{x})\to 0$ where $\rho(\mathbf{x},\mathbf{y}) = \max_{j=1,2,\dots,p}\sup_{r \in [0,1]}|x^j(r)-y^j(r)|$ for any $\mathbf{x}(r),\mathbf{y}(r)\in D^p[0,1]$.
	
	If $\rho(\mathbf{x}_n,\mathbf{x})\to 0$ and $\mathbf{x}(r)\in C_h$, then we have $\max_{j = 1,2,\dots,p} \big|\frac{\big\{x_n^j(\alpha) \big\}^2}{a_j}{-}\frac{\big\{x^j(\alpha) \big\}^2}{a_j}\big|\to 0$ and $\tilde f(\mathbf{x}_n)  = \tilde f(\mathbf{x}) $ for all $n$ large enough. Assume $\tilde f(\mathbf{x}) = j$, then we have 
	\begin{align}
		\big\{x_n^{j}(1)-x_n^{j}(\alpha)\big\}^2 \to &	\big\{x^{j}(1)-x^{j}(\alpha)\big\}^2 \nonumber \\
		\int_\alpha^1\big\{x_n^{j}(r)-\frac{1-r}{1-\alpha}x_n^{j}(\alpha)-\frac{r-\alpha}{1-\alpha}x_n^{j}(1)\big\}^2 dr \to & 	\int_\alpha^1\big\{x^{j}(r)-\frac{1-r}{1-\alpha}x^{j}(\alpha)-\frac{r-\alpha}{1-\alpha}x^{j}(1)\big\}^2 dr>0 \nonumber
	\end{align}
	Since $g(x,y):\R^2\to \R$ defined as $g(x,y)=x/y$ is continuous on $\{(x,y)\in\R^2:y\neq 0\}$, we have $F(\mathbf{x}_n)\to F(\mathbf{x})$ and the lemma is proved.
\end{proof}

\begin{lemma}\label{lm_jrssb2}
	Let $\lambda_1(\mathbf{H})\leq\lambda_2(\mathbf{H})\leq\cdots\leq\lambda_n(\mathbf{H})$ be the eigenvalues of an $n$-dimensional matrix $\mathbf{H}$, then we have 
	\begin{align}
		\lambda_1(\mathbf{B})\lambda_n(\mathbf{A})\leq \lambda_n(\mathbf{AB})\leq \lambda_n(\mathbf{B})\lambda_n(\mathbf{A})
	\end{align}
	for any positive definite matrices $\mathbf{A}$ and $\mathbf{B}$.
\end{lemma}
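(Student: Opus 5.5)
The plan is to reduce to the symmetric positive definite case by a similarity transformation, and then use the Rayleigh quotient characterization of the largest eigenvalue. Since $\mathbf{A}$ is positive definite, $\mathbf{A}^{1/2}$ exists and is invertible. We have
\[
\mathbf{AB}=\mathbf{A}^{1/2}\bigl(\mathbf{A}^{1/2}\mathbf{B}\mathbf{A}^{1/2}\bigr)\mathbf{A}^{-1/2}.
\]
Hence $\mathbf{AB}$ is similar to the symmetric positive definite matrix $\mathbf{M}=\mathbf{A}^{1/2}\mathbf{B}\mathbf{A}^{1/2}$. In particular, all eigenvalues of $\mathbf{AB}$ are real and positive, so the ordering $\lambda_1(\mathbf{AB})\le\cdots\le\lambda_n(\mathbf{AB})$ makes sense, and $\lambda_n(\mathbf{AB})=\lambda_n(\mathbf{M})$.

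Next, write
\[
\lambda_n(\mathbf{M})=\max_{\|\mathbf{x}\|=1}\mathbf{x}^\top\mathbf{A}^{1/2}\mathbf{B}\mathbf{A}^{1/2}\mathbf{x}.
\]
Set $\mathbf{y}=\mathbf{A}^{1/2}\mathbf{x}$. For every $\mathbf{y}$,
\[
\lambda_1(\mathbf{B})\|\mathbf{y}\|^2\le \mathbf{y}^\top\mathbf{B}\mathbf{y}\le\lambda_n(\mathbf{B})\|\mathbf{y}\|^2 .
\]
Moreover $\|\mathbf{y}\|^2=\mathbf{x}^\top\mathbf{A}\mathbf{x}$, and $\max_{\|\mathbf{x}\|=1}\mathbf{x}^\top\mathbf{A}\mathbf{x}=\lambda_n(\mathbf{A})$.

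Taking the maximum over unit $\mathbf{x}$ in the upper inequality gives
\[
\lambda_n(\mathbf{M})\le\lambda_n(\mathbf{B})\max_{\|\mathbf{x}\|=1}\mathbf{x}^\top\mathbf{A}\mathbf{x}=\lambda_n(\mathbf{B})\lambda_n(\mathbf{A}).
\]
For the lower inequality, I evaluate the Rayleigh quotient at a unit top eigenvector $\mathbf{x}_0$ of $\mathbf{A}$. This gives
\[
\lambda_n(\mathbf{M})\ge \mathbf{x}_0^\top\mathbf{M}\mathbf{x}_0\ge\lambda_1(\mathbf{B})\,\mathbf{x}_0^\top\mathbf{A}\mathbf{x}_0=\lambda_1(\mathbf{B})\lambda_n(\mathbf{A}).
\]
Combining the two bounds with $\lambda_n(\mathbf{AB})=\lambda_n(\mathbf{M})$ yields the claim.

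The only delicate point is justifying that "the eigenvalues of $\mathbf{AB}$" are real, so that $\lambda_n(\mathbf{AB})$ is well defined. The similarity to $\mathbf{M}$ settles this. The rest is routine Rayleigh-quotient bookkeeping.
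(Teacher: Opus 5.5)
Your proof is correct and follows essentially the same route as the paper: pass from $\mathbf{AB}$ to a similar symmetric positive definite matrix, then bound its top eigenvalue with Rayleigh quotients, using the maximum for the upper bound and a top eigenvector of $\mathbf{A}$ for the lower bound. The only difference is cosmetic. You symmetrize as $\mathbf{A}^{1/2}\mathbf{B}\mathbf{A}^{1/2}$ with an explicit similarity, which also settles that the spectrum is real. The paper instead uses $\mathbf{B}^{1/2}\mathbf{A}\mathbf{B}^{1/2}$, citing Seber's Theorem A.7.1, and factors the quotient as $\frac{x^\top\mathbf{B}^{1/2}\mathbf{A}\mathbf{B}^{1/2}x}{x^\top\mathbf{B}x}\cdot\frac{x^\top\mathbf{B}x}{x^\top x}$.
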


\begin{proof}[Proof of Lemma \ref{lm_jrssb2}]
	From Theorem A.7.1 in \cite{seber2003}, $\lambda_n(\mathbf{AB}) = \lambda_n(\mathbf{B}^{1/2}\mathbf{A}\mathbf{B}^{1/2})$, so we have 
	\begin{align}
		\lambda_n(\mathbf{B}^{1/2}\mathbf{A}\mathbf{B}^{1/2}) =& \max_{x\in\R/\{0\}}\frac{x^\top\mathbf{B}^{1/2}\mathbf{A}\mathbf{B}^{1/2}x}{x^\top x} \nonumber \\
		=&\max_{x\in\R/\{0\}}\Big\{\frac{x^\top\mathbf{B}^{1/2}\mathbf{A}\mathbf{B}^{1/2}x}{x^\top\mathbf{B} x}\frac{x^\top\mathbf{B} x}{x^\top x}\Big\} \nonumber \\
		\geq& \min_{y\in\R/\{0\}} \frac{y^\top\mathbf{B} y}{y^\top y}\max_{x\in\R/\{0\}}\frac{x^\top\mathbf{B}^{1/2}\mathbf{A}\mathbf{B}^{1/2}x}{x^\top\mathbf{B} x} \nonumber \\
		= &\lambda_1(\mathbf{B})\lambda_n(\mathbf{A}) \nonumber 
	\end{align}
	and 
	\begin{align}
		\lambda_n(\mathbf{B}^{1/2}\mathbf{A}\mathbf{B}^{1/2}) =&\max_{x\in\R/\{0\}}\Big\{\frac{x^\top\mathbf{B}^{1/2}\mathbf{A}\mathbf{B}^{1/2}x}{x^\top\mathbf{B} x}\frac{x^\top\mathbf{B} x}{x^\top x}\Big\} \nonumber \\
		\leq& \max_{y\in\R/\{0\}} \frac{y^\top\mathbf{B} y}{y^\top y}\max_{x\in\R/\{0\}}\frac{x^\top\mathbf{B}^{1/2}\mathbf{A}\mathbf{B}^{1/2}x}{x^\top\mathbf{B} x} \nonumber \\
		= &\lambda_n(\mathbf{B})\lambda_n(\mathbf{A}). \nonumber 
	\end{align}
	So the lemma is proved.

\end{proof}

The next lemma is from Theorem 10 in  \cite{tolmatz2002}:
\begin{lemma}\label{lm1}
	Let $B(r)$ be a one dimensional standard Brownian motion, then we have 
	\begin{align}
		P(\int_{0}^{1}(B(r)-rB(1))^2dr\leq \lambda) = 2\sqrt{\frac{2}{\pi}}e^{-1/(8\lambda)}[1+O(\lambda)] \mbox{ as }\lambda \to 0+.
	\end{align}
\end{lemma}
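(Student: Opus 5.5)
The plan is to compute $P\big(\int_0^1 (B(r)-rB(1))^2dr\le\lambda\big)$ exactly through its Laplace transform and then invert it term by term. The small-$\lambda$ behaviour is governed by the large-$s$ behaviour of the transform, which is explicit for the Brownian bridge. Since the statement is Theorem 10 of \cite{tolmatz2002}, one may simply cite it; the outline below indicates how it would be verified.

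\emph{Step 1 (Karhunen--Lo\`eve).} Write $X=\int_0^1 (B(r)-rB(1))^2dr$. Expanding the bridge in the sine basis $\sqrt2\sin(k\pi r)$ gives
\[
X\stackrel{d}{=}\sum_{k\ge1}\frac{Z_k^2}{k^2\pi^2},
\]
with $Z_k$ i.i.d.\ $N(0,1)$. Hence, for $s>0$,
\[
E e^{-sX}=\prod_{k\ge1}\Big(1+\frac{2s}{k^2\pi^2}\Big)^{-1/2}=\Big(\frac{u}{\sinh u}\Big)^{1/2},\qquad u=\sqrt{2s},
\]
by the Euler product for $\sinh$. The transform of the cdf $H(\lambda)=P(X\le\lambda)$ is $\int_0^\infty e^{-s\lambda}dH(\lambda)/s\,$, that is, $E e^{-sX}/s$.

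\emph{Step 2 (series expansion).} Write $(\sinh u)^{-1/2}=\sqrt2\,e^{-u/2}(1-e^{-2u})^{-1/2}$ and expand the last factor binomially. This gives
\[
\frac{E e^{-sX}}{s}=\sqrt2\sum_{m\ge0}\binom{2m}{m}4^{-m}\,(2s)^{1/4}s^{-1}e^{-(2m+1/2)\sqrt{2s}}.
\]
Each term is a constant multiple of $s^{-3/4}e^{-c_m\sqrt s}$ with $c_m=(2m+\tfrac12)\sqrt2$. This function has a known inverse Laplace transform, expressible through a parabolic cylinder function $D_{-1/2}$. Inverting term by term, which is justified by monotone convergence since all coefficients are positive, gives an exact series for $H(\lambda)$. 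Its $m$-th term is of order $\exp(-c_m^2/(4\lambda))=\exp(-(4m+1)^2/(8\lambda))$.

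\emph{Step 3 (asymptotics).} As $\lambda\to0+$ only the $m=0$ term matters. The remaining terms are $O(e^{-25/(8\lambda)})$ uniformly, because the coefficients grow only polynomially. For the $m=0$ term, the large-argument expansion $D_\nu(z)=z^\nu e^{-z^2/4}(1+O(z^{-2}))$ with $z\asymp\lambda^{-1/2}$ produces $\exp(-1/(8\lambda))$ times a prefactor.

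The main obstacle is bookkeeping this prefactor: the powers of $\lambda$ must cancel exactly, and the constant must come out as $2\sqrt{2/\pi}$, with the relative error $O(z^{-2})=O(\lambda)$. I would check the constant independently by a saddle-point (de Bruijn Tauberian) computation on $\sqrt2(2s)^{1/4}e^{-\sqrt{2s}/2}$.
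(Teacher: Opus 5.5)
Your proposal is correct. Your remark that one may simply cite Theorem~10 of \cite{tolmatz2002} is exactly what the paper does: it quotes the lemma and gives no argument. The rest of your outline is therefore a self-contained verification the paper does not supply, and it goes through as you describe.

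The constant and error term check out. The $m=0$ term inverts to $\frac{2}{\sqrt{\pi}}\lambda^{-1/4}e^{-1/(16\lambda)}D_{-1/2}\bigl(1/(2\sqrt{\lambda})\bigr)$. Take $z=1/(2\sqrt{\lambda})$ and $D_{-1/2}(z)=z^{-1/2}e^{-z^2/4}\bigl(1-\tfrac{3}{8z^2}+O(z^{-4})\bigr)$. The factors $\lambda^{\mp1/4}$ then cancel, leaving $2\sqrt{2/\pi}\,e^{-1/(8\lambda)}\bigl(1-\tfrac{3}{2}\lambda+O(\lambda^2)\bigr)$. Your series is the classical Anderson--Darling representation of the Cram\'er--von Mises law, since $D_{-1/2}(z)=\sqrt{z/(2\pi)}\,K_{1/4}(z^2/4)$.

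Two points to tighten:
\begin{itemize}
\item Term-by-term inversion needs nonnegativity of the inverted terms, not just of the coefficients. This holds because each term is a positive multiple of the convolution of $y^{-1/4}/\Gamma(3/4)$ with $\frac{c_m}{2\sqrt{\pi}}\tau^{-3/2}e^{-c_m^2/(4\tau)}$. Tonelli, uniqueness of the Laplace transform and continuity of $H$ then identify $H$ with the series.
\item For the tail, use $0<D_{-1/2}(z)\le z^{-1/2}e^{-z^2/4}$ for $z>0$, which follows from the integral representation. Also use $\binom{2m}{m}4^{-m}\le 1$: the coefficients are bounded, not merely polynomially growing. Together these make the $m\ge1$ terms $O(e^{-25/(8\lambda)})$ in total.
\end{itemize}

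Citation buys brevity; your derivation buys an exact series and the second-order term. That is more than the paper needs. Its uses of the lemma are $p_nP(\cdot\le 1/(10\log p_n))\to0$ in the proof of Theorem~\ref{th_high_dim2}, and finite negative moments of $\int_0^1(B(r)-rB(1))^2dr$ in the proof of Theorem~\ref{th_asymp_indi}. Both already follow from the bound $P\bigl(\int_0^1(B(r)-rB(1))^2dr\le\lambda\bigr)\le C\lambda^{-1/2}e^{-1/(8\lambda)}$. A Chernoff argument gives this directly from your Step~1 product formula, taking $s=1/(8\lambda^2)$.
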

Lemma \ref{lm1} implies that $P(\int_{0}^{1}(B(r)-rB(1))^2dr\leq \lambda)$ decays faster than $\lambda^a$ for any $a>0$ as $\lambda \to 0+$. 

\begin{lemma}\label{lm2}
	For a sequence of identically distributed random variables (not necessarily independent) $X_i$ with $E|X_i|^\alpha<\infty$, we have 
	\begin{align}
		\frac{1}{n}E\big\{\max\limits_{i\leq n}|X_i|^\alpha\big\}\to 0
	\end{align}
\end{lemma}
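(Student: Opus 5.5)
The natural route is truncation combined with uniform integrability. Write $M_n=\max_{i\le n}|X_i|^\alpha$ and $Y=|X_1|^\alpha$. Since the $X_i$ are identically distributed, each $|X_i|^\alpha$ has the law of $Y$, and $EY<\infty$. No independence is needed, because the argument only uses the marginal laws through a union-type bound.

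Fix a truncation level $K>0$ and split
\begin{align*}
M_n\le K+\max_{i\le n}|X_i|^\alpha\mathbf{1}\{|X_i|^\alpha>K\}\le K+\sum_{i=1}^n|X_i|^\alpha\mathbf{1}\{|X_i|^\alpha>K\}.
\end{align*}
The first inequality follows by considering separately whether the maximizing index exceeds $K$. The second bounds a maximum of nonnegative terms by their sum. Taking expectations and using identical distribution gives
\begin{align*}
\frac1n E M_n\le \frac Kn+E\big[Y\mathbf{1}\{Y>K\}\big].
\end{align*}

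Now let $n\to\infty$ with $K$ fixed, so that $\limsup_n n^{-1}EM_n\le E[Y\mathbf{1}\{Y>K\}]$. Then let $K\to\infty$. By dominated convergence, since $Y$ is integrable, $E[Y\mathbf{1}\{Y>K\}]\to0$, which yields the claim.

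The only delicate point is the choice of a fixed $K$ before sending $n\to\infty$. A crude bound $EM_n\le nEY$ gives only $O(1)$, so the truncation is exactly what upgrades this to $o(1)$. An alternative is to take $K=K_n\to\infty$ with $K_n/n\to0$, which gives the same conclusion.

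For later use in the paper, Markov's inequality then gives $\max_{i\le n}|X_i|=o_p(n^{1/\alpha})$. When all moments are finite, this becomes $o_p(n^\delta)$ for every $\delta>0$.
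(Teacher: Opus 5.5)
Your proof is correct and is essentially the paper's argument. The paper also bounds $\max_{i\le n}|X_i|^\alpha$ by $\lambda^\alpha+\sum_{i=1}^n|X_i|^\alpha\boldsymbol{1}(|X_i|>\lambda)$, uses identical distribution to get $\lambda^\alpha/n+E\{|X_1|^\alpha\boldsymbol{1}(|X_1|>\lambda)\}$, and lets $n\to\infty$ and then $\lambda\to\infty$; your $K$ plays the role of their $\lambda^\alpha$. Your closing Markov step giving $o_p(n^\delta)$ when all moments are finite is also how the lemma is used later, for instance for $\max_{i\le p_n}|W_{ni}(1)|$.
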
	

\begin{proof}[Proof of Lemma \ref{lm2}]
	Note that for any $\lambda>0$,
	\begin{equation}
		\max\limits_{i\leq n}|X_i|^\alpha\leq \lambda^\alpha+\sum_{i=1}^{n}|X_i|^\alpha\boldsymbol{1}(|X_i|>\lambda), \nonumber
	\end{equation}
	so we have 
	\begin{equation}
		\frac{1}{n}E\big\{\max\limits_{i\leq n}|X_i|^\alpha\big\}\leq \frac{\lambda^\alpha}{n}+E\big\{|X_1|^\alpha\boldsymbol{1}(|X_1|>\lambda)\big\}.\nonumber
	\end{equation}
	The conclusion follows by letting $n\to \infty$ then $\lambda\to \infty$.
\end{proof}

\begin{lemma}\label{lemma3}
	Let $(B_1,B_2)^\top$ be a bivariate normal vector with covariance matrix 
	$\Sigma=\begin{bmatrix}
		\gamma_1 & \rho\sqrt{\gamma_1\gamma_2}\\
		\rho\sqrt{\gamma_1\gamma_2} & \gamma_2 
	\end{bmatrix}$ 	
	, $\sigma_1^2$ and $\sigma_2^2$ be two positive constants. Then for $0<C<1$,
	\begin{align}
		P(\Big|\frac{(B_1)^2}{\sigma_1^2}-\frac{(B_2)^2}{\sigma_2^2}\Big|<C)\leq 4  C_\rho (\sqrt{\frac{\sigma^2_1}{\gamma_1}}+\sqrt{\frac{\sigma^2_2}{\gamma_2}}) \sqrt{C}, \nonumber
	\end{align}
	where $ C_\rho=\frac{1}{4\sqrt{\pi(1-\rho^2)(1-|\rho|)}}$.
\end{lemma}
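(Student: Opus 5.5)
We will reduce the event to a statement about two Gaussian linear combinations and bound each with the sup of a normal density. No assumption beyond Gaussianity is needed, and we aim directly for the constant $4C_\rho=\{\pi(1-\rho^2)(1-|\rho|)\}^{-1/2}$.

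\emph{Step 1 (standardize and factor).} Put $Z_i=B_i/\sqrt{\gamma_i}$, so $(Z_1,Z_2)$ is standard bivariate normal with correlation $\rho$. Put $a_i=\sqrt{\sigma_i^2/\gamma_i}$ and $U=Z_1/a_1$, $V=Z_2/a_2$. Then $B_1^2/\sigma_1^2-B_2^2/\sigma_2^2=U^2-V^2=(U-V)(U+V)$. If $|U-V|\,|U+V|<C$, then $\min\{|U-V|,|U+V|\}<\sqrt{C}$, so
$$P\Big(\Big|\frac{B_1^2}{\sigma_1^2}-\frac{B_2^2}{\sigma_2^2}\Big|<C\Big)\le P(|U-V|<\sqrt C)+P(|U+V|<\sqrt C).$$

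\emph{Step 2 (small-ball bound for each term).} $U\mp V$ is a centered normal with variance
$$s_\mp^2=\frac1{a_1^2}+\frac1{a_2^2}\mp\frac{2\rho}{a_1a_2}.$$
Since $2/(a_1a_2)\le 1/a_1^2+1/a_2^2$, we get $s_\mp^2\ge (1-|\rho|)(1/a_1^2+1/a_2^2)$. The density of $N(0,s^2)$ is at most $1/(s\sqrt{2\pi})$. Hence
$$P(|U\mp V|<\sqrt C)\le \frac{2\sqrt C}{s_\mp\sqrt{2\pi}}\le \sqrt{\frac{2}{\pi}}\,\frac{\sqrt C}{\sqrt{1-|\rho|}}\cdot\frac{a_1a_2}{\sqrt{a_1^2+a_2^2}}.$$

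\emph{Step 3 (elementary inequality giving the $a_1+a_2$ form).} We need
$$\frac{a_1a_2}{\sqrt{a_1^2+a_2^2}}\le \frac{a_1+a_2}{2\sqrt2}.$$
Squaring, this is equivalent to $8a_1^2a_2^2\le (a_1+a_2)^2(a_1^2+a_2^2)$. That holds because $(a_1+a_2)^2\ge 4a_1a_2$ and $a_1^2+a_2^2\ge 2a_1a_2$.

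\emph{Step 4 (conclusion).} Summing the two terms gives
$$P(\cdot)\le 2\sqrt{\tfrac2\pi}\,\frac{\sqrt C}{\sqrt{1-|\rho|}}\,\frac{a_1+a_2}{2\sqrt2}=\frac{(a_1+a_2)\sqrt C}{\sqrt{\pi(1-|\rho|)}}.$$
Since $1-\rho^2\le 1$, this is at most
$$\frac{(a_1+a_2)\sqrt C}{\sqrt{\pi(1-\rho^2)(1-|\rho|)}}=4C_\rho\Big(\sqrt{\tfrac{\sigma_1^2}{\gamma_1}}+\sqrt{\tfrac{\sigma_2^2}{\gamma_2}}\Big)\sqrt C,$$
which is exactly the stated bound. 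The hypothesis $0<C<1$ is not needed for this argument; it only matters for how the lemma is applied in part (c).

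The only delicate point is Step 3. The crude bound $a_1a_2/\sqrt{a_1^2+a_2^2}\le\min(a_1,a_2)\le (a_1+a_2)/2$ loses a factor $\sqrt2$, which the factor $(1-\rho^2)^{-1/2}$ cannot absorb at $\rho=0$. The sharper inequality with $2\sqrt2$ is tight at $a_1=a_2$ and recovers exactly the claimed constant.
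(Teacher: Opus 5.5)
Your proof is correct, and it reaches the stated bound by a genuinely different route from the paper's.

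\textbf{The paper's route.} The paper starts from the explicit joint density of $(B_1^2/\sigma_1^2,B_2^2/\sigma_2^2)$. It bounds the cross term $e^{\pm\rho(\cdots)\sqrt{uv}}$ using $2\sqrt{xy}\le x+y$, which gives
\[
f(v+\lambda,v)\le \frac{1}{4\pi\sqrt{1-\rho^2}\sqrt{\lambda v}}\bigl[a_1^2e^{-(1-|\rho|)a_1^2v}+a_2^2e^{-(1-|\rho|)a_2^2v}\bigr],
\]
with $a_i=\sqrt{\sigma_i^2/\gamma_i}$ as in your notation. Integrating in $v$ yields a pointwise bound $C_\rho(a_1+a_2)\lambda^{-1/2}$ on the density of the difference, and integrating that over $(-C,C)$ gives the claim. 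The factor $(1-\rho^2)^{-1/2}$ in $C_\rho$ enters through the bivariate normal normalizing constant. It is not recovered because the paper drops $(1-\rho^2)$ from the exponent.

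\textbf{Your route.} You factor $U^2-V^2=(U-V)(U+V)$ and reduce to two one-dimensional Gaussian small-ball probabilities. You then control $s_\mp^2$ by the same AM--GM idea. The only delicate step, $a_1a_2/\sqrt{a_1^2+a_2^2}\le (a_1+a_2)/(2\sqrt2)$, is verified correctly, and it is exactly what is needed to match the constant at $\rho=0$.

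\textbf{Comparison.} Your argument is shorter and fully elementary, and it avoids the chi-square-type joint density. It also shows that the $(1-\rho^2)^{-1/2}$ factor and the restriction $0<C<1$ are superfluous, giving the sharper bound $(a_1+a_2)\sqrt{C}/\sqrt{\pi(1-|\rho|)}$. The paper's route, in exchange, yields a stronger intermediate fact: a pointwise $|\lambda|^{-1/2}$ density bound for $U^2-V^2$. That bound gives an $O(\sqrt{C})$ anti-concentration estimate for every interval of length $C$, not only intervals centered at $0$, which your factorization does not directly provide. Only the centered case is used in part (c) of the proof of Theorem~\ref{th_high_dim2}, so either argument suffices there.
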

\begin{proof}[Proof of Lemma \ref{lemma3}]
	It follows from Equation (2.6) in \cite{krishnan196} that the density for the distribution of $(\frac{(B_1)^2}{\sigma_1^2}, \frac{(B_2)^2}{\sigma_2^2})$ at $(u,v)\in\R_+^2$ is 
	\begin{align}
		f_{\sigma_1\sigma_2}(u,v)=\frac{{\sigma_1\sigma_2}}{4\pi \sqrt{(1-\rho^2)\gamma_1\gamma_2}\sqrt{uv}}e^{-\frac{u\sigma_1^2\gamma_2+v\sigma_2^2\gamma_1}{2(1-\rho^2)\gamma_1\gamma_2}}\Big[e^{\frac{\rho \sigma_1\sigma_2\sqrt{uv}}{(1-\rho^2)\sqrt{\gamma_1\gamma_2}}}+e^{-\frac{\rho \sigma_1\sigma_2\sqrt{uv}}{(1-\rho^2)\sqrt{\gamma_1\gamma_2}}}\Big].
	\end{align}
	So for $0<\lambda<1$, we have 	
	\begin{align}
		f_{\sigma_1\sigma_2}(v+\lambda,v)&=\frac{{\sigma_1\sigma_2}}{4\pi \sqrt{(1-\rho^2)\gamma_1\gamma_2}\sqrt{(v+\lambda)v}}e^{-\frac{(v+\lambda)\sigma_1^2\gamma_2+v\sigma_2^2\gamma_1}{2(1-\rho^2)\gamma_1\gamma_2}}\Big[e^{\frac{\rho \sigma_1\sigma_2\sqrt{(v+\lambda)v}}{(1-\rho^2)\sqrt{\gamma_1\gamma_2}}}+e^{-\frac{\rho \sigma_1\sigma_2\sqrt{(v+\lambda)v}}{(1-\rho^2)\sqrt{\gamma_1\gamma_2}}}\Big] \nonumber \\
		&\leq \frac{\sigma_1\sigma_2}{4\pi \sqrt{(1-\rho^2)\gamma_1\gamma_2}\sqrt{\lambda v}}2e^{-\frac{((1-|\rho|)[(v+\lambda )\sigma_1^2\gamma_2+v\sigma_2^2\gamma_1]}{2(1-\rho^2)\gamma_1\gamma_2}} \label{eq_app_1}\\
		&\leq \frac{\sigma_1\sigma_2}{4\pi \sqrt{(1-\rho^2)\gamma_1\gamma_2}\sqrt{\lambda v}}2e^{-\frac{((1-|\rho|)( \sigma_1^2\gamma_2+\sigma_2^2\gamma_1)v}{2\gamma_1\gamma_2}} \nonumber\\
		&\leq  \frac{1}{4\pi \sqrt{1-\rho^2}\sqrt{\lambda v}}\Big[\frac{\sigma^2_1}{\gamma_1}e^{-(1-|\rho|)\frac{\sigma_1^2}{\gamma_1}v}+\frac{\sigma^2_2}{\gamma_2}e^{-(1-|\rho|)\frac{\sigma^2_2}{\gamma_2}v}\Big]. \label{eq_app_2}
	\end{align}	
	Here we used the inequality $|2\rho \sqrt{\frac{\sigma_1^2\sigma_2^2}{\gamma_1\gamma_2}(v+\lambda)v}|\leq |\rho|((v+\lambda)\frac{\sigma^2_1}{\gamma_1}+v\frac{\sigma^2_2}{\gamma_2})$ in Equation (\ref{eq_app_1}) and $\sqrt{\frac{\sigma_1^2\sigma_2^2}{\gamma_1\gamma_2}}2e^{-\frac{((1-|\rho|)[v\frac{\sigma^2_1}{\gamma_1}+v\frac{\sigma^2_2}{\gamma_2}]}{2}}\leq \frac{\sigma^2_1}{\gamma_1}e^{-(1-|\rho|)\frac{\sigma^2_1}{\gamma_1}v}+\frac{\sigma^2_2}{\gamma_2}e^{-(1-|\rho|)\frac{\sigma^2_2}{\gamma_2}v}$ in Equation (\ref{eq_app_2}). So the density of $\frac{(B_1)^2}{\sigma^2_1}-\frac{(B_2)^2}{\sigma^2_2}$ at $\lambda$ is 
	\begin{align}
		g_{\sigma_1\sigma_2}(\lambda)&=\int_{0}^{\infty}f_{\sigma_1\sigma_2}(v+\lambda,v)dv \nonumber \\
		&\leq \frac{1}{4\pi \sqrt{1-\rho^2}\sqrt{\lambda }}\Big[\sqrt{\frac{\pi \sigma_1^2}{(1-|\rho|)\gamma_1}}+ \sqrt{\frac{\pi \sigma_2^2}{(1-|\rho|)\gamma_2}} \Big] \nonumber \\
		&\leq \frac{ C_\rho}{\sqrt{\lambda}}(\sqrt{\frac{\sigma^2_1}{\gamma_1}}+\sqrt{\frac{\sigma^2_2}{\gamma_2}}),
	\end{align}	
	where $ C_\rho=\frac{1}{4\sqrt{\pi(1-\rho^2)(1-|\rho|)}}$. Similar result holds for the density of $\frac{(B_2)^2}{\sigma^2_2}-\frac{(B_1)^2}{\sigma^2_1}$, so we have for $0<C<1$,
	\begin{align}
		P(\Big|\frac{(B_1)^2}{\sigma^2_1}-\frac{(B_2)^2}{\sigma^2_2}\Big|<C)\leq 4  C_\rho (\sqrt{\frac{\sigma^2_1}{\gamma_1}}+\sqrt{\frac{\sigma^2_2}{\gamma_2}}) \sqrt{C}. \nonumber
	\end{align}

\end{proof}

For random variable $X$ and $q>2$, let $||X||_q=(E|X|^q)^{1/q}$ and $c_q$ be a generic constant which only depends on $q$ and may vary each time it appears. Note that under Assumption \ref{assump_high_dim}, $\sum\limits_{j=1}^\infty\theta_{n,j,q}\leq C\Theta_n$ for some positive constant $C$. The following lemma gives a uniform bound on the distance between $\sigma_{ni}^2$ and its sample version. 
\begin{lemma}\label{lemma_var}
	Under Assumption \ref{assump_high_dim}, we have 
	\begin{align}
		\max\limits_{i \leq p_n} ||\frac{1}{n}\sum_{k=1}^{n}(X_{nk}^i-\frac{1}{n}\sum_{j=1}^{n}X_{nj}^i)^2-\sigma_{ni}^2||_{q/2} \leq c_q n^{-1/2} \Theta_n^2 \label{eq_app_lm4}
	\end{align}
\end{lemma}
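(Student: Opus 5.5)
Fix $i$ and write $Y_k=X_{nk}^i-\mu_n^i$, so $Y_k=G_n^i(\boldsymbol{\epsilon}_k)$ and the sample variance decomposes as
$$\frac1n\sum_{k=1}^n Y_k^2-\Big(\frac1n\sum_{k=1}^n Y_k\Big)^2 .$$
I will bound the two pieces separately in $\|\cdot\|_{q/2}$ and use Minkowski. Every constant will depend only on $q$, $\beta$ and the bounds in Assumption \ref{assump_high_dim}, never on $i$. That uniformity gives the maximum over $i\le p_n$ for free, because the bound is stated per coordinate inside the max.

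The first piece is $\frac1n\sum_k(Y_k^2-\sigma_{ni}^2)$. The process $Z_k=Y_k^2-EY_k^2$ is again a causal stationary process in $\boldsymbol{\epsilon}_k$. Its physical dependence measure in $L^{q/2}$ is controlled by Hölder's inequality:
$$\|Y_0^2-\tilde Y_0^2\|_{q/2}\le \|Y_0-\tilde Y_0\|_q\,\|Y_0+\tilde Y_0\|_q\le 2\Theta_n\,\theta^{(i)}_{n,j,q},$$
where $\theta^{(i)}_{n,j,q}\le\theta_{n,j,q}$ because a coordinate is dominated by the Euclidean norm, and $\|Y_0\|_q\le\Theta_n$ by assumption. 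I then apply the standard moment inequality for partial sums of physically dependent processes, namely the Burkholder inequality combined with a martingale decomposition into projections $\mathcal P_{k-j}Z_k$ (Wu, 2005 / Liu–Wu). Since $q/2>2$, this gives
$$\Big\|\sum_{k=1}^n Z_k\Big\|_{q/2}\le c_q\sqrt n\sum_{j\ge0}\delta_{j,q/2}(Z)\le c_q\sqrt n\,\Theta_n\sum_j\theta_{n,j,q}\le c_q\sqrt n\,\Theta_n^2,$$
using $\sum_j\theta_{n,j,q}\le C\Theta_n$ from $\beta>2$. Dividing by $n$ yields $c_q n^{-1/2}\Theta_n^2$.

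For the second piece, $\|(\bar Y)^2\|_{q/2}=\|\bar Y\|_q^2$. The same Burkholder-type bound applied to $Y_k$ in $L^q$ gives $\|\sum_k Y_k\|_q\le c_q\sqrt n\,\Theta_n$. Hence $\|\bar Y\|_q^2\le c_q n^{-1}\Theta_n^2$, which is of smaller order than the first piece.

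The main obstacle is justifying the dependence-measure transfer and the moment inequality uniformly in the coordinate. For the transfer, the coupled copy $\tilde Y_0$ is built from $\boldsymbol{\tilde\epsilon}_{0,-j}$, so it has the same distribution as $Y_0$ and the Hölder step above is legitimate. For the moment inequality, I must check that its constant depends only on $q$, and that the required summability $\sum_j\delta_{j,q/2}<\infty$ holds; both follow from $\beta>2$. If I prefer not to invoke Liu–Wu, I can instead write $Z_k=\sum_{j\ge0}\mathcal P_{k-j}Z_k$, bound $\|\mathcal P_0 Z_j\|_{q/2}\le\delta_{j,q/2}$, and apply Burkholder's inequality for each fixed $j$ to the martingale differences in $k$, then sum over $j$ by Minkowski.
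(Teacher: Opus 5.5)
Your proof is correct and follows essentially the same route as the paper's. Both split the error into $\frac1n\sum_k(Y_k^2-\sigma_{ni}^2)$ and $\bar Y^2$, bound the dependence measure of the squared process by $2\Theta_n\theta_{n,j,q}$ via H\"older, and apply the Burkholder/Wu-type partial-sum moment inequality (the paper cites Equation (4) of Theorem 1 in Wu (2007)) in $L^{q/2}$ and $L^q$, respectively.
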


\begin{proof}[Proof of Lemma \ref{lemma_var}]
	Since 
	\begin{align}
		\frac{1}{n}\sum_{k=1}^{n}(X_{nk}^i-\frac{1}{n}\sum_{j=1}^{n}X_{nj}^i)^2-\sigma_{ni}^2=\frac{1}{n}\sum_{k=1}^{n}[(X_{nk}^i-\mu_n^i)^2-\sigma_{ni}^2]+\Big\{\frac{1}{n}\sum_{k=1}^{n}(X_{nk}^i-\mu_n^i)\Big\}^2,
	\end{align}
	we have $||\frac{1}{n}\sum_{k=1}^{n}(X_{nk}^i-\frac{1}{n}\sum_{j=1}^{n}X_{nj}^i)^2-\sigma_{ni}^2||_{q/2} \leq ||\frac{1}{n}\sum_{k=1}^{n}[(X_{nk}^i-\mu_n^i)^2-\sigma_{ni}^2]||_{q/2} {+}||\frac{1}{n}\sum_{k=1}^{n}(X_{nk}^i-\mu_n^i)||_q^2$.
	Note that for $j=0,1,2,\dots$,
	\begin{align}
		||(G_n^i(\boldsymbol{\epsilon}_0))^2{-}(G_n^i(\boldsymbol{\tilde\epsilon}_{0,{-}j}))^2||_{q/2}\leq&||(G_n^i(\boldsymbol{\epsilon}_0))^2{-}G_n^i(\boldsymbol{\epsilon}_0)G_n^i(\boldsymbol{\tilde\epsilon}_{0,-j})||_{q/2} {+}||(G_n^i(\boldsymbol{\tilde\epsilon}_{0,-j}))^2{-}G_n^i(\boldsymbol{\epsilon}_0)G_n^i(\boldsymbol{\tilde\epsilon}_{0,-j})||_{q/2} \nonumber \\
		\leq& 2||G_n^i(\boldsymbol{\epsilon}_0)||_q||G_n^i(\boldsymbol{\epsilon}_0){-}G_n^i(\boldsymbol{\tilde\epsilon}_{0,-j})||_q \nonumber \\
		\leq& 2 \Theta_n\theta_{n,j,q}.\nonumber 
	\end{align}
	By applying Equation (4) of Theorem 1 in \cite{wu2007strong} on the time series $\{(X_{nk}^i-\frac{1}{n}\sum_{j=1}^{n}X_{nj}^i)^2-\sigma_{ni}^2\}_{k=1}^n$ and $\{(X_{nk}^i-\mu_n^i)\}_{k=1}^n$, we have $$||\sum_{k=1}^{n}[(X_{nk}^i-\mu_n^i)^2-\sigma_{ni}^2]||_{q/2}\leq c_qn^{1/2}\sum_{j=0}^{\infty}||(G_n^i(\boldsymbol{\epsilon}_0))^2{-}(G_n^i(\boldsymbol{\tilde\epsilon}_{0,-j}))^2||_{q/2} \leq c_qn^{1/2}\Theta_n^2$$
	and $||\sum_{k=1}^{n}(X_{nk}^i-\mu_n^i)||_q\leq c_q n^{1/2}\sum\limits_{j=1}^\infty\theta_{n,j,q}\leq  c_q n^{1/2}\Theta_n$. So $||\frac{1}{n}\sum_{k=1}^{n}[(X_{nk}^i-\mu_n^i)^2-\sigma_{ni}^2]||_{q/2}\leq c_qn^{-1/2}\Theta_n^2$, $||\frac{1}{n}\sum_{k=1}^{n}(X_{nk}^i-\mu_n^i)||_q^2\leq  c_q n^{-1}\Theta^2_n$ and the lemma is proved.
	
\end{proof}

\begin{lemma}\label{lemma_chi}
	For $i=1,2,\dots,n$, let $\chi_i$ be $n$ i.i.d chi-square random variables with one degree of freedom and $0<a_i$, then for any $\epsilon\in (0,1)$,
	\begin{align}
		P(\sum_{i=1}^{n}a_i\chi_i\leq \epsilon \sum_{i=1}^{n}a_i)\leq \sqrt{e\epsilon}. \label{eq_app_lm5}
	\end{align}
\end{lemma}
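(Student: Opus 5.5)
This is a Chernoff-type lower-tail bound, and I will prove it with an exponential Markov inequality applied to the negative of the sum. For any $t>0$, the event $\sum_i a_i\chi_i\le \epsilon\sum_i a_i$ is the same as $\exp\{-t\sum_i a_i\chi_i\}\ge \exp\{-t\epsilon\sum_i a_i\}$. Markov's inequality and independence then give
\begin{align*}
P\Big(\sum_{i=1}^{n}a_i\chi_i\leq \epsilon \sum_{i=1}^{n}a_i\Big)\leq e^{t\epsilon\sum_i a_i}\prod_{i=1}^n E e^{-ta_i\chi_i}=\prod_{i=1}^n e^{t\epsilon a_i}(1+2ta_i)^{-1/2},
\end{align*}
using the chi-square Laplace transform $Ee^{-s\chi}=(1+2s)^{-1/2}$ for $s>0$.

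It is tempting to pick a common $t$, but a single $t$ does not give a bound free of the $a_i$. The fix is to normalize the weights. Because the event is invariant under rescaling all $a_i$ by the same constant, I assume without loss of generality that $\max_i a_i=1$, so $a_i\in(0,1]$. Consider one factor $h(a)=\tfrac12\log(1+2ta)-t\epsilon a$. It is concave in $a$ with $h(0)=0$. Concavity gives $h(a)\ge a\,h(1)$ for $a\in[0,1]$, and whenever $h(1)\ge0$ every factor is at most one. For the index $i_0$ with $a_{i_0}=1$, the factor is exactly $e^{-h(1)}$. Hence the whole product is at most $e^{-h(1)}=e^{t\epsilon}(1+2t)^{-1/2}$, provided we choose $t$ with $h(1)\ge 0$.

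Next I optimize over $t$. Minimizing $t\epsilon-\tfrac12\log(1+2t)$ gives $1+2t=1/\epsilon$, i.e. $t=(1-\epsilon)/(2\epsilon)>0$ since $\epsilon\in(0,1)$. This yields $e^{-h(1)}=\sqrt{\epsilon}\,e^{(1-\epsilon)/2}\le\sqrt{e\epsilon}$. For this choice, $h(1)=\tfrac12(-\log\epsilon-1+\epsilon)\ge0$ by the elementary inequality $\log x\le x-1$ with $x=1/\epsilon$. Therefore the concavity step is legitimate.

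The main point needing care is the reduction to a single dominating factor: showing that all other factors are $\le1$ via concavity of $h$ and the normalization $\max_i a_i=1$. The remaining steps are routine calculus.
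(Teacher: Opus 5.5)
Your proof is correct. It shares the paper's skeleton: an exponential Markov bound, the $\chi^2_1$ Laplace transform, and the same choice $1+2t=1/\epsilon$, so everything reduces to the single-variable Chernoff bound $e^{t\epsilon}(1+2t)^{-1/2}$. Where you differ is in how the weights are eliminated.

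The paper normalizes $\sum_i a_i=1$ and uses $\prod_i(1+2\lambda a_i)\ge 1+2\lambda\sum_i a_i=1+2\lambda$ (expand the product; all terms are nonnegative). This holds for every $\lambda>0$, so no sign condition has to be checked and the step is one line. You normalize $\max_i a_i=1$ instead, and then need concavity of $h$ together with $h(1)\ge 0$ at the chosen $t$.

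Your route is longer, but it yields more if you keep $h(a_i)\ge a_i h(1)$ rather than weakening it to $h(a_i)\ge 0$. The product is then at most $e^{-h(1)\sum_i a_i}$, which gives $P(\sum_i a_i\chi_i\le\epsilon\sum_i a_i)\le\bigl(\sqrt{\epsilon}\,e^{(1-\epsilon)/2}\bigr)^{\sum_i a_i/\max_i a_i}$. This bound reflects the effective number of degrees of freedom. The paper only needs the crude bound, in the proof of Theorem~\ref{th_high_dim4}.

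One small slip: $h(1)=\tfrac12(\epsilon-1-\log\epsilon)\ge 0$ follows from $\log x\le x-1$ with $x=\epsilon$, not $x=1/\epsilon$. The latter only bounds $-\log\epsilon$ from above. More simply, $h(1)\ge 0$ is automatic: your $t$ maximizes $t\mapsto\tfrac12\log(1+2t)-t\epsilon$, which equals $0$ at $t=0$.
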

\begin{proof}[Proof of Lemma \ref{lemma_chi}]
	Without loss of generality, assume $\sum_{i=1}^{n}a_i=1$ and for $\lambda>0$
	\begin{align}
		P(\sum_{i=1}^{n}a_i\chi_i\leq \epsilon \sum_{i=1}^{n}a_i)\leq& Ee^{\lambda(\epsilon-\sum_{i=1}^{n}a_i\chi_i)} \nonumber \\
		=& e^{\lambda\epsilon}\prod_{i=1}^{n}(1+2\lambda a_i)^{{-}1/2} \nonumber \\
		\leq& e^{\lambda\epsilon}(1+2\lambda)^{{-}1/2}. \nonumber
	\end{align}
	Letting $\lambda=(\epsilon^{-1}-1)/2$, the lemma follows.
\end{proof}

Let $\mathcal{B}(\R^2)$ be the Borel $\sigma$-algebra on $\R^2$ and for $\R^2$-valued random vectors $\mathbf{X},\mathbf{Y}$, let $||\mathbf{X}-\mathbf{Y}||_{TV} = \sup_{A\in \mathcal{B}(\R^2)}|P(\mathbf{X}\in A)-P(\mathbf{Y}\in A)|$ be the total variation distance. The following lemma comes from \cite{devroye2018}.
\begin{lemma}\label{lemma2018}
	For any bivariate normal random vectors $\mathbf{X}\sim\cN(0,\Sigma_1), \mathbf{Y}\sim\cN(0,\Sigma_2)$ where both $\Sigma_1$ and $\Sigma_2$ are positive definite, we have 
	$$||\mathbf{X}-\mathbf{Y}||_{TV}\leq 3 ||\Sigma_2^{-1/2}\Sigma_1\Sigma_2^{-1/2}-I_2||_F,$$
	where $I_2$ is a 2 dimensional identity matrix and $||\cdot||_F$ denote the Frobenius norm. 
\end{lemma}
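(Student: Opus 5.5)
The plan is to compare the two Gaussian laws through their Hellinger affinity (equivalently, via Pinsker's inequality applied to the Kullback--Leibler divergence). We then reduce everything to the eigenvalues of $\mathbf{M}:=\Sigma_2^{-1/2}\Sigma_1\Sigma_2^{-1/2}$. Total variation is invariant under the bijection $\mathbf{x}\mapsto\Sigma_2^{-1/2}\mathbf{x}$. We may therefore assume $\Sigma_2=I_2$ and $\Sigma_1=\mathbf{M}$, where $\mathbf{M}$ is positive definite with eigenvalues $\mu_1,\mu_2>0$. Rotating once more, we can take $\mathbf{M}=diag\{\mu_1,\mu_2\}$. Then $||\mathbf{M}-I_2||_F^2=(\mu_1-1)^2+(\mu_2-1)^2$, and both laws are products of independent one-dimensional centred normals.

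Write $\delta:=||\mathbf{M}-I_2||_F$. If $\delta\geq 1/3$ there is nothing to prove, since the total variation distance is at most $1$. So assume $\delta<1/3$, which gives $|\mu_i-1|<1/3$ for $i=1,2$.

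For product measures, total variation is subadditive over the coordinates: $||P_1\otimes P_2-Q_1\otimes Q_2||_{TV}\leq ||P_1-Q_1||_{TV}+||P_2-Q_2||_{TV}$. This reduces the problem to a one-dimensional bound $||\cN(0,\mu)-\cN(0,1)||_{TV}\leq c|\mu-1|$ for $|\mu-1|<1/3$. Combined with $|\mu_1-1|+|\mu_2-1|\leq\sqrt2\,\delta$, the final constant is $c\sqrt2$, so it suffices to obtain $c\leq 3/\sqrt2\approx 2.12$. For the one-dimensional bound, Pinsker's inequality gives $TV\leq\sqrt{KL/2}$, with $KL(\cN(0,\mu)\|\cN(0,1))=\tfrac12(\mu-1-\log\mu)$. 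Set $h(\mu):=\mu-1-\log\mu$. For $\mu\in(2/3,4/3)$ we have $h(\mu)\leq (\mu-1)^2/(2\min(\mu,1)^2)\leq \tfrac98(\mu-1)^2$. Hence $TV\leq \tfrac12\sqrt{h(\mu)}\leq \tfrac{3}{4\sqrt2}|\mu-1|$, which is far below the required $c$.

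The main obstacle is bookkeeping the constant in the regime where $\delta$ is below $1/3$ but not small. Cutting at $1/3$, where the trivial bound takes over, is what makes the Taylor estimate of $h$ uniform; that uniformity is the step to check carefully. Alternatively, one can skip the product decomposition and apply Pinsker to the full bivariate KL divergence $\tfrac12\sum_i h(\mu_i)\leq \tfrac{9}{16}\delta^2$. This gives $TV\leq \tfrac{3}{4\sqrt2}\,\delta\le 3\delta$ directly, which is the route I would actually write out.
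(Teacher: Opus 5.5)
Your proof is correct and is more than the paper provides: the paper gives no argument for this lemma and simply imports it from \cite{devroye2018}. That reference states the upper bound in arbitrary dimension, with a constant smaller than $3$ and together with a matching lower bound up to an absolute factor. Its upper bound rests on the same ingredients you use:
\begin{itemize}
\item Pinsker's inequality;
\item the closed form $\frac12\sum_i(\mu_i-1-\log\mu_i)$ of the Gaussian Kullback--Leibler divergence, written in the eigenvalues of $\Sigma_2^{-1/2}\Sigma_1\Sigma_2^{-1/2}$;
\item a quadratic bound on $\mu-1-\log\mu$ near $\mu=1$;
\item the trivial bound $1$ once the Frobenius norm passes a fixed threshold.
\end{itemize}
Your write-up is therefore effectively a self-contained version of the cited result.

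Your numerics check out. On $(2/3,4/3)$ the Lagrange remainder gives $h(\mu)\le\frac98(\mu-1)^2$. The bivariate divergence is then at most $\frac{9}{16}\delta^2$, so $\|\mathbf{X}-\mathbf{Y}\|_{TV}\le\frac{3}{4\sqrt2}\,\delta$ whenever $\delta<1/3$. Nothing in this step uses $d=2$.

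What the citation buys is the two-sided characterisation, which the paper never needs. What your route buys is an elementary, checkable proof with a much better constant in the nontrivial regime. Your truncation also covers a singular $\Sigma_1$: then some $\mu_i=0$, so $\delta\ge 1$ and the trivial bound applies. This is exactly the boundary case $|\varrho|=1$ in the paper's use of the lemma in the proof of Theorem~\ref{th_asymp_indi}, which the lemma as stated (requiring positive definiteness) does not literally cover.
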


\end{document}